\documentclass[acmsmall]{acmart}
\usepackage[dvipsnames]{xcolor}
\usepackage{amsfonts}

\usepackage{amsmath}
\usepackage{amsthm}
\usepackage{xspace}
\usepackage{paralist}
\usepackage{misc} 
\usepackage{tikz}
\usepackage{needspace}
\usepackage{mathtools}
\usepackage{soul} 
\usepackage{stmaryrd}
\usetikzlibrary{shapes,calc,positioning,arrows,hobby}
\makeatletter
\newtheorem*{rep@theorem}{\rep@title}
\newcommand{\newreptheorem}[2]{%
\newenvironment{rep#1}[1]{%
 \def\rep@title{#2 \ref{##1}}%
 \begin{rep@theorem}}%
 {\end{rep@theorem}}}
\makeatother

\usepackage{thm-restate}

\usepackage{ifpdf}

\ifpdf
 
\fi

\usepackage{tikz}
\usetikzlibrary{shapes,calc,positioning,arrows,automata}

\newenvironment{cclaim}[1]{%
  \par\medskip
  \noindent\textbf{Claim~#1.}%
}{%
  \\[1mm]
}

\newenvironment{cproof}[1]{%
 \def\claimnum{#1}
  \noindent\emph{Proof of claim.}
}{%
This concludes the proof of Claim~\claimnum.
}

\newcommand{\sizeof}[1]{{\left| #1 \right|}}

\usepackage{pgfplots}

\usetikzlibrary{calc}
\usepackage{amsmath}

\usetikzlibrary{decorations.pathmorphing, arrows.meta}

\tikzset{decoration={snake,amplitude=.4mm,segment length=2mm, post length=2mm, pre length=0mm},}
\definecolor{ao}{rgb}{0.0, 0.5, 0.0}

\newcommand{\dist}{\operatorname{dist}}
\renewcommand{\phi}{\varphi}
\renewcommand{\epsilon}{\varepsilon}

\newcommand{\class}[1]{\mathcal{#1}}

\newcommand{\schemaS}{\mathbf{S}}
\newcommand{\adom}{\textsf{adom}}
\newcommand{\var}{\textsf{var}}

\newcommand{\type}{\ensuremath{\mathsf{tp}}}
\newcommand{\Type}{\ensuremath{\mathsf{TP}}}

\title{Property Testing for Recursive Query Languages}

\author{Isolde Adler}
\email{isolde.adler@uni-bamberg.de}
\orcid{0000-0002-9667-9841}
\affiliation{%
  \institution{University of Bamberg}
  \country{Germany}
}
\thanks{Isolde Adler was supported by the Deutsche Forschungsgemeinschaft (DFG, German Research Foundation) - 580040086, Carsten Lutz by DFG project LU 1417/4-1 and Quentin Manière by ANR grant EXPAND (ANR-25-CE23-1215).}

\author{Carsten Lutz}
\email{carsten.lutz@uni-leipzig.de}
\orcid{0000-0002-8791-6702}
\affiliation{%
  \institution{Leipzig University and ScaDS.AI Center Dresden/Leipzig}
  \country{Germany}
}

\author{Quentin Manière}
\email{quentin.maniere@inria.fr}
\orcid{0000-0001-9618-8359}
\affiliation{%
  \institution{LIRMM, Inria, University of Montpellier, CNRS}
  \country{France}
}

\author{Marcin Przybyłko}
\email{M.Przybylko@mimuw.edu.pl}
\orcid{0000-0003-1859-7055}
\affiliation{%
  \institution{University of Warsaw}
  \country{Poland}
}
\author{Lukas Schulze}
\email{lschulze@informatik.uni-leipzig.de}
\orcid{0009-0001-3371-9121}
\affiliation{%
  \institution{Leipzig University}
  \country{Germany}
}

\begin{abstract}
  In the context of database querying, property testing
  provides a framework for testing query answers with high confidence while inspecting only a sublinear part of the database, through  completion queries and size queries.
   A fundamental result of Chen and Yoshida~\cite{chen2019testability} states
that  non-satisfaction of a Boolean conjunctive query~$q$ is
  testable with a constant number of such queries and one-sided error if and only if $q$ is equivalent to an $\alpha$-acyclic query. 
  In this article, we initiate the study of property testing for recursive query languages, focusing on two-way regular path queries (2RPQs) and 
   monadic Datalog. One of our main results is positive:
  non-answers to any 2RPQ are  constant query testable with one-sided error. We extend this slightly to a certain class of monadic Datalog programs in which recursion is restricted to be linear and rule bodies must be $\alpha$-acyclic. Turning towards unrestricted monadic Datalog, we next
show that if a monadic Datalog program~$\Pi$  is not equivalent to an
$\alpha$-acyclic program, then falsity of $\Pi$ is
not constant query testable with one-sided error. This is under the  assumption 
that  all rule-bodies are self-join free. We leave open the case of
monadic Datalog programs with $\alpha$-acyclic rule bodies that are not restricted to linear recursion, but observe as a first step that there exist $\alpha$-acyclic programs that are mildly non-linear and constant query testable with one-sided error.
\end{abstract}

\begin{document}

\newtheorem{problem}[theorem]{Problem}

\maketitle

\section{Introduction}

The unprecedented scale of modern data sets has made exhaustive computation increasingly impractical in many application domains. Large scientific experiments generate petabytes of observational data \cite{ivezic2014statistics}, web and social-network platforms maintain massive graph-structured data sets \cite{DBLP:books/cu/LeskovecRU14}, and modern database systems routinely manage data whose size renders even a single sequential scan a substantial computational task. These developments have motivated a broad line of research on algorithms that operate under restricted access to the input. Prominent examples include streaming algorithms \cite{DBLP:journals/fttcs/Muthukrishnan05}, sketching and synopsis techniques \cite{DBLP:journals/ftdb/CormodeGHJ12}, coresets \cite{DBLP:journals/ki/MunteanuS18}, and approximate query processing \cite{DBLP:conf/vldb/GarofalakisG01}, all of which seek to extract useful information from data while examining only a small fraction of it. Property testing represents an elegant incarnation of this general paradigm. Rather than deterministically computing an exact answer, a property tester is a probabilistic algorithm that aims to distinguish objects satisfying a given property from those that are far from satisfying it, using only a sublinear, and ideally constant, number of probes into the data. This perspective has proved fruitful in a variety of settings \cite{goldreich2017introduction} and raises a natural question in data management: 
which queries can be processed efficiently in the sense of property testing, and which cannot?

An investigation of this question has been started by Chen and Yoshida in \cite{chen2019testability}, focusing on Boolean conjunctive queries (CQs) and unions thereof (UCQs). More precisely, Chen and Yoshida study the question whether, for a fixed UCQ $q$,
falsity of $q$ is constant query testable with one-sided error, adopting a database version of the general graph model. Let us unpack this statement. For a UCQ $q$ and $\epsilon \in (0,1)$, an
\emph{$\epsilon$-tester for falsity of} $q$ is a probabilistic
  algorithm that has access to the input database $D$ only via certain queries, accepts  with probability $2/3$ if $D$ makes $q$ false, and rejects  with probability $2/3$ if
  $D$ is $\epsilon$-far from making $q$ false. The latter means that one has to remove more than $\varepsilon |D|$ facts from $D$ to make $q$ false, and the behavior of the tester is
unspecified if $D$ makes $q$ true, but  is not
$\varepsilon$-far from making it false. Falsity of
$q$ is \emph{constant query testable} if for every $\epsilon\in (0,1)$,
                there exists an $\epsilon$-tester for falsity of $q$  that
                makes at most constantly many queries to $D$, and it is constant query testable  \emph{with one-sided error} if, in addition, the tester 
always accepts if $D$ makes $q$ false.  
We have not yet explained the form of the queries through which the tester can access the database $D$, and this is where the general graph model comes into play. In Chen and Yoshida's presentation of this model, there are two types of queries. In a completion query, the tester provides a $k$-ary relation symbol $R$
  and a tuple $\bar a \in (\adom(D) \cup \{ \ast \})^k$, and receives
  as an answer a tuple $\bar a' \in \adom(D)^k$, sampled
  uniformly at random,  that satisfies  $R(\bar a') \in D$ and agrees with $\bar a$ on all positions not filled with `$*$'. We call such an $\bar a'$ an \emph{$R$-completion} of $\bar a$. In a size query, the tester provides the same input $R, \bar a$ as in a completion query, and receives as an answer the number of $R$-completions 
of $\bar a$.

We are now ready to state the main result of Chen and Yoshida: a Boolean UCQ $q$ is constant query testable with one-sided error if and only if, after deleting from $q$ all CQs that are strictly contained in another CQ in $q$, the homomorphism core of every CQ in $q$ is $\alpha$-acyclic. This result applies to bag databases, that is, to databases in which every fact may appear multiple times. Note that this distinction matters for $\varepsilon$-farness since multiplicities affect the number of fact deletions required to falsify the query. The same effect also shows up in some other dichotomy efforts in database theory, such as conjunctive query resilience \cite{DBLP:journals/pacmmod/MakhijaG23,10.1145/3806207}. 
While working with bag databases makes upper bounds more general, the lower bound constructions of Chen and Yoshida crucially depend on the availability of multiplicities.
Chen and Yoshida also discuss the practical implementation of completion and size queries,  treated as primitives in property testing, arguing that they can be implemented in polylogarithmic time if relations are represented as range trees~\cite{DBLP:journals/ipl/Bentley79}. 

The purpose of this paper is to initiate an investigation of property testing for recursive query languages, using Chen and Yoshida's framework outlined above. We focus on two-way regular path queries (2RPQs) and monadic Datalog queries. For the former, as customary, we consider 2RPQs that output a binary relation over the database domain, and an answer candidate is 
given as an additional input to the tester. Our main result is that for every 2RPQ $q$, non-answers to $q$ are constant query testable with one-sided error. This 
most positive result is obtained by reduction to 
testing unreachability in directed multi-graphs which is known to have the same property \cite{DBLP:journals/tcs/YoshidaK12}.

We then turn to monadic Datalog (MDLog) for which our results are much more partial, leading us to questions that appear to be rather challenging. Note that every UCQ is expressible as a monadic Datalog program, and thus this setting generalizes the one studied by Chen and Yoshida.
Starting with positive results, we first consider Boolean MDLog programs $\Pi$ on binary schemas in which every rule body satisfies the following conditions: (i) it contains exactly two variables, and (ii) no IDB atom contains the head variable. We call such programs \emph{strongly linear} because, informally, the
restrictions ensure that satisfaction of $\Pi$ is witnessed by a linear part of the database, without even single edges branching off, but potentially with multi-edges and reflexive loops. We prove by reduction to the 2RPQ case that all strongly linear programs
are constant query testable with one-sided error.

We next ask whether there are any MDLog programs that are constant query testable with one-sided error, and  both recursive and branching. We prove a positive
answer for a  class of programs that we call \emph{simple branching programs (SBPs)}. These allow linear recursion along a single binary relation and have two endpoints that take the form of star queries, hence are branching. Our approach is to first reduce out reflexive loops and multi-edges, and to then decompose SBPs into a collection of \emph{simple linear programs (SLPs)}. We then use an argument in the style of Menger's theorem \cite{Menger}:
$\varepsilon$-farness from falsifying an SLP corresponds to the 
absence of a small edge-cut, and a Menger-type result then guarantees a large number of pairwise edge-disjoint paths. It turns out that many of these must be of constant length, which enables a reduction to property testing $\alpha$-acyclic UCQs, as per \cite{chen2019testability}. 

We then prove the lower bound result that if a Boolean MDLog program $\Pi$ is not equivalent to an MDLog program in which all rule bodies are $\alpha$-acyclic, then falsity of $\Pi$ is not constant query testable with one-sided error. This is under the assumption that all rule bodies in $\Pi$ are self-join free, but admits relations of unrestricted arity.  It is the most technically demanding result in this article. The core technical ingredient is to identify, for every program $\Pi$
under consideration, a database $D$, an answer $\bar a \in \Pi(D)$, and a `hard substructure'
in $D$,  taking the form of a chordless cycle or a tetra (also known as a simplex),
that must be used in a non-trivialized way by every derivation of $\bar a$ from $\Pi$ in $D$. Achieving this turns out to be remarkably delicate, and it is in this step where we rely on self-join freeness. Once this crucial result is established, it enables us to  reduce from testing falsity of a CQ that takes the form of a chordless cycle or a tetra, both  hard by the results in~\cite{chen2019testability}. 

Our results on monadic Datalog leave open the status of a large class of programs, namely programs that are both recursive and branching,  beyond the rather modest SBPs for which we were able to obtain a positive result. We provide  some observations which suggest that even very simple queries from this class might be  challenging to analyze.
Indeed, both Menger's theorem and the central lemma that we use for our positive result for SBPs can be viewed as removal lemmas~\cite{ConlonFox2013}, frequently used in the literature to obtain property testers with one-sided error~\cite{AlonShapira2008Monotone}. Such lemmas typically address the removal of subgraphs of fixed size and structure. In our case, what we would need instead are removal lemmas that remove branching subgraphs of varying size, due to  recursion. As far as we are aware, no such lemmas are known for directed graphs. We argue that this is related to long-standing open questions that arise in the area of arc-disjoint directed Steiner tree packings~\cite{sun2026steinertypepackingproblems}.

Proof details are provided in the appendix.

\paragraph{Related Work}
We have already discussed in quite some depth the work of Chen and Yoshida \cite{chen2019testability}. Adler and Harwath initiated property testing for databases of bounded-degree and proved that if such databases have bounded tree-width, then every CMSO-definable property is constant query testable \cite{AdlerH18}; see \cite{AdlerF23, AdlerKP24} for more recent developments. On the more practical side Ben-Moshe et al. investigated property testing within query processing, testing near-sortedness of relations to enable faster sort-based query evaluation \cite{BenMosheKFMFS11}. For sequential and tree-structured data, testing recursively defined properties is comparably well understood: regular word languages are constant query testable \cite{AlonKNS00} and a recent trichotomy settles the query complexity \cite{BathieFM25}. Magniez and de Rougemont extended this to regular tree languages under a different notion of $\varepsilon$-farness, motivated by approximate XML validation \cite{MagniezR07}; corresponding correctors for $\varepsilon$-close databases were studied in~\cite{BoobnaR04}.

\section{Preliminaries}

\paragraph{Databases.}  
A \emph{schema} $\schemaS$ is a non-empty finite set of 
relation symbols $R$, each associated with an arity $\mn{ar}(R) \in \mathbb{N}$. 
We say that $\schemaS$ is \emph{binary} if all relation symbols in it have arity~2. 
Fix a countably infinite set of \emph{constants} $\Cbf$.
An \emph{$\schemaS$-fact} takes the
form 
$R(\bar a)$ where 
$R \in \schemaS$ and $\bar a \in \Cbf^{\mn{ar}(R)}$.
An
\emph{$\schemaS$-instance} $I$ 
is a  multiset  of
$\schemaS$-facts, that is, $I$ assigns to every $\schemaS$-fact $\alpha$ a multiplicity $I(\alpha) \in \mathbb{N}$. We use $|I|$ to 
denote the number of facts in $I$, that is, $\sum_\alpha I(\alpha)$. An \emph{$\schemaS$-database} $D$ is a finite $\schemaS$-instance, that is, $D$ assigns non-zero multiplicity  only to finitely many facts.
As usual, we assume that multiplicities are represented in unary when a database is used as an input to a decision problem. We may write $\alpha \in I$ to mean that
$I(\alpha) > 0$. 
We use $\adom(I)$ to denote the \emph{active domain} of $I$, that is,
the set of constants that occur in $I$.   When the schema $\schemaS$ is clear from the
context, we will typically not mention it, speaking e.g.\ only of
a  database.  Let $I_1$ and $I_2$ be databases over the same schema~\Sbf. 
A \emph{homomorphism} from $I_1$
 to $I_2$ is a mapping $h:\mn{adom}(I_1) \rightarrow \adom(I_2)$ such that for
all facts $R(\bar a) \in I_1$, we have $R(h(\bar a))\in I_2$. Note that homomorphisms need not respect multiplicities. We may write $I_1 \rightarrow I_2$ to
denote the existence of a homomorphism from $I_1$ to $I_2$.

\paragraph{Two-Way Regular Path Queries}
A \emph{two-way regular path query (2RPQ)} $q$ over binary schema \Sbf is a non-deterministic finite automaton
(NFA) 
over the alphabet $\Sigma_{\schemaS}$ that consists of
the 
symbols $R$ and $R^-$ for every 
$R \in \schemaS$. We represent NFAs as a tuple 
$(Q,\Sigma_\Sbf,\Delta,s_0,F)$ 
where $Q$ is a finite set of
states, $\Sigma_{\schemaS}$ is the input alphabet, $s_0 \in Q$ is the initial state, $F \subseteq Q$ is the
set of accepting states, and
$\Delta \subseteq Q \times \Sigma_{\schemaS} \times Q$ is the
transition relation.  A \emph{path} in an $\schemaS$-instance~$I$ is a sequence of the form
$a_1P_1a_2P_2 \cdots P_{n-1}a_n$ where $a_1,\dots,a_n \in \adom(I)$ and
$P_1,\dots,P_{n-1} \in \Sigma_{\schemaS}$ such that for
$1 \leq i < n$, one of the following holds: (i)~$P_i \in \schemaS$ and
$P_i(a_i,a_{i+1}) \in I$ or~(ii)~$P_i=P^-$ and $P(a_{i+1},a_i) \in I$.
Note that we do not require $a_1,\dots,a_n$ to be distinct.
We call $a_1$ the \emph{source} of the path, $a_n$ the \emph{target},
and $P_1 \cdots P_{n-1}$ the \emph{label}. A pair $(a_1,a_2)
\in \mn{adom}(I)^{2}$ is an \emph{answer} to $q$ on $I$ if  there is a path in $I$ that has source $a_1$, target $a_2$, and
some label $w$ in the language $L(q)$ accepted by the NFA $q$. Note that, for 2RPQs and also for all other query languages studied in this article, 
the multiplicities of facts in $I$ do not play a role in the definition of answers.  We
use $q(I)$ to denote the set of all answers to $q$ on $I$.

\paragraph{Conjunctive Queries, UCQs}
A \emph{(Boolean) conjunctive query (CQ)} $q$ is a conjunction of relational atoms $R(\bar z)$ where $R \in \Sbf$ is of arity $|\bar z|$. We may write $R(\bar z) \in q$ if $R(\bar z)$ is a conjunct in $q$. 
 The set of all variables used in $q$ is denoted
 $\mn{var}(q)$.
A CQ $q$ is associated with a \emph{canonical database}
$D_q$ that is obtained from $q$ by 
viewing all relational atoms as facts and all variables as constants.
With a \emph{homomorphism} from
$q$ to an \Sbf-database~$D$, we mean a homomorphism from $D_q$ to $D$. 
 A  \emph{(Boolean) union of conjunctive queries (UCQ)} over  schema $\schemaS$ is a
disjunction $q$ of finitely many CQs over $\schemaS$. 
We write $D \models q$ and say that $q$ \emph{is true on $D$} if there is a homomorphism from some CQ in $q$ to $D$.

\paragraph{Datalog}
A \emph{Datalog rule}
$\rho$  takes the form
$R_0(\bar{y}_0) \leftarrow
R_1(\bar{y}_1)\land \cdots\land R_n(\bar{y}_n)$ where  $n> 0$ and in each relational atom $R_i(\bar y_i)$, the relation symbol $R_i$ is of arity $|\bar y_i|$. We refer
to $R_0(\bar{y}_0)$ as the \emph{head} of~$\rho$, and to
$R_1(\bar{y}_1) \wedge \cdots \wedge R_n(\bar{y}_n)$ as the
\emph{body}. Every variable that occurs in the head is required to
also occur in the body. Note that
the body of a Datalog rule is a conjunctive query. To emphasize that the variables in a rule body $q$ are $\bar z$, we may denote it by $q(\bar z)$.
%
A \emph{Datalog program} $\Pi$ is a finite set of Datalog rules
with a selected goal relation {\mn{goal}}, of any arity, that does not occur
in rule bodies. Rules that use the goal relation in the head are
\emph{goal rules}. The \emph{arity of\/ $\Pi$} is the arity of its 
\mn{goal} relation and $\Pi$ is \emph{Boolean} if it has
arity zero.  Relation symbols that occur in the head of at least one
rule of $\Pi$ are \emph{intensional (IDB) relations} and all
remaining relation symbols in $\Pi$ are \emph{extensional (EDB)
  relations}.  Note that, by definition, \mn{goal} is an IDB
relation. When all EDB relations in $\Pi$ are from schema $\Sbf$,
then we say that $\Pi$ is \emph{over (EDB) schema} $\Sbf$. We also speak of \emph{EDB atoms} and \emph{IDB atoms} in rule bodies, with the obvious meaning.
A Datalog program $\Pi$ is \emph{monadic} or an \emph{MDLog program}
if all IDB relations in $\Pi$ are monadic, with the possible exception of the goal relation. $\Pi$ is
\emph{linear} if every rule body contains at most one atom that uses an IDB relation, and it is 
\emph{self-join free} if in each rule body, every EDB relation occurs in at most one atom.
%
%

There are several equivalent ways to define the semantics of Datalog programs \cite{abiteboul1995foundations}, here we use derivations.
Let $\Pi$ be a Datalog program of arity $r$ over a schema \Sbf, $I$ an \Sbf-instance, and
$\bar a \in \mn{adom}(I)^r$. A \emph{derivation} of $\bar a$ from $\Pi$  in $I$
is a node-labeled directed finite tree $\Gamma=(V,E,\ell,\rho,h)$
where $\ell,\rho,h$ are node labeling functions. More precisely,
$\ell$ associates every  $v \in V$ with a fact $\ell(v)$ such that
the following conditions are satisfied:
\begin{enumerate}

\item if $v_0 \in V$ is the root, then $\ell(v_0)=\mn{goal}(\bar a)$;

\item if $v \in V$ is an inner node, then $\ell(v)$ takes the form
  $P(\bar a)$, $P$ an IDB relation and $\bar a \in \mn{adom}(I)^{\mn{ar}(P)}$;

\item if $v \in V$ is a leaf, then $\ell(v) \in I$.

\end{enumerate}
In addition, $\rho$ and $h$ associate every inner node $v$ with label
$\ell(v)=P(\bar a)$ with a rule $\rho(v) = P(\bar x) \leftarrow q$ from $\Pi$
and a homomorphism $h(v)$ from $q$ to the database
$\{ \ell(v') \mid v' \text{ successor of } v \}$ such that $h(\bar x)=\bar a$.
For readability, we may write $\ell_v$ in place of $\ell(v)$, and likewise
for $\rho_v$ and $h_v$. We write
$I \models \Pi(\bar a)$ if there is a derivation of $\bar a$ from $\Pi$ in $I$. For Boolean programs $\Pi$, for readability we speak of a derivation of $\Pi$ in $I$, rather than a derivation of $()$ from $\Pi$ in $I$.

Let $\Pi_1$ and $\Pi_2$ be Datalog programs of the same arity and over the same
schema \Sbf. 
We say that  $\Pi_1$ is \emph{contained} in $\Pi_2$, 
written $\Pi_1 \subseteq \Pi_2$, if $\Pi_1(D) \subseteq \Pi_2(D)$ for every
\Sbf-database $D$. Moreover, $\Pi_1$ and $\Pi_2$ are \emph{equivalent} if
$\Pi_1 \subseteq \Pi_2 \subseteq \Pi_1$.

\paragraph{Acyclicity} A \emph{join tree} for
a database $D$ is an undirected tree $T=(V,E)$ where $V$ is the set of
facts in $D$ and for each constant $a$ in $\mn{adom}(D)$, the set $\{
\alpha \in V \mid a \text{ occurs in } \alpha \}$ is a connected
subtree of~$T$.  
Then, $D$ is \emph{$\alpha$-acyclic} if it has a join
tree and a CQ $q$ is $\alpha$-acyclic if the canonical database $D_q$ is. We further say that an MDLog program $\Pi$ is
\emph{$\alpha$-acyclic} if  all rule bodies in $\Pi$ are $\alpha$-acyclic CQs. For non-monadic Datalog programs, 
there is no unique natural way to define $\alpha$-acyclicity and we refrain from doing so. In particular, one may require the rule bodies to be $\alpha$-acyclic, or the rule bodies extended with the head atoms. For MDLog programs, these two definitions clearly coincide.


\paragraph{Property Testing.}  
We use the general model of property testing for graphs and databases in the formulation of Chen and Yoshida
\cite{chen2019testability}. Let $\schemaS$ be a schema. A
\emph{pointed $\schemaS$-database} is a pair $(D,\bar a)$ with $D$ an
$\schemaS$-database and $\bar a$ a tuple over $\adom(D)$.  The
\emph{arity} of $(D,\bar a)$ is $|\bar a|$.  An
\emph{$\schemaS$-property} is a class $\class{P}$ of pointed
$\schemaS$-databases $(D,\bar a)$, all of the same arity, that is
closed under isomorphism (treating the elements of $\bar a$ as
constants and respecting multiplicities). The \emph{arity} of \Pmc is that of the pointed databases in it. For $\varepsilon \in [0,1]$, a pointed database $(D,\bar a)$ is \emph{$\varepsilon$-far
  from satisfying property $\class{P}$} if it is necessary to add or
remove more than $\varepsilon|D|$ facts from $D$ to obtain a database
in~$\class{P}$.  Since our databases are multisets, decreasing a multiplicity is one way to `remove a fact', and likewise for increasing a multiplicity and adding a fact. For the properties studied in this article, fact addition is
  actually not helpful and only removal needs to be considered. 

\smallskip In property testing, the access to the input database $D$ is only
possible via certain forms of queries. We follow Chen and Yoshida in
allowing the tester to make the following queries:
\begin{itemize}

\item \emph{Completion query}. The tester provides a  relation symbol $R$
  and a tuple $(a_1,\dots,a_{\mn{ar}(R)}) \in (\adom(D) \cup \{ \ast \})^{\mn{ar}(R)}$. It receives
  a random \emph{$R$-completion} of $\bar a$ in $D$, that
  is, a tuple is sampled
  uniformly at random
  from the set
  $$\{ (a'_1,\dots,a'_{\mn{ar}(R)},i) \mid  1 \leq i \leq D(R(a'_1,\dots,a'_{\mn{ar}(R)})) \text{ and } a'_j=a_j \text{ if } a_j \in
  \adom(D) \},
  $$
  the last component is deleted and the resulting tuple is  returned. If no $R$-completion exists, then
  this is signaled by a special symbol. 

\item \emph{Size query}. The tester provides a relation symbol $R$ and
  a tuple $\bar a \in (\adom(D) \cup \{ \ast \})^{\mn{ar}(R)}$. It receives the
  number of \emph{$R$-completions} of $\bar a$ in $D$, that is, the cardinality of the above set.
\end{itemize}
This model is related to the general graph model often used in
property testing for undirected graphs. In fact, the two models are
equivalent on undirected graphs, see \cite{chen2019testability} for
details. 
%
%
\begin{definition}[Property testing]
  \label{def:proptest}
  Let $\schemaS$ be a schema and $\class{P}$ an
  $\schemaS$-property. For $\epsilon \in (0,1)$, an
  \emph{$\epsilon$-tester} for $\class{P}$ is a probabilistic
  algorithm that takes as input a pointed database $(D,\bar a)$ of the
  same arity as~$\class{P}$, has access to
  $D$ only via completion and size queries, and that
	\begin{enumerate}
		\item accepts $(D,\bar a)$ with probability $2/3$ if
                  $(D,\bar a)\in \class{P}$ and
		\item rejects $(D,\bar a)$ with probability $2/3$ if
                  $(D,\bar a)$ is $\epsilon$-far from $\class{P}$.
                \end{enumerate}
                We say that property $\class{P}$ is \emph{constant
                  query testable} if for every $\epsilon\in (0,1)$,
                there exists an $\epsilon$-tester for $\class{P}$ that
                makes at most constantly many queries.  We say that
                $\class{P}$ is \emph{testable with one-sided error} if
                the $\epsilon$-tester always accepts $(D,\bar a)$ if
                $(D,\bar a)\in \class{P}$.
\end{definition}
Note that the
length of the tuple $\bar a$ given as part of the input is determined
by the $\schemaS$-property $\class{P}$ and thus of constant length. This tuple is
given as a direct input to the tester, that is, unlike $D$ it is not
subject to any access restrictions. Also note that the number of
queries that a constant query $\varepsilon$-tester may ask is independent of its
input $(D,\bar a)$, but may depend on $\varepsilon$. While Definition~\ref{def:proptest} is non-uniform
in~$\epsilon$, all the algorithms given in this paper are
uniform in~$\epsilon$: we provide a single
algorithm that takes $\epsilon$ as an additional input.

\smallskip

We are interested in properties defined by database queries.
In fact, every query $q$  gives rise to the property
$\mathcal{P}_q = \{ (D,\bar a) \mid \bar a \notin q(D) \}$ and thus to a property testing problem in the
sense of Definition~\ref{def:proptest}. Note the negation, that is,
$\mathcal{P}_q$ is the property of $\bar a$ \emph{not} being an answer
to $q$ on $D$. If $\Pmc_q$ is
constant query testable, for better readability we usually
say that \emph{non-answers to $q$ are constant query testable}.
Note that property testing $\mathcal{P}_q$ corresponds to a data complexity perspective where the
query is fixed and thus of constant size. For Boolean queries $q$,
this specializes to $\mathcal{P}_q = \{ D \mid D \not\models q \}$.

It might be relevant to remark that the complement of property $\mathcal{P}_q$ is not interesting for
property testing since we can always add to a given database with
$\bar a \notin q(D)$ a set of facts to achieve $\bar a \in q(D)$, 
the size of the set being independent of $D$. This means that $D$
cannot be $\epsilon$-far from satisfying $\bar a \in q(D)$ unless
$\varepsilon \in O(1/|D|)$. But then an $\varepsilon$-tester can
recover the entire database $D$ with $O(1/\varepsilon)$ queries and
thus property testing is pointless. 




%

\medskip
Several results in this article are established by means of reduction. We thus
make explicit the type of reduction that is relevant in the context of property
testing.
\begin{definition}[Property testing reduction]
\label{def:proptestreduction}
Let $\mathcal{P}_1$ be an $\Sbf_1$-property of arity $n_1$ and $\mathcal{P}_2$ an $\Sbf_2$-property of arity $n_2$. 
A \emph{property testing reduction} from $\mathcal{P}_1$ to $\mathcal{P}_2$ is a
pair $(f_{\text{DB}},f_\epsilon)$ with $f_{\text{DB}}$ a function that 
maps every pointed $\Sbf_1$-database of arity $n_1$ to a pointed $\Sbf_2$-database of arity $n_2$ and $f_\epsilon: (0,1) \rightarrow (0,1)$
such that the following properties
are satisfied for every pointed $\Sbf_1$-database $(D,\bar a)$ of arity $n_1$:
\begin{enumerate}
\item if $(D,\bar a) \in \mathcal{P}_1$, then   $f_{\text{DB}}(D,\bar a) \in \mathcal{P}_2$;
\item if $(D,\bar a)$ is $\epsilon$-far from satisfying $\mathcal{P}_1$ for some $\varepsilon \in (0,1)$, then $f_{\text{DB}}(D,\bar a)$ is $f_\epsilon(\epsilon)$-far from satisfying~$\mathcal{P}_2$;
\item the answer to any completion or size query to $D'$, with $f_{\text{DB}}(D, \bar a)=(D',\bar a'),$ can be simulated by constantly many
completion and size queries to $D$.
\end{enumerate}
\end{definition}

We make use of the following facts.
\begin{proposition}
\label{prop:reduction}
Let $\mathcal{P}_1$ be an $\Sbf_1$-property and $\mathcal{P}_2$ an $\Sbf_2$-property such that testing $\mathcal{P}_1$ reduces to testing $\mathcal{P}_2$. Then
\begin{enumerate}
\item if  $\mathcal{P}_2$ is constant query testable with one-sided error, then so is   $\mathcal{P}_1$;
\item if  $\mathcal{P}_1$ is not constant query testable with one-sided error, then neither is  $\mathcal{P}_2$.
\end{enumerate}
\end{proposition}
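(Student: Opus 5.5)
Part (2) is the contrapositive of part (1), so the plan is to prove (1) directly and deduce (2) immediately. Fix a reduction $(f_{\text{DB}},f_\epsilon)$ from $\mathcal{P}_1$ to $\mathcal{P}_2$ and let $\epsilon\in(0,1)$. By assumption there is an $f_\epsilon(\epsilon)$-tester $T_2$ for $\mathcal{P}_2$ with one-sided error that makes at most $c_2$ queries, where $c_2$ depends only on $f_\epsilon(\epsilon)$ and hence only on $\epsilon$. The $\epsilon$-tester $T_1$ for $\mathcal{P}_1$ works as follows on input $(D,\bar a)$. It runs $T_2$ on $(D',\bar a')=f_{\text{DB}}(D,\bar a)$ and answers each completion or size query that $T_2$ poses to $D'$ by the constantly many queries to $D$ guaranteed by condition~3 of Definition~\ref{def:proptestreduction}. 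It then outputs whatever $T_2$ outputs.

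For the query count, if each simulated query costs at most $c$ queries to $D$, then $T_1$ makes at most $c\cdot c_2$ queries, which is a constant independent of $D$. For correctness, note first that condition~3 must be read as a faithful simulation: a simulated completion query has to return a tuple distributed exactly as a uniformly random $R$-completion in $D'$ (or the special symbol), and a simulated size query has to return the exact count. Under this reading, the run of $T_2$ inside $T_1$ has the same output distribution as a genuine run of $T_2$ on $(D',\bar a')$. The two cases then go as follows.
\begin{itemize}
\item If $(D,\bar a)\in\mathcal{P}_1$, then $(D',\bar a')\in\mathcal{P}_2$ by condition~1, so $T_2$ accepts with probability $1$, and so does $T_1$. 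This gives one-sided error and, in particular, acceptance with probability at least $2/3$.
\item If $(D,\bar a)$ is $\epsilon$-far from $\mathcal{P}_1$, then $(D',\bar a')$ is $f_\epsilon(\epsilon)$-far from $\mathcal{P}_2$ by condition~2, so $T_2$, and hence $T_1$, rejects with probability at least $2/3$.
\end{itemize}

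The main obstacle, a minor one, is the simulation point. We need that the simulating queries may use $\bar a$ and the earlier answers, which is fine since $T_1$ has $\bar a$ as direct input. We also need that the simulation does not bias the distribution; I would state explicitly that condition~3 is understood as exact simulation in distribution. A further subtlety is that $T_2$ must know $\bar a'$ and possibly $|D'|$ or the active domain of $D'$. Here I would rely on $\bar a'$ being computable from $\bar a$ and, if necessary, on obtaining the required size information through size queries with all-$\ast$ tuples, which again counts as a constant number of queries to $D$. Uniformity in $\epsilon$ carries over when $f_\epsilon$ is computable.
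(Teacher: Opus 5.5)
Your proposal is correct and follows essentially the same route as the paper: run an $f_\epsilon(\epsilon)$-tester for $\mathcal{P}_2$ on $f_{\text{DB}}(D,\bar a)$ without materializing it, answer each of its queries via Condition~(3), obtain correctness from Conditions~(1) and~(2), and derive Point~(2) as the contrapositive of Point~(1). Your extra remarks make explicit assumptions that the paper leaves implicit, without changing the argument: that Condition~(3) must mean exact simulation in distribution, and that $\bar a'$ must be available to the simulated tester.
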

 For Point~(1), assume that $\mathcal{P}_2$ is constant query testable with
one-sided error and let $\varepsilon_1 \in (0,1)$. An
$\varepsilon_1$-tester for $\mathcal{P}_1$ is obtained by running an
$f_\varepsilon(\varepsilon_1)$-tester for $\mathcal{P}_2$ on
$f_{\text{DB}}(D,\bar a)$, without ever materializing it. By
Condition~(3) of Definition~\ref{def:proptestreduction}, each completion and size query to the database in $f_{\text{DB}}(D,\bar a)$ can be answered via constantly many such queries to $D$. Conditions~(1) and~(2) guarantee correctness. Point~(2) is the contrapositive of Point~(1). 


\section{Two-Way Regular Path Queries}
\label{sect:RPQs}

We show that for every 2RPQ $q$, non-answers to $q$ are constant query testable with one-sided error.  This is achieved by reduction to testing unreachability in multi-digraphs, a problem
that is known to be constant query testable with one-sided error \cite{DBLP:journals/tcs/YoshidaK12}.

Recall that a \emph{multi-digraph} takes the form $G=(V,E,L)$ with $V$
a set of vertices, $E$ a set of edges,
and $L:E \rightarrow V \times V$ a function
that assigns to each edge a source and a target vertex. A \emph{path} in $G$ is a
sequence $e_0,\dots,e_n$ of edges
 such that $t_{i} = s_{i+1}$ for  $0 \leq i < n$, assuming $L(e_i) = (s_i, t_i)$.
 We say that the path is \emph{from} $s_0$ \emph{to}
$t_n$ and that $t \in V$ is \emph{reachable} from $s \in V$ if there is a path from $s$ to $t$ in~$G$. This lifts to sets of target vertices in the expected way: $T \subseteq V$ is
reachable from $s \in V$ if some
$t \in T$ is reachable from $s$.
%
%
\emph{Multi-digraph unreachability} is the  problem to decide, given a
multi-digraph $G=(V,E,L)$, a source vertex $s \in V$, and a set of 
target vertices $T \subseteq V$, whether $T$ is
unreachable from $s$ in~$G$. This is of course the case if and
only if $T$ is unreachable from $s$ in $G$ viewed as a digraph by forgetting edge
multiplicities. The notion of $\epsilon$-farness from being unreachable, by contrast,
is sensitive to multiplicities. 

A multi-digraph $G$ can be viewed as a
bag $\schemaS_G$-database $D_G$ in an
obvious way when the schema $\schemaS_G$ consists of a single binary
relation $E$. For a fixed cardinality $k$ of the set $T$, it is 
possible to view multi-digraph unreachability as an $\schemaS_G$-property $ \mathcal{P}_{k}$
in the sense of Definition~\ref{def:proptest}.
However, there is no
reason to bound the cardinality of $T$ by a constant and we do not assume such a
restriction. The exact definition of property
testing multi-digraph unreachability parallels
Definition~\ref{def:proptest}, but now it is the multi-digraph $G$ that can be
accessed by completion and size queries, the `direct' inputs are $s$ and $T$, Point~1 requires acceptance with probability $2/3$ if $T$
is unreachable from $s$ in $G$, and Point~2 requires rejection with
probability $2/3$ if $T$ is $\varepsilon$-far from being unreachable
from $s$ in $G$, defined in the expected way. 

\begin{theorem}
  \label{thm:multidigraphproptest}
   Multi-digraph unreachability is constant query testable with one-sided error using only completion queries of the form $(a,\ast)$, and no size queries.
\end{theorem}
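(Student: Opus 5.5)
Since the statement is attributed to Yoshida and Kobayashi, the plan is to recall and adapt their tester rather than build a new one. The tester runs a bounded, randomized forward exploration from $s$ using only completion queries $(a,\ast)$: each such query returns a uniformly random out-edge of $a$, counted with multiplicity. Concretely, it performs $N=N(\varepsilon)$ independent random walks of length $\ell=\ell(\varepsilon)$. Each walk starts at $s$, repeatedly queries $(v,\ast)$ at the current vertex $v$, and moves to the returned endpoint. A walk stops early when the special symbol signals that $v$ has no out-edges. The tester rejects if and only if some walk visits a vertex of $T$. Membership in $T$ is decided from the direct input, with no queries. Every observed step is a genuine edge, so rejection exhibits an actual $s$--$T$ path. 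Hence the tester always accepts when $T$ is unreachable, which gives one-sided error, and the query count is $N\ell$, a constant.

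The substance is soundness: if $T$ is $\varepsilon$-far from unreachable, then a walk of length $\ell$ hits $T$ with probability at least some $p(\varepsilon)>0$. I would prove this via a min-cut/Menger argument. Being $\varepsilon$-far means every $s$--$T$ edge cut, counted with multiplicity, has size more than $\varepsilon m$, where $m=|E|$. Apply this to the cut $\delta^+(R)$, where $R$ is the set of vertices reachable from $s$ by paths avoiding $T$. That set is closed except for the edges into $T$. Equivalently, Menger's theorem gives more than $\varepsilon m$ edge-disjoint $s$--$T$ paths. Their total length is at most $m$, so at least $\varepsilon m/2$ of them have length at most $2/\varepsilon$.

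The main obstacle is turning these many short disjoint paths into a lower bound on the hitting probability of a walk that picks uniformly random out-edges. The danger is that the walk may wander into high out-degree regions where the good edges form only a tiny fraction. I would argue by contradiction. Suppose the walk hits $T$ within $\ell$ steps with probability below $p$. For each vertex $v$, let $h(v)$ be the probability of reaching $T$ within $\ell'$ steps starting from $v$. Consider the edges along which $h$ drops sharply, and delete all out-edges of the vertices where the walk is likely to be, weighted by degree. This should produce a cut of size at most $\varepsilon m$, contradicting farness. This is where Yoshida--Kobayashi do their real work, using a potential/level-set argument on the reach probabilities.

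As a fallback, I would instead use a BFS-style exploration. At each discovered vertex $v$ the tester samples $k(\varepsilon)$ out-edges, and it explores to depth $2/\varepsilon$. An averaging argument over the $\varepsilon m/2$ short disjoint paths then shows that a constant fraction of their edges are heavy at their tail vertices relative to out-degree. Since the paths are edge-disjoint, their first edges alone contribute out-degree mass at $s$ of at least $\varepsilon m/2$, and I would propagate this bound layer by layer. Whichever analysis succeeds, constancy of the queries and one-sidedness are immediate, so the entire difficulty lies in this hitting-probability lemma, which I would take from \cite{DBLP:journals/tcs/YoshidaK12} if the direct argument stalls.
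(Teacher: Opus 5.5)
Your plan is the paper's plan. You run the Yoshida--Kobayashi random-walk tester and realize each step as a completion query $(v,\ast)$, which samples an out-edge in proportion to multiplicity. One-sidedness follows because every rejection exhibits a genuine path. The hitting-probability analysis is imported from their paper rather than reproved.

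\textbf{The missing step: multiple targets.} The Yoshida--Kobayashi analysis is for a \emph{single} target vertex $t$. You apply ``their'' hitting lemma directly to a target set $T$ of unbounded size, and it cannot be cited verbatim for that. The paper closes this with an explicit property testing reduction:
\begin{enumerate}
\item Identify all vertices of $T$ into one vertex $t$, giving $G'$.
\item Edges are kept and only their endpoints change, so $|E|$ and the size of every cut are unchanged. Making $t$ unreachable in $G'$ is the same as making $T$ unreachable in $G$. Hence $\varepsilon$-farness transfers with the same $\varepsilon$.
\item The single-target tester may be assumed to stop on reaching $t$, so it never asks $(t,\ast)$.
\item A query $(a,\ast)$ with $a\neq t$ is answered by asking $(a,\ast)$ in $G$ and replacing a returned endpoint in $T$ by $t$. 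This yields a uniformly random out-edge of $a$ in $G'$.
\end{enumerate}
Under this simulation, your multi-target tester is exactly the single-target tester run on $G'$. Adding these few lines makes your argument match the paper's.

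\textbf{Your own sketches of the hitting lemma.} Neither self-contained sketch works as stated, so the proof really does rest on the citation, as the paper's does.
\begin{itemize}
\item In the contradiction sketch, deleting all out-edges at vertices where the walk is likely to be cannot in general give a cut of size at most $\varepsilon m$. The start vertex $s$ is such a vertex and may carry $\Theta(m)$ out-edges.
\item In the BFS fallback, the averaging step is fine: most edges of the short disjoint paths leave vertices where path edges form an $\Omega(\varepsilon)$ fraction of the out-degree. The ``layer by layer'' propagation is the step that lacks justification. The paths are only edge-disjoint, so at a shared vertex a sampled edge may continue a different path at an earlier position. Depth therefore does not measure progress toward $T$, and reaching $T$ within depth $2/\varepsilon$ does not follow without a genuine potential argument.
\end{itemize}
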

\begin{proof}
  This is essentially proved in \cite{DBLP:journals/tcs/YoshidaK12}. Yoshida and Kobayashi use a different testing model than the one considered in this paper, based on queries of the form `provide the outdegree of a given vertex', `return the $i$-th neighbor of a given vertex', and `return a vertex uniformly at random'. Their actual algorithm, however, which is Algorithm~2 in \cite{DBLP:journals/tcs/YoshidaK12}, implements a straightforward random walk that can be realized very naturally with completion queries of the form $(a,\ast)$. 
  A minor gap to bridge is that Yoshida and Kobayashi  consider only a single target vertex. We show that the multi-target case can be reduced to the single-target case in a rather
  direct way. 
  
  With every input $G, s, T$ to multi-digraph unreachability in our more general multi-target formulation, we associate an input $G',s,t$ to single-target multi-digraph unreachability
  by defining $G'$ to be the result of identifying in $G$ all vertices in $T$ into the single vertex $t$.   The identification  preserves edges, that is, the edges of $G'$ are exactly those of $G$, only their sources and targets may have changed. With every $\varepsilon \in (0,1)$, we
  associate the same $\varepsilon$. It is easy to verify that Conditions~(1) and~(2) of Definition~\ref{def:proptestreduction} are satisfied. We argue that so is Condition~(3).
  We may assume that the single-target tester stops as soon as it reaches $t$, and therefore never makes a completion query from $t$. Consider a completion query $(a,*)$ to $G'$. Since $a
  \neq t$, we may make the completion query $(a,*)$ to $G$. If it returns $(a,b)$ with $b \notin T$, we return $(a,b)$, and if it returns $(a,b)$ with $b \in T$, we return $t$. This clearly produces an outgoing edge from $a$ in $G'$, sampled uniformly at random.
\end{proof}
We conjecture that it is possible to further generalize Theorem~\ref{thm:multidigraphproptest} to multi-digraph unreachability with multiple source (and target) vertices. Since we do not need this generality, we refrain from working out  details.

\smallskip

We now give the  reduction 
from property testing non-answers to 2RPQs to
property testing multi-digraph unreachability.
Take
any 2RPQ $q = (Q,\Sigma_{\schemaS},s_0,F,\Delta)$. 
 It is standard to characterize answers to $q$ on an
 $\Sbf$-database $D$ in terms of the product digraph of $D$ and $q$, see for instance \cite{DBLP:journals/siamcomp/MendelzonW95}.
Define the
multi-digraph $G_{D \times q}=(V,E,L)$ as follows:
$$
\begin{array}{rcl}
  V &=&\adom(D) \times Q \\[1mm]
  E&=&\{\langle R(a_1,a_2),(s_1,S,s_2) ,i\rangle \in D \times \Delta \times \mathbb{N} \mid S \in \{R,R^-\} \text{ and } 1 \leq i \leq D(R(a_1,a_2))\} \\[2mm]
  \multicolumn{3}{l}{L(\langle R(a_1,a_2),(s_1,R,s_2),i \rangle = (\langle a_1,s_1 \rangle,\langle a_2,s_2 \rangle) \ \ \text{ for all }\langle R(a_1,a_2),(s_1,R,s_2) \rangle
\in E} \\[1mm]
\multicolumn{3}{l}{L(\langle R(a_1,a_2),(s_1,R^-,s_2),i \rangle = (\langle a_2,s_1 \rangle,\langle a_1,s_2 \rangle) \text{ for all } \langle R(a_1,a_2),(s_1,R^-,s_2) \rangle
\in E.}
\end{array}
$$
%
%
%
%
%
Our version of the product deviates from the standard one in being a multi-digraph, because our databases are bag databases.  We say
that edge $\langle R(a_1,a_2),(s_1,S,s_2) ,i\rangle \in E$ \emph{derives} from fact $ R(a_1,a_2)$.
Note that multiple edges may derive from the same fact $\alpha$, at most $|\Delta| \cdot D(\alpha)$ many. More importantly,
every edge derives from a unique fact. This is not the case in the usual
(non-multi) digraph version of $G_{D \times q}$ where different facts in $D$ such as $R_1(a_1,a_2), R_2(a_1,a_2)$ may give rise to the same edge
$(\langle a_1,s_1\rangle,\langle a_2,s_2 \rangle)$. 
In Appendix~\ref{app:proofssect3},
we prove the following.
%
   %
%
%
  %
%
%
\begin{restatable}{lemma}{lemefartranslates}
  \label{lem:efartranslates}
  Let $D$ be a database, $q=(Q,\Sigma_{\schemaS},\Delta,I,F)$ 
  a 2RPQ, both over the same
  schema~$\schemaS$, $a_1,a_2 \in \adom(D)$, and $\epsilon \in (0,1)$.
	Then the following holds: If $(D,a_1a_2)$ is $\epsilon$-far from
  $(a_1,a_2) \notin q(D)$ then $G_{D \times q}$ is
  $\frac{\epsilon}{|\Delta|}$-far from
  $T:= \{ a_2\} \times F$ being unreachable
  from $S:=\{\langle a_1,s_0 \rangle\}$.
\end{restatable}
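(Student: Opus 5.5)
We argue by contraposition: if $G_{D\times q}$ is \emph{not} $\frac{\epsilon}{|\Delta|}$-far from $T$ being unreachable from $S$, then $(D,a_1a_2)$ is not $\epsilon$-far from $(a_1,a_2)\notin q(D)$. So suppose there is a set (multiset) $C$ of edges of $G_{D\times q}$ with $|C| \le \frac{\epsilon}{|\Delta|}|E|$ whose removal makes $T$ unreachable from $\langle a_1,s_0\rangle$. Only removal needs to be considered, since adding edges never destroys reachability. We translate $C$ into a set of fact deletions in $D$.

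The first step is a size comparison. Every edge derives from a unique fact occurrence, and each fact occurrence $(\alpha,i)$ with $1\le i\le D(\alpha)$ gives rise to at most $|\Delta|$ edges, one per transition. Hence $|E| \le |\Delta|\cdot|D|$, and therefore $|C| \le \epsilon|D|$.

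The second step defines the deletion. For each edge $\langle R(b_1,b_2),\delta,i\rangle\in C$, we delete the occurrence $i$ of $R(b_1,b_2)$, i.e.\ we decrease its multiplicity, counting each occurrence only once. A cleaner choice is to delete, for every fact $\alpha$, $\min(D(\alpha),\,c_\alpha)$ copies, where $c_\alpha$ is the number of edges in $C$ deriving from $\alpha$. The resulting $D'$ satisfies $|D|-|D'|\le|C|\le\epsilon|D|$.

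The step I expect to be the main obstacle is showing that $(a_1,a_2)\notin q(D')$. Removing a single edge derived from $\alpha$ does not remove $\alpha$ from the set semantics unless all copies are gone, and answers ignore multiplicities. So the deletion has to be arranged so that a fact is removed entirely whenever needed. The fix is to use a minimal cut. Consider an inclusion-minimal $C$ (the minimum cut size is what farness measures). For a fact $\alpha$ with $D(\alpha)=m$ and a transition $\delta$, the $m$ parallel edges $\langle\alpha,\delta,i\rangle$ have the same endpoints. A minimal cut therefore either contains all $m$ of them or none, since removing a proper subset of parallel edges does not affect reachability. Hence, for each $\alpha$ touched by $C$, at least $m=D(\alpha)$ edges deriving from $\alpha$ lie in $C$. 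We then delete $\alpha$ completely, at cost $D(\alpha)\le c_\alpha$, and the total deletion cost stays at most $|C|$.

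It remains to check that $q$ fails at $(a_1,a_2)$ on $D'$, again by contraposition. Suppose there is a path $a_1P_1\cdots P_{n-1}a_2$ in $D'$ with an accepting run $s_0,\dots,s_n$ of $q$ on its label, so $s_n\in F$. Each step uses a fact $\alpha_j\in D'$, and every edge deriving from $\alpha_j$ avoids $C$, because facts touched by $C$ were deleted entirely. The edges $\langle\alpha_j,(s_{j-1},P_j,s_j),1\rangle$ exist in $G_{D\times q}$; by the definition of $L$ for both the $R$ and the $R^-$ cases, they form a path from $\langle a_1,s_0\rangle$ to $\langle a_2,s_n\rangle\in T$ that avoids $C$. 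This contradicts $C$ being a cut. So $D'$ is obtained from $D$ by at most $\epsilon|D|$ removals and satisfies the property, which contradicts $\epsilon$-farness and proves the lemma.
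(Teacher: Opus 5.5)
Your proof is correct. It follows the paper's overall plan: translate deleted product edges into fact deletions, use $|E|\le|\Delta|\cdot|D|$, and transfer a witnessing path back into the product graph. The difference lies in how you handle multiplicities.

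The paper keeps the set of deleted edges arbitrary. It normalizes that set using automorphisms of $G_{D\times q}$ that permute the index $i$, so that in each class $\langle\alpha,\delta,\cdot\rangle$ only the highest-indexed edges are missing. It then deletes $\min\{k,D(\alpha)\}$ copies of each fact $\alpha$ and shows that $G_{D'\times q}$ is a subgraph of the edge-deleted graph. Reachability in $G_{D'\times q}$ then transfers directly.

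You instead pass to an inclusion-minimal cut. You observe that the parallel edges $\langle\alpha,\delta,1\rangle,\dots,\langle\alpha,\delta,D(\alpha)\rangle$ share endpoints, so such a cut contains all of them or none. Hence every fact touched by the cut can be deleted entirely at cost $D(\alpha)\le c_\alpha$. For a minimal cut this is exactly the paper's $\min\{c_\alpha,D(\alpha)\}$ rule, so both constructions produce the same $D'$ in that case.

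What each route buys:
\begin{itemize}
\item Your route avoids the paper's somewhat informal isomorphism-invariance step. It also makes the final path transfer purely set-based, since every surviving fact keeps all of its derived edges.
\item The paper's route needs no minimality. It yields the stronger structural fact that $G_{D'\times q}$ sits inside the graph obtained by any admissible edge deletion.
\end{itemize}
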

%
Based on Lemma~\ref{lem:efartranslates}, we prove the main result of this section.

\begin{theorem}\label{theo:testingCRPQ}
  Let $q$ be a 2RPQ. Then non-answers to $q$ are constant query
  testable with one-sided error.
\end{theorem}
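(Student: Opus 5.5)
We reduce testing non-answers to $q$ to testing multi-digraph unreachability, which is constant query testable with one-sided error by Theorem~\ref{thm:multidigraphproptest}. Since the target set $T=\{a_2\}\times F$ and the source $\langle a_1,s_0\rangle$ are not bounded by the property machinery but are computable from the direct input $(a_1,a_2)$ and the fixed automaton $q$, we do not formally apply Proposition~\ref{prop:reduction}. Instead we argue directly: given $\varepsilon$, we run the tester of Theorem~\ref{thm:multidigraphproptest} with parameter $\varepsilon/|\Delta|$ on $G_{D\times q}$, source $\langle a_1,s_0\rangle$ and target set $T$, without ever materializing the product.

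\textbf{Correctness.} We first establish the standard product characterization: $(a_1,a_2)\in q(D)$ iff some vertex of $T$ is reachable from $\langle a_1,s_0\rangle$ in $G_{D\times q}$. A path $a_1P_1\cdots P_{n-1}a_n$ with accepting run $s_0,\dots,s_{n}$ yields edges $\langle a_i,s_i\rangle\to\langle a_{i+1},s_{i+1}\rangle$. This uses the two clauses of $L$ for $P_i=R$ and $P_i=R^-$, and we take $i=1$ as the multiplicity index. The converse reads such a path back into a database path with an accepting run. We then consider the two cases.
\begin{itemize}
\item If $(a_1,a_2)\notin q(D)$, then $T$ is unreachable, so the one-sided tester always accepts.
\item If $(D,a_1a_2)$ is $\varepsilon$-far, then by Lemma~\ref{lem:efartranslates} the product is $\varepsilon/|\Delta|$-far from unreachability, so the tester rejects with probability $2/3$.
\end{itemize}

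\textbf{Simulation of queries.} The tester only issues completion queries $(\langle a,s\rangle,\ast)$, so we must sample a uniformly random outgoing edge of $\langle a,s\rangle$ using constantly many queries to $D$. The outgoing edges of $\langle a,s\rangle$ are the triples $\langle R(a,b),(s,R,s'),i\rangle$ and $\langle R(b,a),(s,R^-,s'),i\rangle$. For each of the constantly many transitions $\delta\in\Delta$ leaving $s$, we use a size query to count the $\delta$-edges:
\begin{itemize}
\item the size query $R(a,\ast)$ gives the count when $\delta$ reads $R$;
\item the size query $R(\ast,a)$ gives the count when $\delta$ reads $R^-$.
\end{itemize}
Write $N$ for the sum of these counts. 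We pick $\delta$ with probability proportional to its count. We then issue a completion query $R(a,\ast)$, respectively $R(\ast,a)$, which returns a fact sampled proportionally to its multiplicity, hence a uniform edge among those for $\delta$. We output the corresponding edge of the product. If $N=0$ we signal that no completion exists. This uses $O(|\Delta|)$ queries per simulated query, a constant.

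\textbf{Main obstacle.} The only delicate point is this simulation. Edges of $G_{D\times q}$ are indexed by multiplicities, and the distribution must be exactly uniform over all out-edges, multiplicities included. This is why we first weight by the size-query counts and then use the multiplicity-proportional completion sampling of $D$. We also check that when $a_1$ or $a_2$ lies outside $\adom(D)$ the situation is trivial, and that $|G_{D\times q}|\le|\Delta|\cdot|D|$ is already accounted for in Lemma~\ref{lem:efartranslates}.
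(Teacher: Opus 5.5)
Your proposal is correct and follows the paper's proof. Both reduce to multi-digraph unreachability on $G_{D\times q}$ with source $\langle a_1,s_0\rangle$, targets $\{a_2\}\times F$ and parameter $\varepsilon/|\Delta|$, use Lemma~\ref{lem:efartranslates} for farness, and simulate each completion query $(\langle a,s\rangle,\ast)$ by size queries followed by one completion query to $D$. Your choice of a transition $\delta$ weighted by its edge count is equivalent to the paper's two-stage sampling (a symbol $P$ weighted by $n_P\cdot m_P$, then a target state uniformly at random), and your version states the handling of inverse symbols $R^-$ more cleanly.
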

\begin{proof}
  The proof is by reduction to multi-digraph unreachability. Let $q=(Q,\Sigma_{\schemaS},\Delta, s_0,F)$ be a 2RPQ over some schema \Sbf.
  We associate with every input 
   $(D,a_1a_2)$  to $\Pmc_q$ the multi-digraph $G_{D \times q}=(V,E,L)$, the source
  vertex 
  $s = \langle a_1,s_0 \rangle$, and the set of target vertices $T = \{ a_2\} \times F$.
  Moreover, with any  $\epsilon \in (0,1)$ we associate
  $\varepsilon'=\frac{\epsilon}{|\Delta|}$. Condition~(1) of Definition~\ref{def:proptestreduction}
  is satisfied: using the construction of $G_{D \times q}$, it is easy to verify that 
  $(a_1,a_2) \notin q(D)$ implies that $T$ is unreachable from $s$ in  $G_{D \times q}$.
  Condition~(2) is satisfied by Lemma~\ref{lem:efartranslates}.
  By Theorem~\ref{thm:multidigraphproptest}, it thus remains to argue that completion queries to
  $G_{D \times q}$ of the form $(a,*)$ can be simulated by constantly many completion and size queries to $D$.

  Consider such a query. 
   To deal with the fact that the
  (unlabeled) multi-edges in $G_{D \times
    q}$ stem from different relation symbols
  $R$ in the database~$D$,
%
  we first ask two size queries to $D$ for every $R \in \schemaS$, one with
  tuple $(a,\ast)$ and one with tuple $(\ast,a)$. Let $n_R$ and $n_{R^-}$ be the numbers returned. Further let
  $m_R$ be the number of states in the automaton $q$ that can be
  reached from $s$ by making a transition for input symbol $R$, and let  $m_{R^-}$ be the number of states in $q$ that can 
  reach $s$ by making a transition for input symbol $R^-$.  We
  define a probability distribution $p$ over the elements of $\Sigma_{\schemaS}$
  that reflects successor numbers of $\langle a,s\rangle$ in
  $G_{D \times q}$ by setting, for all $P \in \Sigma_{\schemaS}$:
  \begin{equation*}
  p(P)=\frac{n_P \cdot m_P}{\sum_{P' \in \Sigma_{\Sbf}} n_{P'} \cdot
    m_{P'}}.
  \tag{$*$}
  \end{equation*}
  We then draw an element $P \in \Sigma_{\schemaS}$
  according to this distribution and make a completion query to $D$
  for $P$ with input $(a,\ast)$, let the result be $a'$. We further
  choose uniformly at random a state $s'$ from $q$ that can be
  reached from $s$ by making a transition for input symbol $P$ and
  return  the pair
  $\langle a',s' \rangle$ as the result of the initial completion
  query.

  It can be verified that
  the above procedure yields a
  successor of $\langle a,s \rangle$ in $G_{D \times q}$
  chosen uniformly at random. Note that this depends on products being multi-digraphs. In the digraph version, a
  successor $\langle a',s' \rangle$ of $\langle a,s \rangle$
  may derive from multiple distinct facts  $P_1(a,a'),\dots,P_k(a,a')$. As a consequence, the sum in the
  denominator of ($*$) may then exceed the number of successors of~$\langle a,s \rangle$.
\end{proof}
%

\section{Monadic Datalog: Upper Bounds}
\label{sect:mdlogupper}

It is natural to ask whether the
favorable result for 2RPQs extends
to other recursive query languages. 
We consider monadic Datalog, first noting that the results of Chen and Yoshida 
give the following.
\begin{restatable}[\cite{chen2019testability}]{theorem}{thmCY}
\label{thm:CY}
    Let $\Pi$ be a  non-recursive monadic Datalog program. Then falsity of\/ $\Pi$ is constant query testable with one-sided error if and only if it is equivalent to an $\alpha$-acyclic non-recursive monadic Datalog program.
\end{restatable}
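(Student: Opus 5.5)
The plan is to reduce Theorem~\ref{thm:CY} to the UCQ dichotomy of Chen and Yoshida via the standard unfolding of non-recursive programs. A non-recursive Boolean MDLog program $\Pi$ is equivalent to a Boolean UCQ $q_\Pi$. To obtain it, repeatedly replace every IDB atom $P(y)$ in a rule body by the body of some rule with head $P$, renaming that body's non-head variables apart and identifying its head variable with $y$. Because the IDB dependency graph is acyclic, this terminates after finitely many steps. The CQs obtained from goal rules are the disjuncts of $q_\Pi$, and $D \models \Pi$ iff $D \models q_\Pi$ for every database $D$. Hence $\Pmc_\Pi = \Pmc_{q_\Pi}$, and testability of falsity of $\Pi$ is literally testability of falsity of $q_\Pi$. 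By Chen and Yoshida, this is constant query testable with one-sided error iff, after deleting the CQs strictly contained in others, every remaining CQ has an $\alpha$-acyclic core. It therefore remains to show the following equivalence: $q_\Pi$ has this property iff $\Pi$ is equivalent to an $\alpha$-acyclic non-recursive MDLog program.

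For the ``only if'' direction, take the reduced UCQ $q'$ whose disjuncts are the $\alpha$-acyclic cores. It is equivalent to $q_\Pi$ because dropping contained disjuncts and replacing CQs by their cores preserves equivalence. Then read $q'$ as the program with one goal rule $\mn{goal}() \leftarrow c$ for each core $c$. This program is non-recursive, monadic (it has no IDB besides goal), and $\alpha$-acyclic.

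For the ``if'' direction, assume $\Pi'$ is $\alpha$-acyclic, non-recursive and equivalent to $\Pi$. Its unfolding $q_{\Pi'}$ is equivalent to $q_\Pi$, so by the uniqueness of minimal UCQs up to homomorphic equivalence, the reduced forms of $q_\Pi$ and $q_{\Pi'}$ have the same cores up to isomorphism. It then suffices to show that every CQ in $q_{\Pi'}$ is $\alpha$-acyclic, since cores of $\alpha$-acyclic CQs are $\alpha$-acyclic. Cores are retracts, and $\alpha$-acyclicity is preserved under taking retracts, as is known and also used by Chen and Yoshida. The key lemma is therefore that unfolding preserves $\alpha$-acyclicity. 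Suppose the IDB atom $P(y)$ in an $\alpha$-acyclic body $q_1$ is replaced by an $\alpha$-acyclic body $q_2(y)$ whose variables are disjoint from those of $q_1$ except for $y$. Take join trees $T_1$ for $q_1$ and $T_2$ for $q_2$, remove the node $P(y)$ from $T_1$, and reattach its neighbours to some atom of $q_2$ containing $y$. Such an atom exists because every head variable occurs in the body. The resulting tree is a join tree: only $y$ is shared, and the occurrences of $y$ stay connected through the chosen $q_2$-atom. Some care is needed when $P(y)$ is the only atom of $q_1$ containing $y$, and when repeated atoms or identical facts occur. In those cases I argue on canonical databases, where duplicate atoms collapse harmlessly.

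The main obstacle I expect is making the ``if'' direction fully rigorous. This requires both the retract-preservation of $\alpha$-acyclicity and the claim that the Chen--Yoshida criterion depends only on the equivalence class of the UCQ. If needed, I would instead invoke their theorem in the form ``testable iff equivalent to a UCQ of $\alpha$-acyclic CQs''. That form makes the retract argument unnecessary, because then the $\alpha$-acyclicity of $q_{\Pi'}$ suffices directly.
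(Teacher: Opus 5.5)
Your proposal is correct and follows essentially the same route as the paper. Both unfold the non-recursive program into a UCQ, apply the Chen--Yoshida dichotomy (in the form ``testable iff equivalent to an $\alpha$-acyclic UCQ'', via ``a CQ is equivalent to an $\alpha$-acyclic one iff its core is''), and use that unfolding monadic rules preserves $\alpha$-acyclicity. You spell out details the paper leaves implicit: the join-tree gluing at the single shared variable, the retract argument, and the comparison of reduced forms.
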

\begin{proof}
Chen and Yoshida prove that falsity of a reduced Boolean UCQ is
constant query testable with one-sided error if and only if the
homomorphism core of every CQ in it is $\alpha$-acyclic \cite{chen2019testability}. We recall that a UCQ is \emph{reduced} if there is no containment between any  two distinct CQs in it.
Every UCQ can be reduced by deleting redundant disjuncts,
and a CQ is equivalent to an $\alpha$-acyclic CQ if and only if its
core is $\alpha$-acyclic. As a consequence, falsity of a UCQ is constant query
testable with one-sided error if and only if it is equivalent to an
$\alpha$-acyclic UCQ.
 It remains to note the following: (1) any UCQ $q$ can be turned into an equivalent non-recursive Datalog program $\Pi$ by 
    translating every CQ in $q$ to a goal rule in $\Pi$; 
    since $\Pi$ only contains goal rules, it is monadic.
    Conversely, (2)~any non-recursive Datalog program $\Pi$ can be translated into an equivalent  non-recursive Datalog program $\Pi'$ that contains only goal rules by exhaustively replacing IDB relations in rule bodies by the
    bodies of rules that produce the IDB, in all possible ways. For monadic programs, this preserves $\alpha$-acyclicity. The program $\Pi'$ can then be turned into an equivalent UCQ.
\end{proof}
We ask whether the same is true for  monadic Datalog with recursion.
While we establish the lower bound part of the question for self-join free programs, a corresponding upper bound  remains wide open.
We do, however, observe that constant query testability extends beyond 2RPQs. 

\smallskip

For the sake of completeness, we also note that an analogous result cannot be expected for non-monadic Datalog. In fact, even a counterpart of Theorem~\ref{thm:CY} fails to hold.
\begin{example}
  Consider the non-recursive Datalog program $\Pi$ defined as
  \[
P(x,y,z)
  \leftarrow R_1(x,y) \wedge R_2(y,z) \qquad
\mathsf{goal}()
  \leftarrow P(x,y,z) \wedge R_3(z,x).
\]
Note that $\Pi$ is $\alpha$-acyclic, no matter whether one demands only the rule bodies to be $\alpha$-acyclic, or the rule bodies extended with the head atoms. However, $\Pi$ is clearly equivalent to the CQ
\[
q_{\triangle}
  = \exists x,y,z\,
    \bigl(
      R_1(x,y) \wedge
      R_2(y,z) \wedge
      R_3(z,x)
    \bigr),
\]
which by the results of \cite{chen2019testability} is not constant query testable with one-sided error.
\end{example}

\subsection{Strongly Linear Programs}
\label{subsect:linearacyclic}
We consider the class of Boolean linear acyclic MDLog programs
such that, in every rule body, (i)~there are exactly two variables
and (ii)~no IDB atom  uses the head variable. Here we assume that atoms of the form $\mn{true}(x)$ are permitted in rule bodies; note that such atoms can always be removed from MDLog programs, although at the expense of increasing the number of variables in the rule body. We call programs of this form \emph{strongly linear}. For simplicity, we restrict our attention to binary schemas. We conjecture, however, that the main result of this section also holds for unrestricted schemas.

Strongly linear programs are closely related to 2RPQs, but also differ in that they
 admit parallel edges and reflexive loops,  may be disconnected, and are Boolean.
We call them strongly linear because their satisfaction is  witnessed by a collection
of sequences of constants, without any branching whatsoever, not even single edges branching from the sequences. 
\begin{example}
    Consider the following strongly linear program $\Pi$. It is true on databases that  contain an $R,S$-parallel edge and an $R$-path between two $S$-selfloops of length at least one.
    \begin{align*}
        \Pi = \{&P_1(x) \leftarrow R(x,y) \land S(y,y), \quad 
        P_1(x) \leftarrow R(x,y) \land P_1(y), \quad
        P_2(x) \leftarrow S(x,x) \land R(x,y) \land P_1(y),\\
        &P_3(x) \leftarrow \mn{true}(x) \land P_2(y), \quad
        \mn{goal()} \leftarrow P_3(x) \land R(x,y) \land S(x,y)\}
    \end{align*}
\end{example}
Our aim in this section is to prove the following.
\begin{theorem}
\label{thm:linacycmdlog}
    Let $\Pi$ be a strongly linear MDLog program over a binary schema. Then falsity of\/ $\Pi$ is constant query testable with one-sided error. 
\end{theorem}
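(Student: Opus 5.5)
The plan is to reduce falsity of $\Pi$ to multi-digraph unreachability and then apply Theorem~\ref{thm:multidigraphproptest}, following the same pattern as the proof of Theorem~\ref{theo:testingCRPQ}. First I remove the disconnected rules. A strongly linear rule body has exactly two variables $x$ (the head variable) and $y$, and, apart from $\mn{true}$ atoms, each atom has the form $R(x,y)$, $R(y,x)$, $R(x,x)$, $R(y,y)$ or $P'(y)$ with $P'$ an IDB relation. Call a body \emph{connected} if it contains an atom $R(x,y)$ or $R(y,x)$.

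A derivation of $\Pi$ uses a finite sequence of rules. Cutting this sequence at the disconnected rules, such as $P_3(x)\leftarrow \mn{true}(x)\wedge P_2(y)$, shows that $\Pi$ is equivalent to a finite union $\bigvee_i \bigwedge_j \Pi_{ij}$, where each $\Pi_{ij}$ is a Boolean strongly linear program all of whose bodies are connected. Finiteness holds because the union ranges over paths in the finite dependency graph of $\Pi$ that pass through disconnected rules, and these can be shortcut to finitely many choices.

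The combinatorics of this decomposition are straightforward. Suppose $D$ is $\varepsilon$-far from falsifying the union of $k$ disjuncts. Then some disjunct is $\varepsilon/k$-far from being falsified, since otherwise deleting the union of the small witness sets would falsify the whole program. Suppose further that $D$ is $\delta$-far from falsifying a conjunction. Then it is $\delta$-far from falsifying every conjunct, because falsifying one conjunct falsifies the conjunction. The tester therefore runs, for every $i,j$, an $\varepsilon/k$-tester for falsity of $\Pi_{ij}$ (with error amplified by repetition). It rejects iff for some $i$ all the $\Pi_{ij}$-testers rejected. One-sided error is preserved, since a tester for $\Pi_{ij}$ only rejects after observing a witness. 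So it suffices to handle programs with connected bodies.

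For such a program I build a multi-digraph $G_{D,\Pi}$ whose vertices are the pairs $\langle a,P\rangle$, for $a\in\adom(D)$ and $P$ a monadic IDB relation, together with a super-source $s$, a target $t$, and a dead-end vertex $\bot$. In every connected rule I fix an \emph{anchor} atom $R(x,y)$ or $R(y,x)$.

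The edges are defined as follows.
\begin{itemize}
\item Consider a rule $P(x)\leftarrow q(x,y)\wedge P'(y)$ and a copy $i\le D(\alpha)$ of a fact $\alpha$ matching its anchor with $y\mapsto b$ and $x\mapsto a$. This gives an edge from $\langle b,P'\rangle$. The edge goes to $\langle a,P\rangle$ if all remaining atoms of $q$ (loops $R(a,a)$, $S(b,b)$, parallel edges) hold in $D$, and to $\bot$ otherwise.
\item Rules without IDB atoms give edges of the same kind from $s$.
\item Goal rules whose body contains an atom $P'(y)$ give edges from $\langle b,P'\rangle$ to $t$, again via their anchor fact and routed to $\bot$ if a side condition fails.
\item Goal rules without IDB atoms give edges $s\to t$.
\end{itemize}
Routing failing edges to $\bot$ instead of omitting them is the key trick. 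It makes the out-degree of every vertex a weighted sum of size-query answers, exactly as in $(*)$ of the proof of Theorem~\ref{theo:testingCRPQ}. A uniformly random out-edge of $\langle b,P'\rangle$ is then sampled as follows: choose a pair (rule, anchor direction) proportionally to the relevant size queries at $b$, draw a completion of the anchor, and check the constantly many side conditions with size queries. A random out-edge of $s$ is obtained in the same way with completion queries of the form $(\ast,\ast)$.

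Condition~(1) of Definition~\ref{def:proptestreduction} holds because $t$ is reachable from $s$ iff $D\models\Pi$: paths avoiding $\bot$ correspond exactly to bottom-up derivations along chains of constants. For Condition~(2), every edge derives from a unique copy of an anchor fact, and deleting that copy deletes the edge. Hence an $s$--$t$ edge cut of size $m$ yields a set of at most $m$ fact deletions that falsifies $\Pi$. Since $|E|\le c_\Pi|D|$, $\varepsilon$-farness of $D$ transfers to $\varepsilon/c_\Pi$-farness of $G_{D,\Pi}$.

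I expect Condition~(2) to be the main obstacle, and in particular confirming that side atoms never need to be cut. Deleting a shared loop fact $S(a,a)$ may destroy many edges at once, but this only helps the falsifier, so only the cut-to-deletions direction is needed and it goes through via the anchors. A further subtlety is that $s$ has unbounded out-degree, which makes it reasonable to worry that Yoshida and Kobayashi's analysis does not apply. However, Theorem~\ref{thm:multidigraphproptest} covers arbitrary multi-digraphs with a single source, so all that is needed is to sample a uniform out-edge of $s$, which the $(\ast,\ast)$ completion queries provide.
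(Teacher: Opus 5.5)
Your argument is correct, but it follows a different route from the paper. The paper never builds a graph for $\Pi$ directly. It first eliminates reflexive loops and then multi-edges from rule bodies by two database-level gadget reductions: detours $N_1N_2$ resp.\ $N_1N_2N_3$ through fresh constants, whose middle edge is redirected to a dummy constant when the loops or parallel edges fail. The multiplicities (such as the out-degree $m_d$) are chosen so that deleting gadget facts is never cheaper than deleting the original facts. It then translates the loop- and multi-edge-free program into a 2RPQ evaluated at $(u,u)$ on $D$ extended by a hub $u$, with $U(u,a)$ of multiplicity $\deg_D(a)$, so that disconnected rules become $U^-U$ detours. Theorem~\ref{theo:testingCRPQ} then finishes.

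Your dead end $\bot$ is the same dummy-edge idea, moved into the product graph itself. Since edges correspond only to copies of anchor facts and side atoms are merely checked, the cut-to-deletion direction needs no multiplicity engineering and no translation of deleted gadget facts. A single construction thus replaces three reductions and their fairly tedious query simulations. The price is the logical decomposition of disconnected rules, the finiteness argument over segment types, and the combination of several testers. The paper's hub avoids all of these by keeping everything in one unreachability instance.

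Two details of your decomposition should be made explicit.
\begin{itemize}
\item The loop atoms of a cut disconnected rule have to be folded into the first resp.\ last connected rule of the neighbouring segments.
\item A segment with no connected rule gives a one-variable condition $\exists z\,L(z)$ that is not of your connected form. This happens for two consecutive disconnected rules, or for the free loop condition of a disconnected base or goal rule. Such a condition is absorbed by writing $\exists z\,(T(z,z)\wedge\cdots)$ as $\exists z,y\,(T(z,y)\wedge T(z,z)\wedge\cdots)$ with anchor $T(z,y)$, routing non-loop $T$-facts to $\bot$. Similarly, $\exists z\,\mn{true}(z)$ becomes the union of $\exists z,y\,R(z,y)$ over $R\in\schemaS$.
\end{itemize}
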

We establish Theorem~\ref{thm:linacycmdlog} by reduction to the 2RPQ case. In fact, we have defined strongly linear programs so as to enable such a reduction. We first address reflexive loops and multi-edges in rule bodies. More precisely, we show that if falsity of every strongly linear MDLog program that does not use these features is constant query testable with one-sided error, then the same is true for unrestricted strongly linear MDLog programs. To achieve this, we eliminate reflexive loops and multi-edges one-by-one, starting with reflexive loops.

Let $\Pi$ be a strongly linear MDLog program, $\alpha \leftarrow q \in \Pi$, and $x \in \mn{var}(q)$. Further let 
$\Pi'$ be obtained from $\Pi$ by manipulating the body $q$ in the following way, leaving the head $\alpha$ untouched:
\begin{itemize}
    \item remove all reflexive loops $R(x,x)$;
    \item consistently replace $x$ with a fresh variable $v$;
    \item add to $q$ the atoms $N_1(x,u),N_2(u,v)$ where $N_1,N_2$ are fresh relation symbols and $u$ is a fresh variable.
\end{itemize}
Note that while $\Pi'$ is no longer strongly linear, it can easily be rewritten into an equivalent strongly linear MDLog program by splitting up the manipulated rule, introducing fresh IDB relations. 
\begin{example}
    Consider the following three programs. $\Pi_1$ is the initial program, $\Pi_2$ the program after applying our self-loop removal and $\Pi_3$ is equivalent to $\Pi_2$ but rewritten to be strongly linear.
    \begin{align*}
        \Pi_1 &= \{\mn{goal} \leftarrow R(x,x) \land A_1(y)\} \quad \quad 
        \Pi_2 = \{\mn{goal} \leftarrow N_1(x,u) \land N_2(u,v) \land A_1(y)\}\\
        \Pi_3 &= \{P_1(x) \leftarrow \mn{true}(x) \land A_1(y), \quad
        P_2(x) \leftarrow N_2(x,y) \land P_1(y),\quad
        \mn{goal()} \leftarrow N_1(x,y) \land P_2(y)\}
    \end{align*}
\end{example}
In the appendix we prove the following.
\begin{restatable}{lemma}{lemreflexiveloopremoval}
	If falsity of\/ $\Pi'$ is constant query testable with one-sided error, then so is falsity of\/ $\Pi$.
\end{restatable}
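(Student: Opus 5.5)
The plan is to give a property testing reduction in the sense of Definition~\ref{def:proptestreduction} from falsity of $\Pi$ to falsity of $\Pi'$, and then apply Proposition~\ref{prop:reduction}(1). Let $R_1(x,x),\dots,R_m(x,x)$ be the reflexive loops removed from the manipulated body~$q$. The idea is that in $q'$ the head variable $x$ is linked to the rest of the body (now on~$v$) only through the path $N_1(x,u),N_2(u,v)$. So I will make this path exist in the translated database exactly at the constants that carry all the required loops.

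\emph{Construction of $f_{\text{DB}}$.} Given $D$, let $D'$ be $D$ together with one fresh constant $u_a$ and the facts $N_1(a,u_a)$ and $N_2(u_a,a)$ for every $a \in \adom(D)$ with $R_i(a,a)\in D$ for all $i\le m$. Both new facts get multiplicity $\mu(a):=\min_i D(R_i(a,a))$. Set $f_\varepsilon(\varepsilon)=\varepsilon/3$.

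\emph{Condition~(1).} Since $N_1,N_2$ are fresh and every $N_1$-edge leaving $a$ goes to $u_a$, whose only $N_2$-edge returns to $a$, any homomorphism of $q'$ into $D'$ must map $x$ and $v$ to the same constant $a$ and $u$ to $u_a$. Hence $a$ carries all loops $R_i(a,a)$, and composing with $v\mapsto x$ turns the match into a match of~$q$ in~$D$. Conversely, every match of $q$ in $D$ extends to a match of $q'$ in $D'$. All other rules are unchanged and only see the original facts. Transforming derivations node by node therefore shows that $D\models\Pi$ iff $D'\models\Pi'$.

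\emph{Condition~(2).} Suppose $D'$ is made to falsify $\Pi'$ by removing a set $X$ of at most $\frac{\varepsilon}{3}|D'|$ facts. Build a removal set $Y$ for $D$ as follows: keep every removal of an original fact, and replace each removed copy of an $N_1(a,u_a)$ or $N_2(u_a,a)$ fact by the removal of one copy of the loop $R_j(a,a)$ that attains the minimum $\mu(a)$. Then $|Y|\le|X|$. After removing $Y$, any match of $q$ at $a$ would require $\mu(a)$ to stay positive, and so the path $N_1N_2$ would survive in $D'\setminus X$; this uses the preceding equivalence applied to the modified databases. Because $|D'|\le 3|D|$, we get $|Y|\le \varepsilon|D|$. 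Taking the contrapositive, if $D$ is $\varepsilon$-far from falsifying $\Pi$, then $D'$ is $\varepsilon/3$-far from falsifying $\Pi'$.

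\emph{Condition~(3)} is the step I expect to be the main obstacle. Local queries are straightforward: a size or completion query for $N_1$ with $(a,\ast)$, or for $N_2$ with $(\ast,a)$, is answered by $m$ size queries $R_i$ on $(a,a)$, and similarly for queries anchored at $u_a$. Queries to the original relations are passed through unchanged. The difficulty is the global query $(\ast,\ast)$ for $N_1$ or $N_2$. It needs the total weight $\sum_a\mu(a)$ and a sample of $a$ proportional to $\mu(a)$, and neither is directly available by constantly many queries.

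To handle this I would first try to redefine the multiplicity as $D(R_1(a,a))$ for a single fixed loop relation $R_1$, and place the remaining loop conditions inside the construction. Concretely, I would add one extra fresh edge per further loop, after a rewriting of $\Pi'$ into strongly linear form, so that each fresh relation mirrors exactly one loop relation. Then $(\ast,\ast)$ queries on it correspond to $(\ast,\ast)$ queries restricted to reflexive $R_i$-facts.

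If even that restriction cannot be simulated, the fallback is to open up the tester obtained later via the 2RPQ reduction. That tester only issues global queries to pick a uniformly random starting fact. I would argue that it suffices to sample a uniformly random fact of $D$ and reject or retry, which costs a constant factor in $\varepsilon$ since the loop facts make up at least a constant fraction of the witnesses in the far case.
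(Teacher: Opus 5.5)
Your Conditions~(1) and~(2) are sound for your construction. Condition~(3), however, has a genuine gap, and neither of your repairs closes it.

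With $N_1/N_2$-edges only at loop-carrying constants and multiplicity $\mu(a)=\min_i D(R_i(a,a))$, the size query $N_1(\ast,\ast)$ must return $\sum_a \mu(a)$. The completion query $N_1(\ast,\ast)$ must sample $a$ with probability proportional to $\mu(a)$. No completion or size query can be restricted to reflexive tuples, so neither quantity is available with constantly many queries. Rejection sampling from $R_i(\ast,\ast)$ fails because reflexive facts may be an arbitrarily small fraction of the $R_i$-facts.

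Your first repair, using multiplicity $D(R_1(a,a))$, has exactly the same defect: $N_1(\ast,\ast)$ would then have to report the total multiplicity of reflexive $R_1$-facts. Your fallback of opening up the tester obtained via the 2RPQ route does not prove the lemma at all. The lemma must turn an \emph{arbitrary} one-sided tester for falsity of $\Pi'$ into one for $\Pi$. In addition, your claim that loop facts form a constant fraction of the witnesses in the far case is only justified when every derivation must use the manipulated rule.

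The missing idea is to choose the new multiplicities from local degree information whose global sum is known. The paper introduces a private constant $a_d$ for \emph{every} $d \in \adom(D)$. It adds $N_1(d,a_d)$ with multiplicity $m_d=\sum_{S\in\schemaS} D(S(d,\ast))$, the out-degree of $d$. It also adds an $N_2$-edge of the same multiplicity that returns from $a_d$ to $d$ if all loops hold at $d$, and is otherwise redirected so that it cannot complete an $N_1N_2$-path while still contributing $m_d$ to the total.

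Each new relation then has total multiplicity exactly $|D|$, so $|D'|=3|D|$ and $\varepsilon'=\varepsilon/3$. Size queries are answered from degree and loop information at the queried constant, or equal $|D|$. A $(\ast,\ast)$ completion query is simulated by sampling a random $\schemaS$-fact (size queries $S(\ast,\ast)$ plus one completion query) and taking its first argument, which yields $d$ with probability $m_d/|D|$.

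Condition~(2) survives because a loop $R(d,d)$ is itself an outgoing edge of $d$, so $m_d \ge \mu(d)$. Take the removal set in $D'$ minimal: every removed $N_1$- or $N_2$-fact at $d$ is then removed with all $m_d$ copies. That cost pays for deleting all $\mu(d)$ copies of a minimal loop at $d$. Your copy-by-copy charging in Condition~(2) must be replaced by this argument, since removing fewer than $m_d$ copies of $N_1(d,a_d)$ no longer breaks the path.
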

%

We next eliminate parallel edges. Let $\Pi$ be a strongly linear MDLog program, $\alpha \leftarrow q \in \Pi$, and $x,y \in \mn{var}(q)$ involved in a parallel edge. Further let 
$\Pi'$ be obtained from $\Pi$ by removing from $q$ all atoms $R(x,y)$ and $R(y,x)$, and  adding to $q$ the atoms $N_1(x,z_1),N_2(z_1,z_2),N_3(z_2,y)$ where $N_1,N_2,N_3$ are fresh relation symbols and $z_1,z_2$ fresh variables.
Note that while $\Pi'$ is no longer strongly linear, it can easily be rewritten into an equivalent strongly linear MDLog program by splitting up the manipulated rule, introducing fresh IDB relations. 
Also note that this reduction does not introduce reflexive loops. 

%
\begin{restatable}{lemma}{lemparalleledgeremoval}
	If falsity of\/ $\Pi'$ is constant query testable with one-sided error, then so is falsity of\/ $\Pi$.
\end{restatable}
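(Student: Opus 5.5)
We prove the lemma by giving a property testing reduction from falsity of $\Pi$ to falsity of $\Pi'$, in the sense of Definition~\ref{def:proptestreduction}, and then invoking Proposition~\ref{prop:reduction}(1).

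\textbf{The map $f_{\text{DB}}$.} Given an input database $D$ for $\Pi$, we build $D'$ by replacing every fact $R(a,b)$ with a path of fresh facts $N_1(a,c), N_2(c,c'), N_3(c',b)$. This is done for every relation $R$ that occurs on the parallel edge between $x$ and $y$ in the manipulated body. The fresh constants $c,c'$ depend on the pair $(a,b)$, and their multiplicities are chosen as discussed below. All other facts are kept, and the original facts $R(a,b)$ are also kept, since other rules still use $R$.

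The cleaner choice is to let the gadget encode the whole bundle of parallel atoms. Concretely, for each pair $(a,b)$ such that \emph{all} required atoms $R(a,b)$ and $R'(b,a)$ of the parallel edge are present, we add the gadget $N_1(a,c_{ab}), N_2(c_{ab},c'_{ab}), N_3(c'_{ab},b)$. Its multiplicity on the middle $N_2$ fact is $\min$ over the multiplicities of the involved facts, and the $N_1,N_3$ facts get very large multiplicity, so that cutting through them is never cheaper. The purpose of the middle relation $N_2$ is exactly that it is the unique cheap edge of the gadget.

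\textbf{Condition (1).} Homomorphisms of the body of $\Pi'$ into $D'$ correspond to homomorphisms of the original body into $D$. Fresh relations occur only in gadgets, and since $N_1,N_2,N_3$ appear only on this path, any match uses a full gadget between $a$ and $b$, which exists iff all the parallel atoms hold. Induction over derivations then gives $D\models\Pi$ iff $D'\models\Pi'$.

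\textbf{Condition (2).} Let $D'$ be made false by deleting a set $X$ of facts. We map $X$ back to $D$: deleting an $N_2$ gadget fact is charged to deleting one of the underlying $R$ facts of minimal multiplicity, and original facts map to themselves. The result falsifies $\Pi$ on $D$. The size of $D'$ is at most a constant times $|D|$, where the constant depends on $\Pi$. This holds provided the $N_1,N_3$ multiplicities are also bounded by a constant times the corresponding multiplicities. So I would actually set their multiplicity to the sum of the involved multiplicities instead of ``very large'', and argue that replacing $N_1$ or $N_3$ deletions by $N_2$ deletions never costs more. Hence $\varepsilon$-farness yields $\varepsilon/c$-farness.

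\textbf{Condition (3).} This is the main obstacle: simulating completion and size queries on $D'$ with constantly many queries on $D$. Size queries for $N_2$ with $(\ast,\ast)$ require counting pairs satisfying the full parallel bundle, which is a join and is not available. I would first try to avoid this by not requiring the bundle in the gadget. Instead, we introduce one gadget per single fact of the relation $R$ designated on the parallel edge, and keep the remaining parallel atoms of the body as ordinary atoms attached at $x$ or $y$ via the original relations. If this breaks strong linearity, we use instead that testers we rely on only use completion queries of the form $(a,\ast)$ (Theorem~\ref{thm:multidigraphproptest}). Uniform completion over gadget facts from a vertex $a$ then reduces to a size and a completion query of $R$ on $(a,\ast)$, with fresh constants named by the sampled fact. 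Verifying that this restricted access suffices for the eventual tester, which comes through the 2RPQ reduction, is where I expect the real work to lie.
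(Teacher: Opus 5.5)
There is a genuine gap at Condition~(3), which you identify but do not close. In your bundle construction, whether a gadget between $a$ and $b$ exists, and with what multiplicity, depends on a join over the parallel atoms. So the size queries $N_1(a,\ast)$, $N_3(\ast,b)$, $N_2(\ast,\ast)$, and uniform sampling for $N_1(a,\ast)$ or $N_2(\ast,\ast)$, all require counting or sampling pairs that satisfy the whole bundle. This cannot be done with constantly many queries to $D$, and rejection sampling has no constant bound.

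Neither fallback repairs this.
\begin{itemize}
\item Replacing only a designated atom by the path and keeping the other parallel atoms changes the program. The lemma is about the specific $\Pi'$ in which all atoms of the bundle are removed. Moreover, keeping e.g.\ $R'(x,y)$ next to the new path $x,z_1,z_2,y$ creates a 4-cycle in the body, so the result is neither $\alpha$-acyclic nor rewritable as a strongly linear program.
\item Relying on testers that use only completion queries $(a,\ast)$ does not prove the lemma as stated. Its hypothesis is an arbitrary tester for $\Pi'$, and Proposition~\ref{prop:reduction} needs all completion and size queries to be simulable.
\item Even the concrete downstream testers lack this restricted access. The reduction to 2RPQs simulates $U(u,\ast)$ via $S(\ast,\ast)$ size and completion queries and needs vertex degrees. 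The product-graph simulation needs size queries like $N_1(a,\ast)$. In your design these are again join counts.
\end{itemize}

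The missing idea is to make the presence and multiplicities of all gadget facts independent of the bundle test, and to encode the test only in \emph{where} the middle edge points. The paper builds one gadget for every fact $S(d,e)\in D$, of any relation, with fresh constants $a_{S(d,e)},b_{S(d,e)},c_{S(d,e)}$:
\begin{itemize}
\item $N_1(d,a_{S(d,e)})$, $N_1(e,b_{S(d,e)})$, $N_3(a_{S(d,e)},d)$ and $N_3(b_{S(d,e)},e)$ are always present, with multiplicity $D(S(d,e))$;
\item $N_2(a_{S(d,e)},b_{S(d,e)})$ is added iff $F[x/d,y/e]\subseteq D$, and otherwise the dead-end fact $N_2(a_{S(d,e)},c_{S(d,e)})$ is added instead;
\item symmetrically, the $N_2$-fact leaving $b_{S(d,e)}$ is decided by $F[x/e,y/d]$.
\end{itemize}
Checking a given pair costs $|F|$ size queries, and $|D'|=7|D|$. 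Every size query then has a closed form in $|D|$ and local degrees. Completion queries reduce to sampling a fact incident to a vertex (per-relation size queries plus one completion query) and naming the gadget constants after it.

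Condition~(2) then holds with $\varepsilon'=\varepsilon/7$. Take a pointwise-minimal deletion set in $D'$, which removes facts entirely, and charge any deleted gadget fact of $S(d,e)$ to deleting $S(d,e)$ itself. If a match of $\hat\rho$ at $(d,e)$ survives in $D$, the gadget of any surviving bundle fact survives in $D'$.

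Your fallback of one gadget per fact of a designated relation was close to this. It needed the conditional middle edge in place of keeping the other parallel atoms in the body.
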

%
%
%
%
%
We can thus assume that rule bodies 
 in strongly linear  MDLog programs do neither contain reflexive loops nor multi-edges.
The following completes the proof of Theorem~\ref{thm:linacycmdlog}. 
\begin{restatable}{lemma}{fromMDLogtoRPQs}
  Let $\Pi$ be a strongly linear MDLog program in which  rule bodies do neither contain reflexive loops nor multi-edges. Then falsity of\/ $\Pi$ is constant query testable with one-sided error.
\end{restatable}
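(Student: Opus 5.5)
The plan is to exploit the very rigid shape of the rule bodies and reduce to (Boolean, endpoint-free) 2RPQs, then to multi-digraph unreachability via Theorem~\ref{thm:multidigraphproptest}.

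\textbf{Step 1: structure of rule bodies.} Since every body has exactly two variables $x,y$ (with $x$ the head variable for non-goal rules), no reflexive loops, no multi-edges, the schema is binary, and linearity holds, each non-goal rule is of one of the following types:
\begin{itemize}
\item a \emph{step} rule $P(x)\leftarrow S(x,y)\wedge P'(y)$ with $S\in\{R,R^-\}$;
\item an \emph{end} rule $P(x)\leftarrow S(x,y)$, possibly with $\mn{true}$ atoms;
\item a \emph{jump} rule $P(x)\leftarrow \mn{true}(x)\wedge P'(y)$;
\item a trivial rule whose body consists only of $\mn{true}$ atoms.
\end{itemize}
Goal rules have the analogous shapes. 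A derivation is therefore a linear chain: a sequence of walks in $D$, each ended by an end rule, and separated by jumps to arbitrary constants. In a minimal derivation, no IDB relation needs to be the target of two jumps, since a jump to $P'$ makes $P$ true everywhere. Hence there are at most $|\mathrm{IDB}|$ jumps.

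Consequently $\Pi$ is equivalent to a finite disjunction $\bigvee_j\bigwedge_i \exists a,b\,\bigl((a,b)\in q_{ij}(D)\bigr)$. Here each $q_{ij}$ is a 2RPQ whose NFA has the IDB relations as states and the step and end rules as transitions, with start and final states fixed by the segment. The conjunctions are over independent segments, so jumps decouple them.

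\textbf{Step 2: combining testers.} Let $N$ be the total number of pairs $(i,j)$. The tester runs, for every $(i,j)$, a one-sided $\varepsilon/N$-tester for falsity of the Boolean query $\exists a,b\,q_{ij}$. It rejects iff for some $j$ all testers for $q_{1j},q_{2j},\dots$ found a witness. Found witnesses are genuine, so one-sided error is preserved.

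For soundness, assume that for every $j$ some $q_{i_jj}$ is only $\varepsilon/N$-close to false. Deleting the union of the corresponding deletion sets removes at most $\varepsilon|D|$ facts. Since all queries are monotone, this deletion falsifies every disjunct, hence $\Pi$. Thus if $D$ is $\varepsilon$-far, some $j$ has all its $q_{ij}$ $\varepsilon/N$-far, and amplification handles the success probability.

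\textbf{Step 3: Boolean 2RPQs with free endpoints.} I modify the product $G_{D\times q}$ by adding a super-source $s$ and using the target set $T=\adom(D)\times F$; the multi-target version in Theorem~\ref{thm:multidigraphproptest} allows unbounded $T$. For each fact $R(a,b)$ of multiplicity $m$, I add $c\cdot m$ edges $s\to\langle a,s_0\rangle$ and $c\cdot m$ edges $s\to\langle b,s_0\rangle$, with $c=|\Delta|$. Then $|G|=O(|D|)$.

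A completion query from $s$ is simulated as follows. Use size queries $(R,\ast,\ast)$ to pick $R$ proportionally, then a completion query $(R,\ast,\ast)$, and then a uniformly random endpoint. Queries from other vertices are simulated exactly as in the proof of Theorem~\ref{theo:testingCRPQ}, and membership in $T$ is decided from the state component.

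\textbf{Main obstacle: transferring farness.} I would mimic Lemma~\ref{lem:efartranslates}. Take a small edge set in $G$ that cuts $s$ from $T$. Replace every deleted edge derived from a fact by that fact. Replace the deletion of all $s$-edges into $\langle a,s_0\rangle$, which costs $c\cdot\deg(a)$, by deleting all facts incident to $a$, which costs at most $\deg(a)$. A partial deletion of these $s$-edges does not cut anything, so it can be discarded.

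It remains to check that the resulting fact set of size $O(\text{cut})$ really falsifies $\exists a,b\,q$. This needs care: a surviving walk from some $a$ whose source edges were not all removed would give a path in $G$ after deletion, contradicting the cut. That gives farness parameter $\varepsilon/(c'|\Delta|)$ and completes the reduction.
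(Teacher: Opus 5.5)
Your proof is correct, but it takes a different route from the paper's. The paper never decomposes $\Pi$. It makes a single reduction to a \emph{pointed} 2RPQ over $\schemaS \uplus \{U\}$: it adds a hub constant $u$ with facts $U(u,a)$ of multiplicity $\deg_D(a)$ (so $|D'| = 3|D|$) and asks for an $L(q)$-labelled path from $u$ back to $u$. The leading $U$ and trailing $U^-$ free the endpoints. Every disconnected (jump) rule is simulated inside the automaton by a $U^-U$ detour through $u$, via a rule-specific state $s_\rho$. Theorem~\ref{theo:testingCRPQ} is then applied as a black box, with farness transferred by trading a deleted $U(u,d)$ for all facts incident to $d$, and $\varepsilon' = \varepsilon/3$.

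You instead eliminate jumps combinatorially: you bound their number in a minimal derivation and write $\Pi$ as a DNF of endpoint-free Boolean 2RPQs. You then combine one-sided testers using monotonicity and a union bound over deletion sets. Finally, you free the endpoints with a degree-weighted super-source in the product graph. That super-source is the paper's $U$-hub moved one level down, which is why you must redo the analogue of Lemma~\ref{lem:efartranslates} rather than cite it.

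Your route buys two reusable facts. First, falsity of endpoint-free Boolean 2RPQs is one-sided constant-query testable. Second, one-sided testability of falsity is closed under positive Boolean combinations of monotone queries. The paper's route buys uniformity: endpoints and jumps are handled by the same gadget in one reduction. It needs no structural analysis of minimal derivations and has a milder $\varepsilon$-dependence ($\varepsilon/3$ instead of $\varepsilon/N$, where $N$ can be large).

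One point to tidy. The rule $P(x) \leftarrow \mathsf{true}(x)\wedge\mathsf{true}(y)$ should be treated as a jump without a source, so a segment may start at an arbitrary element. Also, only the bottom-most walk is delimited by an end rule; the others are delimited by jumps.
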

\begin{proof}
  The proof is by reduction to property testing of 2RPQs, exploiting Theorem~\ref{theo:testingCRPQ}. 
  Let $\Pi$ be a strongly linear MDLog program  of the form described in the lemma, over some schema $\schemaS$.
  As the reduction target, we define a 2RPQ $q$ over schema  $\schemaS' = \schemaS \uplus \{U\}$. In the reduction, we then associate every input $D$ to $\mathcal{P}_\Pi$ with an input $(D', uu)$ to $\mathcal{P}_q$.
 
 To facilitate the comprehension of the construction of $q$, we first
 make precise the databases $D'$ used in the reduction. Let $D$ be an
 \Sbf-database and set $\mn{deg}_D(a) = \sum_{R \in \schemaS} D(R(a,*)) + D(R(*,a))$ for all $a \in \mn{adom}(D)$.
    To define $D'$, take a fresh constant $u$ and set
    $$
        D'(R(a,b)) := D(R(a,b)) \text{ for all } R(a,b) \in D
        \qquad
        D'(U(u,a)) := \mn{deg}_D(a) \text{ for all } a \in \adom(D).
    $$
    It is easy to see that $|D'| = 3 |D|$.
    Note that $U$ connects $u$ to every element in the active domain of $D$, which we use to deal with the potential disconnectedness of $\Pi$. The multiplicity of the $U$-edges is chosen so as not to interfere with $\varepsilon$-farness, and to make it easy to answer size queries to $U$.
  
    The 2RPQ used as the reduction target has the form 
    $q = (Q, \Sigma_{\schemaS'}, \Delta, s_0, \{s_{\mn{acc}}\})$ with 
    \[
      Q := \{s_0, s_{\mn{init}}, s_{\mn{fin}}, s_{\mn{acc}}\}
      \;\cup\; \{s_P \mid P \text{ an IDB relation of } \Pi\}
      \;\cup\; \{s_\rho \mid \rho \text{ a disconnected rule of } \Pi\}.
    \]

    Since the reduction database is $(D',uu)$, we ask for an $L(q)$-labeled path from $u$ to $u$. 
    Intuitively, this path follows a linear derivation of $\Pi$ in the input database $D$. Being in state $s_P$ means that $P$ holds at the constant currently visited. 
    We let $q$ travel an $U$-edge at the beginning so that the
    actual derivation can start at any constant. Once that $\mn{goal}$
    has been derived, $q$ travels  $U^-$ to return to $u$. For a disconnected rule, $q$ travels $U^-$ followed by $U$ to reach 
    any constant. This is implemented via the states $s_\rho$.
    The transition relation is detailed in the appendix where we 
    also show that, if we let $\varepsilon' = \frac{\varepsilon}{3}$ for any $\varepsilon \in (0,1)$, then Conditions~(1) to~(3) of Definition~\ref{def:proptestreduction} are satisfied.
\end{proof}

\subsection{Simple Branching Programs}
\label{subsect:tinybranching}

We next ask whether there are queries that are both recursive and branching, and for which falsity is  constant query testable with one-sided error. We show 
that this is indeed the case for what we call simple branching programs, which  allow linear recursion along a single relation together with star queries at both endpoints. These stars are the only source of branching; in particular, branching is not nested inside the recursion. While this class departs only  mildly from strongly linear programs, proving that falsity  of simple branching programs is constant query testable does not appear to be possible by reduction to 2RPQs and thus requires different, more direct techniques.

Let \Sbf be a binary schema. Throughout this section, we admit relation symbols from 
$\Sigma_\schemaS$ in Datalog programs, with an atom $R^-(x,y)$  simply being an alternative presentation of $R(y,x)$.
A \emph{simple branching program (SBP)} over \Sbf
is a Boolean MDLog program that takes the following form:
\begin{equation*}
    \mn{goal}() \leftarrow P(x) \wedge  \bigwedge_{i=1}^k R_i(w_i,x)
\quad
P(x) \leftarrow S(x,y) \wedge P(y)
\quad
P(x) \leftarrow S(x,y) \wedge \bigwedge_{j=1}^\ell T_j(y,z_j)
\tag{$*$}
\end{equation*} 
where 
$S,R_1,\dots,R_k,T_1,\dots,T_\ell \in \Sigma_\Sbf$ all use distinct relation symbols. 
%
%
The goal of this section is to prove the following.
\begin{restatable}{theorem}{thmsimplebranching}
\label{thm:simple_branching}
  Let $\Pi$ be a simple branching program. Then falsity of\/ $\Pi$ is constant query
  testable with one-sided error.
\end{restatable}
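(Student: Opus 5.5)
We follow the strategy outlined in the introduction: first normalize the program, then reduce to simple linear programs, and finally use a Menger-type argument to reduce to $\alpha$-acyclic UCQs, which are testable by \cite{chen2019testability} (cf.\ Theorem~\ref{thm:CY}).

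\emph{Normalization.} We first remove reflexive loops and multi-edges from rule bodies. Since all relation symbols in $(*)$ are distinct, the only way such features arise is through homomorphic images in $D$; as in the reflexive-loop and parallel-edge removal lemmas of Section~\ref{subsect:linearacyclic}, the relevant collapses can be encoded by finitely many additional goal rules. We handle each resulting variant separately.

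\emph{Reduction to simple linear programs (SLPs).} Unfolding the program, $\Pi$ is true on $D$ if and only if there is an $S$-path $x=a_0,a_1,\dots,a_n$ with $n\ge 1$ such that:
\begin{itemize}
\item $a_0$ has an $R_i$-in-neighbour for every $i$ (a ``start star''), and
\item $a_n$ has a $T_j$-out-neighbour for every $j$ (an ``end star'').
\end{itemize}
We define the SLP $\Pi_{\mathrm{lin}}$ as the program asking for an $S$-path from the set $A$ of start vertices to the set $B$ of end vertices. The key combinatorial claim is the following. If $D$ is $\varepsilon$-far from falsifying $\Pi$, then either
\begin{itemize}
\item[(a)] $D$ is $\varepsilon/2$-far from falsifying the non-recursive $\alpha$-acyclic CQ obtained by fixing the path length to some constant $c(\varepsilon)$, or
\item[(b)] many disjoint witnesses of bounded length exist.
\end{itemize}
To prove this, we view the falsification problem as a weighted cut problem. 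One may delete star edges (an $R_i$ or $T_j$ fact, with the cost of destroying all stars at a vertex equal to the minimal multiplicity among its star relations) or $S$-edges. We build an auxiliary multi-digraph with a super-source $s$ and a super-sink $t$. For each vertex $a$, we connect $s\to a$ with capacity $\min_i \#R_i(\ast,a)$ and $a\to t$ with capacity $\min_j \#T_j(a,\ast)$, and we add the $S$-edges with their multiplicities. Every edge cut then corresponds to a falsifying deletion set of at most the same size. Hence $\varepsilon$-farness implies that the min cut exceeds $\varepsilon|D|$, and by Menger's theorem there are more than $\varepsilon|D|$ edge-disjoint $s$–$t$ paths. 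Since their total length is at most $|D|+2\varepsilon|D|$ (each uses distinct edges), at least $\varepsilon|D|/2$ of them have length at most $L=O(1/\varepsilon)$.

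\emph{Reducing to bounded length, and the main obstacle.} The capacities on $s\to a$ involve minima over several relations, so the edge-disjoint short paths must be converted into a lower bound on the farness of the UCQ
\[
q_L=\bigvee_{n=1}^{L}\exists\,\bigl(\textstyle\bigwedge_i R_i(w_i,x_0)\wedge S(x_0,x_1)\wedge\cdots\wedge S(x_{n-1},x_n)\wedge\bigwedge_j T_j(x_n,z_j)\bigr).
\]
Each disjunct of $q_L$ is $\alpha$-acyclic (a path with stars at both ends, over distinct symbols). We must show that $D$ is $\varepsilon'$-far from falsifying $q_L$ for some $\varepsilon'=\varepsilon'(\varepsilon)$. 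For this, take any deletion set $X$ falsifying $q_L$. The short paths that survive $X$ would still witness $q_L$, so $X$ must hit each of the $\varepsilon|D|/2$ short flow paths. Flow paths are edge-disjoint in the auxiliary graph, but several of them may share a vertex's star. This is acceptable, because the $s\to a$ capacity bounds the number of paths through $a$ by the minimal star multiplicity, and destroying a star costs exactly that minimum. Hence $|X|\ge\varepsilon|D|/2$.

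The delicate part, which we expect to be the main obstacle, is making the correspondence between cuts and deletions exact in the multiset setting. This requires checking that deleting one $R_i$-fact reduces the number of stars at $a$ by at most one unit of capacity. We also need to ensure that paths passing through $s$-vertices in the middle are handled, which we do by shortcutting to the last start vertex.

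Once $\varepsilon'$-farness of $q_L$ is established, we conclude as follows. Since $q_L\to\Pi$, falsity of $\Pi$ implies falsity of $q_L$, so the tester that runs the Chen–Yoshida tester for $q_L$ with parameter $\varepsilon/2$ has one-sided error and uses constantly many queries.
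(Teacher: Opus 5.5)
Your auxiliary graph does not enforce $n\geq 1$. Whenever a constant $a$ carries both a start star and an end star, $s\to a\to t$ is an $s$--$t$ path that uses no $S$-edge, and such a path witnesses neither $\Pi$ nor any disjunct of $q_L$. Menger's theorem only gives you \emph{some} family of more than $\varepsilon|D|$ edge-disjoint paths, and that family may consist entirely of these trivial paths. For instance, let $k=\ell=1$ and let $D$ consist of $R(w_a,a)$, $T(a,z_a)$, $R(w_b,b)$, $T(b,z_b)$, $S(a,b)$, $S(b,a)$, each with multiplicity $M$. Removing all copies of any single one of these facts still leaves a witness, so falsifying $\Pi$ costs $2M=|D|/3$, and $D$ is $\varepsilon$-far for all $\varepsilon<1/3$. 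Yet the $M$ copies of $s\to a\to t$ together with the $M$ copies of $s\to b\to t$ already form a maximum family of $2M$ edge-disjoint paths. A set $X$ falsifying $q_L$ need not touch any of them, so your bound $|X|\geq\varepsilon|D|/2$ does not follow. Your normalization step and the dichotomy (a)/(b) are not the problem, but both are superfluous: SBP rule bodies contain no reflexive loops or multi-edges, and the dichotomy is never used.

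The repair is to split every constant $a$ into two vertices. Let $a^0$ be entered only from $s$, with capacity $\min_i \#R_i(\ast,a)$. Let $a^1$ be left only towards $t$, with capacity $\min_j \#T_j(a,\ast)$. Each $S(a,b)$ of multiplicity $m$ then contributes $m$ copies of $a^0\to b^1$ and $m$ copies of $a^1\to b^1$. This is essentially the paper's product graph $G_{D\times q_\Pi}$, with its initial and recurrent $S$-edges, now carrying your star capacities. Every $s$--$t$ path is then a witness with $n\geq 1$, and minimal cuts still translate into falsifying deletion sets of no larger size. In your charging step, a witness can only be destroyed in three ways:
\begin{itemize}
\item by fully removing some $S(a,b)$, which costs $m$ and kills at most $2m$ of the edge-disjoint paths;
\item by removing all $R_i(\ast,a)$ for some $i$, which costs at least the capacity of $s\to a^0$ and hence at least the number of paths starting there;
\item symmetrically, by removing all $T_j(a,\ast)$ for some $j$.
\end{itemize}
So you lose only a factor $2$, and the multiset issue you worried about disappears.

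With this fix your proof is correct, and it takes a different route from the paper's. The paper first extracts more than $\frac{\varepsilon}{6}|D|$ \emph{fact-disjoint} realizations of an SLP; this needs the overlap analysis between initial and recurrent $S$-edges in Lemma~\ref{lem:farness_realizations}. It then completes them with the other $R_i$- and $T_j$-edges via an exchange argument on a minimal $m$-far subdatabase (Lemma~\ref{lem:simplecrpq_to_realizations}), and only then counts short realizations. Your min-capacities handle all star relations at once, and charging an arbitrary $q_L$-falsifying set directly against edge-disjoint paths makes fact-disjointness unnecessary. What the paper's detour buys is the stronger, removal-lemma-style statement about many fact-disjoint realizations, which is the template it discusses in Section~\ref{sec:musing}.
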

We conjecture that Theorem~\ref{thm:simple_branching} can be slightly generalized, e.g.\
to SBPs that admit reflexive loops and multi-edges. We do not know, by contrast, how to deal with other seemingly harmless extensions. For instance, this is the case for replacing the $R_i$ and $T_j$ edges with paths of constant length and for dropping the distinctness of the relation symbols.

Note that, due to the symmetric definition
of SBPs, we may assume
w.l.o.g.\ that $S \in \Sbf$, that is, $S$ is not of the form $R^-$. We shall do so in what follows. 
A database $D'$ is a \emph{subdatabase} of an \Sbf-database $D$ if $D'(\alpha) \leq D(\alpha)$ for all \Sbf-facts $\alpha$. We call subdatabases $D_1,D_2$ of $D$ 
\emph{fact-disjoint} if for every fact $\alpha \in D$, $D_1(\alpha)+D_2(\alpha) \leq D(\alpha).$ A \emph{realization} of a Boolean MDLog program
$\Pi$ in a database $D$ is a minimal subdatabase
$D'$ of $D$ such that $D' \models \Pi$.
A \emph{simple linear program (SLP)} is an SBP, according to ($*$),  in which $k=\ell=1$.
%
We first show the following, using Menger's Theorem.
\begin{restatable}{lemma}{lemfarnessrealizations}
  \label{lem:farness_realizations}
  Let $\Pi$ be a simple linear program over a
  schema $\schemaS$, $D$ an $\schemaS$-database, and $\epsilon \in (0,1)$. 
  If $D$ is $\epsilon$-far from $D \not \models \Pi$, then there are more than $\frac{\epsilon}{6} \cdot |D|$ fact-disjoint realizations of\/ $\Pi$ in $D$.
\end{restatable}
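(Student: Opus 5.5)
The plan is to encode realizations of $\Pi$ as source-to-sink paths in an auxiliary multi-digraph, and then apply Menger's theorem (edge version) together with a small combinatorial repair step. Write the SLP as
\[
\mn{goal}() \leftarrow P(x)\wedge R(w,x),\qquad P(x)\leftarrow S(x,y)\wedge P(y),\qquad P(x)\leftarrow S(x,y)\wedge T(y,z),
\]
with $R,S,T\in\Sigma_\Sbf$ using distinct relation symbols and $S\in\Sbf$. A realization then consists of:
\begin{itemize}
\item one $R$-fact ending in some $x_0$;
\item an $S$-walk $x_0\to x_1\to\dots\to x_n$ with $n\ge 1$;
\item one $T$-fact leaving $x_n$.
\end{itemize}
Conversely, any such configuration contains a realization using a subset of its facts.

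\textbf{Construction of $G$.} The graph has a source $s$, a sink $t$, and two copies $(a,0)$ and $(a,1)$ of every $a\in\adom(D)$. Every copy of a fact yields edges as follows:
\begin{itemize}
\item an $R$-fact whose second argument is $x$ (after resolving a possible inverse) gives an edge $s\to(x,0)$;
\item a $T$-fact whose first argument is $y$ gives an edge $(y,1)\to t$;
\item an $S$-fact $S(a,b)$ gives two edges, $(a,0)\to(b,1)$ and $(a,1)\to(b,1)$.
\end{itemize}
The layering enforces $n\ge1$. Every realization yields an $s$-$t$ path whose edges derive from its facts. Conversely, every simple $s$-$t$ path uses exactly one $R$-edge, exactly one $T$-edge, exactly one ``first'' $S$-edge from layer $0$ to layer $1$, and otherwise only layer-$1$ $S$-edges. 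Hence every simple $s$-$t$ path yields a realization.

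\textbf{Applying Menger.} Let $c$ be the minimum size of an $s$-$t$ edge cut. Deleting from $D$ one copy of the fact underlying each cut edge costs at most $c$ deletions. Suppose the resulting database still contained a realization. Pick copies of its facts that avoid the deleted ones; by the correspondence above, they give an $s$-$t$ path avoiding the cut, a contradiction. So these deletions falsify $\Pi$. Since $D$ is $\epsilon$-far, $c>\epsilon|D|$. Menger's theorem then gives $c$ pairwise edge-disjoint $s$-$t$ paths, which we may take to be simple.

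\textbf{Repair step (main obstacle).} Edge-disjointness of the paths does not yet give fact-disjointness, because each $S$-fact copy has two edges. A given fact copy can therefore be used by one path as its first $S$-edge and by another path as a later $S$-edge. $R$-fact and $T$-fact copies are used at most once. Define a conflict digraph $H$ on the paths, with an arc $p\to p'$ whenever the fact copy of $p$'s first $S$-edge is used as a layer-$1$ edge of $p'$. Since each path has exactly one first $S$-edge, every node of $H$ has out-degree at most $1$. Any two paths sharing a fact copy are joined by an arc of $H$. The underlying undirected graph of $H$ is a pseudoforest, hence $3$-colourable. So $H$ has an independent set of at least $c/3>\frac{\epsilon}{3}|D|$ paths. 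Their facts, taken as copies, yield pairwise fact-disjoint subdatabases each satisfying $\Pi$. Shrinking each to a minimal one gives more than $\frac{\epsilon}{3}|D|\ge\frac{\epsilon}{6}|D|$ fact-disjoint realizations.

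Two points need checking in the write-up. First, the multiplicity bookkeeping: edges must be indexed by fact copies $1\le i\le D(\alpha)$, so that edge-disjointness translates into the fact-disjointness condition $D_1(\alpha)+D_2(\alpha)\le D(\alpha)$. Second, the direction conventions under inverses. The slack between $1/3$ and $1/6$ absorbs any off-by-constant loss there, for instance if one prefers to treat the two copies of an $S$-fact as sharing a single multiplicity unit.
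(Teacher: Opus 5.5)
Your proof is correct. Its skeleton coincides with the paper's: your layered graph is the product graph $G_{D\times q_\Pi}$ for the 2RPQ $R\,S^+\,T$, with all $s_\epsilon$-vertices contracted into $s$ and all $s_T$-vertices contracted into $t$ (layers $0$ and $1$ are the states $s_R$ and $s_S$). Your cut argument is the paper's step ``$1\Rightarrow 2$'', and Menger's theorem is applied the same way. You differ in two places.

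First, you index edges by fact copies from the outset. The paper instead first reduces the SLP to a 2RPQ, and then reduces bag databases to set databases by splitting each relation symbol into $|D|$ indexed symbols. Your copy bookkeeping makes both reductions unnecessary.

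Second, and more interestingly, the repair step. The paper proves (Claims~1--3) that at most $m/2$ of the $m$ paths origin-overlap with three or more other paths. It then greedily selects among the remaining ones, losing a factor $2\cdot 3=6$. Its Claim~2 is exactly the statement that your conflict digraph $H$ has out-degree at most one, so the sets $\gamma(p)$ are disjoint in-neighbourhoods. Claim~3 amounts to the observation that at most half the nodes of $H$ can have in-degree at least two.

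You use this structure fully. Every subgraph of $H$ has at most as many edges as vertices, so $H$ is $2$-degenerate and hence $3$-colourable. A single colour class then already gives more than $\frac{\epsilon}{3}|D|$ pairwise fact-disjoint realizations. This is shorter than the counting-plus-greedy argument and improves the constant by a factor of two.

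The paper's route mainly buys reuse of the product-graph construction from Section~\ref{sect:RPQs}. It also lets the paper work with multiplicity-one databases, where ``an edge derives from a fact'' needs no copy indices.
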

%
In the product graph $G_{D\times q_\Pi}$, with $q_\Pi$ being $\Pi$ seen as a 2RPQ, let $V_1$ and $V_2$ be the sets of vertices whose automaton component is, respectively, the initial and the accepting state of $q_\Pi$. Thus, paths from $V_1$ to $V_2$ correspond to realizations of $\Pi$. The assumed $\varepsilon$-farness implies that every edge cut separating $V_1$ from $V_2$ has size greater than $\varepsilon|D|$, and Menger's theorem therefore gives a family of $m>\varepsilon|D|$ edge-disjoint paths from $V_1$ to $V_2$. These paths do not necessarily yield fact-disjoint realizations since an $S$-fact gives rise to two edges in the product graph: one corresponding to the first $S$-transition and one to the potential subsequent $S$-transitions. We show, however, that  at least $m/2$ paths share facts with at most two other paths, and greedily selecting  a path while discarding the paths that share a fact with it yields at least $m/6>(\varepsilon/6)|D|$ fact-disjoint realizations.
%
%

We now switch back to SBPs. Every SBP $\Pi$ gives rise to a 
collection of SLPs by
choosing one of the $R_i$
conjuncts in the first rule and one of the $T_j$ conjuncts in the last rule.
If a database $D$ is $\epsilon$-far from making $\Pi$ false,  we can use Lemma~\ref{lem:farness_realizations} to obtain a large number of fact-disjoint realizations of each of these SLPs and then combine them into fact-disjoint realizations of $\Pi$. This is made precise in the proof of the following result.

\begin{restatable}{lemma}{lemimplecrpqtorealizations}
  \label{lem:simplecrpq_to_realizations}
  Let $\Pi$ be a simple branching program over a schema \Sbf, $D$ an \Sbf-database, and $\epsilon \in (0,1)$. 
  If $D$ is $\epsilon$-far from making $\Pi$ false, then there are more than $\frac{\epsilon}{6} \cdot |D|$ fact-disjoint realizations of\/ $\Pi$ in $D$.
\end{restatable}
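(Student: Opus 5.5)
The idea is to reduce to Lemma~\ref{lem:farness_realizations} by a farness-transfer argument and then glue realizations of the component SLPs. For $i\in\{1,\dots,k\}$ and $j\in\{1,\dots,\ell\}$, let $\Pi_{ij}$ be the SLP obtained from $\Pi$ by keeping only the conjunct $R_i(w_i,x)$ in the goal rule and only $T_j(y,z_j)$ in the last rule. Then $\Pi$ is contained in every $\Pi_{ij}$. The key observation is that making $\Pi$ false is roughly as hard as making some $\Pi_{ij}$ false, but we need more. We want realizations of $\Pi$ itself, and these must use an $S$-path that ends at a node $x$ carrying all $R_i$ in-edges and starts from a node $y$ carrying all $T_j$ out-edges.

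I would proceed as follows. Call a constant $a$ \emph{good at the start} if every $R_i$ has an edge into $a$ (i.e.\ $R_i(\ast,a)$ is nonempty for all $i$), and \emph{good at the end} if $T_j(a,\ast)$ is nonempty for all $j$. Then $D\models\Pi$ iff some good-start $a$ has an $S$-path of length $\geq 1$ to some good-end $b$. Since $S$ and the $R_i,T_j$ use distinct relation symbols, removing facts affects these conditions independently. Consider the cheapest way to falsify $\Pi$. One option is to cut all $S$-paths from good-start to good-end vertices. Another is to, for each good-start vertex, delete all its $R_i$-edges for one cheapest $i$, and similarly at the ends. By $\epsilon$-farness every mixture of these strategies costs more than $\epsilon|D|$.

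Next I would build a single auxiliary SLP-like instance. Let $D^\ast$ be the database in which the multiplicity of the $R$-edge into $a$ is $\min_i D(R_i(\ast,a))$. Formally, introduce a fresh relation $R$ and put $\min_i$ many $R$-facts into $a$ from fresh constants, and analogously a $T$-relation at ends with $\min_j D(T_j(a,\ast))$. Keep $S$ unchanged. Every deletion set for the SLP $\Pi^\ast$ (with $R,S,T$) in $D^\ast$ lifts to a deletion set of the same size falsifying $\Pi$ in $D$: delete $R$-facts into $a$ by deleting all $R_{i}$-edges into $a$ for the minimizing $i$. Hence $D^\ast$ is $\epsilon'$-far from falsifying $\Pi^\ast$ with $\epsilon'|D^\ast|\geq\epsilon|D|$, using $|D^\ast|\le|D|$. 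Lemma~\ref{lem:farness_realizations} then yields more than $\frac{\epsilon}{6}|D|$ fact-disjoint realizations of $\Pi^\ast$ in $D^\ast$. Its Menger/edge-cut proof really only uses the cut size, so the bound is in terms of $\epsilon|D|$.

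Finally, I would lift back to $D$. Each realization of $\Pi^\ast$ consists of an $R$-fact into some $a$, an $S$-path from $a$ to some $b$, and a $T$-fact out of $b$. By fact-disjointness, at most $\min_i D(R_i(\ast,a))$ realizations use an $R$-fact into $a$. So we can assign to the realizations through $a$ distinct copies of an $R_i$-fact into $a$ for each $i$ simultaneously. Since the $R_i$ are distinct symbols, there are no collisions. The same holds at $b$ for the $T_j$. This gives fact-disjoint realizations of $\Pi$ in $D$, or at least subdatabases satisfying $\Pi$, each of which contains a minimal one. The main obstacle is the middle step: verifying rigorously that every cut in $D^\ast$ translates to a falsifying deletion in $D$ of no larger size. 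In particular, $S$-edges must be handled correctly when the same vertex is both good-start and good-end, and the direction in which $R_i^-$ or $T_j^-$ are read must be kept consistent.
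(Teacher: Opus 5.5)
Your argument is correct, but it takes a different route from the paper.

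\emph{The paper's route.} The paper stays inside $D$. It first passes to a subdatabase $D'\subseteq D$ that is minimal subject to still requiring more than $m\ge\epsilon|D|$ deletions. It then applies Lemma~\ref{lem:farness_realizations} in $D'$ to the SLP built from $R_1$ and $T_1$. Finally, an exchange argument shows that any $c$ serving as $x_0$ in $n$ of these realizations has at least $n$ facts $R_\ell(\ast,c)$ for every $\ell$. Otherwise, one removes one $R_1(\ast,c)$ fact and takes a falsifying set $F$ of size $m$ for the resulting database; $F$ must contain all remaining $R_1(\ast,c)$ facts. Swapping these for the fewer $R_\ell(\ast,c)$ facts then falsifies $\Pi$ in $D'$ with at most $m$ deletions, a contradiction.

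\emph{Your route.} You build the bottleneck $\min_i D(R_i(\ast,a))$ directly into an auxiliary SLP instance $D^\ast$, so the per-vertex counting becomes automatic when you decompress. This replaces the minimality and exchange machinery with a transparent capacity reduction. The price is an auxiliary instance with a fresh relation symbol and fresh constants.

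\emph{Two points to make precise.} First, the step you flag as the main obstacle goes through once you normalize a falsifying set $F^\ast$ for $\Pi^\ast$ in $D^\ast$. For each $a$, let it remove either all or none of the $R$-copies into $a$, and do the same for the $T$-copies and for each $S$-fact; partial removals never help. Then the lift costs at most $|F^\ast|$. Your concerns about vertices that are both good-start and good-end, or about $R_i^-$ orientations, vanish: $S$, the $R_i$, and the $T_j$ use pairwise distinct symbols, and $S$ is copied verbatim. Second, you need not reopen the proof of Lemma~\ref{lem:farness_realizations}. Every falsifying set in $D^\ast$ costs more than $\epsilon|D|$, and deleting all of $D^\ast$ suffices, so $\epsilon':=\epsilon|D|/|D^\ast|\in(0,1)$. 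The lemma applied with $\epsilon'$ then yields more than $\frac{\epsilon'}{6}|D^\ast|=\frac{\epsilon}{6}|D|$ realizations.
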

%
To prove Lemma~\ref{lem:simplecrpq_to_realizations}, we start with one of the SLPs $\Pi'$ that the SBP $\Pi$ gives rise
to, choosing $R_1$ and $T_1$. 
Lemma~\ref{lem:farness_realizations} gives more than $(\varepsilon/6)|D|$ realizations. 
We then show that there must be enough distinct remaining $R_i$- and $T_i$-edges, $i \neq 1$,  to complete these realizations into fact-disjoint realizations of $\Pi$. Otherwise, deleting the scarce edge type would destroy all witnesses more cheaply than the assumed $\varepsilon$-farness allows. 

We can now prove Theorem~\ref{thm:simple_branching}.
\thmsimplebranching*
\begin{proof}
  Let $\Pi$ be formed according to ($*$). For $m \geq 0$, let $q^m_\Pi$ be the CQ that contains the atoms
  $R_i(w_i,x_0)$ for $1 \leq i \leq k$, $S(x_0,x_1),\dots
  S(x_{k-1},x_m)$, and $T_j(x_k,z_j)$ for $1 \leq j \leq \ell$. That is, $q^m_\Pi$ is defined like $\Pi$, but fixes the $S$-path to be of length exactly $m$. Let $q$ be the UCQ that contains
  the CQs $q^m_\Pi$ for $1 \leq m \leq \frac{12}{\epsilon}$. We show the following:
\begin{enumerate}
    \item If $D \not \models \Pi$, then $D \not \models q$;
    \item if $D$ is $\epsilon$-far from $D \not \models \Pi$,
    then $D$ is $\frac{\epsilon}{12}$-far from $D \not \models q$;
\end{enumerate}
In fact, Point~1 is immediate by definition of $q$.
For Point~2, assume that $D$ is $\epsilon$-far from $D \not \models \Pi$. 
 Lemma~\ref{lem:simplecrpq_to_realizations} provides us with more than $\frac{\epsilon}{6} \cdot |D|$ many fact-disjoint realizations of $\Pi$ in $D$. 
This, in turn, means that $D$ must contain at least $\frac{\epsilon}{12} \cdot |D|$ many such realizations of size at most $\frac{12}{\epsilon}$ for the simple reason that, otherwise, there would have to be more than $|D|$ facts. Note that each of these realizations must make true one of the CQs in $q$.
Since the realizations are fact-disjoint, we thus need to remove at least $\frac{\epsilon}{12} \cdot |D|$
  many facts from $D$ to make $q$ false, proving Point~2.

Corollary~1.3 in \cite{chen2019testability} states that falsity of a UCQ $p$ 
is constant query testable with one-sided error 
if every CQ in $p$ has an $\alpha$-acyclic core.
It is straightforward to see that this is the case for $q$. We may thus obtain an $\epsilon$-tester
  $\mathfrak{A}_{\epsilon}$ for $D \not \models \Pi$ by choosing $\epsilon' = \frac{\epsilon}{12}$ and running a tester $\mathfrak{A}'_{\epsilon'}$ 
  for $q$ on $D$. 
Points~(1) and~(2) imply that $\mathfrak{A}'_{\epsilon'}$ indeed functions as
  an $\epsilon$-tester for $D \not \models \Pi$:  
  if $D \not \models \Pi$, then $D \not \models q$ by (1) 
  and thus $\mathfrak{A}'_{\epsilon'}$ accepts, and so does $\mathfrak{A}_{\epsilon}$; 
  if $D$ is $\epsilon$-far from $D \not \models \Pi$, then by (2)
  $D$ is $\frac{\epsilon}{12}$-far from $D \not \models q$ and thus $\mathfrak{A}'_{\epsilon'}$ rejects
  with probability at least $2/3$, and so does $\mathfrak{A}_{\epsilon}$.
\end{proof}

\subsection{Acyclic Non-Linear Programs Beyond Simple Branching}
\label{sec:musing}

We discuss some of the difficulties involved in finding constant query testers for $\alpha$-acyclic MDLog programs beyond SBPs. 
One natural way to approach this, and in fact the only one apparent to us, is to  establish some analogue of Lemma~\ref{lem:simplecrpq_to_realizations}. This lemma may be viewed as a `removal lemma', in the style of Ruzsa and Szemerédi's famous triangle removal lemma \cite{RuszaS}, see \cite{ConlonFox2013} for a survey
of such lemmas. Removal lemmas are in fact a standard mechanism for obtaining  property testers with one-sided error, see \cite{AlonShapira2008Monotone,goldreich2017introduction}. While most removal lemmas are concerned with removing subgraphs of fixed size and structure, both Lemma~\ref{lem:simplecrpq_to_realizations} and
Menger's theorem can be viewed as removal lemmas concerned with subgraphs of varying size. 

For going beyond SBPs, we are thus looking for  removal lemmas that remove subgraphs that are branching and of varying size. 
The closest result that we are aware of is related to Steiner tree packing.
Let $G=(V,E)$ be an undirected graph and $S \subseteq V$  a set of terminals. An \emph{$S$-tree in $G$} is an undirected tree $T=(V',E')$
that is a (not necessarily induced) subgraph of $G$ and satisfies $S \subseteq V'$.
\emph{Steiner tree packing (STP)} asks, given $G$ and $S$, to produce a  collection
of pairwise edge-disjoint $S$-trees in $G$ that is as large as possible. Central
to STP is Kriesell's conjecture~\cite{Kriesell2003}, which postulates a removal lemma that removes branching subgraphs of varying size. Let $G=(V,E)$ be a connected undirected graph and $S \subseteq V$. A set of edges
$C \subseteq E$ is an \emph{$S$-Steiner-cut} if the graph $(V,E \setminus C)$ has at least two components that
contain vertices from $S$. 
The following is known.
\begin{theorem}[\cite{DeVosMcDonaldPivotto2016}]
\label{thm:devosetal}
Let $G=(V,E)$ be an undirected graph and $S \subseteq V$ with $|S| \geq 2$. If every
$S$-Steiner-cut in $G$ has size at least $5\ell+4$, then $G$ contains $\ell$
pairwise edge-disjoint $S$-trees.
\end{theorem}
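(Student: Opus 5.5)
This theorem is cited from \cite{DeVosMcDonaldPivotto2016} and is not proved in the paper. The following is the route I would take to prove it myself; the hard step is only partly worked out. Put $k = 5\ell+4$. I would take a counterexample $(G,S)$ that minimizes $|E|+|V|$, assume $G$ is connected, and use two tools. The first is Mader's splitting-off theorem, which preserves local edge-connectivity between terminals. The second is the Nash-Williams--Tutte theorem on packing spanning trees, which applies once no Steiner vertices remain.

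\emph{Step 1: normalize the Steiner vertices.} Vertices of $V\setminus S$ of degree $1$ can be deleted, and those of degree $2$ can be suppressed. Neither operation changes the size of any minimum $S$-Steiner-cut, and $S$-trees lift back along a suppressed path. Now let $v\notin S$ have degree at least $4$ and not be a cut vertex. By Mader, we can split off a pair of edges $uv,vw$ into a single edge $uw$ while keeping every $S$-Steiner-cut of size at least $k$. Each edge-disjoint $S$-tree in the smaller graph turns into a connected subgraph of $G$ containing $S$, using the path $u v w$ in place of $uw$. These subgraphs stay edge-disjoint because each split edge is used by at most one tree, and each contains an $S$-tree. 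Steiner cut vertices are handled by decomposing along blocks: an $S$-tree is assembled from $S$-trees of the blocks meeting terminals, with cut vertices counted as terminals inside the blocks. Minimality then lets us assume that every Steiner vertex has degree exactly $3$.

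\emph{Step 2: reduce to a spanning-tree packing.} Let $H = G[V\setminus S]$, a graph of maximum degree $3$. If $V\setminus S=\emptyset$, then every cut of $G$ has at least $k\ge 2\ell$ edges. For any partition $\mathcal P$ of $V$ into $r$ parts, this gives at least $kr/2\ge \ell(r-1)$ crossing edges. Nash-Williams--Tutte then yields $\ell$ edge-disjoint spanning trees, which are $S$-trees.

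In general I would modify $G$ to a graph $G'$ on the vertex set $S$ by handling each component $C$ of $H$. Since each Steiner vertex has degree $3$, the component $C$ sends a small amount of degree to $S$ compared with its internal edges. I would pick a spanning tree of $C$ and attach it to $S$. For each partition $\mathcal P$ of $S$, one then counts edges of $G$ crossing $\mathcal P$ that can be charged to crossing edges of $G'$. The aim is the Nash-Williams--Tutte condition in the form
\[
\bigl|\delta_{G'}(\mathcal P)\bigr| \;\ge\; \tfrac{k-4}{5}\,(|\mathcal P|-1) \;=\; \ell(|\mathcal P|-1).
\]
Summing the cut condition over the parts gives $\sum_{P\in\mathcal P}|\delta_G(P)|\ge k|\mathcal P|$. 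The degree-$3$ constraint is then used to show that each component $C$ "wastes" at most a constant fraction of the crossing edges it carries, and the factor $5$ together with the additive $4$ should come out of this accounting.

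\emph{Main obstacle.} The hard step is Step 2 when $H$ has large components, for example long cubic Steiner subgraphs that connect many parts of $\mathcal P$. There, a single component can carry many crossing edges while adding only one connected piece to a tree. My first attempt would be an inductive strengthening. I would allow a designated set of "demand" edges, or choose the partition $\mathcal P$ of $V$ (not only of $S$) that minimizes the Nash-Williams--Tutte deficiency. I would then argue that an uncrossing argument forces each Steiner component to lie inside a single part unless it contributes at least $5$ crossing edges per extra part it touches. If that charging scheme fails, the fallback is the approach of the cited paper: pass to the degree-$3$ structure, fix a spanning tree in each Steiner component, and allocate edges to the $\ell$ trees greedily by an augmenting-path argument.
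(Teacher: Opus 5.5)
Your Step~1 is the standard reduction: delete or suppress Steiner vertices of degree at most $2$, split off at non-cut Steiner vertices of degree at least $4$ by Mader's theorem, and handle Steiner cut vertices blockwise. It is essentially sound, so you may indeed assume every Steiner vertex has degree $3$. The gap is Step~2, which is where the entire content of the theorem lies.
\begin{itemize}
\item The graph $G'$ on $S$ is never defined.
\item You give no argument that spanning trees of $G'$ lift to edge-disjoint $S$-trees of $G$.
\item The inequality $|\delta_{G'}(\mathcal P)|\ge \ell(|\mathcal P|-1)$ is only asserted to ``come out of the accounting''.
\end{itemize}
As it stands, the only case you actually settle is $S=V$. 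The paper itself gives no proof; it imports the result from \cite{DeVosMcDonaldPivotto2016}, which is the appropriate thing to do there.

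Moreover, the mechanism you sketch cannot work in the stated form. A Steiner vertex of degree $3$ is an inner vertex of at most one of the $\ell$ trees. So in any graph $G'$ on $S$ whose spanning-tree packings are guaranteed to lift, at most one edge of $G'$ may be routed through it. Hence, already when $H$ has no edges, graph Nash-Williams--Tutte on $S$ does not directly apply. Each Steiner vertex acts as a hyperedge of size $3$ that must be given to a single tree, and the natural tool is a hypergraph (partition-connectivity) version of Nash-Williams--Tutte, as in Frank, Kir\'aly and Kriesell.

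When $H$ has edges, taking a spanning tree of a whole component $C$ turns $C$ into one such hyperedge. It helps only one tree and counts only once in the partition condition. Yet it can be responsible for $|C|+2$ edges of the cuts separating the classes of $\mathcal P$: take $C$ a path of Steiner vertices whose $|C|+2$ attachments lie in distinct classes. So no constant-factor charging of the kind you describe can hold.

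What is missing is the genuinely hard part of the theorem. You need a way to split the Steiner components among the $\ell$ trees while keeping a constant fraction of every Steiner cut. One option is to delete edges of $H$ so that every remaining piece attaches to boundedly many terminals. You also need the bookkeeping that yields exactly $5\ell+4$. The uncrossing idea and the ``greedy augmenting-path'' fallback describe what such an argument would have to achieve; they are not arguments.
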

 Note that Theorem~\ref{thm:devosetal} is  a
 removal lemma for subgraphs of varying size, and it is branching  when $|S| > 2$. 
 Kriesell's conjecture states that $5\ell+4$ can be improved to~$2\ell$. In our setting, however, optimizing this function is not important: any function 
 linear in $\ell$ suffices. What is more problematic is that 
Theorem~\ref{thm:devosetal} is  concerned with undirected graphs, thus not directly relevant for us.

Let $G=(V,E)$ be a directed graph,  $S \subseteq V$, and $r \in S$. An \emph{$S,r$-tree in $G$} is a directed tree $T=(V',E')$ with root $r$
that is a subgraph of $G$ and satisfies $S \subseteq V'$.
\emph{Arc-Disjoint Directed Steiner tree packing (ADSTP)} asks, given $G$, $S$, and $r$, to produce a  largest collection
of pairwise edge-disjoint $S,r$-trees in $G$ \cite{CheriyanSalavatipour2006}. Whether a removal lemma in the style of Theorem~\ref{thm:devosetal}
can be obtained for ADSTP is  a long-standing open problem
\cite[Problem~5.5]{sun2026steinertypepackingproblems}. We are not aware that a positive resolution has been conjectured, let alone a concrete optimal function as in Kriesell's conjecture. 

To relate removal lemmas for
ADSTP in more detail to property testing acyclic MDLog programs, consider
the  example queries $q_1$ and $q_2$ shown in graphical form in Figure~\ref{fig:threequeries}. 
\begin{figure}[t]
    \centering
    \begin{tikzpicture}[scale=1.45]
    \begin{scope}[shift={(-3.5, 0)}]
        \node (q1) at (180:1.35) [shift={(0, -.04)}] {$q_1$\ :};
        \node (a1) at (180:.9) {$a$};
        \node (y1) at (0:0) {$y$};
        \node (b1) at (30:1) {$b$};
        \node (c1) at (-30:1) {$c$};
        \path
        (a1) edge [->] node [above, sloped] {$R^*$} (y1)
        (y1) edge [->] node [above, sloped] {$R^*$} (b1)
        (y1) edge [->] node [above, sloped] {$R^*$} (c1)
        ;
    \end{scope}
    \begin{scope}[shift={(0, 0)}]
        \node (q2) at (180:1.35) [shift={(0, -.04)}] {$q_2$\ :};
        \node (a2) at (180:.9) {$a$};
        \node (y2) at (0:0) {$y$};
        \node (b2) at (30:1) {$b$};
        \node (c2) at (-30:1) {$c$};
        \path
        (a2) edge [->] node [above, sloped] {$R^+$} (y2)
        (y2) edge [->] node [above, sloped] {$R^+$} (b2)
        (y2) edge [->] node [above, sloped] {$R^+$} (c2)
        ;
    \end{scope}
\end{tikzpicture}
    \caption{Boolean queries $q_1$ and $q_2$  with existentially quantified variable $y$ and constants $a$, $b$ and $c$.}
    \label{fig:threequeries}
\end{figure}
The notation is to be read as follows.
Query $q_1$ asks whether the database contains an element $u$ such that $u$ is reachable in zero or more steps from the element $a$, and the elements $b$ and $c$ are reachable in zero or more steps from $u$. Note that $a,b,c$ act as constants here. Query $q_2$ is identical except that it requires all paths to have length at least~1. It is easy to find monadic 
Datalog programs (with constants $a,b,c$) for  these queries. Since $q_1$ and $q_2$ use only a single binary relation symbol $R$, databases are 
multi-digraphs. For simplicity, however, 
we restrict our attention to (non-multi) digraphs for the rest of this section.

An \emph{$S$-tree} in a directed graph $G=(V,E)$ is defined like an $S,r$-tree,
but with the root left unspecified.
 To show that $q_1$ is constant query testable, it would be useful to have a result stating that if 
 more than $f(\ell)$ edges
 have to be removed from a directed graph $G$ to make $q_1$ false,
 for some suitable fixed linear function $f$, then $G$ contains 
$\ell$ pairwise edge-disjoint $\{a,b,c\}$-trees.
However, a closer look at query $q_1$ reveals that such a tree packing result is actually not necessary: the query only asks for the existence of a path that starts at $c$, follows  $R^-$ to $a$, and then continues along $R$ to $b$. 
Thus $q_1$ is in fact linear and falsity of $q_1$ can be shown to be constant query testable using methods similar to those in Section~\ref{subsect:linearacyclic}. This is not the case for  $q_2$. For a directed graph $G=(V,E)$ and $v \in V$, let $S_G(v) = \{ u \mid (v,u) \in E \}$ and $S_G^-(v) = \{ u \mid (u,v) \in E \}$.
An \emph{$S_1,S_2,S_3$-tree} in $G$, for $S_1,S_2,S_3 \subseteq V$, is a directed tree that is a  subgraph of $G$ and contains at least one vertex from each of $S_1$, $S_2$, $S_3$.
We raise the following problem.
%
\begin{problem}
\label{problem}
    Is there a linear function $f$ such that 
    if 
 more than $f(\ell)$ edges
 have to be removed from a directed graph $G$ to make $q_2$ false, then $G$ contains 
$\ell$ pairwise edge-disjoint  $S_G(a),S_G^-(b),S_G^-(c)$-trees that  do not share a vertex from $S_G(a) \cup S_G^-(b) \cup S_G^-(c)$.
\end{problem}
%
Solving this problem to the positive would allow us to show that falsity of $q_2$ is
constant query testable.
Note that asking for $S_G(a),S^-_G(b),S^-_G(c)$-trees instead of $\{a,b,c\}$-trees ensures that connecting paths  have length at least~1, as required for $q_2$. 
%
%
We also observe the following. 
\begin{restatable}{lemma}{lemmaproblem}
     Problem~\ref{problem} cannot be resolved by any function $f$ such that $f(\ell)<2\ell$ for any $\ell$.
\end{restatable}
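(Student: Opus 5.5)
The plan is to exhibit, for every $\ell \geq 1$, a directed graph $G_\ell$ with two properties. First, making $q_2$ false in $G_\ell$ requires removing at least $2\ell$ edges. Second, $G_\ell$ does \emph{not} contain $\ell$ pairwise edge-disjoint $S_{G}(a),S^-_{G}(b),S^-_{G}(c)$-trees that pairwise share no vertex of $S_G(a)\cup S_G^-(b)\cup S_G^-(c)$. Given such a family, take any $f$ and any $\ell$ with $f(\ell)<2\ell$. Then $G_\ell$ satisfies the hypothesis of Problem~\ref{problem} for this $\ell$, since more than $f(\ell)$ removals are necessary, but violates its conclusion. So no such $f$ resolves the problem.

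\emph{Reformulation.} The key observation is that the restriction to trees that do not share a \emph{terminal vertex} is much stronger than edge-disjointness. It bounds the packing number by a matching-type quantity over the terminal set $Z := S_G(a)\cup S_G^-(b)\cup S_G^-(c)$. Every vertex $z \in Z$ has a \emph{role set} $\mathrm{roles}(z)\subseteq\{A,B,C\}$, recording whether $z\in S_G(a)$, $z\in S_G^-(b)$, or $z\in S_G^-(c)$. Each admissible tree must contain terminals whose role sets jointly cover $\{A,B,C\}$, and distinct trees use disjoint sets of terminals. So if no terminal carries all three roles, every tree consumes at least two terminals, and the packing number is at most $\lfloor |Z|/2\rfloor$.

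The removal cost, on the other hand, is governed by the edges $a\to z$, $z\to b$ and $z\to c$, together with the internal connectivity among the terminals. I will make the terminals densely and mutually connected, say by a complete digraph on $Z$. This ensures that any path of length at least one from $a$ into $Z$ can be extended to $b$ and $c$ through any surviving terminal edges. As a result, falsifying $q_2$ essentially requires deleting all $A$-edges, or all $B$-edges, or all $C$-edges. The cost of a cheapest deletion is then $\min(|Z_A|,|Z_B|,|Z_C|)$, where $Z_X=\{z \in Z \mid X\in\mathrm{roles}(z)\}$. A mixed deletion, where the surviving $A$-, $B$- and $C$-terminals cannot be linked, is impossible because the internal digraph is complete. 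I would verify this claim carefully, including paths that revisit $a$, $b$ or $c$.

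\emph{Construction.} The aim is then to choose role sets such that each $|Z_X|\geq 2\ell$ while the maximum number of disjoint role-covering groups is at most $\ell-1$. The first attempt is to give every terminal exactly two roles and to take $k$ copies each of the types $\{A,B\}$, $\{B,C\}$ and $\{A,C\}$. This yields $|Z_X|=2k$ and a packing of about $3k/2$, which is only a ratio of $4/3$ and not enough. I would therefore modify the construction so that a covering group needs \emph{more} than two terminals, or so that terminals carrying two roles are scarce. Two options I would try are the following. One option makes the $S^-_G(c)$-terminals reachable only via long detours through other terminals, using vertices outside $Z$ so that no terminal gains extra roles. The other uses terminals that carry one role each but are shared in the removal count, e.g.\ a single vertex $y$ adjacent to many terminals whose edges must all be cut.

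Finding the right gadget that achieves ratio $2$ is the main obstacle. Once it is found, the two required properties should follow from the counting argument above: at most $\lfloor |Z|/2\rfloor$, or fewer, disjoint covering groups on one side, and the role-class cut bound on the other.
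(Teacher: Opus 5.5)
\textbf{There is a genuine gap: no graph $G_\ell$ is ever exhibited.} Your reduction of the lemma to a family of witness graphs is sound, and so is the role-set upper bound on the packing. But the proposal stops exactly where the proof has to begin. The only configuration you analyse is a complete digraph on $Z$ with $k$ terminals of each two-role type, and it gives ratio $4/3$. The two suggested modifications are not constructions.

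Your lower bound $\min_X|Z_X|$ on the removal cost relies on $Z$ being completely connected. In that regime the packing number is a pure role-counting quantity. So any improvement beyond your example has to come from \emph{restricted} connectivity, and there you have no argument for the removal cost.

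Your target also cannot be met for every $\ell$. At $\ell=1$, a removal cost of at least $2$ means $q_2$ holds, and a witness $y$ already yields an admissible tree. To see this, take $s\in S_G(a)$ on a path from $a$ to $y$, and the last vertices $t_b\in S^-_G(b)$, $t_c\in S^-_G(c)$ on paths from $y$ to $b$ and to $c$. A tree rooted at $s$ inside the union of these walks contains all three. So a counterexample needs larger $\ell$, and you must be careful about which reading of ``for any $\ell$'' you are proving. As it stands, the proposal does not prove the lemma.

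\textbf{The paper's route.} The paper does not build a general framework. It gives one explicit gadget, $G_1$ on vertices $a,u,b,c$ with edges $a\to u$, $u\to a$, $a\to b$, $a\to c$, $b\to c$, $c\to b$. No single deletion falsifies $q_2$: deleting $a\to u$, $u\to a$ or $a\to c$ leaves the witness $y=b$; deleting $a\to b$ leaves $y=c$; deleting $b\to c$ or $c\to b$ leaves $y=a$. The graph $G_\ell$ replaces every edge $(x,y)$ by $\ell$ paths $x\to e^k_{(x,y)}\to y$. The paper asserts a removal cost of $2\ell$, and bounds the packing by $\ell$ on the grounds that the sets attached to $b$ and $c$ have only $\ell$ elements.

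\textbf{Caution if you borrow this gadget.} Check it with your own role bookkeeping. The paper computes $S_{G_1}(b)=\{c\}$ and $S_{G_1}(c)=\{b\}$, which are out-neighbourhoods. Problem~\ref{problem} instead uses $S^-_{G_1}(b)=\{a,c\}$ and $S^-_{G_1}(c)=\{a,b\}$.
\begin{itemize}
\item With the in-neighbourhoods, the roles in $G_1$ are $u\colon\{A\}$, $a\colon\{B,C\}$, $b\colon\{A,C\}$, $c\colon\{A,B\}$. Then $\{u\to a\}$ and $\{b\to c\}$ are two edge-disjoint, terminal-disjoint admissible trees.
\item In $G_\ell$, the paths $e^k_{(a,b)}\to b\to e^k_{(b,c)}$ and $e^k_{(a,c)}\to c\to e^k_{(c,b)}$ give $2\ell$ such trees. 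Note that $b$ and $c$ are not terminals in $G_\ell$.
\end{itemize}
So that gadget does not fill your gap for the problem as literally stated. Even under the paper's own reading (removal cost $2\ell$, at most $\ell$ trees), it only yields $f(\ell+1)\ge 2\ell$, not $f(\ell)\ge 2\ell$.
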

%
This is in contrast to Menger's theorem where $f(\ell)=\ell$, 
if stated in a corresponding form.
Although not identical, Problem~\ref{problem} is uncomfortably close to an ADSTP counterpart of Theorem~\ref{thm:devosetal}.
One difference is that we only ask for trees that contain three vertices, chosen from the sets $S_1,S_2,S_3$, whereas 
ADSTP asks for trees that contain all
vertices from a given set $S$ of unrestricted size. At the same time, however, 
the class of acyclic MDLog programs is a lot richer than the 
queries $q_1$ and $q_2$. 
In fact, $q_1,q_2$ do not nest branching inside of recursion, are 
expressible as conjunctive regular path queries (CRPQs), and use only a single binary relation symbol.

\section{Monadic Datalog: Lower Bounds}
\label{sect:MDLoglower}

We prove that the `only if' direction of Theorem~\ref{thm:CY} extends from non-recursive MDLog programs to recursive ones, if the  program is self-join free.
\begin{theorem}
	\label{theorem:non-testable}
	Let $\Pi$ be a  self-join free Boolean MDLog program.
	If $\Pi$ is not equivalent to an $\alpha$-acyclic MDLog program,
	 then falsity of\/ $\Pi$ is not constant query testable with one-sided error.
    %
\end{theorem}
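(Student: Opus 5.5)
The plan is to reduce from testing falsity of a hard CQ and invoke Proposition~\ref{prop:reduction}(2). Chen and Yoshida show that falsity is not constant query testable with one-sided error for the Boolean CQ $q_C$ describing a chordless cycle of length at least $4$, and for the CQ $q_T$ describing a tetra: the hyperedges $\{1..n\}\setminus\{i\}$ over $n$ variables, for $n\geq 3$. In both CQs all relation symbols are distinct. I will use the standard characterization of $\alpha$-acyclicity: a hypergraph is not $\alpha$-acyclic iff, after deleting vertices and taking the traces of hyperedges, it contains a chordless cycle of length $\geq 4$ or a tetra. So the work is to locate such a hard core inside some rule body and transport hardness through the recursion.

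\emph{Step 1 (normalization and choice of a witness).} Since $\Pi$ is not equivalent to an $\alpha$-acyclic MDLog program, some unfolding of $\Pi$ must carry a non-$\alpha$-acyclic core that cannot be avoided. I plan to take the CQ expansions (unravelings) of $\Pi$, fix one whose core is minimal, and pass to its canonical database $D$. I then obtain $D$ together with a derivation in which some rule body instance contains a chordless cycle or tetra $H$. The key requirement is that \emph{every} derivation of $\Pi$ in $D$ uses $H$ in a non-trivialized way: the homomorphic image of a rule body covers all hyperedges of $H$ and does not collapse it onto something acyclic.

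Self-join freeness is essential for this step. Each EDB relation occurs at most once per rule body, so a rule body can map onto the facts of $H$ only injectively on the relation symbols. Any collapse would then have to identify constants, and by choosing $D$ with sufficiently many fresh constants I can forbid such identifications. I also expect to have to argue that replacing a rule body by alternative rules, or by recursive unfoldings, cannot cover $H$ acyclically; otherwise $\Pi$ would be equivalent to an acyclic program. I plan to prove the contrapositive: if every minimal witness admits an acyclic covering, then rewriting each rule body accordingly yields an equivalent $\alpha$-acyclic program. I expect this step to be the main obstacle.

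\emph{Step 2 (the reduction).} Let $q_H$ be the hard CQ with relation symbols $P_1,\dots,P_m$ corresponding to the facts of $H$. Given an input database $E$ for $q_H$, I build $f_{\text{DB}}(E)$ by taking $D\setminus H$, with each non-$H$ fact given a fixed multiplicity, and replacing the facts of $H$ by a "blow-up" via $E$. Concretely, each $P_i$-fact of $E$ becomes a fact of the corresponding relation in $\Pi$, with the $H$-constants substituted by $E$-constants and the remaining positions fixed to constants of $D$.

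The three conditions of Definition~\ref{def:proptestreduction} are handled as follows.
\begin{itemize}
\item Condition~(1): a derivation of $\Pi$ would induce, by Step~1, a homomorphism from $q_H$ to $E$.
\item Condition~(2): each homomorphism of $q_H$ into $E$ yields a derivation. To keep $\varepsilon$-farness proportional, I will make the multiplicities of the constant gadget large, namely proportional to $|E|$ spread over a constant number of facts, so that deleting gadget facts costs at least as much as deleting $E$-facts. This gives $f_\varepsilon(\varepsilon)=\varepsilon/c$ for a constant $c$ depending only on $\Pi$.
\item Condition~(3): completion and size queries are answered for gadget facts directly, since the gadget is constant size with known multiplicities. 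For blown-up relations they are answered by querying $E$ and substituting constants.
\end{itemize}

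\emph{Remaining concern.} Besides Step~1, I need to check that the blown-up copies of $H$ do not create new derivations that avoid $q_H$, for example through recursion reusing gadget parts. The same non-trivialization property, together with keeping $E$-constants disjoint from $D$-constants and using self-join freeness, should exclude this.
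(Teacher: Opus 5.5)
Your overall architecture matches the paper's: first find a database in which one chordless cycle or tetra must be used by every derivation, then reduce from a Chen--Yoshida hard CQ by a blow-up with multiplicities proportional to $|E|$. However, Step~1 is the actual content of the theorem (Theorem~\ref{theorem:one-cycle-is-crucial} in the paper), and it is not established; the argument you sketch for it fails.

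Self-join freeness only restricts a single rule body. It does not stop a derivation from covering the facts of $H$ through several rule applications, or through recursive unfoldings that wrap a path-shaped body around the cycle. Adding fresh constants does not change which facts of the fixed witness database such derivations can use.

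More fundamentally, a witness for $\Pi\not\equiv\Pi^*$ gives you at best that every derivation uses \emph{some} hard pattern. Here $\Pi^*$ is a finite $\alpha$-acyclic approximation; your contrapositive would need to construct it, with a bound on the added atoms so that the program stays finite. A witness typically contains many isomorphic copies of patterns, and different derivations can rely on different copies, whereas your reduction needs a \emph{single} crucial pattern.

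Passing from a crucial set to a crucial singleton is where the paper's work lies. It proceeds as follows.
\begin{itemize}
\item It defines \emph{use} via the induced CQ $q_{h,v}$, so that folded or trivialized coverings do not count.
\item It repeatedly unravels a $<$-maximal bag type that some derivation avoids. Self-join freeness makes homomorphisms between bag types unique, which is what makes this order and the termination argument work.
\item It moves the surviving isomorphic patterns far apart and merges them via strong automorphisms.
\item It checks that the merge creates neither short new cycles nor new bag adjacencies that would enable derivations avoiding the merged pattern.
\end{itemize}
None of this appears in your proposal, and the contrapositive you suggest yields only a witness database, not a crucial singleton.

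Step~2 also needs repair, and the repair is what settles your remaining concern. As you describe it, facts of $D$ outside $H$ that mention constants of $H$ keep the original constants. They are then disconnected from the blown-up copy, so a match of $q_H$ in $E$ does not yield a derivation. In addition, substituting $E$-constants directly for $H$-constants lets one $E$-constant play several roles of $H$, so derivations in $f_{\text{DB}}(E)$ need not map back to $D$.

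The paper instead uses constants $(c,v)$ with $c$ a constant of $H$ and $v\in\adom(E)$. It replicates every fact incident to $H$ accordingly and pads multiplicities so that $|f_{\text{DB}}(E)|=|D|\cdot|E|$. Projection to the first coordinate is then a homomorphism to $D$. Hence every derivation in $f_{\text{DB}}(E)$ projects to a derivation in $D$, which by cruciality of $\{H\}$ uses $H$ bijectively at some node, and the second coordinates at that node give the match of $q_H$ in $E$. No separate argument about new derivations through the blow-up is needed.
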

%
%

To prove Theorem~\ref{theorem:non-testable}, we first need some
preliminaries.
A \emph{database graph} is a pair $(G, \mn{bag})$ with $G=(V,E)$ a directed graph and $\mn{bag}$ a function  that assigns to every vertex $v \in V$ a
database $\mn{bag}(v)$. We use $\mn{adom}(v)$ as a shorthand for
$\mn{adom}(\mn{bag}(v))$, take a \emph{neighbor} of a vertex $v \in V$ to be any vertex $u \in V$ with $(u,v) \in E$ or $(v,u) \in E$, and require that the following conditions are
satisfied:
\begin{enumerate}

\item if $w$ is a neighbor of $v$, then
  $|\mn{adom}(v) \cap \mn{adom}(w)| \mathrel{\leq} 1$;

\item
the set of nodes
$\{v \in V \mid  a \in \mn{adom}(v)\}$ is connected in~$G$, for each $a \in
\bigcup_{v \in V} \mn{adom}(v)$;

\item if $R(a,\dots,a) \in \mn{bag}(v)$ and $a \in \mn{adom}(w)$,
then $R(a,\dots,a) \in \mn{bag}(w)$.

\end{enumerate}
A database graph $(G, \mn{bag})$ defines the associated database
$D_{(G,\mn{bag})}$ which is the (non-disjoint) union of all databases
$\mn{bag}(v)$, $v \in V$. 
A database that derives in this way from a database graph is a \emph{bagged database}. We
may  use $\mn{adom}(G, \mn{bag})$ as shorthand for
$\mn{adom}(D_{(G,\mn{bag})})$.
If $G$ is a directed tree, we usually denote it by $T$ and call $(T, \mn{bag})$ a \emph{database tree} and $D_{(T,\mn{bag})}$ a \emph{tree-like database}. With the \emph{EDB-part} of the body of a Datalog rule, we mean the result of dropping from the body all IDB atoms.
The database $D'$ in the subsequent lemma can be `read off' from a derivation. Details are omitted, see for instance \cite{abiteboul1995foundations}. 
\begin{lemma}
  \label{lem:unrav}
  Let $\Pi$ be an MDLog program, $D$ a database, and 
  $\bar a \in \Pi(D)$. Then there is a tree-like
  database $D'$ based on a database tree $(T,\mn{bag})$ such that
  \begin{enumerate}
  \item $\bar a \in \Pi(D')$, and all
    constants from $\bar a$ occur in $\mn{adom}(v_0)$ with $v_0$ 
    the root of $T$;
  \item   $D'$ is a homomorphic pre-image of $D$;
  \item $\mn{bag}(v_0)$ is isomorphic to the EDB-part of the body of a goal rule in $\Pi$ and every other bag is isomorphic to the EDB-part of the body of a non-goal 
  rule.
  
  \end{enumerate}
\end{lemma}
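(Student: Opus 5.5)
We read the tree-like database off a derivation. Since $\bar a \in \Pi(D)$, there is a derivation $\Gamma=(V,E,\ell,\rho,h)$ of $\bar a$ from $\Pi$ in $D$. Let $T$ be the tree obtained from $\Gamma$ by keeping only the inner nodes, which are exactly the nodes labeled with IDB facts. For each inner node $v$ with rule $\rho_v = P(\bar x)\leftarrow q_v$, take a fresh copy $q'_v$ of the EDB-part of $q_v$. Variables of $q'_v$ are renamed apart from all other copies, with one exception: the head variable of $\rho_v$ is identified with the variable of the IDB atom in the parent's rule body that the node $v$ realizes. Define $\mn{bag}(v)$ as the canonical database of $q'_v$, with variables read as constants. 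For the root $v_0$, the head variables of the goal rule are taken to be the constants of $\bar a$ themselves, which settles the second part of Point~1.

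Point~3 holds by construction: each bag is an isomorphic copy of the EDB-part of a rule body. One should be careful if the EDB-part is empty or does not mention the head variable. This is harmless because the bag may then simply not contain that constant; if needed, one can also allow $\mn{true}$-style empty bags.

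Next we verify that $(T,\mn{bag})$ is a database graph.
\begin{itemize}
  \item \emph{Condition~1.} A child $w$ shares with its parent only the single constant for the monadic IDB atom it realizes, since all other variables are fresh. This is where monadicity is used.
  \item \emph{Condition~2.} A constant is introduced in one bag and propagated only to children whose head variable is identified with it. Hence the set of bags containing it is a connected subtree.
  \item \emph{Condition~3.} This condition about reflexive facts $R(a,\dots,a)$ is the only genuinely fiddly point. If it fails, close every bag containing $a$ under the facts $R(a,\dots,a)$ occurring in any bag containing $a$. This adds only facts with a single constant, so Conditions~1 and~2 are preserved. The bags then may no longer be literally isomorphic to rule bodies. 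I expect the intended treatment is that such loops occur uniformly, and I would handle this by adding the closure while noting that Point~3 is meant up to these added unary-like loops. This is the step I expect to need most care.
\end{itemize}

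For Point~2, define $g:\adom(D')\to\adom(D)$ by sending each copied variable $z$ of $q'_v$ to $h_v(z)$. This is well defined on identified variables because $h_v$ maps the head variable to the constant of $\ell_v$, which equals the image under the parent's homomorphism of the corresponding IDB-atom variable. Every EDB atom of $q_v$ is mapped by $h_v$ to a leaf fact in $D$. The added reflexive facts are also mapped to facts of $D$, because they are copies of facts already mapped there. So $g$ is a homomorphism $D'\to D$.

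For the first part of Point~1, we build a derivation of $\bar a$ in $D'$ with the same shape as $\Gamma$. The homomorphisms are given by the identity renaming of $q_v$ onto $q'_v$, together with the child IDB facts on the shared constants. This shows $\bar a\in\Pi(D')$.
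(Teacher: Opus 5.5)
Your approach matches the paper's, which only says that $D'$ is ``read off'' from a derivation and omits the details. Point~2, the first half of Point~1 and Condition~1 go through as you describe. The step that fails is Condition~2, in exactly the situation you dismissed as harmless. Whenever a variable occurs in a rule body only inside IDB atoms, its constant is handed down to descendants but is absent from the node's own bag. So the bags containing it need not be connected in the derivation tree. For instance, take $\mn{goal}() \leftarrow A(u,x) \wedge P(x)$, $P(x) \leftarrow Q(x) \wedge R(y,z)$, $Q(x) \leftarrow B(x,w)$. Your $T$ is the path $v_0,v_1,v_2$ with bags $\{A(u,x)\}$, $\{R(y,z)\}$, $\{B(x,w)\}$, and $x$ lies only in the bags of $v_0$ and $v_2$. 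Similarly, a fresh variable occurring only in $Q_1(z) \wedge Q_2(z)$ ends up in two sibling bags that are not adjacent. Your argument shows connectivity of the nodes whose rule instance mentions the constant, not of the bags, which contain only the EDB-part.

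The repair is to choose the tree edges independently of the derivation, leaving the database $D'$, and hence Points~1--3, unchanged.
\begin{itemize}
\item List the inner nodes in preorder. A non-root node $w$ shares with all earlier bags at most its head constant $c_w$, because its other constants are fresh and occur only below $w$.
\item If $c_w \in \mn{adom}(w)$ and some earlier bag contains $c_w$, make $w$ a child of the last such node; otherwise make it a child of its derivation parent.
\item The result is a tree in which adjacent bags share at most one constant. For every constant $c$, the bags containing $c$ form a path: each of them except the first has head constant $c$ and is attached to its predecessor among them.
\end{itemize}

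Three smaller points.
\begin{itemize}
\item You should assume w.l.o.g.\ that distinct IDB atoms of a body are realized by distinct children, duplicating subderivations if necessary. Otherwise identifying a child with two parent variables contracts the parent's bag.
\item Your worry about Condition~3 is justified: Point~3 and Condition~3 are literally incompatible. For $\mn{goal}() \leftarrow A(x,x) \wedge P(x)$, $P(x) \leftarrow B(x,x)$, the bag containing $B(c,c)$ would also have to contain $A(c,c)$. So your closure under single-constant facts, with Point~3 read up to such facts, is the right fix. These facts create no chordless cycles or tetras, so they do not affect the later uses of the lemma, e.g.\ the $\alpha$-acyclicity argument about $\Pi^*$.
\item Similar imprecisions in the statement itself, not in your argument, affect Point~3 at the root when $\bar a$ has repeated entries, where the root bag becomes a contraction. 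They also affect the second half of Point~1 when a head variable of the goal rule does not occur in its EDB-part.
\end{itemize}
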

Let $\Pi$ be an MDLog program over a schema $\schemaS$.
A \emph{contraction} of a (goal or non-goal) rule \mbox{$P(\bar x) \leftarrow q(\bar y) \in \Pi$} is any rule
\mbox{$P(\bar u) \leftarrow q'(\bar v)$} such that $q'(\bar v)$ can be obtained
from $q(\bar y)$ by identifying variables, and variables in the head are identified accordingly.
    We define the \emph{$\alpha$-acyclic approximation} $\Pi^*$ of $\Pi$ to be the program that contains all rules $\rho' : P(\bar u) \leftarrow q'(\bar z)$ that 
    can be obtained from the contraction of a  
     rule $\rho = P(\bar x) \leftarrow q(\bar y) \in \Pi$ by adding zero or more atoms to the body, possibly introducing fresh variables, such that (i)~$q'$ is $\alpha$-acyclic
     and (ii)~ $\sizeof{\bar z} \leq \sizeof{\bar y} + (N - 2) \cdot 2^{\sizeof{\bar y}}$ where $N$ is the maximum arity of relation symbols in $\schemaS$. 
     \begin{example}
     \label{example:acyclic-approximation}
         Consider the unary MDLog program $\Pi = \{ \mn{goal}(x) \leftarrow q(x) \}$ where $q(x) = R(x, x, y_1, y_2) \wedge S(y_2, y_3) \wedge T(y_3, x)$.
         Note that $\Pi$ is not equivalent to any $\alpha$-acyclic MDLog program and that $a \in \Pi(D)$ for $D = \{ R(a, a, a, b), S(b, c), T(c, a), R(a, b, c, d) \}$.
         Although $D$ is $\alpha$-acyclic, all $\alpha$-acyclic contractions $q'$ of $q$ satisfy $D \not\models q'(a)$.
         Nevertheless, $a \in \Pi^*(D)$ as $\Pi^*$ contains the rule $\mn{goal}(x) \leftarrow R(x, x, x, y_2) \wedge S(y_2, y_3) \wedge T(y_3, x) \wedge R(x, y_2, y_3, y_4)$.
         See Figure~\ref{fig:acyclic-approximation} for illustrations.
    \end{example}
\begin{figure}
    \centering
    \begin{tikzpicture}[yscale=.8]
    \tikzset{every loop/.style={}}
    \begin{scope}[shift={(-4.9, 0)}]
        \node (x) at (0, 0) {$x$};
        \node (y1) at (1, 0) {$y_1$};
        \node (y2) at (2, 0) {$y_2$};
        \node (y3) at (1.5, -2) {$y_3$};
        \path
        (x) edge [out=110, in=70, looseness=15] (x)
        (x) edge [out=70, in=90] (y1)
        (y1) edge [->, out=90, in=110] node [above] {$R$} (y2)
        ;
        \draw [->] (y3) .. controls (0, -2) .. node [left, pos=.6] {$T$} (x);
        \draw [->] (y2) .. controls (2, -2) .. node [right, pos=.4] {$S$} (y3);
    \end{scope}
    \begin{scope}[shift={(4.9, 0)}]
        \node (a) at (0, 0) {$x$};
        \node (b) at (2, 0) {$y_2$};
        \node (c) at (1.5, -2) {$y_3$};
        \node (d) at (.4, -1.3) {$y_4$};
        \path
        (a) edge [out=150, in=110, looseness=15] (a)
        (a) edge [out=110, in=70, looseness=15]  (a)
        (a) edge [->, out=70, in=110] node [above] {$R$} (b)
        (a) edge [out=-45, in=-135] (b)
        (b) edge [out=-135, in=100] (c)
        (c) edge [->, out=100, in=0] node [near end, above] {$R$} (d)
        ;
        \draw [->] (c) .. controls (0, -2) .. node [left, pos=.6] {$T$} (a);
        \draw [->] (b) .. controls (2, -2) .. node [right, pos=.4] {$S$} (c);
    \end{scope}
    \begin{scope}[shift={(0, 0)}]
        \node (a) at (0, 0) {$a$};
        \node (b) at (2, 0) {$b$};
        \node (c) at (1.5, -2) {$c$};
        \node (d) at (.4, -1.3) {$d$};
        \path
        (a) edge [out=150, in=110, looseness=15] (a)
        (a) edge [out=110, in=70, looseness=15]  (a)
        (a) edge [->, out=70, in=110] node [above] {$R$} (b)
        (a) edge [out=-45, in=-135] (b)
        (b) edge [out=-135, in=100] (c)
        (c) edge [->, out=100, in=0] node [near end, above] {$R$} (d)
        ;
        \draw [->] (c) .. controls (0, -2) .. node [left, pos=.6] {$T$} (a);
        \draw [->] (b) .. controls (2, -2) .. node [right, pos=.4] {$S$} (c);
	\end{scope}
    \end{tikzpicture}
    \caption{Rule body $q$ (left) and $\alpha$-acyclic database $D$ (center) from Example~\ref{example:acyclic-approximation}. The $\alpha$-acyclic approximation $\Pi^*$ contains a goal rule whose body is depicted on the right.}
    \label{fig:acyclic-approximation}
\end{figure}
\Needspace{4\baselineskip}
\begin{restatable}{theorem}{theoremTHEcandidate}
	\label{theorem:THE-candidate}
	Let $\Pi$ be an MDLog program over a schema $\schemaS$.
    Then
	\begin{enumerate}
		\item for every $\schemaS$-database $D$, $\Pi^*(D) \subseteq \Pi(D)$;
		\item for every $\alpha$-acyclic $\schemaS$-database $D$, $\Pi(D) \subseteq \Pi^*(D)$;
		\item $\Pi$ is equivalent to an $\alpha$-acyclic MDLog program if and only if $\Pi$ is equivalent to $\Pi^*$.
	\end{enumerate}
\end{restatable}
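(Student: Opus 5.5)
We prove the three points in order; Point~(3) follows from (1) and (2) together with a finiteness argument.

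\emph{Point (1).} Every rule $\rho' \in \Pi^*$ arises from a contraction of some rule $\rho \in \Pi$ by adding body atoms. A contraction corresponds to composing with a variable identification, which is a homomorphism from the body of $\rho$ to the body of the contraction that respects the head. Adding atoms only makes a body more specific, so the same identity map gives a homomorphism from the contraction's body into the body of $\rho'$. Hence the body of $\rho$ maps homomorphically into the body of $\rho'$, with the head variables mapped compatibly. Given a derivation $\Gamma$ of $\bar a$ from $\Pi^*$ in $D$, we replace each rule $\rho'_v$ by its originating rule $\rho_v$ and compose $h_v$ with this homomorphism. The successor labels used are exactly those used before, restricted to the image. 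This yields a derivation from $\Pi$, so $\Pi^*(D) \subseteq \Pi(D)$.

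\emph{Point (2).} Let $D$ be $\alpha$-acyclic and $\Gamma$ a derivation of $\bar a$ from $\Pi$ in $D$. We transform $\Gamma$ node by node. At an inner node $v$ with rule $\rho_v = P(\bar x)\leftarrow q(\bar y)$ and homomorphism $h_v$, we take the contraction of $\rho_v$ induced by the kernel of $h_v$, so that $h_v$ becomes injective on it. The image of the EDB part lies inside $D$, which is $\alpha$-acyclic, but an induced image of $q$ need not be acyclic, since it is only a subset of facts of $D$. This is exactly the phenomenon of Example~\ref{example:acyclic-approximation}.

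The plan is therefore to take a join tree $J$ of $D$ and, for the image facts, add connecting facts of $D$ that lie on the paths of $J$ between them. The subtree of $J$ spanned by the image facts is itself a join tree of the collected fact set. Each added fact is then represented in the body with fresh variables for constants outside the image of $\bar y$. IDB atoms are monadic, so including them keeps the body acyclic: each is a unary atom on an existing variable and can be attached to any fact containing it.

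The main obstacle is the size bound (ii), $|\bar z| \le |\bar y| + (N-2)\cdot 2^{|\bar y|}$. Here we need to prune the spanning subtree of $J$. For each subset $X$ of image constants, at most one connecting fact is needed to carry exactly the constants in $X$ between the relevant parts of the tree. Consecutive facts on a path of $J$ that share the same intersection with the image constants can be shortcut, because the connectedness condition only concerns image constants plus the fresh ones we keep. Each kept fact introduces at most $N-2$ genuinely new constants, since at least two positions must link to neighbours. This gives at most $2^{|\bar y|}$ added facts and hence the bound. I expect this pruning argument, which keeps both $\alpha$-acyclicity and the homomorphism into $D$ intact, to be the delicate step. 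The mapping into $D$ is just the inclusion, so the new rules are in $\Pi^*$ and yield a derivation from $\Pi^*$.

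\emph{Point (3).} The direction ``if'' is immediate, because $\Pi^*$ is $\alpha$-acyclic and, given the bound (ii), finite. For ``only if'', suppose $\Pi \equiv \Pi'$ with $\Pi'$ $\alpha$-acyclic, and let $D$ with $\bar a \in \Pi(D)$. Then $\bar a \in \Pi'(D)$. Apply Lemma~\ref{lem:unrav} to $\Pi'$ to obtain a tree-like $D'$ whose bags are $\alpha$-acyclic rule bodies glued along single constants. Such a $D'$ is $\alpha$-acyclic, because join trees of the bags can be linked via facts containing the shared constant. We have $\bar a \in \Pi'(D') = \Pi(D')$, and by Point~(2), $\bar a \in \Pi^*(D')$. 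Since $D'$ maps homomorphically to $D$ and Datalog is preserved under homomorphisms, $\bar a \in \Pi^*(D)$. Combining with Point~(1) gives $\Pi \equiv \Pi^*$.
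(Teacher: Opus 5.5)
Your Points~1 and~3 match the paper's proof. In Point~3 you also note that the size bound~(ii) makes $\Pi^*$ finite, which the paper leaves implicit.

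For Point~2 you take a genuinely different route. The paper works locally on the contraction induced by $h_v$. It repeatedly picks a chordless cycle or tetra in the current body (Lemma~\ref{lem:acyclictohardpattern}). Because $D$ is $\alpha$-acyclic, $D$ must contain a chord between two non-consecutive elements of the image cycle, resp.\ a fact covering the image tetra, and the paper adds that single fact with private fresh variables. The bound comes from the fact that each new atom covers a subset of the contracted variables not covered before. Hence there are at most $2^{|\bar y|}$ steps, each adding at most $N-2$ variables.

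You argue globally via a join tree of $D$. This gives acyclicity in one shot and a transparent count, and it avoids both the iteration and the appeal to chordality/conformality of acyclic databases. The paper's version, in turn, stays in the language of hard patterns used in the rest of the section.

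Your pruning idea is right in substance: keep one representative fact per trace on the image constants, and replace all other constants by variables private to that atom. But the justification you give is not quite the right mechanism. Shortcutting consecutive facts along paths of $J$ does not remove non-consecutive repetitions of a trace. Also, connecting facts need not have two image-constant positions; those whose trace has at most one element must be discarded, not kept.

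A clean version is as follows. Let $C$ be the set of constants in the image facts. Then $J$ is also a join tree for the traces $\beta\cap C$, $\beta\in D$, so the spanning subtree is not even needed. Variables occurring in a single atom, repeated hyperedges, and hyperedges with at most one element do not affect $\alpha$-acyclicity. So it suffices to add, for each trace $X=\beta\cap C$ with $|X|\geq 2$, one such fact $\beta$. This yields at most $2^{|\bar y|}$ new atoms with at most $N-2$ fresh variables each, as required by~(ii).
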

The proof of Point~1 is straightforward. 
 For Point~2, we show how to use a derivation $\Gamma$ of $a$
 from $\Pi$ in $D$ to construct 
 a derivation $\Gamma$ of $a$
 from $\Pi^*$ in $D$.
 Let $\alpha \leftarrow q \in \Pi$ be a rule applied as part of $\Gamma$, via homomorphism $h$. We identify a corresponding $\alpha$-acyclic rule
 in $\Pi^*$ by starting with the contraction of $\alpha \leftarrow q$ 
 induced by $h$, and then adding atoms based on the $h$-image of $q$ in the $\alpha$-acyclic database $D$.
 Point~3 follows from Points~1, 2, and  Lemma~\ref{lem:unrav}.
%

\smallskip
Our next  aim is to prove that if an MDLog program $\Pi$ is not equivalent to its acyclic approximation~$\Pi^*$, then we can identify a database $D$ and  a `hard pattern' in $D$, taking the form of  a chordless cycle or a tetra, that \emph{must} be used in a non-trivialized way in every derivation of $\Pi$ in $D$. A non-trivialized way to use a cycle, for instance, is to bijectively map a cycle onto it, while wrapping a path around a cycle is a trivialized way to use it.

\medskip

Constants $a_1,a_2$ in a database $D$ are \emph{neighbors} if $D$ contains a fact in which both $a_1$ and $a_2$ appear.
A \emph{path} in  $D$ is a sequence $p=a_0, \dots, a_n$ of constants from $\mn{adom}(D)$ such that $a_i,a_{i+1}$ are
neighbors in $D$, for $0 \leq i <n$.
The \emph{length} of $p$ is $n$. 
A path $C=a_0, \dots, a_n$ in a database $D$ is a \emph{cycle} if $a_0=a_n$,
$n \geq 3$, and $a_i \neq a_j$
for $0 \leq i < j < n$. 
With a \emph{chordless} cycle,
we mean a cycle $C=a_0, \dots, a_n$ of length at least 4 such that for any $a_i,a_j$ with $0 \leq i < j < n$, $a_i,a_j$ being neighbors implies $j=i+1$ or $(i, j) = (0, n-1)$.
A \emph{$k$-tetra} in $D$ 
is a non-empty subset $S$ of $\mn{adom}(D)$ of
size $k$ such that for every subset $S' \subseteq S$ of size $k-1$, there
is a fact in $D$ that contains all constants in $S'$, while no
fact in $D$ contains all constants in $S$.
When simply saying \emph{tetra}, we mean a $k$-tetra with $k \geq 3$. We remark that, intuitively, there are of course also chordless cycles of length~3. These, however, require the additional condition that there is no fact that contains all constants on the cycle. In other words, they are a 3-tetra.
When referring to chordless cycles and tetras in a CQ $q$, we mean chordless cycles and tetras in $D_q$.
The following is well-known, see for instance \cite{berkholz-tutorial}.
\begin{lemma}
\label{lem:acyclictohardpattern}
    Let $D$ be a database that is not $\alpha$-acyclic. Then $D$ contains a 
    chordless cycle or a tetra.
\end{lemma}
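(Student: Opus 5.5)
We would argue via the classical characterization of $\alpha$-acyclicity due to Beeri, Fagin, Maier and Yannakakis. View $D$ as a hypergraph $H_D$. Its vertices are the constants in $\mn{adom}(D)$, and every fact contributes one hyperedge, namely the set of constants occurring in it; multiplicities and repeated positions play no role. A join tree for $D$ in the sense of the preliminaries is then exactly a join tree for $H_D$, so $D$ is $\alpha$-acyclic iff $H_D$ is.

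The characterization states that a hypergraph is $\alpha$-acyclic iff it is \emph{chordal} and \emph{conformal}. Both notions refer to the primal (Gaifman) graph $G_D$, whose edges join two distinct constants exactly when they are neighbors in the sense defined above. Chordal means that $G_D$ has no chordless cycle of length at least $4$. Conformal means that every clique of $G_D$ is contained in a single hyperedge. We take this equivalence as given from the literature. If a self-contained argument is preferred, one can instead run the GYO reduction (repeatedly delete vertices occurring in a single hyperedge and hyperedges contained in others) and analyze the residual non-empty hypergraph. We expect this step to be the only substantial one, and we would simply cite it, e.g.\ \cite{berkholz-tutorial}.

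Now assume $D$ is not $\alpha$-acyclic. Then $H_D$ is not chordal or not conformal, and we treat the two cases separately.

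\emph{Not chordal.} Then $G_D$ contains a cycle $a_0,\dots,a_n=a_0$ with $n\ge 4$ and pairwise distinct $a_0,\dots,a_{n-1}$ that has no chord in $G_D$. Consecutive vertices are adjacent in $G_D$, hence neighbors in $D$, so this is a path in $D$ and, by distinctness and $n\ge4$, a cycle in $D$. Since adjacency in $G_D$ coincides with being neighbors in $D$, the absence of chords in $G_D$ is exactly the chordlessness condition of the paper. So $D$ contains a chordless cycle.

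\emph{Not conformal.} Then there is a clique of $G_D$ not contained in any hyperedge. Choose such a clique $S$ of minimum size $k$.
\begin{itemize}
\item $k\ge 3$: a single constant of $\mn{adom}(D)$ lies in some fact, and a $2$-clique is an edge of $G_D$ and hence covered by a fact.
\item Every $(k-1)$-subset $S'\subseteq S$ is again a clique of $G_D$. By minimality of $k$, $S'$ is contained in some fact of $D$.
\item By choice of $S$, no fact contains all constants in $S$.
\end{itemize}
These three properties say exactly that $S$ is a $k$-tetra with $k\ge3$. In either case $D$ contains a chordless cycle or a tetra, as claimed.
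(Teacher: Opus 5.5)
Your proof is correct and takes the same route as the paper, which gives no argument of its own and simply cites the lemma as a well-known consequence of the classical chordal-plus-conformal characterization of $\alpha$-acyclicity (pointing to Berkholz's tutorial). Your derivation is the standard one and matches the paper's definitions: non-chordality yields a chordless cycle of length at least $4$, and a minimum-size clique of the Gaifman graph not covered by any fact is a $k$-tetra with $k \geq 3$.
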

%
%
%

%
%
We now work towards a definition of using a hard pattern in a non-trivialized way.
\begin{definition}[Induced CQ and Homomorphisms]
\label{def:inducedCQ}
    Let $\alpha \leftarrow q$ be an MDLog rule,
	$D$ a {bagged} database with database graph $((V,E),\mn{bag})$, $v \in V$, and $h$ a homomorphism from $q$ to $D$.
	 The \emph{CQ induced by $h$ and~$v$}, denoted $q_{h, v}(\bar y_{h, v})$, is obtained from $q$ by:
	\begin{itemize}
		\item[(i)] identifying any two variables $y, y'$ such that there is a path $y, y_1, \dots, y_n, y'$ in $q$, $n \geq 1$, with (a)~$h(y_1), \dots, h(y_n) \notin \mn{bag}(v)$ and {(b)~$h(y_1),h(y_n) \in \mn{bag}(u)$ for some $u \in V$};
        \item[(ii)] identifying any two  variables $y, y'$ with $h(y) = h(y')$ that occur in $\alpha$ (for goal rules);
        \item[(iii)] dropping all atoms $R(\bar u)$ such that $R(h(\bar u)) \notin \mn{bag}(v)$.
	\end{itemize}
We use $h_v$ to denote
    the homomorphism from $q_{h, v}$ to $\mn{bag}(v)$ obtained by restricting $h$ to 
    $\mn{var}(q_{h,v})$.
\end{definition}
%
It can be verified that the mapping $h_v$ from Definition~\ref{def:inducedCQ} is indeed a homomorphism from $q_{h, v}$ to $\mn{bag}(v)$. This crucially relies on Condition~1 in the definition of database graphs and on Condition~(\textit{i.b}) in the above definition.
Importantly, if $D$ is tree-like and $((V,E),\mn{bag})$ is a database tree, then Condition~(\textit{i.b}) is vacuously true.
\begin{example}
    Consider Figure~\ref{fig:induced-and-using}.
    Variables $y_3$ and $y_6$ have been identified in $q_{h, B_1}$  into the variable $y_{3, 6}$, due to the path $y_3, y_4, y_5, y_6$.
    Atoms $R_3(y_3, y_4)$, $R_4(y_4, y_5)$ and $R_5(y_5, y_6)$ have been dropped as they are mapped outside of $B_1$.
\end{example}
\begin{figure}[t]
    \centering
    \begin{tikzpicture}[scale=1.1]
    \begin{scope}[shift={(-4.9, 0)}]
        \node (y0) at (0, 0) {$y_1$};
        \node (y1) at (1, 0) {$y_2$};
        \node (y2) at (2, 0) {$y_3$};
        \node (y3) at (3, 0) {$y_4$};
        \node (y4) at (3, -1) {$y_5$};
        \node (y5) at (2, -1) {$y_6$};
        \node (y6) at (1, -1) {$y_7$};
        \node (y7) at (0, -1) {$y_8$};
        \path
        (y0) edge [->] node [above, sloped] {$R_1$} (y1)
        (y1) edge [->] node [above, sloped] {$R_2$} (y2)
        (y2) edge [->] node [above, sloped] {$R_3$} (y3)
        (y3) edge [->] node [below, sloped] {\rotatebox{180}{$R_4$}} (y4)
        (y4) edge [->] node [below, sloped] {$R_5$} (y5)
        (y5) edge [->] node [below, sloped] {$R_6$} (y6)
        (y6) edge [->] node [below, sloped] {$R_7$} (y7)
        (y7) edge [->] node [above, sloped] {$R_8$} (y0)
        ;
    \end{scope}
    \begin{scope}[shift={(4.9, 0)}]
        \node (a) at (0, 0) {$y_1$};
        \node (y2) at (1, 0) {$y_{2}$};
        \node (c) at (2, -.5) {$y_{3, 6}$};
        \node (y7) at (1, -1) {$y_{7}$};
        \node (f) at (0, -1) {$y_8$};
        \path
        (a) edge [->] node [above, sloped] {$R_1$} (y2)
        (y2) edge [->] node [above, sloped] {$R_2$} (c)
        (c) edge [->] node [below, sloped] {$R_6$} (y7)
        (y7) edge [->] node [below, sloped] {$R_7$} (f)
        (f) edge [->] node [above, sloped] {$R_8$} (a)
        ;
    \end{scope}
    \begin{scope}[shift={(0, 0)}]
		\node (bag2) at (2.7, -.55) [rectangle, rounded corners, draw=ForestGreen!70, fill=ForestGreen!40, fill opacity=.3, minimum height=1.8cm, minimum width=1.8cm] {};
		\node (bag2cap) at (3.2, -1.57) [ForestGreen] {$B_2$};
		\node (bag1) at (.8, -.45) [rectangle, rounded corners, draw=Mahogany!70, fill=Mahogany, fill opacity=.2, minimum height=1.8cm, minimum width=2.9cm] {};
		\node (bag1cap) at (-.2, -1.47) [Mahogany] {$B_1$};
		\node (a) at (0, 0) {$a$};
		\node (b) at (1, -.5) {$b$};
		\node (c) at (2, -.5) {$c$};
		\node (d) at (3, 0) {$d$};
		\node (e) at (3, -1) {$e$};
		\node (f) at (0, -1) {$f$};
		\path
		(a) edge [->] node [above, sloped] {$R_1$} (b)
		(b) edge [->, bend left] node [above, sloped] {$R_2$} (c)
		(c) edge [->] node [above, sloped] {$R_3$} (d)
		(d) edge [->] node [below, sloped] {\rotatebox{180}{$R_4$}} (e)
		(e) edge [->] node [below, sloped] {$R_5$} (c)
		(c) edge [->, bend left] node [below, sloped] {$R_6$} (b)
		(b) edge [->] node [below, sloped] {$R_7$} (f)
		(f) edge [->] node [above, sloped] {$R_8$} (a)
		;
	\end{scope}
\end{tikzpicture}
    \caption{
    CQ $q$ (left) for some MDLog rule $\alpha(y_1) \leftarrow q$, bagged database $D$ (center) with bags $B_1$ and $B_2$ whose domains intersect at $c$, and induced CQ $q_{h, B_1}$ (right) where $h$ is the unique homomorphism of $q$ to $D$.
    }
    \label{fig:induced-and-using}
\end{figure}
Let $D$ be a database. A \emph{bijection} between cycles $C=a_0,\dots,a_n$ and $C'=b_0,\dots,b_n$
in $D$ is a function $f: \{a_0,\dots,a_{n-1}\} \rightarrow \{ b_0,\dots,b_{n-1}\}$ such that
for every $i < n$, there is
a $j <n$ with $\{f(a_i),f(a_{i+1 \!\! \mod n})\} = \{b_j,b_{j+1 \!\! \mod n}\}$; we remind the reader that $a_n=a_0$ and $b_n=b_0$.
Note that since tetras are sets,
it is clear what a bijection between tetras is. The following definition formalizes what it means to use a chordless cycle or tetra in a non-trivialized way.
\begin{definition}[Using a Cycle/Tetra]
	Let $\alpha \leftarrow q$ be an MDLog rule, $D$ a {bagged} database with database graph $((V,E),\mn{bag})$,  and $h$ a homomorphism from $q$ to $D$.
	We say that $h$ \emph{uses} a chordless cycle (resp.\ a tetra) $X$ in $D$ if there exists a $v \in V$
    and a chordless cycle (resp.\ a tetra) $X'$ in $q_{h, v}(\bar y_{h, v})$ such that $h_v$ restricted to the variables that occur in $X'$  is a bijection between $X$ and $X'$.
     For a $k$-ary MDLog program $\Pi$ and a tuple $\bar a \in \mn{adom}(D)^k$, 
     chordless cycle  (resp.\  tetra) $X$
  is \emph{used} by a derivation $\Gamma=(V',E',\ell,\rho,h)$ of $\bar a$ from $\Pi$ in $D$ if for some $v \in V'$, $X$ is used by $h_v$.
\end{definition}

\begin{example}
    Once more consider Figure~\ref{fig:induced-and-using}.
    Note that $h$ does not use the 3-tetra $\{ a, b, f \}$ in $D$, as it fails to define a bijection from a $3$-tetra in $q_{h, B_1}$ (there is none!) to $\{ a, b, f \}$.
    In contrast, $h$ uses the 3-tetra $\{ c, d, e \}$ because $q_{h, B_2} = R_3(y_{3, 6}, y_4) \wedge R_4(y_4,y_5) \wedge R_5(y_5,y_{3, 6})$ contains the 3-tetra $\{ y_{3, 6}, y_4, y_5 \}$ and $h$ defines a bijection from this tetra  to  $\{ c, d, e \}$.
\end{example}

\begin{definition}[Crucial Cycle/Tetra]
Let $\Pi$ be an MDLog program, $D$ a {bagged} database and 
$\bar a \in \Pi(D)$. A set $S$ that
consists of 
chordless cycles and tetras in $D$ is  \emph{crucial} for $\bar a \in \Pi(D)$ if every derivation $\Gamma$ of $\bar a$ from $\Pi$ in $D$ uses some cycle or tetra in $S$. If~$\Pi$ is Boolean and $\bar a$ is the empty tuple, we instead say that $S$ is  \emph{crucial} for $D \models \Pi$.
\end{definition}
The following is the main technical result of this section. We believe that it may find applications also beyond property testing.
\begin{restatable}{theorem}{theoremOneCycleIsCrucial}
\label{theorem:one-cycle-is-crucial}
    If a self-join free MDLog program $\Pi$ is \emph{not} equivalent to an acyclic  MDLog program,
	then there are a database $D$, a tuple $\bar a \in \Pi(D)$, and a chordless cycle or tetra $X$ in $D$ such that $\{ X \}$ is crucial for $\bar a \in \Pi(D)$.
\end{restatable}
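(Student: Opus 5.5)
We start from Theorem~\ref{theorem:THE-candidate}: since $\Pi$ is not equivalent to an $\alpha$-acyclic program, it is not equivalent to $\Pi^*$. By Point~1 we have $\Pi^*\subseteq\Pi$, so there is a database $D_0$ and a tuple $\bar a\in\Pi(D_0)\setminus\Pi^*(D_0)$. By Lemma~\ref{lem:unrav} we may replace $D_0$ by a tree-like database built from a database tree $(T,\mn{bag})$. Each bag is then isomorphic to the EDB-part of a rule body, and $\bar a\in\Pi(D_0)$ still holds. Moreover $\bar a\notin\Pi^*(D_0)$ is preserved, because $D_0$ maps homomorphically to the original database and Datalog is closed under homomorphisms.

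Among all such counterexamples we pick a minimal one. We fix a well-founded measure: first the number of bags, then, lexicographically, the multiset of bag sizes measured by how far each bag is from being $\alpha$-acyclic. We choose $D_0$ minimizing this measure. By Point~2, $D_0$ is not $\alpha$-acyclic. Since $D_0$ is tree-like with pairwise intersections of at most one constant, every chordless cycle or tetra of $D_0$ lies inside a single bag. Hence some bag $\mn{bag}(v)$ is not $\alpha$-acyclic, and by Lemma~\ref{lem:acyclictohardpattern} it contains a chordless cycle or tetra $X$.

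Next we make this single pattern crucial. First, we replace every bag other than $\mn{bag}(v)$ by an $\alpha$-acyclic completion that is still derivable. This is possible by minimality: otherwise a smaller counterexample would exist. Second, inside $\mn{bag}(v)$ we contract and complete, in the style of the construction of $\Pi^*$, all hard patterns except one. This leaves a database $D$ in which $X$ is the unique non-acyclic part, and in which $\bar a\in\Pi(D)$.

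We then take an arbitrary derivation $\Gamma$ of $\bar a$ in $D$ and suppose it does not use $X$. For every node of $\Gamma$, we look at the induced CQs $q_{h,u}$ of Definition~\ref{def:inducedCQ}. If none of them maps a chordless cycle or tetra bijectively onto $X$, then every rule application reaches $X$ only in a trivialized way: it wraps a path around the cycle, or it covers a tetra through lower-dimensional faces. In that case we can extend each homomorphism image into an $\alpha$-acyclic superset of the image. This yields, node by node, rules of $\Pi^*$ witnessing $\bar a\in\Pi^*(D)$. Transporting this back along the homomorphism $D\to D_0$ contradicts $\bar a\notin\Pi^*(D_0)$, or contradicts minimality.

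The arity bound (ii) in the definition of $\Pi^*$ has to accommodate the extra atoms we add. This works because each added fact completes a subset of the at most $2^{|\bar y|}$ variable sets, using at most $N-2$ fresh positions.

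The main obstacle is the step "not using $X$ implies the derivation can be acyclified". A rule body may hit $X$ through several atoms, so the image can realize the cycle non-bijectively, for instance by going around it twice with identifications. Some body might also contain a pattern that maps onto $X$ only after the identifications (i) of Definition~\ref{def:inducedCQ}.

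This is where self-join freeness comes in. Each relation symbol occurs at most once per body, so every fact of $X$ is the image of at most one atom per rule application. A surjective, non-injective covering of $X$ would therefore need repeated relation symbols. Self-join freeness thus forces any covering of the edges of $X$ by a single rule to be bijective, which is exactly the notion of using $X$.

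I would first work this argument out for chordless cycles, arguing edge by edge with the relation symbols along the cycle. I would then adapt it to tetras via their $(k-1)$-faces, each of which is witnessed by a distinct fact and hence by a distinct relation symbol.
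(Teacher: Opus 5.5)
\textbf{There is a genuine gap at the central step.} The database $D$ you build by completing the other bags and contracting/completing the other patterns in the bag of $X$ receives a homomorphism \emph{from} $D_0$ (inclusion plus identification), not one \emph{to} $D_0$. So your closing move, ``transport back along $D\to D_0$'', is unavailable, and neither $\bar a\notin\Pi^*(D)$ nor any cruciality is inherited from $D_0$.

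Worse, the facts you add can enable new derivations that avoid $X$ altogether. Take the self-join free program with $\mn{goal}\leftarrow P(x)\wedge Q(x)$ and the following rules:
\begin{itemize}
\item a rule for $P$ whose body is a chordless $4$-cycle over $A_1,\dots,A_4$ through $x$;
\item a rule for $Q$ whose body is a $4$-cycle over $B_1,\dots,B_4$ through $x$;
\item a second, $\alpha$-acyclic rule for $P$ whose body is that $B$-cycle plus one $4$-ary atom $F$ covering its four variables.
\end{itemize}
If $X$ is the $A$-cycle and you complete the $B$-bag with the covering $F$-fact, then $P$ is derived inside the $B$-bag and $\{X\}$ is not crucial.

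Your appeal to minimality does not repair this. The modified database is neither guaranteed to be a counterexample nor smaller in your measure. Moreover, over a binary schema no added facts can make a bag containing a cycle $\alpha$-acyclic at all. The paper only ever passes to homomorphic \emph{preimages} (unravelings). It does so precisely to maintain a homomorphism $\delta$ to $D_0$, which keeps every intermediate database non-$\alpha$-acyclic via Point~2 of Theorem~\ref{theorem:THE-candidate}.

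\textbf{The acyclification step is only asserted, and your self-join-freeness argument for it is false.} This is the step ``a derivation not using $X$ can be acyclified node by node'', and it is the whole difficulty. Self-join freeness does not force coverings of $X$ to be bijective. The body path $R_1(y_0,y_1)\wedge R_2(y_1,y_2)\wedge R_3(y_2,y_3)\wedge R_4(y_3,y_4)$, mapped onto a chordless $4$-cycle with $h(y_4)=h(y_0)$, hits each fact of $X$ exactly once with distinct symbols, yet does not use $X$. Its image contains all of $X$, so no superset of that image inside $D$ is $\alpha$-acyclic.

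In the paper, self-join freeness plays a different role. It makes homomorphisms between bag types unique, so $<$ is a strict partial order, patterns are canonically typed, and unraveling a maximal type strictly decreases the set of realized pattern-bearing types. Even so, that first phase only yields a crucial \emph{set} $\type^{-1}(t)$ of isomorphic patterns, because unraveling necessarily duplicates bags.

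Collapsing this set to one crucial pattern is the bulk of the paper's proof. It proceeds as follows:
\begin{itemize}
\item it spreads the crucial bags to distance greater than the diameter of any rule body;
\item it builds strong automorphisms between those bags and identifies them;
\item it shows via ``new cycles are long'' and ``bounded return'' that every derivation in the merged database lifts to the unmerged one, and therefore uses the merged pattern.
\end{itemize}
Your proposal has no counterpart to this merging phase. The shortcut of making $X$ the unique non-acyclic part is exactly what fails.
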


We sketch the proof of Theorem~\ref{theorem:one-cycle-is-crucial}.
Let $\Pi$ be an MDLog program  that is not equivalent to an acyclic MDLog program.
By Points~1 and~3 of Theorem~\ref{theorem:THE-candidate}, there is a database $D$ such that $\Pi^*(D) \subsetneq \Pi(D)$. Choose a  $\bar a \in \Pi(D) \setminus \Pi^*(D)$.    
From Lemma~\ref{lem:unrav}, we obtain a tree-like database $D'$ with the same properties; notably, $\bar a \notin \Pi^*(D')$ due to  Point~2.
$D'$ cannot be $\alpha$-acyclic as otherwise, by Point~2 of Theorem~\ref{theorem:THE-candidate}, we would have $\bar a \in \Pi^*(D')$. By Lemma~\ref{lem:acyclictohardpattern}, $D'$ therefore contains a chordless cycle or a tetra (in the following uniformly referred to as a pattern). 

The proof of Theorem~\ref{theorem:one-cycle-is-crucial} then proceeds in three  steps.
The first step starts with $D_0 := D'$ and constructs a sequence of potentially infinite tree-like databases $D_0, D_1, \dots, D_M$. 
Each $D_i$ is a homomorphic preimage of $D_{i-1}$, and the final database $D_M$ contains a set $S$ of either chordless cycles or tetras that is crucial for $\bar a \in \Pi(D_M)$. 
Importantly, all patterns in  $S$ are isomorphic and  contained in  bags of $D_M$ that are isomorphic.
We construct $D_i$ by  replacing non-crucial cycles or tetras in $D_{i-1}$ with homomorphic preimages thereof,  preserving at least one derivation of $\bar a$ from $\Pi$ in $D_i$.
Our aim is then to identify all 
(isomorphic) patterns in $S$  into a single pattern $X$ in order to obtain the desired crucial singleton set $\{ X \}$.
Doing that in a naive way,
however, may enable 
new derivations, potentially making $\{ X \}$ non-crucial.
This can happen in two ways: (i)~identification may introduce new cycles short enough to enable new matches of a rule body, and (ii)~it may bring two bags that were previously far apart close to each other, thereby enabling new matches of rule bodies.
In the second step, we therefore construct  another sequence $D_M, D_{M + 1}, \dots, D_{M + N}$, using a variation of the construction of the first sequence. The result is that there is now a set $S'$  
of either chordless cycles or tetras that is crucial for $\bar a \in \Pi(D_{M+N})$
such that (i$'$)~all elements of $S'$ are far apart from one another and (ii$'$)~there exist isomorphisms from $D_{M+N}$ to $D_{M+N}$ that map any one hard pattern from $S'$ to any other one. Since  (i$'$) overcomes problem~(i) and 
(ii$'$) overcomes problem (ii), in
the third step we may  safely identify all patterns in $S'$.
%

\medskip

With Theorem~\ref{theorem:one-cycle-is-crucial} in place, it remains to prove the following.
\begin{restatable}{lemma}{lemdataloglowercyclecombined}
\label{lem:datalog-lower-cyclecombined}
  Let $\Pi$ be a Boolean  MDLog program such that there exist a database $D$ with  $D \models \Pi$
  and  a chordless cycle or tetra $X$ in  $D$ 
  with   $\{ X \}$  crucial for
  $D \models \Pi$. Then 
  falsity of\/ $\Pi$ is not constant query testable with one-sided error.
\end{restatable}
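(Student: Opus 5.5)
The plan is to reduce from testing falsity of a Boolean CQ $q_X$ that is a chordless cycle or a tetra, which Chen and Yoshida~\cite{chen2019testability} show is not constant query testable with one-sided error. We then conclude with Point~(2) of Proposition~\ref{prop:reduction}. We let $q_X$ be the hard CQ read off from $X$. For a chordless cycle $a_0,\dots,a_{n-1}$, we pick for each consecutive pair $a_i,a_{i+1}$ one fact of $D$ containing both and use a fresh binary relation $E_i(x_i,x_{i+1})$, giving a chordless cycle CQ. For a $k$-tetra, we use a fresh relation $F_{S'}$ of arity $k-1$ for each $(k-1)$-subset $S'$. We rely on the fact that these self-join free cycle/tetra CQs are exactly the hard instances in~\cite{chen2019testability}. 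Their lower bound comes with a family of bag databases $B$ over the schema of $q_X$ that are $\epsilon$-far from $B\not\models q_X$ but on which every constant-query tester sees only locally acyclic pieces.

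The reduction $f_{\text{DB}}$ maps a database $B$ for $q_X$ to a database $D_B$ over the schema of $\Pi$. We start from $D$ with the pattern $X$ removed. Each fact $E_i(b,b')$ (resp.\ $F_{S'}(\bar b)$) of $B$, with its multiplicity, is replaced by a copy of the corresponding fact(s) of $D$ witnessing the pair $a_i,a_{i+1}$ (resp.\ $S'$), with $a_i\mapsto b$, $a_{i+1}\mapsto b'$. The constants of $X$ thus become constants of $B$, and the rest of $D$ is glued on. To keep $\epsilon$-farness, the facts of $D$ outside $X$ are scaled so they do not dominate: either they get a huge multiplicity so that deleting them is never cheaper, or, better, we attach to every element of $B$ a private copy of the context, as in the 2RPQ reduction with $U$-edges. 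We set $f_\epsilon(\epsilon)=\epsilon/c$ for a constant $c$ depending on $D$ and $\Pi$. Condition~(3) of Definition~\ref{def:proptestreduction} follows because every fact of $D_B$ is determined by a fact of $B$ together with a constant-size gadget, so completion and size queries translate into constantly many queries to $B$.

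Completeness is immediate. If $B\models q_X$, a homomorphism of $q_X$ into $B$ lifts to a homomorphism of $D$ into $D_B$, so $D_B\models\Pi$. Each disjoint witness for $q_X$ likewise yields a copy of $D$, which gives farness after accounting for multiplicities. The main obstacle is soundness: if $B\not\models q_X$, then $D_B\not\models\Pi$. Suppose $\Gamma$ is a derivation of $\Pi$ in $D_B$. Since $B\not\models q_X$, every image of the pattern in $B$ is trivialized: there is no bijective cycle/tetra image, only wrapped or collapsed ones. The idea is to pull $\Gamma$ back along the natural homomorphism $D_B\to D$ that sends $B$ to $X$, which exists since $B$ maps to $q_X$ by type. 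This yields a derivation of $\Pi$ in $D$. We must then argue that it does not use $X$, contradicting that $\{X\}$ is crucial.

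The delicate point is that using $X$ is defined via induced CQs and bijections on bags. We need that whenever a rule body's image in $D$ bijectively covers $X$ within a bag, the preimage in $D_B$ already formed a full copy of $q_X$ in $B$. We would first try to prove this by a locality argument: choose $B$ to be the high-girth (locally acyclic) instances from~\cite{chen2019testability}, whose short cycles and tetras are absent. Every rule body has constant size, so its image in $B$ is acyclic-like and its projection onto $X$ cannot be bijective. This requires adapting the bagged-database bookkeeping (Definition~\ref{def:inducedCQ}) to $D_B$, and this step carries the real work.
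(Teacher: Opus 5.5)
There is a genuine gap in the soundness direction, i.e.\ Condition~(1) of Definition~\ref{def:proptestreduction}. Your overall outline matches the paper: reduce from the self-join free cycle/tetra CQ, plant $B$ in place of $X$, pull a derivation back to $D$, and invoke cruciality. The problem is that you let the constants of $X$ \emph{become} constants of $B$. Then one element $b$ of $B$ can play the role of $a_i$ in one copied fact and of $a_j$, $j\neq i$, in another. The homomorphism $D_B\to D$ ``sending $B$ to $X$'' that you pull back along therefore does not exist in general: an arbitrary $B$ over the schema of $q_X$ has no homomorphism to $q_X$, and Condition~(1) must hold for every input. This is not a technicality; Condition~(1) actually fails. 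Take $D=\{R_0(a_0,a_1),R_1(a_1,a_2),R_2(a_2,a_3),R_3(a_3,a_0)\}$ with $X$ this chordless 4-cycle. Let $\Pi$ consist of the goal rule whose body is this cycle, together with $\mathsf{goal}()\leftarrow R_0(x,y)\wedge R_2(z,x)$. The second rule does not fire in $D$, so $\{X\}$ is crucial. Yet $B=\{E_0(b,c),E_2(d,b)\}$ is $q_X$-free, while your $D_B\supseteq\{R_0(b,c),R_2(d,b)\}$ satisfies $\Pi$.

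The locality fallback does not repair this. Condition~(1) quantifies over all $q_X$-free $B$, so restricting to particular instances abandons the black-box use of Proposition~\ref{prop:reduction}. Moreover, the hard instances of~\cite{chen2019testability} are $\epsilon$-far from $q_X$-free, hence full of copies of $q_X$, not free of short cycles or tetras. Also, a wrapped image of $q_X$ in $B$ is a witness for $B\models q_X$, not a trivialized use. The missing idea is to tag constants with their role, as the paper does. Replace every constant $c$ of $X$ by copies $(c,v)$ for $v\in\adom(B)$, and every other constant $d$ by $(d,\bot)$. Then instantiate the facts of $D$ that join constants of $X$ according to the facts of $B$, e.g.\ $E_i(v,v')$ yields the witnessing fact on $(c_i,v),(c_{i+1},v')$. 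Projection to the first coordinate is then a homomorphism $D_B\to D$ for every $B$, so a derivation in $D_B$ projects to one in $D$. Cruciality gives a rule application whose image bijects onto $X$. Since the first coordinates of the preimages are the pairwise distinct constants of $X$, their second coordinates directly define a homomorphism $q_X\to B$. This yields $B\models q_X$ iff $D_B\models\Pi$ with no locality argument at all. Giving every fact of $D$ total multiplicity $|B|$ over its copies then yields $|D_B|=|D|\,|B|$ and $f_\epsilon(\epsilon)=\epsilon/|D|$. (The paper uses the triangle as source CQ in the cycle case, closing the cycle beyond $c_4$ via the identity on second coordinates. Your direct $n$-cycle source also works once the tagging is in place.)
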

We prove Lemma~\ref{lem:datalog-lower-cyclecombined} by a 
reduction  from testing falsity of a self-join free CQ $q$ that takes the form of a chordless cycle or a tetra, depending on what $X$ is. These are not constant query testable with one-sided error by the results in~\cite{chen2019testability}. 
%
With every input database $E$ for $q$, we associate an input database $E'$ for $\Pi$
by starting from the database~$D$ from Theorem~\ref{theorem:one-cycle-is-crucial}
and then replacing each constant $c$ of~$X$ by copies $(c,v)$ for every $v \in \mn{adom}(E)$; similar constructions can be found, e.g.,
in \cite{tagging-bagan, tagging-nofar,chen2019testability}. The facts in~$E'$ that involve these copies encode the facts of~$E$ while the remainder of~$D$ is copied to~$E'$ verbatim.  If $E\models q$, then a witnessing homomorphism selects one copy $(c,v)$ of every $c\in X$. Replacing the constants of~$X$ by these selected copies in a derivation of~$\Pi$ in~$D$ gives a derivation of
 $\Pi$ in $E'$. Conversely, suppose that $E'\models\Pi$. Erasing the second coordinate of every constant that occurs in a derivation of $\Pi$ in $E'$ gives a derivation of $\Pi$ in $D$. 
 Since $\{X\}$ is crucial, some rule application in this derivation uses $X$. The corresponding rule application in $E'$ uses copies $(c,v_c)$ of the constants $c\in X$. By the construction of $E'$, the second coordinates of these copies determine a match of $q$ in $E$. The multiplicities in $E'$ are chosen so that $|E'|=|D||E|$, which ensures that $\varepsilon$-farness is preserved up to the constant factor $1/|D|$. 



\section{Conclusion}

We established the fundamental result that for every 2RPQ $q$, non-answers to $q$ are constant query testable with one-sided error. We then launched
an investigation of monadic Datalog programs. Our technically most involved result is that falsity of Boolean MDLog programs that are not equivalent to an $\alpha$-acyclic MDLog program is not constant query testable with one-sided error, under the assumption of self-join freeness. Note that 
this assumption is only used in the proof of Theorem~\ref{theorem:one-cycle-is-crucial}, but not in 
the proof of Lemma~\ref{lem:datalog-lower-cyclecombined}. We expect that it can be removed at the expense of making the already rather technical proof of Theorem~\ref{theorem:one-cycle-is-crucial} even more complex, but leave details for future work. As discussed in Section~\ref{sec:musing}, the status of a large remaining class of MDLog programs remains open. Even for the concrete and simple query $q_2$ in 
Figure~\ref{fig:threequeries}, a resolution appears to be rather non-trivial. A candidate for a more modest class to study that does not contain $q_2$ is 
 linear $\alpha$-acyclic MDLog programs on binary (or even unrestricted) schemas. But it seems that even for this class,  significant novel ideas beyond the ones used in this article are needed. 
 It would also be interesting to understand how bag semantics and set semantics relate to one another. Can we improve our lower bounds from bag semantics to set semantics? Are there MDLog queries that are constant query testable under set semantics, but not under bag semantics?

\newpage
 
\bibliographystyle{alpha}
\bibliography{biblio}

\newpage

\appendix

\section{Proofs for Section~\ref{sect:RPQs}}
\label{app:proofssect3}

The following is easy to see, we omit a proof.
\begin{lemma}
  \label{lem:productlem}
  Let $D$ be a database, $q$ a 2RPQ, both over the same schema
  $\schemaS$, and $a_1,a_2 \in \adom(D)$.  Then $(a_1,a_2) \in q(D)$
  iff there is an $s_F \in F$ such
  that $\langle a_2,s_F \rangle$ is reachable from
  $\langle a_1,s_0 \rangle$ in $G_{D \times q}$.
\end{lemma}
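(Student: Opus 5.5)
The plan is to prove both directions by induction on path length, using the construction of $G_{D \times q}$ as a step-by-step simulation of the NFA $q$ running along a path in $D$.

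\emph{Direction from left to right.} Assume $(a_1,a_2) \in q(D)$. Then there is a path $a_1P_1b_2P_2\cdots P_{n-1}b_n$ in $D$ with $b_n=a_2$ and label $w=P_1\cdots P_{n-1} \in L(q)$. Fix an accepting run $s_0=t_1,t_2,\dots,t_n$ of $q$ on $w$, so that $(t_i,P_i,t_{i+1}) \in \Delta$ and $t_n \in F$. We show by induction on $i$ that $\langle b_i,t_i\rangle$ is reachable from $\langle a_1,s_0\rangle$ (with $b_1=a_1$). The step from $i$ to $i+1$ splits into two cases.
\begin{itemize}
\item If $P_i=R\in\Sbf$, then $R(b_i,b_{i+1})\in D$ has multiplicity at least one. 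Hence $E$ contains the edge $\langle R(b_i,b_{i+1}),(t_i,R,t_{i+1}),1\rangle$, and $L$ maps it to $(\langle b_i,t_i\rangle,\langle b_{i+1},t_{i+1}\rangle)$.
\item If $P_i=R^-$, then $R(b_{i+1},b_i)\in D$. By the second clause of the definition of $L$, the edge $\langle R(b_{i+1},b_i),(t_i,R^-,t_{i+1}),1\rangle$ goes from $\langle b_i,t_i\rangle$ to $\langle b_{i+1},t_{i+1}\rangle$.
\end{itemize}
Taking $s_F=t_n$ completes this direction. The degenerate case of an empty path needs a short remark: the path $a_1$ of length zero is allowed and gives $a_1=a_2$ with $s_0\in F$, and then $\langle a_2,s_0\rangle$ is trivially reachable. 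Whether the paper's path notion admits length zero is a matter of convention, and both sides should be read consistently.

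\emph{Direction from right to left.} Take a path $e_1,\dots,e_m$ in $G_{D\times q}$ from $\langle a_1,s_0\rangle$ to $\langle a_2,s_F\rangle$. By the definition of $E$, each edge $e_j$ has the form $\langle R(c,c'),(s,S,s'),i\rangle$ with $S\in\{R,R^-\}$ and $(s,S,s')\in\Delta$. If $S=R$, its endpoints are $\langle c,s\rangle$ and $\langle c',s'\rangle$; if $S=R^-$, they are $\langle c',s\rangle$ and $\langle c,s'\rangle$.

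We read off the sequence of database components of the visited vertices together with the symbols $S_j$ of the used transitions. This yields a path in $D$ from $a_1$ to $a_2$ with label $S_1\cdots S_m$: whenever $S_j=R^-$, the fact $R(c,c')$ witnesses a backward step from $c'$ to $c$, exactly as condition (ii) of the path definition requires. The automaton components form a run $s_0,\dots,s_F$ of $q$ on that label, so the label lies in $L(q)$.

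The only real subtlety is bookkeeping. One has to keep the orientation of $R^-$ edges consistent between $L$ and the path definition, and note that multiplicities play no role: existence of an edge only needs $D(\alpha)\ge 1$. I expect no genuine obstacle.
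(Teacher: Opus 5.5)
Your proposal is correct and is the routine product-automaton simulation the paper has in mind; the paper omits the proof as easy. Your handling of the $R^-$-edge orientation under $L$ matches the definition of $G_{D\times q}$ in both directions. Your caveat about the degenerate case is warranted: the paper defines multi-digraph paths as nonempty edge sequences, while a database path may consist of a single constant. So for $a_1=a_2$ and $s_0\in F$ the lemma holds only if reachability is read reflexively, which is the natural convention here.
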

%

\lemefartranslates*
\begin{proof}
  Assume that $D$ is $\epsilon$-far from $(a_1,a_2) \notin q(D)$ and take any submulti-digraph $G$ of $G_{D \times q}=(V,E,L)$ obtained by
  deleting $n \leq \frac{\epsilon}{|\Delta|}|E|$
  edges.   
  We have to show that $T$ is reachable from $S$ in~$G$.

  We may assume w.l.o.g.\ that 
  \begin{itemize}
      
  \item[($*$)] whenever
  an edge $\langle \alpha,\delta,i \rangle$
  is in $G$, then all edges $\langle \alpha,\delta,j \rangle$ with $1 \leq j < i$
  are also in $G$. 
  
  \end{itemize}
This is because there clearly is an automorphism on $G_{D \times q}$ that maps  $\langle \alpha,\delta,i \rangle$ to $\langle \alpha,\delta,j \rangle$, for any $i,j$.
  If we delete an edge $\langle \alpha,\delta,i \rangle$ we can thus always
  choose this edge so that $i$ is maximal, and
  the resulting graph will be isomorphic to the one that we would have obtained when choosing $i$ non-maximal. In particular, 
  $T$ is reachable from $S$ in~$G$ either in both of these graphs or in none of them.
  
  Let $D'$ be the subdatabase of $D$ that is obtained in the following way. Consider
  any fact $\alpha \in D$. The number of
  edges in $G_{D \times q}$ 
  that derive from $\alpha$ is at most
  $|\Delta| \cdot D(\alpha)$. Assume that
  $k$ of these edges have been deleted
  in $G$. We then delete $\min\{k,D(\alpha)\}$
  many copies of $\alpha$ in the construction of $D'$.
  
  Because every edge in $E$ derives
  from a unique fact in $D$, at most $n$ (copies of) facts are deleted from $D$ when
  constructing $D'$.  Since $|E| \leq |D| \cdot |\Delta|$,
  this means
  $n \leq \epsilon|D|$. 
  Thus, $D'$ is obtained
  from $D$ by deleting at most $\epsilon|D|$ facts.  It follows that
  $(a_1,a_2) \in q(D')$ since
  $D$ is $\epsilon$-far from $(a_1,a_2) \notin q(D)$. By Lemma~\ref{lem:productlem},
  it follows that $T$ is reachable from $S$ in~$G_{D' \times q}$ .

  It thus remains to observe that  $G_{D' \times q}$ is a subgraph of $G$.  In fact, any edge $\langle \alpha,\delta,i \rangle$ 
  from $E$ that is not present in $G$
  is also not present in $G_{D' \times q}$ because, by ($*$) and construction of $D'$, $\langle \alpha,\delta,i \rangle$ missing in $G$ implies that $D'(\alpha) < i$.
  Also note that $G_{D' \times q}$ may be a strict subgraph of $G$: it may be
  that an
  edge $\langle R(a_1,a_2),(s_1,R,s_2),1\rangle$ 
  was deleted in $G$, implying that $D'(R(a_1,a_2))=0$, which however 
  may result in some edge $\langle R(a_1,a_2),(s'_1,R,s'_2),1\rangle$ that is present
  in $G$ also missing from $G_{D' \times q}$.
\end{proof}

%


\section{Proofs for Section~\ref{subsect:linearacyclic}}

\lemreflexiveloopremoval*
\begin{proof}
    We prove the lemma by a reduction from testing $\mathcal{P}_\Pi$ to testing $\mathcal{P}_{\Pi'}$.
    Let $\Pi$ be over some schema $\schemaS$ and let $\schemaS' = \schemaS \uplus \{N_1, N_2\}$ with
    $N_1, N_2$ fresh binary relation symbols.
    We associate with every input $D$ to $\mathcal{P}_\Pi$ an $\schemaS'$-database $D'$ and the
    property $\mathcal{P}_{\Pi'}$.
    Moreover with any $\varepsilon \in (0,1)$ we associate $\varepsilon' = \frac{\varepsilon}{3}$.
    We prove the three conditions of Definition~\ref{def:proptestreduction} below, which yields the
    lemma. First, let us recall $\Pi'$ and construct $D'$.

    Recall that $\Pi'$ is obtained from $\Pi$ by taking a rule $\hat\rho = \alpha \leftarrow q \in \Pi$
	and a variable $x \in \mn{var}(q)$, removing all reflexive loops $F = \{R(x,x) \mid R(x,x) \in q\}$
	from $q$, consistently replacing $x$ with a fresh variable $v$ in the remaining body, and adding the
	atoms $N_1(x,u), N_2(u,v)$ with $u$ a fresh variable. The rule head $\alpha$ is left untouched. We call the resulting rule $\hat \rho '$. 
    Note that $\Pi'$ is still linear and acyclic and as mentioned in the main part of the paper can be easily rewritten into an equivalent set of strongly linear rules.

	For an $\schemaS$-database $D$, we construct the $\schemaS'$-database $D'$ as follows. We keep all
	$\schemaS$-facts, that is 
    $$D'(S(d,e)) := D(S(d,e))$$ 
    for every $S \in \schemaS$. For every $d \in \adom(D)$, introduce one fresh constant $a_d$ and set:
	\begin{align*}
		D'(N_1(d, a_d)) &:= m_d \\
		\text{if $F[x/d] \subseteq D$, } &\text{then } D'(N_2(a_d, d)) := m_d \\
		&\text{otherwise } D'(N_2(a_{N_2}, a_{N_2})) := D'(N_2(a_{N_2}, a_{N_2})) + m_d
	\end{align*}
	where $m_d := \sum_{S \in \schemaS} D(S(d, *))$ is the outdegree of $d$, and $F[x/d]$ is the set of facts obtained from $F$ by replacing every occurrence of $x$ with $d$.
	Since $\sum_{d \in \adom(D)} m_d = \sizeof{D}$, it follows that $\sizeof{D'} = 3\sizeof{D}$.

	Intuitively, the $N_1$ edge navigates from a real element $d$ to its private vertex $a_d$, and the
	$N_2$ edge returns to $d$ \emph{exactly when} the loops in $F$ hold at $d$, forcing the fresh variable
	$v$ back onto the image of $x$. When the loops do not hold, the $N_2$ edge is added as a selfloop to the global dummy element $a_{N_2}$ instead, so that we do not create a $N_1 N_2$ path but still keep the size consistent for size queries of the form $N_2(*,*)$. The multiplicity $m_d$ is chosen as the outgoing degree of $d$ for two reasons: it makes the total $N_1$- and $N_2$-multiplicity equal to $\sizeof{D}$ and it dominates
	the cost $\ell(d) := \min_{R(x,x) \in F} D(R(d,d))$ of breaking the loops at $d$.

    Now we prove the three conditions of Definition~\ref{def:proptestreduction}.

	\emph{(1) If $D \not\models \Pi$, then $D' \not\models \Pi'$.}
	Assume by contrapositive that $D' \models \Pi'$, witnessed by a derivation
	$\Gamma' = (V', E', \ell', \rho', h')$ from $\Pi'$ in $D'$.
We construct a derivation $\Gamma = (V, E, \ell, \rho, h)$ from $\Pi$ in $D$ as follows.
We replace every $\hat\rho'$ vertex with a $\hat\rho$ vertex, replace its $N_1, N_2$
children by children for the facts in $F$, and define the homomorphism labels as follows.
For a node $w$ with $\rho'(w) \neq \hat\rho'$, the rule is unchanged and we set
$h(w) := h'(w)$. For a $\hat\rho'$ vertex $\hat v$, whose rule $\hat\rho$ has body variables
$x$ and $z$, we set $h(\hat v)(x) := h'(\hat v)(v)$ and $h(\hat v)(z) := h'(\hat v)(z)$.

Consider any atom $S(t, t')$ of $\hat\rho$'s body with $S \in \schemaS$ and
$S(t,t') \notin F$; in $q'$ it appears as $S(\bar t, \bar t')$ where $\bar t, \bar t'$
are $t, t'$ with $x$ renamed to $v$. Since $D$ and $D'$ agree on $\schemaS$-facts and
$h(x) = h'(v)$, $S(h'(\bar t), h'(\bar t')) \in D'$ gives $S(h(t), h(t')) \in D$.
Atoms of other rules are unaffected by the rename and transfer directly.

	It remains to treat the loops in $F$. Let $d = h'(w)(x)$. By definition of $N_1$ in $D'$, the only
	$N_1$ edge leaving $d$ points to $a_d$, so $h'(u) = a_d$. The atom $N_2(u, v)$ then requires an $N_2$
	edge leaving $a_d$, which by construction exists only if $F[x/d] \subseteq D$, and in that case points
	to $d$. Hence $h'(v) = d = h'(x)$ and $F[x/d] \subseteq D$. The latter means every loop $R(x,x) \in F$
	maps to $R(d,d) \in D$, providing the required leaf children. Thus $\Gamma$ is a derivation and
	$D \models \Pi$.

	\emph{(2) If $D$ is $\epsilon$-far from $D \not\models \Pi$, then $D'$ is $\epsilon'$-far from
	$D' \not\models \Pi'$.}
	By contrapositive, assume $D'$ is not $\epsilon'$-far from $D' \not\models \Pi'$: there exists a
	sub-database $E' \subseteq D'$ with $\sizeof{E'} \leq \epsilon'\sizeof{D'}$ (hence
	$\sizeof{E'} \leq \epsilon\sizeof{D}$) and $D' \setminus E' \not\models \Pi'$. Choose $E'$ minimal w.r.t.\
	pointwise multiplicity. In particular facts in $E'$ then have the same multiplicity as in $D'$. We construct $E \subseteq D$ with
	$\sizeof{E} \leq \epsilon\sizeof{D}$ and $D \setminus E \not\models \Pi$. 

    Let $L = \{d \in \adom(D) \mid N_1(d, a_d) \in E' \text{ or } N_2(a_d, d) \in E'\}$.
    We initialize $E$ with $E := (E' \cap D)$ and then for all $d \in L$, we choose a fact $R(d,d)$ in $F[x/d]$ with $D(R(d,d)) = \ell(d)$ and set $E(R(d,d)) := \ell(d)$.
    
	We first check $\sizeof{E} \leq \sizeof{E'}$. Each $d \in L$ contributes a distinct $N_1$ or $N_2$ edge of multiplicity $m_d$ to $E'$ so
	$\sum_{d \in L} \ell(d) \leq \sum_{d \in L} m_d \leq \sum_{d,e \in \adom(D)}\sizeof{E'(N_1(d,e) + E'(N_2(d,e))}$. Hence
	$\sizeof{E} \leq \sizeof{E' \cap D} + \sum_{d \in L} \ell(d) \leq \sizeof{E'}$.

	It remains to argue $D \setminus E \not\models \Pi$. Since $E'$ is minimal, every fact in $E'$ is helping
	to block some homomorphism from a rule body of $\Pi'$ into $D'$. For $\schemaS$-atoms this transfers to
	$\Pi$ and $D$ since $E$ contains $E' \cap D$. The only rule that could behave differently is $\hat\rho$.
	Assume by contradiction that $D \setminus E \models \Pi$ via some derivation $\Gamma$ from $\Pi$ in $D$. We show that then $D' \setminus E' \models \Pi'$. 
    By the structure
	of $\Pi$ there is a $\hat\rho$ vertex; let it map $x$ to $d$. As $\Gamma$ maps the body of $\hat\rho$
	into $D \setminus E$, we have $F[x/d] \subseteq D \setminus E$. In particular $d \notin L$ (otherwise our definition of $E$ would break $F[x/d]$ mapping to $D \setminus E$), so $E'$ contains neither
	$N_1(d, a_d)$ nor $N_2(a_d, d)$. Since $F[x/d] \subseteq D$, the fact $N_2(a_d, d)$ is present in $D'$, and both $N_1(d, a_d)$ and $N_2(a_d,d)$ survive in $D' \setminus E'$. The remaining $\schemaS$-atoms of $\hat\rho$ survive
	in $D' \setminus E'$ as well (their facts lie in $D \setminus E$, hence outside $E' \cap D \subseteq E'$).
	Thus $\hat\rho'$ can fire at $d$ (mapping $u \mapsto a_d$, $v \mapsto d$) and the remainder of $\Gamma$
	lifts, giving $D' \setminus E' \models \Pi'$, a contradiction.

	\emph{(3) The answer to any completion or size query to $D'$ can be answered by constantly many
	completion or size queries to $D$.}
	Completion (resp.\ size) queries on $D'$ regarding $\schemaS$-predicates are answered via the same
	query to $D$. We treat the fresh predicates $N_1, N_2$. For brevity we sometimes write $F[x/d] \subseteq D$ which can be decided by a size query $R(d,d)$ for each $R(x,x) \in F$, and $m_d$ can be obtained from the size queries $S(d,*)$ for 
	$S \in \schemaS$. As $|\schemaS|$ and $F$ are fixed, each case below uses constantly many queries to $D$.
	\begin{itemize}
		\item Completion query $N_1(c, *)$ returns $a_c$ if $c = d \in \adom(D)$, and $\bot$ otherwise.
		\item Completion query $N_1(*, c)$ returns $d$ if $c = a_d$, and $\bot$ otherwise.
		\item Completion query $N_2(c, *)$. If $c = a_d$, return $d$ if $F[x/d] \subseteq D$ and $\bot$ if
		not. If $c = d \in \adom(D)$, return $a_d$ if $F[x/d] \not\subseteq D$ and $\bot$ if not.
		Otherwise return $\bot$.
		\item Completion query $N_2(*, c)$. If $c = d \in \adom(D)$, return $a_d$ if $F[x/d] \subseteq D$ and
		$\bot$ if not. If $c = a_d$, return $d$ if $F[x/d] \not\subseteq D$ and $\bot$ if not. Otherwise
		return $\bot$.
        \item Completion query $N_1(*, *)$. We sample a relation symbol $S \in \schemaS$ with
        probability proportional to $\sizeof{S}$ (obtained via the size queries $S(*,*)$), then ask
        a completion query $S(*,*)$ to obtain a fact $S(d,e)$, and return $N_1(d, a_d)$. Since
        sampling a random $\schemaS$-fact and taking its first argument yields $d$ with probability
        $m_d / \sizeof{D}$, this returns each $N_1$-fact with probability proportional to its
        multiplicity, as required.
        \item Completion query $N_2(*, *)$. We sample a fact $S(d,e)$ as above. If $F[x/d] \subseteq D$
        we return $N_2(a_d, d)$, and otherwise we return $N_2(d, a_d)$. As both
        have multiplicity $m_d$, and $d$ is sampled with probability
        $m_d / \sizeof{D}$, this returns each $N_2$-fact with probability proportional to its
        multiplicity, as required.
		\item Size query $N_1(c, *)$ returns $m_c$ if $c = d \in \adom(D)$, and $0$ otherwise.
		\item Size query $N_1(*, c)$ returns $m_d$ if $c = a_d$, and $0$ otherwise.
		\item Size query $N_2(c, *)$ returns $m_d$ if $c = a_d$ and $F[x/d] \subseteq D$, or if
		$c = d \in \adom(D)$ and $F[x/d] \not\subseteq D$; otherwise $0$.
		\item Size query $N_2(*, c)$ returns $m_d$ if $c = d \in \adom(D)$ and $F[x/d] \subseteq D$, or if
		$c = a_d$ and $F[x/d] \not\subseteq D$; otherwise $0$.
		\item Size queries $N_1(*,*)$ and $N_2(*,*)$ return $\sizeof{D}$.
	\end{itemize}
\end{proof}

\lemparalleledgeremoval*
\begin{proof}
    We prove the lemma by a reduction from $\mathcal{P}_\Pi$ to $\mathcal{P}_{\Pi'}$.
    Let $\Pi$ be a strongly linear MDLog program of schema $\schemaS$ and let
    $\schemaS' = \schemaS \uplus \{N_1, N_2, N_3\}$ with $N_1, N_2, N_3$ fresh binary relation symbols.
    We associate with every input $D$ to $\mathcal{P}_\Pi$ an $\schemaS'$-database $D'$ and the
    property $\mathcal{P}_{\Pi'}$.
    Moreover with any $\varepsilon \in (0,1)$ we associate $\varepsilon' = \frac{\varepsilon}{7}$.
    We prove the three conditions of Definition~\ref{def:proptestreduction} below, which yields the
    lemma. First, let us recall $\Pi'$ and construct $D'$.

    Recall that $\Pi'$ is obtained from $\Pi$ as follows. Let $\hat \rho = \alpha \leftarrow q \in \Pi$ and $x,y \in \mn{var}(q)$, and let $F = \{R(x,y) \mid R(x,y) \in q\} \cup \{R(y,x) \mid R(y,x) \in q\}$ be the set of parallel edges between $x$ and $y$ in $q$. Then $\Pi'$ is obtained from $\Pi$ by removing $F$ from $q$, and  adding to $q$ the atoms $N_1(x,z_1),N_2(z_1,z_2),N_3(z_2,y)$ where $z_1,z_2$ are fresh variables.
    Note that $\Pi'$ is still linear and acyclic and as mentioned in the main part of the paper can be easily rewritten into an equivalent set of strongly linear rules.
    
    For an $\schemaS$-database $D$, we construct the $\schemaS'$-database $D'$ as follows. 
    For every fact $S(d,e) \in D$ with multiplicity $m := D(S(d,e))$, we introduce three fresh constants $a_{S(d,e)}$, $b_{S(d,e)}, c_{S(d,e)}$ and set:
    \begin{align*}
        D'(S(d,e)) &:= m\\
        D'(N_1(d, a_{S(d,e)})) &:= m \qquad 
        D'(N_1(e, b_{S(d,e)})) := m\\
        D'(N_3(a_{S(d,e)}, d)) &:= m \qquad
        D'(N_3(b_{S(d,e)}, e)) := m \\
        \text{if $F[x/d, y/e] \subseteq D$, } & \text{then } D'(N_2(a_{S(d,e)}, b_{S(d,e)})) := m \\
        &\text{otherwise } D'(N_2(a_{S(d,e)}, c_{S(d,e)})) := m\\
        \text{if $F[x/e, y/d] \subseteq D$, } & \text{then } D'(N_2(b_{S(d,e)},a_{S(d,e)})) := m\\
         &\text{otherwise } D'(N_2(b_{S(d,e)}, c_{S(d,e)})) := m
    \end{align*}
    where $F[x/d, y/e]$ is the set of facts obtained from $F$ by replacing every occurrence of $x$ in a fact with $d$ and every occurrence of $y$ with $e$.
    It follows that $\sizeof{D'} = 7 \sizeof{D}$. 
    
    Intuitively, $N_1$ and $N_3$ edges are there to isolate the $N_2$ edges (which do the actual work) for each fact and that isolation is necessary for simulating completion queries. 
    An $N_2$ edge from $a_{S(d,e)}$ to $b_{S(d,e)}$ represents that the parallel edges in $F$ exist in the direction $d$ to $e$ (and analogously for the inverse direction). 
    If this is not the case we instead add a dummy edge to $c_{S(d,e)}$ which is needed to simulate size queries of the form $N_2(*,*)$.

    Now we prove the three conditions of Definition~\ref{def:proptestreduction}.

    \emph{(1) If $D \not\models \Pi$, then $D' \not\models \Pi'$.} Assume by contrapositive, that $D' \models \Pi'$. Let $\Gamma' = (V', E', \ell', \rho', h')$ be a derivation from $\Pi'$ in $D'$. 
    Recall that to construct $\Pi'$ we took a rule $\hat \rho = \alpha \leftarrow q \in \Pi$ and changed it to $\hat{\rho}' = \alpha \leftarrow q'$ with $q' = q \setminus F \cup N_1(x, z_1), N_2(z_1,z_2), N_3(z_2,y)$. 
    We construct a derivation $\Gamma = (V, E, \ell, \rho, h)$ from $\Pi$ in $D$. 
    It is very similar to $\Gamma'$, we only replace $\hat \rho'$ vertices with $\hat \rho$ vertices and replace the $N_1, N_2, N_3$ children (of those nodes) with children for the facts in $F$. 
    For any node $v \in V$ with rule $\rho(v) = P(x) \leftarrow q$, the homomorphism $h(v)$ is simply $h(v) := h'(v)|_{\mn{var}(q)}$.
    For all rule bodies $q$ in $\Pi$ and $S \in q \setminus F$ it is immediate that $S(h'(z), h'(z')) \in D'$ implies $S(h(z), h(z')) \in D$. 
    
    We turn to the atoms in $F$ recalling that they are binary with variables $x$ and $y$.  
    We let $d = h'(x)$ and $e = h'(y)$.
     By definition of relations $N_1$ and $N_3$ in $D'$, it is immediate that either $h'(z_1) = a_{S(d, e)}$ and $h'(z_2) = b_{S(d, e)}$, or $h'(z_1) = b_{S(e, d)}$ and $h'(z_2) = a_{S(e, d)}$, for some relation $S$.
        We treat the first case, the second being symmetrical.
	By definition of $h'$, we have $N_2(a_{S(d, e)}, b_{S(d, e)}) \in D'$, thus, by definition of $N_2$, we know that $F[x/d, y/e] \subseteq D$ which is the desired conclusion.

    \emph{(2) If $D$ is $\epsilon$-far from $D \not\models \Pi$, then $D'$ is $\epsilon'$-far from $D' \not\models \Pi'$.}
    By contrapositive, assume $D'$ is not $\epsilon'$-far from $D' \not\models \Pi'$:
		there exists a sub-database $G' \subseteq D'$ such that $|G'| \leq \epsilon' \sizeof{D'}$ (and by size of $D$ and $\varepsilon$ then also $|G'| \leq \epsilon \cdot \sizeof{D}$), and $D' \setminus G' \not\models \Pi'$.
    Choose such a $G'$ that is minimal w.r.t.\ pointwise multiplicity.
    We need to prove that $D$ is not $\epsilon$-far from $D \not\models \Pi$.
    To this end, we construct a sub-database $G$ of $D$ such that $|G| \leq \epsilon \sizeof{D}$ and $D \setminus G \not\models \Pi$. We define for every $S(d,e) \in D$:
    \begin{align*}
        m_1 &:= G'(S(d,e))\\
        m_2 &:= \max (G'(N_1(d, a_{S(d, e)})), G'(N_2(a_{S(d, e)}, b_{S(d, e)})), G'(N_3(b_{S(d, e)}, e)))\\
        m_3 &:= \max (G'(N_1(e, b_{S(d, e)})), G'(N_2(b_{S(d, e)}, a_{S(d, e)})), G'(N_3(a_{S(d, e)}, d)))\\
        G(S(d,e)) &:= \max(m_1,m_2,m_3)
    \end{align*}

    It is clear that $|G| \leq |G'|$ so it remains to argue that $D \setminus G \not \models \Pi$.
    Since $G'$ is minimal, every fact in $G'$ is helping by making some homomorphism from some rule body of $\Pi'$ into $D'$ impossible. 
    In particular, it is then clear that facts in $G'$ all have the same multiplicity as in $D'$.
    For relation symbols $S \in \schemaS$, this obviously transfers to $\Pi$ and $D$ (by definition of $m_1$). The only rule left in $\Pi$ which could behave differently is the rule $\hat \rho$. 
    Assume by contradiction that $D \setminus G \models \Pi$ via some derivation $\Gamma = (V,E,\ell,\rho,h)$ from $\Pi$ in $D$. We show that then $D' \setminus G' \models \Pi'$, contradicting our initial assumption.
    By our earlier reasoning there must be a $\hat \rho$ vertex $v \in V$. Let $h(v)$ then map $x$ to $d$ and $y$ to $e$.
    Since no element of $F[x/d,y/e]$ can be in $G$ for $h$ to successfully map the body of $\hat \rho$, $m_2$ and $m_3$ for all those facts have to be $0$. It is then straightforward to see that we can map the body of $\hat \rho'$ to $D' \setminus G'$ accordingly. In particular, assume we have an $R(d,e) \in F[x/d,y/e]$, then $N_2(a_{R(d,e)}, b_{R(d,e)})$ is in $D'$ because $F[x/d,y/e] \subseteq D$ and $m_2 = 0$ implies that the forward path $N_1(d,a_{R(d,e)}), N_2(a_{R(d,e)}, b_{R(d,e)}), N_3(b_{R(d,e)},e)$ survives in $D' \setminus G'$ (analogously for $m_3$ and the backwards path).
    
    \emph{(3) The answer to any completion or size query to $D'$ can be answered by constantly many completion or size queries to $D$.}
    Completion (resp.\ size queries) on $D'$ regarding predicates from the schema $\schemaS$ are answered via the same query to $D$.
	We now treat the case of completion and size queries involving the fresh predicates $N_1$, $N_2$ and $N_3$. As $|\schemaS|$ and $F$ are fixed, each case below uses constantly many queries to $D$.
        \begin{itemize}
            \item Completion query $N_1(c, *)$ returns $\bot$ if $c \notin \adom(D)$, otherwise we use size queries $S(c, *)$ and $S(*, c)$ to compute the number of neighbors for $c$ for each relation $S \in \schemaS$. We then select one such relation $S_0$ at random according to the distribution given by their relative contributions to the overall number of neighbors of $c$ in $D$. We then use a completion query $S_0(c, *)$ (resp. $S_0(*, c)$) to obtain an element $e$ and we return $a_{S_0(c, e)}$ (resp. $b_{S_0(e, c)}$), which is as desired.
            \item Completion query $N_1(*, c)$ returns $d$ if $c = a_{S(d, e)}$, $e$ if $c = b_{S(d, e)}$, and $\bot$ otherwise.
            \item Completion queries $N_3(*,d)$ and $N_3(c,*)$ work symmetrically to those for $N_1$.
            \item Completion query $N_2(c, *)$. If $c$ is neither $a_{S(d,e)}$ nor $b_{S(d,e)}$ for some $S(d,e) \in D$, then return $\bot$. 
            Otherwise, if $c = a_{S(d, e)}$ then return $b_{S(d, e)}$ if $F[x/d, y/e] \subseteq D$ and return $c_{S(d,e)}$ if not.
            Lastly, if $c = b_{S(d, e)}$ then return $a_{S(d, e)}$ if $F[x/e, y/d] \subseteq D$ and return $c_{S(d,e)}$ if not.
            To check whether $F[x/d, y/e] \subseteq D$ simply ask size queries for all facts in $F[x/d, y/e]$ to $D$.
            \item Completion query $N_2(*,c)$.
            If $c$ is neither $a_{S(d,e)}$, nor $b_{S(d,e)}$, nor $c_{S(d,e)}$ for some $S(d,e) \in D$, then return $\bot$. 
            Otherwise if $c = b_{S(d,e)}$, then return $a_{S(d,e)}$, if $F[x/d,y/e] \subseteq D$ and $\bot$ if not. 
            Otherwise if $c = a_{S(d,e)}$, then return $b_{S(d,e)}$, if $F[x/e,y/d] \subseteq D$ and $\bot$ if not. 
            And lastly if $c = c_{S(d,e)}$, then let $P = \{\}$ be the set representing possible predecessors. We add $a_{S(d,e)}$ to $P$, if $F[x/d,y/e] \not \subseteq D$ and add $b_{S(d,e)}$ to $P$, if $F[x/e,y/d] \not \subseteq D$. Then we choose an element uniformly from $P$ and return it or return $\bot$ if $P$ is empty.
            \item Completion query $N_1(*, *)$. We sample a fact $S(d,e)$ of $D$ with probability
            proportional to its multiplicity $m := D(S(d,e))$ (via size queries $S(*,*)$ for $S \in \schemaS$
            followed by a completion query), then return $N_1(d, a_{S(d,e)})$ or $N_1(e, b_{S(d,e)})$, each
            with probability $\frac{1}{2}$. As $S(d,e)$ is sampled with probability $m/\sizeof{D}$ and each
            fact contributes exactly these two $N_1$-edges of multiplicity $m$, every $N_1$-fact is returned
            with probability proportional to its multiplicity, as required.
            
            \item Completion query $N_3(*, *)$ works symmetrically, returning $N_3(a_{S(d,e)}, d)$ or
            $N_3(b_{S(d,e)}, e)$, each with probability $\frac{1}{2}$, for a sampled fact $S(d,e)$.
            
            \item Completion query $N_2(*, *)$. We sample a fact $S(d,e)$ as above and then take the set $E$ of two $N_2$ edges constructed for $S(d,e)$ in $D'$. So if $F[x/d, y/e] \subseteq D$, then $N_2(a_{S(d,e)}, b_{S(d,e)}) \in E$ and if not then $N_2(a_{S(d,e)}, c_{S(d,e)}) \in E$. Analogously for $F[x/e, y/d] \subseteq D$.
            We then pick one of those two edges in $E$ by choosing uniformly with probability $\frac{1}{2}$.
            Each fact contributes exactly these two $N_2$-edges of
            multiplicity $m$, so every $N_2$-fact is returned with probability proportional to its multiplicity,
            as required.
            \item Size query $N_1(c,*)$ returns $\sum_{S \in \schemaS} D(S(c,*)) + D(S(*,c))$.
            \item Size query $N_1(*,c)$. If $c = a_{S(d,e)}$ or $c = b_{S(d,e)}$ for some $S \in \schemaS$, then we return $D(S(d,e))$. Otherwise return $0$.
            \item Size queries $N_3(*,c)$ and $N_3(c,*)$ work symmetrically to those for $N_1$.
            \item Size query $N_2(c,*)$. If $c = a_{S(d,e)}$ or $c = b_{S(d,e)}$ for some $S \in \schemaS$, then return $D(S(d,e))$. Otherwise return $0$.
            \item Size query $N_2(*,c)$.
            If $c = b_{S(d,e)}$, return $D(S(d,e))$ if $F[x/d, y/e] \subseteq D$ and $0$, if not.
            Otherwise, if $c = a_{S(d,e)}$, return $D(S(d,e))$ if $F[x/e, y/d] \subseteq D$ and $0$, if not. 
            Otherwise, if $c = c_{S(d,e)}$, check if $F[x/d, y/e] \subseteq D$ and if $F[x/e, y/d] \subseteq D$. If both are true, return $0$; if one is true return $D(S(d,e))$ and if both are false return $2\cdot D(S(d,e))$.
            Lastly if $c$ is neither of those $3$ return $0$.
            \item Size queries $N_1(*,*)$, $N_2(*,*)$, or $N_3(*,*)$ return $2 \cdot |D|$.
        \end{itemize}
        Note that $D(S(d,e))$ can simply be checked by the value of a size query $S(d,e)$ to $D$ and 
        $F[x/d,y/e] \subseteq D$ can be checked by doing a size query for every fact in $F[x/d,y/e]$.
\end{proof}

\fromMDLogtoRPQs*
\begin{proof}
  What remains is to define the transition relation $\Delta$ and prove the three Conditions of Definition~\ref{def:proptestreduction}.
    Let us first introduce some notation for defining the transitions in $\Delta$.
    For each rule $\rho \in \Pi$, we fix a \emph{source variable} $x^\mn{src}_\rho$ and \emph{target variable}
    $x^\mn{tgt}_\rho$:
    \begin{itemize}
        \item if $\rho$ has an IDB-atom $P(x)$ in its body, we set $x^\mn{src}_\rho = x$ and $x^\mn{tgt}_\rho = y$;
        \item if $\rho$ has a non-goal head $P(y)$, we set $x^\mn{tgt}_\rho = y$ and $x^\mn{src}_\rho = x$;
        \item if $\rho$ has neither of those, we assign $x$ and $y$ arbitrarily to $x^\mn{src}_\rho$ and $x^\mn{tgt}_\rho$.
    \end{itemize}
Similarly for each rule $\rho$ we fix a \emph{source state} $s^\mn{src}_\rho$ and \emph{target state} $s^\mn{tgt}_\rho$:
    \begin{itemize}
        \item if $\rho$ has an IDB atom $P(x^\mn{src}_\rho)$ in its body, then $s^\mn{src}_\rho = s_P$; otherwise $s^\mn{src}_\rho = s_{\mn{init}}$;
        \item if $\rho$ has a non-goal head atom ${P_0}$, then $s^\mn{tgt}_\rho = s_{P_0}$;
        otherwise $s^\mn{tgt}_\rho = s_{\mn{fin}}$.
    \end{itemize}
    Now we are ready to define $\Delta$:
    \begin{itemize}
      \item add $(s_0, U, s_{\mn{init}})$ and $(s_{\mn{fin}}, U^-, s_{\mn{acc}})$;
      \item for every $\rho \in \Pi$ that contains a binary atom $R(x,y)$, if $x = x^\mn{src}_\rho$, we add $(s^\mn{src}_\rho, R, s^\mn{tgt}_\rho)$, otherwise $x = x^\mn{tgt}_\rho$ and we add $(s^\mn{src}_\rho, R^-, s^\mn{tgt}_\rho)$;
      \item for every $\rho \in \Pi$ that does not contain a binary atom, we add the two transitions $(s^\mn{src}_\rho, U^-, s_\rho)$ and $(s_\rho, U, s^\mn{tgt}_\rho)$.
    \end{itemize}
    
    For an intuition see the paragraph in the main part of the paper.
    Now we prove the three conditions of Definition~\ref{def:proptestreduction}.

   \emph{(1) If $D \not\models \Pi$, then $uu \not \in q(D')$.}
We argue the contrapositive. Assume $uu \in q(D')$, witnessed by an accepting run
\[
  s_0 \xrightarrow{R_1} s_1 \xrightarrow{R_2} s_2 \xrightarrow{R_3} \cdots \xrightarrow{R_n} s_{\mn{acc}}
\]
of $q$ on a path $p = c_0 R_1 c_1 \ldots R_n c_n$ in $D'$ with $c_0 = c_n = u$, where each
$R_i \in \Sigma_{\schemaS'}$ labels a transition consistent with $D'$.

The only transition leaving $s_0$ is $(s_0, U, s_{\mn{init}})$, so $s_1 = s_{\mn{init}}$, $R_1 = U$,
and $c_1 \in \adom(D)$. Similarly, the only transition entering $s_{\mn{acc}}$ is
$(s_{\mn{fin}}, U^-, s_{\mn{acc}})$, so $s_{n-1} = s_{\mn{fin}}$ and $R_n = U^-$.
The remaining transitions $s_1 \to \cdots \to s_{n-1}$ each correspond to a rule of $\Pi$ by
construction of $\Delta$: a connected rule contributes one transition between states $s^\mn{src}_\rho$
and $s^\mn{tgt}_\rho$, and a disconnected rule contributes the pair through its rule-specific
intermediate state $s_\rho$, which can only be entered and exited by these two transitions.

Reading off these rules in order yields a sequence $\rho_1, \ldots, \rho_k \in \Pi$ such that for
$i < k$, $s^\mn{tgt}_{\rho_i} = s^\mn{src}_{\rho_{i+1}}$, that $s^\mn{src}_{\rho_1} = s_{\mn{init}}$,
and that $s^\mn{tgt}_{\rho_k} = s_{\mn{fin}}$. By definition of $s^\mn{src}$ and $s^\mn{tgt}$, this
means $\rho_1$ has no IDB atom in the body, $\rho_k$ is a goal rule, and each
intermediate $\rho_{i+1}$ contains an IDB atom $P_i(x^\mn{src}_{\rho_{i+1}})$ in its body, where $P_i$ is the head relation of $\rho_i$. 
The elements visited along the path give a homomorphism for each rule body: the connected
rule's edge is read at consecutive path elements with the orientation prescribed by $\Delta$, and the
disconnected rule's two-step detour through $u$ assigns $x^\mn{src}_\rho$ and $x^\mn{tgt}_\rho$ to the two elements
surrounding $u$, with the $\mn{true}$ atom satisfied vacuously. Linking these bottom-up yields a derivation of $\mn{goal}$ on the  sub-database of $D'$ consisting of only $\schemaS$-facts, which by construction equals $D$, so $D \models \Pi$.

\emph{(2) If $D$ is $\epsilon$-far from $D \not\models \Pi$, then $D'$ is $\epsilon'$-far
from $uu \not \in q(D')$.}
By contrapositive, assume $D'$ is not $\epsilon'$-far from $D' \not\models q(u,u)$: there exists
$E' \subseteq D'$ with $|E'| \leq \epsilon'|D'| = \epsilon|D|$ and
$(u,u) \notin q(D' \setminus E')$. We choose $E'$ minimal with respect to pointwise multiplicity which in particular implies that facts in $E'$ have the same multiplicity as in $D'$.
Let $G := \{d \in \adom(D) \mid U(u,d) \in E'\}$. We construct $E \subseteq D$ from $E'$ by
removing the $U$-facts of $E'$ and adding, for every $d \in G$, all $\schemaS$-facts incident to
$d$. Formally,
\[
  E(S(d,e)) :=
  \begin{cases}
    D(S(d,e)) & \text{if } d \in G \text{ or } e \in G, \\
    E'(S(d,e)) & \text{otherwise,}
  \end{cases}
\]
for all $S \in \schemaS$. We first verify $|E| \leq |E'|$. Each $d \in G$ contributes
multiplicity $\deg_D(d)$ to $|E'|$ via the $U$-fact $U(u,d)$. The total multiplicity of
$\schemaS$-facts incident to $d$ in $D$ is also $\deg_D(d)$ (by definition of $\deg_D$), so for
every $U(u,d)$ removed we add at most $D'(U(u,d))$ new facts to $E$. The remaining
$\schemaS$-facts of $E'$ are inherited unchanged.

It remains to show $D \setminus E \not\models \Pi$. Assume by contradiction that
$D \setminus E \models \Pi$ via some derivation $\Gamma$ of $\Pi$ in $D$. We will show that then
$(D' \setminus E') \models q(u,u)$. By strong linearity there is a simple path $v_k, \dots, v_0$
in $\Gamma$ that visits every inner node and with $v_0$ the root. We construct an accepting run $\sigma$ of $q(u,u)$ along this path, starting with the
transition $(s_0, U, s_\mn{init})$ from $u$ to $d_k := h(v_k)(x^\mn{src}_{\rho(v_k)})$.

Consider $v_i$ with $i \in \{k, \ldots, 0\}$ in turn, and let the head of $\rho(v_i)$ be $P(y)$
(with $P = \mn{goal}$ if $i = 0$). If the body of $\rho(v_i)$ is connected, it contains a binary
atom $R(x,y)$ ($R(y,x)$ is symmetrical) mapped by $h(v_i)$ to some fact
$R(d_i, e_i) \in (D \setminus E) \subseteq (D' \setminus E')$, and we extend $\sigma$ by
$(s^\mn{src}_{\rho(v_i)}, R, s_P)$ from $d_i$ to $e_i$.

Otherwise the body of $\rho(v_i)$ is disconnected and has the form $P'(x) \land \mn{true}(y)$
(or symmetrical), with $h(v_i)$ mapping $x$ and $y$ to elements
$d_i, e_i \in \adom(D \setminus E)$. Since $d_i \in \adom(D \setminus E)$, it appears in some $\schemaS$-fact of $D \setminus E$.
By construction of $E$, this means $d_i \notin G$, so $U(u, d_i) \in D' \setminus E'$, and
analogously for $e_i$. We extend $\sigma$ by
$(s^\mn{src}_{\rho(v_i)}, U^-, s_{\rho(v_i)})$ from $d_i$ to $u$ and
$(s_{\rho(v_i)}, U, s_P)$ from $u$ to $e_i$.

After processing $v_0$ we have $s_P = s_\mn{fin}$, and by the same active-domain argument
$U(u, e_0) \in D' \setminus E'$, so we close $\sigma$ with $(s_\mn{fin}, U^-, s_\mn{acc})$ from
$e_0$ back to $u$. The resulting $\sigma$ is an accepting run of $q(u,u)$ on $D' \setminus E'$, as required.

\emph{(3) The answer to any completion or size query to $D'$ can be answered by constantly many completion or size queries to $D$.}
Completion and size queries on $D'$ regarding $\schemaS$-predicates are answered via the same
query to $D$. We treat queries involving the fresh relation $U$. Throughout, $|\schemaS|$ is
fixed, so each case uses constantly many queries to $D$.

\begin{itemize}
  \item Completion query $U(c, *)$. If $c \neq u$, return $\bot$, since $D'$ has no $U$-fact with
  first argument in $\adom(D)$. Otherwise sample a fact $S(a_1,a_2)$ of $D$ uniformly at random by
  multiplicity (first sample $S \in \schemaS$ with probability $\sizeof{S}/|D|$ via size queries
  $S(*,*)$, then ask a completion query $S(*,*)$), pick a position $i \in \{1, 2\}$ uniformly, and
  return $U(u, a_i)$. The probability of returning $U(u, c')$ for any
  $c' \in \adom(D)$ is
  $\sum_S \frac{\sizeof{S}}{|D|} \cdot \frac{D(S(c',*)) + D(S(*,c'))}{\sizeof{S}} \cdot \frac{1}{2}
   = \frac{\mn{deg}_D(c')}{2|D|}$, matching $D'(U(u,c'))$ divided by the total $U$-mass $2|D|$.

  \item Completion query $U(*, c)$. If $c \in \adom(D)$, return $u$. Otherwise, return $\bot$.

  \item Completion query $U(*, *)$. Same procedure as $U(u, *)$.

  \item Size query $U(c, *)$. If $c = u$, return $2|D|$. Otherwise return $0$.

  \item Size query $U(*, c)$. If $c \in \adom(D)$, return $\mn{deg}_D(c)$. Otherwise return $0$.

  \item Size query $U(*, *)$ returns $2|D|$.
\end{itemize}
Where $c \in \adom(D)$ can be checked by doing size queries $S(c,*)$ and $S(*,c)$ for each $S \in \schemaS$ and $\mn{deg}_D(c)$ in the same way.
\end{proof}

\section{Proofs for Section~\ref{subsect:tinybranching}}

\lemfarnessrealizations*

\begin{proof}
\newcommand{\gda}{\ensuremath{G_{D \times q_\Pi}}\xspace}
\begin{figure}
    \centering
    \begin{tikzpicture}[
        >=stealth,           
        thick,               
        vertex/.style = {circle, draw, minimum size=6mm, inner sep=0pt},
        accepting/.style = {circle, draw, double, minimum size=6mm, inner sep=0pt},
        divertex/.style = {circle, draw, minimum size=8mm, inner sep=0pt}
    ]

    \node (D) at (0.7, 3.25) {$D$ (all edges multiplicity 1):};
    \node (A) at (-1, -0.75) {$q_\Pi$:};
    \node (G) at (5, 3) {$\gda$:};

    \node[vertex] (a) at (0,2.5) {$a$};
    \node[vertex] (b) at (0,0.5) {$b$};
    \node[vertex] (c) at (2.5,0.5) {$c$};
    \node[vertex] (d) at (2.5,2.5) {$d$};

    \node[vertex] (se) at (-0.5,-1.5) {$s_\epsilon$};
    \node[vertex] (sr) at (1,-1.5) {$s_R$};
    \node[vertex] (ss) at (2.5,-1.5) {$s_S$};
    \node[accepting] (st) at (4,-1.5) {$s_T$};

    \node[divertex] (ae) at (6,2.25) {$a,s_\epsilon$};
    \node[divertex] (ar) at (6,0.75) {$a,s_R$};
    \node[divertex] (be) at (8,2.25) {$b,s_\epsilon$};
    \node[divertex] (br) at (8,0.75) {$b,s_R$};
    \node[divertex] (bs) at (8,-0.75) {$b,s_S$};
    \node[divertex] (ct) at (10,1.5) {$c,s_T$};
    \node[divertex] (cs) at (10,-0.75) {$c,s_S$};
    \node[divertex] (dt) at (12,1.5) {$d,s_T$};

    \draw[->] (a) -- node[left] {$S$} (b);        
    \draw[->] (b) -- node[below] {$S$} (c);         
    \draw[->] (c) -- node[left] {$T$} (d);         
    \draw[->] (a) edge [loop left] node {$R$} (a);  
    \draw[->] (b) edge [loop left] node {$R$} (b);  
    \draw[->] (c) edge [loop right] node {$T$} (c);  

    \draw[->] (se) -- node[above] {$R$} (sr);  
    \draw[->] (sr) -- node[above] {$S$} (ss);  
    \draw[->] (ss) edge [loop above] node {$S$} (ss);  
    \draw[->] (ss) -- node[above] {$T$} (st);  
    
    \draw[->] (ae) edge (ar);  
    \draw[->] (ar) edge (bs);  
    \draw[->] (be) edge (br);  
    \draw[->] (br) edge (cs);  
    \draw[->] (bs) edge (cs);  
    \draw[->] (cs) edge (ct);  
    \draw[->] (cs) edge (dt);  

    \node (init) at (-1.5,-1.5) {};
    \draw[->] (init) -- (se);
  
  \end{tikzpicture}
  \caption{Example database, NFA, and product digraph. Disconnected vertices in $\gda$ omitted.}
  \label{fig:DtoG}
\end{figure}
This proof is split into multiple plarts. First we note that it suffices to prove the lemma for 2RPQs and on Databases with multiplicity one. Both are pretty straightforward reductions. 
Then comes the main part of the proof, for an intuition see the paragraph in the main part of the paper.

For easier notation we use the relation symbol $R$ to represent $R_1$ and $T$ for $T_1$ so $\Pi$ is an SLP of the following form:
$$
\Pi
= \{\mn{goal}() \leftarrow P(x) \wedge  R(w,x),\quad 
P(x) \leftarrow S(x,y) \wedge P(y),\quad
P(x) \leftarrow S(x,y) \wedge T(y,z)\}.
$$

\textbf{From SLP to 2RPQs.} 
Here we show that we can consider the following 2RPQ instead of $\Pi$ to prove the lemma.
Consider the 2RPQ $q_\Pi = (Q, \Sigma_\schemaS, s_\epsilon, \{s_T\}, \Delta)$  with
\begin{align*}
    Q &= \{s_\epsilon, s_{R}, s_S, s_{T}\} \text{ and}\\
    \Delta &= \{(s_\epsilon, R, s_{R}), (s_{R}, S, s_S), (s_S, S, s_S), (s_S, T, s_{T})\}.
\end{align*}
A graphical presentation is in Figure~\ref{fig:DtoG}.
In this proof we want to use 2RPQs in a \emph{boolean} way. We say $D$ \emph{satisfies} $q$ (written $D \models q$), if there are $a,b \in \adom(D)$ such that $(a,b) \in q(D)$.
To continue with $q_\Pi$ instead of $\Pi$ we need to show two things:
\begin{enumerate}
    \item if $D$ is $\epsilon$-far from $D \not \models \Pi$, then $D$ is $\epsilon$-far from $D \models q_\Pi$;
    \item if there are more than $\frac{\epsilon}{6}\cdot |D|$ fact-disjoint realizations of $q_\Pi$ in $D$ then there are also at least that many of $\Pi$ in $D$.
\end{enumerate}

Similar to realizations of MDLog programs, we define a \emph{realization} of a boolean 2RPQ
$q$ in a database $E$ as a minimal database
$E' \subseteq E$ such that $E' \models q$.

We show that every realization of $q_\Pi$ in $D$ is also a realization of $\Pi$ in $D$ and vice versa:
\begin{itemize}
    \item every realization $D_r$ of $q_\Pi$ in $D$ is a path of some length $n$ and satisfies $D_r \models \Pi$ by applying, in order, the third rule of $\Pi$ once, the second rule $n-2$ times, and the goal rule once;
    \item conversely, every realization $D_r$ of $\Pi$ in $D$ gives rise to a derivation that, by the shape of $\Pi$, traces a $R \cdot S^* \cdot T$ path in $D_r$ and is therefore a realization of $q_\Pi$ in $D$.
\end{itemize}
it is clear that minimality is preserved in both directions.
Now for proving the two points from above. Point~(2) is an immediate consequence. For Point~(1), assume by contradiction that $D$ is only $\epsilon'$-far from $D \models q$ for some $\epsilon' < \epsilon$.
Let $G \subseteq D$ be a database witnessing this, that is, $|G| = \epsilon' * |D|$ and $D \setminus G \not \models q$.
But since $\epsilon > \epsilon'$, we still have $D \setminus G \models \Pi$, and we can just take a realization witnessing this and it is then also a realization of $q$ in $D \setminus G$ contradiction that such a $\epsilon'$ exists and thus proving Point~(1).
It thus suffices to prove that if $D$ is $\varepsilon$-far from $q_\Pi$ being false on $D$, then there are more than $\frac{\varepsilon}{6} \cdot |D|$ fact-disjoint realizations of $q_\Pi$ in $D$.

\medskip
\textbf{From bag to set.}
To make the next (and main part) of the proof of the lemma easier, we show here that it suffices to consider Databases of multiplicity one. To do this we have to adjust the database and query slightly, basically introducing new relation symbols for each multiplicity index.

Let $\Sbf'$ be the binary schema obtained from $\Sbf$ by replacing each relation symbol $P \in \Sbf$ with $|D|$ many fresh relation symbols $P^{1}, \dots, P^{|D|}$. 
We define the $\Sbf'$-database $D'$ on the same active domain $\adom(D') = \adom(D)$ as follows: for every fact $P(a,b) \in D$ with $D(P(a,b)) = m$, the database $D'$ contains the $m$ facts $P^{1}(a,b), \dots, P^{m}(a,b)$. 
Note that $D'$ has now only multiplicity one facts (it is a set-database) and that $|D'| = |D|$. 

We further define the 2RPQ $q'_\Pi = (Q, \Sigma_{\Sbf'}, s_\epsilon, \{s_T\}, \Delta')$ by replacing each transition $(s, P, s') \in \Delta$ with $P \in \Sbf$ by the transitions $(s, P^{i}, s')$ for $1 \leq i \leq |D|$, and likewise for transitions of the form $(s, P^-, s')$. 
In what follows, when we refer to an \emph{$R$-fact} (resp.\ \emph{$S$-fact}, \emph{$T$-fact}) of $D'$, we mean any fact in $D'$ that uses a relation symbol of the form $R^{i}$ (resp.\ $S^{i}$, $T^{i}$).

Similar to the SLP to 2RPQ part, we want prove the following two points to show that we can continue the proof of the lemma while only considering set-databases:
\begin{enumerate}
    \item if $D$ is $\epsilon$-far from $D \not \models q_\Pi$, then $D'$ is $\epsilon$-far from $D' \models q_\Pi'$;
    \item if there are more than $\frac{\epsilon}{6}\cdot |D|$ fact-disjoint realizations of $q_\Pi'$ in $D'$ then there are also at least that many of $q_\Pi$ in $D$.
\end{enumerate}

There is an obvious way to translate a set of $m$ fact-disjoint realizations of $q_\Pi$ in $D$ to $m$ fact-disjoint ones of $q_\Pi'$ in $D'$ in vice versa:
In the one direction, we add multiplicity indexes to the relation symbols (and fact-disjointness implies that these relation symbols exist in $D'$) and in the other direction we drop the multiplicity indexes (and the definition of $D'$ implies that if there were e.g. $k$ facts with $P^1(a,b), \dots, P^k(a,b)$ in the realizations, then $D(P(a,b)) \geq k$).

For Point~(2), we use the translation from realizations of $q_\Pi'$ in $D'$ to realizations of $q_\Pi$ in $D$ outlined above.

For Point~(1), assume by contradiction that $D'$ is only $\epsilon'$-far from $D' \models q_\Pi'$ for some $\epsilon' < \epsilon$.
Let $G' \subseteq D'$ be a database witnessing this, that is, $|G'| = \epsilon' \cdot |D'|$ and $D' \setminus G' \not \models q_\Pi'$.
We can translate the facts in $G'$ as outlined above to obtain a database $G$ over the schema $\schemaS$. Since $D$ is $\epsilon$-far and $\epsilon > \epsilon'$, there is thus a realization of $q_\Pi$ in $D \setminus G$. 
It is straightforward to see that we can translate this back to a realization of $q_\Pi'$ in $D' \setminus G'$.

\medskip
\textbf{Main part.}
\newcommand{\gdap}{\ensuremath{G_{D' \times q_\Pi'}}\xspace}
Now we continue with the main part of proving the lemma which now only has to be done for the set-database $D'$ (so we will use set-semantics, which basically just implies that all facts in the database have multiplicity exactly one) and the 2RPQ $q_\Pi'$.
The general idea is outlined in the main part of the paper and can be summarized by these four points which can be read as a chain of implications, with Point~(1) being the initial assumption of the lemma (after our reductions to 2RPQs and multiplicity one) and Point~(4) the desired result:
\begin{enumerate}
  \item $D'$ is $\varepsilon$-far from $q'_\Pi$ being false $D'$;
  \item the minimum edge cut separating $V_2$ from $V_1$ in $\gdap$ has cardinality exceeding $\varepsilon \cdot |D'|$;
  \item there are more than $\varepsilon \cdot |D'|$ edge-disjoint $V_1$-$V_2$ paths in $\gdap$;
  \item there are more than $\frac{\varepsilon}{6} \cdot |D'|$ fact-disjoint realizations of $q'_\Pi$ in $D'$.
\end{enumerate}

A few notes. 
The product graph $\gdap = (V, E, L)$ is defined as in Section~\ref{sect:RPQs}. Since all multiplicities in $D'$ are one, each pair $(\alpha, \delta) \in D' \times \Delta'$ gives rise to exactly one edge, so we may drop the multiplicity index and view edges as pairs $\langle \alpha, \delta \rangle$. We further set $V_1 = \{\langle a, s_\epsilon \rangle \mid a \in \adom(D')\}$ and $V_2 = \{\langle a, s_T \rangle \mid a \in \adom(D')\}$, and define edge cuts separating $V_2$ from $V_1$ as before.

\medskip
  ``$1 \Rightarrow 2$'' 
This implication is closely related to the proof of Lemma~\ref{lem:efartranslates}.
  Assume for a contradiction that the minimum edge cut $E_\mn{cut} \subseteq E$ that separates $V_1$ and $V_2$ in \gda contains no more than $\epsilon \cdot |D|$ edges.
  Recall that every edge in $E$ derives from a unique fact in $D$. 
  Let $F_{\mn{cut}} \subseteq D'$ be the set of facts in $D'$ that $E_\mn{cut}$ is derived from. 
  Since $|F_{\mn{cut}}| \leq \epsilon \cdot |D'|$ and $D'$ is $\epsilon$-far from making $q_\Pi'$ false there is still a realization of $q_\Pi'$ in $D' \setminus F_{\mn{cut}}$. This realization takes the form of a path, and it is straightforward to construct from it a path in $\gdap$ that starts at some $v \in V_1$ and ends at some $v' \in V_2$. Since the original path uses no edge from $F_{\mn{cut}}$, the constructed path uses no edge in $E_{\mn{cut}}$.
  This contradicts $E_{\mn{cut}}$ being a cut and thus we are done.

    \medskip
  ``$2 \Rightarrow 3$'' 
  This implication immediately follows from the famous theorem of Menger \cite{Menger}. We use a slight variation of it,
proved e.g.\ as Theorem 1.1.18 in \cite{GrinbergMath530Lec27}.
\begin{theorem}
  \label{thm:menger}
  Let $G$ be a multi-digraph and $S,T \subseteq V(G)$ disjoint.  If every minimum
  $(S,T)$-edge-cut in $G$ has size at least $k$, then $G$ contains 
	at least $k$ edge-disjoint paths from $S$ to $T$.
\end{theorem}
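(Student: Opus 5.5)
The plan is to reduce the statement to the classical single-source, single-sink max-flow/min-cut theorem with unit capacities, and then read off edge-disjoint paths from an integral flow.

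First, I build an auxiliary multi-digraph $G'$ from $G$. I add a fresh super-source $s^*$ and a fresh super-sink $t^*$. For every $v \in S$ I add $k$ parallel edges from $s^*$ to $v$, and for every $w \in T$ I add $k$ parallel edges from $w$ to $t^*$. Every edge of $G'$, including each copy of a multi-edge, gets capacity~$1$.

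Next, I check that every $s^*$-$t^*$ edge cut $C$ in $G'$ has size at least $k$. Suppose $|C| < k$. Then $C$ cannot contain all $k$ copies of any auxiliary bundle, so every auxiliary bundle keeps at least one surviving edge. Let $C_G = C \cap E(G)$. If $G$ had an $S$-$T$ path avoiding $C_G$, I could extend it by a surviving edge of each bundle at its two ends, giving an $s^*$-$t^*$ path that avoids $C$. Hence $C_G$ is an $(S,T)$-edge-cut of $G$ of size less than $k$, which contradicts the hypothesis. This is the step I expect to need the most care. It is the reason for using $k$ parallel copies instead of infinite capacities, and it uses that $S$ and $T$ are disjoint, so no cut can be trivially empty.

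Then I apply the max-flow/min-cut theorem together with the integrality theorem for integer capacities. This gives an integral $s^*$-$t^*$ flow of value at least $k$. Since capacities are $1$, the flow is a $0/1$ assignment on edges. I decompose it by standard flow decomposition: first remove directed cycles carrying flow, then repeatedly peel off an $s^*$-$t^*$ path of flow-$1$ edges. This yields at least $k$ pairwise edge-disjoint $s^*$-$t^*$ paths in $G'$.

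Finally, I turn each $s^*$-$t^*$ path into an $S$-$T$ path in $G$. I delete its first and last auxiliary edges; if the path revisits $S$ or $T$ internally, I trim it to the segment from its last visit of $S$ to the next visit of $T$. Edge-disjointness is inherited because I only delete edges. This gives at least $k$ edge-disjoint paths from $S$ to $T$ in $G$.
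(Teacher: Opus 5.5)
Your proof is correct, but it takes a different route from the paper, which gives no proof of its own: it cites this variant of Menger's theorem as Theorem~1.1.18 of Grinberg's lecture notes, a statement that already covers multi-digraphs and vertex sets $S,T$.

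You instead derive the set-to-set version from the single-pair max-flow/min-cut and integrality theorems, using a super-source/super-sink gadget. The one non-routine point is that each auxiliary connection must be a bundle of $k$ parallel edges (or have infinite capacity), and you handle it correctly. With a single edge $s^*\to v$ per $v\in S$, cutting all edges at $s^*$ would give an $s^*$-$t^*$ cut of size $|S|$, possibly below $k$, and the reduction would fail. Your route makes the argument self-contained modulo classical flow theory and shows exactly where each hypothesis enters. The citation buys brevity.

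Three small remarks. First, the disjointness of $S$ and $T$ is really used in the last step, not in the cut argument. It guarantees that removing the two auxiliary edges never leaves an edgeless path $s^*\to v\to t^*$ with $v\in S\cap T$. Second, your trimming step is unnecessary, though harmless. Paths in the paper are arbitrary edge sequences that may revisit $S$ and $T$, so the stripped $s^*$-$t^*$ paths are already valid $S$-$T$ paths. Third, you implicitly use that $G$ is finite, as max-flow/min-cut is a statement about finite networks. This is fine here because the paper applies the theorem only to finite product graphs.
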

  
  \medskip

``$3 \Rightarrow 4$''. 
For the proof of this implication we are using the specific structure of $q_\Pi'$ to argue that when converting the edge-disjoint paths in $\gdap$ back to paths in $D'$, then there are still a lot (though less than before) of non-overlapping paths.

  We first note that for each  fact in $D'$ there can be multiple edges in the product graph that derive from it.
  For instance, the two edges pointing to $\langle c, s_S\rangle$ in Figure~\ref{fig:DtoG} both derive from the fact $S(b,c)$ in $D'$. 
  This makes the construction of fact-disjoint realizations in $D'$ non-trivial. In particular, we cannot simply transition
  from paths in $\gdap$ to the facts in 
  $D'$ that they derive from because the 
  resulting realizations may overlap even 
  if the original paths do not.
  For our simple 2RPQ $q_\Pi'$, however, this can  only happen for $S$-facts in $D'$, since $S$ is the only edge that appears more than once in $q_\Pi'$.
  Let us  analyze more precisely how edge-disjoint paths in $\gdap$ might lead to overlapping realizations in $D'$.

  For a path $p = e_1 \cdots e_{n-1}$ in $\gdap$, we use $o(p)$ to denote the \emph{origin} of $p$, that is, the path
  $a_1 P_1 a_2 \cdots P_{n-1} a_n$ in $D'$ with $P_i(a_{i-1}, a_i)$ the fact in $D'$ that edge $e_i$ derived from.
  We say that paths $p_1, p_2$ in $\gdap$ \emph{origin-overlap} if $o(p_1)$ and $o(p_2)$ share at least one fact.
  For example, if $p_1$ and $p_2$ are the two edge-disjoint paths from $V_1$ to $V_2$ in Figure~\ref{fig:DtoG}, then
  $o(p_1)$ and $o(p_2)$ share the fact $S(b,c)$ and thus $p_1$ and $p_2$ origin-overlap.
  Let us analyze how $V_1$-$V_2$-paths can origin-overlap in general.
  We distinguish two types of edges in $\gdap$ that derive from an $S$-fact:
  \emph{initial $S$-edges} connect a
 $\langle *, s_R\rangle$ vertex to a $\langle *, s_S\rangle$ vertex and
 \emph{recurrent $S$-edges} connect
 a
 $\langle *, s_S\rangle$ vertex to a $\langle *, s_S\rangle$ vertex. Observe that by
definition of $q_\Pi'$ and $\gdap$, every
$V_1$-$V_2$-path in $\gdap$ contains exactly
one initial $S$-edge. The following is a consequence of the definition of $q_\Pi'$ and $\gdap$:
 \begin{itemize}
     \item[$(o)$] Each $S$-fact in $D'$ gives rise to exactly two edges in $\gdap$, one initial $S$-edge and  one recurrent $S$-edge.
 \end{itemize}
Next, we prove three claims but let us first give an intuition of how they fit together.
Claim~3 is the combinatorial heart of the argument: in any set $P$ of $m$ edge-disjoint $V_1$-$V_2$ paths, at most half of the paths origin-overlap with three or more other paths. 
The remaining paths overlap with at most two others each, which is
what makes the final greedy selection lose only a factor of~$6$: a factor~$2$ for restricting to the good half, and a factor~$3$ for discarding a chosen path together with its at most two overlapping partners.

The reason such a bound can be expected is an asymmetry between the two kinds of $S$-edges. 
Claim~1 says that the origin of two edge-disjoint paths in $\gdap$ can share a fact only via the initial $S$-edge on one of them and the recurrent $S$-edge on the other. 
In particular, overlaps caused by $R$- and $T$-facts are impossible, and so are overlaps between two initial or two recurrent $S$-edges. 
Since every $V_1$-$V_2$ path contains \emph{exactly one} initial $S$-edge, a path $p$ can be involved in at most one overlap through its own initial $S$-edge: 
at most one other path uses the recurrent $S$-edge deriving from the same fact. 
Every further overlap of $p$ must therefore be of the opposite kind, that is, some other path $p'$ has its unique initial $S$-edge deriving from a fact whose recurrent $S$-edge lies on $p$. Consequently, if $p$ origin-overlaps with at least three paths, at least two of them, say $p_1$ and $p_2$, are of this second
kind, and we may associate the pair $\{p_1,p_2\}$ with $p$.

Claim~2 is exactly what guarantees that different paths are associated with disjoint pairs: distinct paths carry distinct recurrent $S$-edges, these derive from distinct facts, and hence the paths whose initial $S$-edge matches one of them are distinct too. 
So if more than $m/2$ paths had three or more overlaps, the associated pairs would already exhaust more than $m$ distinct paths, which is impossible as $|P| = m$.

  \begin{cclaim}{1} Let $p_1,p_2$ be edge-disjoint $V_1$-$V_2$ paths. 
      If an edge $e$ 
      on $p_1$ and an edge $e'$ 
      on $p_2$
      have the same origin, then 
      $e$ is an initial $S$-edge and
      $e'$ a recurrent $S$-edge, or vice versa. 
\end{cclaim}
\begin{cproof}{1}
By definition fo $q_\Pi'$, the $R$- and $T$-facts occur only once each, so every $R$-fact and $T$-fact give rise to exactly one edge in $\gdap$.
  This implies that if $e$ and $e'$ have the same origin, then it can only be an $S$-fact. Observation~$(o)$ and the
  edge-disjointness of $p_1$ and $p_2$ imply that $e$ is an initial $S$-edge and $e'$ an recurrent $S$-edge or vice versa.
\end{cproof}
\begin{cclaim}{2}
    Let $p_1,p_2,p_1',p_2'$ be  $V_1$-$V_2$ paths, with $p_1,p_2$ edge-disjoint, and let the initial $S$-edge on $p_i'$ derive from the same fact as some edge on $p_i$, for $i \in \{1,2\}$.
    Then $p_1 \neq p_2$ implies $p_1' \neq p_2'$.
\end{cclaim}
\begin{cproof}{2}
Assume that $p_1 \neq p_2$. Let $e_{i,1}$ be the initial $S$-edge on $p_1$
   and $e_{i,2}$  the initial $S$-edge on~$p_2$. Since the initial $S$-edge on $p'_1$ derives from the
  same fact as some edge on~$p_2$, by Observation~(o) the latter edge must be a
  recurrent $S$-edge, and likewise for
  $p'_2$ and $p_2$. Let these recurrent $S$-edges be $e_{r,1}$ and $e_{r,2}$, respectively. Since $p_1 \neq p_2$ and
  $p_1, p_2$ are edge-disjoint, we must 
  have  $e_{r,1} \neq e_{r,2}$. By Observation~(o), the origins of $e_{r,1}$
  and $e_{r,2}$ are thus distinct. Another application of Observation~(o) yields that,
  then, $e_{i,1}$ and $e_{i,2}$ are also distinct. Consequently, $p'_1 \neq p'_2$.
\end{cproof}
We now prove the bound announced above, using Claim~1 to locate the overlaps and
Claim~2 to keep the associated pairs disjoint.
\begin{cclaim}{3}
    Let $P$ be a set of $m \geq 1$ edge-disjoint $V_1$-$V_2$ paths in $\gdap$.
  Then there are at most $\frac{m}{2}$ 
  paths in $P$ that  origin-overlap with at least three other paths in $P$.
\end{cclaim}
\begin{cproof}{3}
 Assume to the contrary of what we have to show that  $P$ contains more than $\frac{m}{2}$ paths that origin-overlap with at least three other paths in $P$. Let  
  $P' \subseteq P$ 
  be a set of $\frac{m}{2}+1$ such paths.
  We show how to choose, for each $p \in P'$, a set
  $\gamma(p) \subseteq P$ of cardinality  two  such that
  \begin{itemize}
      \item[$(*)$] $\gamma(p) \cap \gamma(p') = \emptyset$, for all distinct $p,p' \in P'$.
  \end{itemize}
  Then $\bigcup_{p \in P'} \gamma(p)$ has cardinality at
  least $m+2$, in contradiction to $|P| = m$.

  Let $p \in P'$. By assumption, $p$ origin-overlaps with at least three distinct other paths $p_1, p_2, p_3 \in P$.
  Then for all $i \in \{1,2,3\}$ there are edges $e_i$ on
  $p_i$ and $e'_i$ on $p$ that derive from the same fact.
  Since $p,p_1,p_2,p_3$ are all from $P$ and thus edge-disjoint, Claim~1 yields that $e_i$ is an initial $S$-edge and $e_i'$ a
  recurrent $S$-edge that derive from the same fact,
  or vice versa.
  Observation~(o) then implies that $e_1',e'_2,e'_3$ are pairwise distinct.
  Since there is only one initial $S$-edge on $p$, we are left with the following two cases (up to swapping $e_1,e_2,e_3$):
  \begin{enumerate}
    \item $e_1$ and $e_2$ are initial $S$-edges
    and $e_3$ is a recurrent $S$-edge
    \item $e_1,e_2,e_3$ are initial $S$-edges.
  \end{enumerate} 
   Set $\gamma(p)= \{ p_1, p_2 \}$.
    It remains to prove~$(*)$. Let $p,p' \in P$ be distinct,
    $\gamma(p)=\{p_1,p_2\}$, and $\gamma(p')=\{p_1',p_2'\}$.
    Then the initial $S$-edge of $p_i$ derives
    from the same fact as a recurrent $S$-edge on $p$
    for $i \in \{1,2\}$, and likewise for $p_i'$ and $p'$.
    Since $p,p'$ are distinct and thus edge-disjoint,  Claim~2 implies that $\{p_1,p_2\} \cap \{p'_1,p'_2\} = \emptyset$.
\end{cproof}~
\medskip

  We now return to proving the ``$2 \Rightarrow 3$'' implication. Let $m > \epsilon \cdot |D|$ be the maximum number of edge-disjoint $V_1$-$V_2$ paths in $\gdap$ and let $P$ be a set that contains $m$ edge-disjoint $V_1$-$V_2$-paths.
  By Claim~3, we then obtain a set $P' \subseteq P$ of cardinality at least $\frac{m}{2}$ such that each path $p
   \in P'$ origin-overlaps with at most
   two other paths in $P'$.
  Hence we may obtain the required $\frac{\epsilon \cdot |D|}{6}$ fact-disjoint realizations of $q_\Pi'$ in $D'$
  by  repeatedly choosing a path $p \in P'$ and eliminating the (at most $2$) paths from $P'$ that origin-overlap with $p$. 
\end{proof}

\lemimplecrpqtorealizations*
\begin{proof}
  Since $\Pi$ is an SBP it looks as follows:
$$
\Pi = \{\mn{goal}() \leftarrow P(x) \wedge  \bigwedge_{i=1}^k R_i(w_i,x), \quad
P(x) \leftarrow S(x,y) \wedge P(y),\quad
P(x) \leftarrow S(x,y) \wedge \bigwedge_{j=1}^\ell T_j(y,z_j)\}.
$$

Before we go into the proof a short notational reminder about multiplicities. When we speak about `adding' or `removing' a fact from a database we mean increasing or decreasing its multiplicity. 
  
For $m \in \mathbb{N}$, we say that a
database $D'$ is \emph{$m$-far from $D' \not \models \Pi$} if we need to remove more than $m$ facts from $D'$ to make it a non-answer for $\Pi$, and $m$ is maximal with this property. 
Thus  $D$ is  $m$-far from $D \not \models \Pi$, 
for some $m \geq \epsilon \cdot |D|$.

Let $D'$  be a database obtained from $D$ by exhaustively
choosing and removing facts, preserving
the property of $m$-farness from $D' \not \models \Pi$. As a result, (i)~$D' \subseteq D$, (ii)~$D'$ is $m$-far from $D' \not \models \Pi$, and (iii)~removing any fact from $D'$ results in a database that is $m-1$-far from $D' \not \models \Pi$.

Let $\Pi$ be as in the definition of SBPs above and let $\Pi'$ be the 
following SLP:
%
%
$$
\Pi' = \{\mn{goal}() \leftarrow P(x) \wedge  R_1(w,x),\quad 
P(x) \leftarrow S(x,y) \wedge P(y),\quad
P(x) \leftarrow S(x,y) \wedge T_1(y,z)\}.
$$

It is straightforward to see that $D'$ being $m$-far from $D' \not \models \Pi$ implies $D'$ also being $m$-far from $D' \not \models \Pi'$.
By Lemma~\ref{lem:farness_realizations}, there are more than $\frac{m}{6}$ fact-disjoint realizations of $\Pi'$ in~$D'$.
Let $\Rmf$ be the set of all such realizations. 
Note that each realization in \Rmf is an $R_1S^+T_1$-path.
We complete the realizations in \Rmf to fact-disjoint realizations of $\Pi$ in $D'$, that is, we add to
the second element on each such path an 
incoming $R_i$-edge for $1 < i \leq k$ and to each second last element an outgoing $T_j$-edge for $1 < j \leq \ell$, in a way such that fact-disjointness is preserved.
This clearly proves the lemma.

We only consider the $R_i$-edges, as the treatment of the $T_j$-edges is fully symmetric.
We also assume that the $R_i \in \Sbf$, rather than $R_i$
being of the form $R^-$. Dealing with the latter case is
again fully symmetric, and in fact only amounts to swapping arguments in $R_i$-facts.

We iterate over all $c \in \mn{adom}(D')$ that occur in
some realization in \Rmf as the second element. Take any such $c$, let $\Rmf_c$
be the set of all realizations in $\Rmf$ that use $c$
as the second element, and let $n$ be the cardinality of $\Rmf_c$. 
It suffices to show that $D'$ contains at least $n$ facts of the form $R_i(*,c)$ for all $1 < i \leq k$, as we can then add a different such fact to each of the $n$ realizations in $\Rmf_c$. Moreover, the $R_i(*,c)$-facts added for different $c$ are clearly distinct anyway.

Fix an $\ell$ with $1 < \ell \leq k$. 
We now prove that there are sufficient $R_\ell(*,c)$ facts, and this argument obviously transfers to all other indices between $2$ and $k$.
Assume to the contrary of what we want to show that $D'$ contains strictly less
than $n$ facts of the form $R_\ell(*,c)$. Let $D''$ be
the database obtained from $D'$ by choosing some $R_1(b,c) \in D'$ and setting $D''((R_1(b,c)) \coloneqq D'(R_1(b,c)) -1$.
Due to Property~(iii) of $D'$, $D''$ is $m-1$-far from $D'' \not \models \Pi$. Consequently, there is a database $F$ with $F \subseteq  D''$ of size $|F| = m$ such that $D'' \setminus F \not \models \Pi$. 

{\bf Claim.} 
$F$ contains all facts from $D''$ that are of the form $R_1(*,c)$.
%
\\[1mm]
To prove the claim, assume to the contrary that $D''(R_1(b',c)) > F(R_1(b',c))$ for some fact $R_1(b', c) \in D'$.  We know that $D'' \setminus F \not \models \Pi$.
But then $D' \setminus F \not \models \Pi$ because
  any realization $E$ of $\Pi$ in $D' \setminus F$ gives rise
to a realization of $\Pi$ in $D'' \setminus F$:
\begin{itemize}

    \item if $E$ does not contain the fact $R_1(b,c)$ that was
    deleted during the construction of $D''$, then
    $E$ is a realization of $\Pi$ in $D'' \setminus F$;

    \item otherwise, we can construct a realization of
    $\Pi$ in $D'' \setminus F$ by taking $E$ and replacing the fact $R_1(b,c)$ with $R_1(b',c)$.
        
\end{itemize}
But $D' \setminus F \not\models \Pi$ contradicts
Property~(ii) of $D'$. This finishes the proof of 
the claim.

\smallskip

We now know that $F$ contains all facts from $D''$ that are
of the form $R_1(*,c)$. The combined multiplicities of those $R_1(*,c)$ facts is $n-1$ since each of the $n$ fact-disjoint realizations in $\Rmf_c$ identifies one such fact in $D'$ and during the construction of $D''$ only one fact was removed. Let $F'$ be the database obtained from $F$ by removing all facts of the form $R_1(*,c)$ and instead adding all facts from $D'$ that are of the form $R_\ell(*,c)$. Recall that we had assumed
that there are at most $n-1$ facts of the latter kind. Thus,
$|F| = m$ implies $|F'| = m$. We argue that $D' \setminus F' \not\models \Pi$, in contradiction to $D'$ being
$m$-far from $D' \not\models \Pi$. 

Assume to the contrary that
$D' \setminus F' \models \Pi$. Then there is a derivation $\Gamma = (V, E, \ell, \rho,  h)$ of $\Pi$ in $D' \setminus F'$.
We  show that $\Gamma$ is also a derivation  of $\Pi$ in $D'' \setminus F$, implying $D'' \setminus F \models \Pi$,
which is a contradiction.  Consider 
the homomorphic image of $h$ over all nodes $v \in V$ in $D'$. It cannot contain a fact of the form $R_1(*,c)$:
since $R_1 \neq S$ and $R_1$ does not occur in any of
the $T_j$, the only atom in all the rule bodies that $h$ could map to 
such a fact is $R_1(w_1,x)$ in the root node. But then $h$ must map the fact $R_i(w_i,x)$ to a fact in $D' \setminus F'$ of the form $R_i(*,c)$ while by choice of~$F'$, $D' \setminus F'$
contains no fact of this form. However, the only facts
that occur in $D' \setminus F'$, but not in $D'' \setminus F$ are of the form $R_1(*,c)$. Since no such fact is in the
range of $h$, it follows that $h$ is also a homomorphism from all the rule bodies to $D'' \setminus F$ and thus $\Gamma$ is also a derivation  of $\Pi$ in $D'' \setminus F$, as desired. 
\end{proof}

\section{Proofs for Section~\ref{sec:musing}}

\lemmaproblem*

\begin{proof}
    Consider the directed graph $G_1$ depicted on Figure~\ref{fig:2-to-1}.
    First observe that one needs to remove at least $2$ edges to make $q_2$ false.
    Notice that $S_{G_1}(a) = \{u, b, c\}$, $S_{G_1}(b) = \{c\}$, and $S_{G_1}(c) = \{b \}$.
    Since $S_{G_1}(b)$ and $S_{G_1}(c)$ are singletons, it is clear that $G_1$ cannot contain two $S_{G_1}(a), S_{G_1}(b), S_{G_1}(c)$-trees that are pairwise edge-disjoint and don't share any vertices from $S_{G_1}(a) \cup S_{G_1}(b) \cup S_{G_1}(c)$.
    This shows that $f(1)$ must be at least $2$.
    We push this bound to arbitrary $\ell \geq 1$ by ``multiplying'' each edge by $\ell$: we replace each edge $(x, y)$ from $G_1$ by $2\ell$ edges $(x, e_{(x, y)}^1), (e_{(x, y)}^1, y), \dots, (x, e_{(x, y)}^\ell), (e_{(x, y)}^\ell, y)$, where vertices $e_{(x, y)}^k$ for $k = 1, \dots, \ell$ are fresh.
    In the resulting directed graph $G_\ell$, one needs to remove at least $2\ell$ edges to make $q_2$ false, and we have $S_{G_\ell}(a) = \bigcup_{k=1}^\ell \{e_{(a, u)^k}, e_{(a, b)}^k, e_{(a, c)}^k \}$, $S_{G_\ell}(b) = \{ e_{(b, c)}^k \mid k = 1, \dots, \ell \}$, and $S_{G_\ell}(c) = \{ e_{(c, b)}^k \mid k = 1, \dots, \ell \}$.
    The maximum number of $S_{G_\ell}(a), S_{G_\ell}(b), S_{G_\ell}(c)$-trees that are pairwise edge-disjoint and don't share any vertices from $S_{G_\ell}(a) \cup S_{G_\ell}(b) \cup S_{G_\ell}(c)$ is thus $\ell$, which shows that $f(\ell) \geq 2\ell$.
    \begin{figure}
        \centering
        \begin{tikzpicture}
            \node at (0:0) (a) {$a$};
            \node at (30:2) (b) {$b$};
            \node at (-30:2) (c) {$c$};
            \node at (180:2) (u) {$u$};
            \path
            (a) edge [->, bend right] (u)
            (u) edge [->, bend right] (a)
            (a) edge [->] (b)
            (a) edge [->] (c)
            (b) edge [->, bend right] (c)
            (c) edge [->, bend right] (b)
            ;
            
            
            
            

            
        \end{tikzpicture}
        \caption{Directed graph $G_1$.}
        \label{fig:2-to-1}
    \end{figure}
\end{proof}

\section{Proof of Theorem~\ref{theorem:THE-candidate}}

The following proof of Theorem~\ref{theorem:THE-candidate} uses Lemmas~\ref{lem:unrav} and \ref{lem:acyclictohardpattern}, stated after Theorem~\ref{theorem:THE-candidate} in the main body of the paper but that are independent from it.

\theoremTHEcandidate*
\begin{proof}
Let $\Pi$ be an MDLog program and $\Pi^*$ its $\alpha$-acyclic approximation.
\medskip

   {\bf Point~1.} Let $D$ be a database.
	Assume $\bar a \in \Pi^*(D)$ and take a derivation $\Gamma = (V, E, \ell, \rho, h)$ of $\bar a$ from $\Pi^*$ in $D$.
	We convert $\Gamma$ into a derivation $\Gamma'$ of $\bar a$ from $\Pi$ in $D$.
    We simply set $\Gamma' := (V, E, \ell, \rho', h')$ where $\rho'$ and $h'$ are defined as follows.
    For each $w \in V$, $\rho(w)$ is a rule $P(\bar u) \leftarrow p(\bar z)$ with $p(\bar z)$ being obtained as a contraction $P(\bar u) \leftarrow q'(\bar v)$ of some rule $P(\bar x) \leftarrow q(\bar y) \in \Pi$.
    We denote $\sigma : \bar y \rightarrow \bar v$ the performed identification of variables by the contraction.
    We let $\rho'(w) := P(\bar x) \leftarrow q(\bar y)$ and $h'(w) := h(w) \circ \sigma$.
    It is immediate that $\Gamma'$ is a derivation of $\bar a$ from $\Pi$ in $D$.
    \medskip

    {\bf Point~2.}
    Let $D$ be an $\alpha$-acyclic database.
	From Point~1 above, we already have $\Pi^*(D) \subseteq \Pi(D)$.
	It remains to prove $\Pi(D) \subseteq \Pi^*(D)$.
	Assume $\bar a \in \Pi(D)$ and consider a derivation $\Gamma := (V, E, \ell, \rho, h)$ of $\bar a$ from $\Pi$ in $D$.
	We convert $\Gamma$ into a derivation $\Gamma'$ of $\bar a$ from $\Pi^*$ in $D$.
    We set $\Gamma' := (V, E, \ell, \rho', h')$ where $\rho'$ and $h'$ are defined as follows.

    For each $w \in V$, $h(w)$ defines a contraction $P(\bar u) \leftarrow p(\bar v)$ of a rule $P(\bar x) \leftarrow q(\bar y) \in \Pi$ by identifying variables of $\bar y$ mapped on the same element by $h(w)$.
    We denote $\sigma : \bar y \rightarrow \bar v$ the performed identification of variables by the contraction.
	By construction, we have an injective homomorphism $g : p(\bar v) \rightarrow D$ such that $g \circ \sigma = h(w)$.
	

	Using the assumption that $D$ is $\alpha$-acyclic, we would like to argue that $P(\bar u) \leftarrow p(\bar v)$ is an $\alpha$-acyclic rule, and thus obtain a derivation of $\bar a$ from $\Pi*$ in $D$ simply by setting $\rho'(w) := P(\bar u) \leftarrow p(\bar v)$ and $h'(w) := g$.
	It is however possible that $P(\bar u) \leftarrow p(\bar v)$ is not $\alpha$-acyclic, despite $p(\bar v)$ being isomorphic to the subset of facts $g(p)$ from the $\alpha$-acyclic database $D$.
	By Lemma~\ref{lem:acyclictohardpattern}, there are two possibilities: either $p(\bar v)$ contains a chordless cycle $c_0, \dots, c_m$ with $m \geq 4$ or $p(\bar v)$ contains a tetra $\{ b_1, \dots, b_n\}$ with $n \geq 3$.

	In the first case, since $D$ is $\alpha$-acyclic, there must be two elements $g(c_i), g(c_j)$, with $0 \leq i < j \leq m - 1$ but $j \neq i+1$, that are neighbors in $D$, that is there exists a fact $r(\bar d) \in D$ with $g(c_i), g(c_j) \in \bar d$.
	We extend $p(\bar v)$ into $p'(\bar z)$ with an atom $r(\bar w)$ where $\bar w$ is the tuple of variables obtained by replacing every occurrence of $g(c_i)$ in $\bar d$ by the variable $c_i$, every occurrence of $g(c_j)$ in $\bar d$ by the variable $c_j$, and every occurrence of another element $d$ in $\bar d$ by a fresh variable $v_{c_i, c_j, d}$.
	Note that we introduce at most $N - 2$ fresh variables and that every freshly introduced variable only appears in the fresh atom of $p'(\bar z)$, and thus cannot appear along a chordless cycle nor within a tetra in $p'(\bar z)$.
	We update $g$ accordingly by setting $g : v_{c_i, c_j, d} \mapsto d$ for each freshly introduced variable $v_{c_i, c_j, d}$.
	
	In the second case, since $D$ is $\alpha$-acyclic, there must be a fact $r(\bar d)$ that covers the tetra, that is $g(b_1), \dots, g(b_n) \in \bar d$.
	We extend $p(\bar v)$ into $p'(\bar z)$ with an atom $r(\bar w)$ where $\bar w$ is the tuple of variables obtained by replacing every occurrence of some $g(b_k)$ in $\bar d$ by the variable $b_k$ and every occurrence of another element $d$ in $\bar d$ by a fresh variable $v_{b_1, \dots, b_n, d}$.
	Note that we introduce at most $N - n$ fresh variables and that every freshly introduced variable only appears in the fresh atom of $p'(\bar z)$, and thus cannot appear along a chordless cycle nor within a tetra in $p'(\bar z)$.
	We update $g$ accordingly by setting $g : v_{b_1, \dots, b_n, d} \mapsto d$ for each freshly introduced variable $v_{b_1, \dots, b_n, d}$.
	
	We iterate this process until the resulting $p'(\bar z)$ is $\alpha$-acyclic to comply with Condition~(i) in the definition of $\Pi^*$
	As freshly introduced variables cannot contribute to chordless cycles nor tetras, this terminates in at most $2^{\sizeof{\bar v}}$ steps:
	we only need to introduce a chord between $2$ variables from $\bar v$ or to cover a $k$-tetra on variables of $\bar v$ for some $k \geq 3$ once to break the pattern.
	The number of steps is thus bounded by the number of subsets of $\bar v$, and we have $\sizeof{\bar v} \leq \sizeof{\bar y}$.
	As we introduced at most $N - 2$ variables at each step, the resulting $\bar z$ complies with Condition~(ii) in the definition of $\Pi^*$.
	We therefore obtain a rule $P(\bar u) \leftarrow p'(\bar z)$ belonging to $\Pi^*$.
	We can now set $\rho'(w) := P(\bar u) \leftarrow p'(\bar z)$ and $h'(w) := g$ in $\Gamma'$.
	We obtain the derivation $\Gamma'$ of $\bar a$ from $\Pi^*$ in $D$.
    \medskip
 
    {\bf Point~3.} 
	If $\Pi$ is equivalent to $\Pi^*$, then it is trivial that $\Pi$ is equivalent to an $\alpha$-acyclic MDLog program as $\Pi^*$ is such a program.
	
	Conversely, assume that $\Pi$ is equivalent to an $\alpha$-acyclic MDLog program $\Pi'$.
	We need to prove that $\Pi$ is equivalent to $\Pi^*$.
	Let $D$ be a database.
	From Point~1 above, we have $\Pi^*(D) \subseteq \Pi(D)$.
	It remains to prove that $\Pi(D) \subseteq \Pi^*(D)$.
	Assume $\bar a \in \Pi(D)$.
	Since $\Pi$ and $\Pi'$ are equivalent, we have $\bar a \in \Pi'(D)$.
	Consider a derivation $\Gamma$ of $\bar a$ from $\Pi'$ in $D$.
    Use Lemma~\ref{lem:unrav} to construct the tree-like database $D'$ corresponding to $\Gamma$ and denote $h : D' \rightarrow D$ the associated homomorphism (Point~2 of Lemma~\ref{lem:unrav}).
	By Point~1 of Lemma~\ref{lem:unrav}, we have $\bar a \in \Pi'(D')$, and thus $\bar a \in \Pi(D')$ as $\Pi$ and $\Pi'$ are equivalent. 
	Furthermore, since $\Pi'$ is $\alpha$-acyclic, the database $D'$ is $\alpha$-acyclic (using Point~3 of Lemma~\ref{lem:unrav} and the tree-like structure of $D'$).
	Therefore, by Point~2 above (we mean Point~2 of Lemma~\ref{theorem:THE-candidate}), we have $\Pi(D') = \Pi^*(D')$ and in particular $\bar a \in \Pi^*(D')$.
	Via the homomorphism $h$, this yields $\bar a \in \Pi^*(D)$ as desired.
\end{proof}

\section{Proof of Theorem~\ref{theorem:one-cycle-is-crucial}}

\theoremOneCycleIsCrucial*



We first emphasize that Theorem~\ref{theorem:one-cycle-is-crucial} holds even under the set semantics of databases: the produced database $D$ only requires multiplicities of facts to be $0$ or $1$.
This is a direct consequence of the definitions of cycles, tetras, and derivations of a Datalog program being independent of higher multiplicities.
When property testing comes into play, higher multiplicities become relevant and are notably used in the proof of Lemma~\ref{lem:datalog-lower-cyclecombined} and thus of Theorem~\ref{theorem:non-testable}.
Indeed, the proof of Lemma~\ref{lem:datalog-lower-cyclecombined} builds upon the results of Chen and Yoshida \cite{chen2019testability} which crucially use bag databases.

We now resume the proof of Theorem~\ref{theorem:one-cycle-is-crucial} and detail the tree steps sketched in the main body of the paper.
We recall that we already obtained $D_0$, a tree-like database containing some hard patterns and with a tuple $\bar a \in \Pi(D_0) \setminus \Pi^*(D_0)$.

We begin with the first step, producing the initial sequence $D_0, \dots, D_M$ of databases.
To do so, we will choose a set of isomorphic bags from $D_k$ and unravel those all together to obtain $D_{k+1}$.
More precisely, the set of unraveled isomorphic bags is defined as those bags in the tree-like database that share a same \textit{type}.
In particular, each $D_k$ along the sequence will be associated with a typing function $\type_k$.
Before explaining how to produce the successive $D_k$ and $\type_k$, we need to clarify how to define $\type_0$.
The following definition prepares this.


\begin{definition}[Types of bags]
	\label{definition:basic-type}
	A \emph{concrete type} is a bag from $D_0$.
	An \emph{abstract type} is anything obtained from the body of a rule in $\Pi$ by:
	\begin{itemize}
		\item[(i).] Dropping all IDB relations; and
		\item[(ii).] Consider a contraction; and
		\item[(iii).] Potentially dropping some variables and the atoms in which those variables were involved.
	\end{itemize}
	We consider abstract types up to isomorphism, so there are only finitely many abstract types.
	A \emph{type} is either a concrete type or an abstract type.
\end{definition}

Notice that it is possible for two concrete types to be isomorphic, or for a concrete and an abstract type to be isomorphic, but we still distinguish those.
We define an order on the set of possible types.
This order will be used in the construction of the sequence of databases $D_0, \dots, D_k$ and will be crucial to prove that our objective can be achieved in finitely many steps.

\begin{definition}
	\label{definition:order-on-types}
	Given two types $t_1$ and $t_2$, we denote $t_1 < t_2$ if the following conditions are satisfied:
	\begin{enumerate}
		\item There exists an homomorphism $h : t_1 \rightarrow t_2$;
		\item $t_1$ and $t_2$ are not isomorphic.
	\end{enumerate}
\end{definition}

Notice that, due to self-join freeness, the homomorphism required by (1) is unique and thus (2) could be rephrased as ``$h$ is not an isomorphism''.

\begin{lemma}
	\label{lemma:order-on-types}
	The relation $<$ defines a strict partial order on the set of all possible types.
\end{lemma}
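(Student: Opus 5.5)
We have to check irreflexivity and transitivity of $<$; asymmetry then follows from these two. Types are finite databases, concrete or abstract, and homomorphisms are defined exactly as for databases in the preliminaries.

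\emph{Irreflexivity} is immediate from Condition~2 of Definition~\ref{definition:order-on-types}, since every type is isomorphic to itself, via the identity.

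\emph{Transitivity.} Assume $t_1 < t_2$ via $h_1$ and $t_2 < t_3$ via $h_2$. Then $h_2 \circ h_1$ is a homomorphism from $t_1$ to $t_3$, which gives Condition~1. The remaining task is Condition~2, that is, $t_1$ and $t_3$ are not isomorphic.

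Suppose for contradiction that $g : t_3 \rightarrow t_1$ is an isomorphism. Then $t_1 \rightarrow t_2 \rightarrow t_3 \rightarrow t_1$. I would use a finiteness argument on the composed maps.
\begin{itemize}
\item $f := g \circ h_2 \circ h_1$ is a homomorphism from the finite database $t_1$ to itself.
\item Some power $f^k$ is idempotent, so $f^k$ is a retraction onto its image.
\item Alternatively, I would use the counting argument below, which handles the multiset-free fact sets directly.
\end{itemize}

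The counting argument runs as follows. Since $t_1 \cong t_3$, the two types have the same number of constants and the same number of facts. From $t_1 \rightarrow t_2$ and $t_2 \rightarrow t_3 \cong t_1$, consider the image of $h_1$ and the image of $h_2$.

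The cleanest route uses self-join freeness, as noted before the lemma: each relation symbol occurs in at most one fact of a type. This holds because types come from rule bodies of a self-join free program, or from bags isomorphic to such bodies. A homomorphism between self-join free types therefore maps each fact $R(\bar a)$ to the unique $R$-fact of the target. Consequently, the relation symbols satisfy
\[
\mn{sig}(t_1) \subseteq \mn{sig}(t_2) \subseteq \mn{sig}(t_3) = \mn{sig}(t_1),
\]
so all three sets of relation symbols coincide and each homomorphism is a bijection on facts.

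On constants, each homomorphism can only identify positions. I would measure this by the equality pattern on positions of each fact, meaning which argument positions carry equal constants, both within one fact and across facts. A homomorphism can only coarsen this equivalence relation on the finite set of positions. The chain $t_1 \to t_2 \to t_3$ thus gives coarsenings $E_1 \subseteq E_2 \subseteq E_3$. Since $t_3 \cong t_1$ has the same signature and facts, its pattern $E_3$ has the same number of classes as $E_1$, so $E_1 = E_2 = E_3$. Hence $h_1$ is injective and surjective on constants and preserves the fact set bijectively, and therefore $h_1$ is an isomorphism. This contradicts $t_1 < t_2$.

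The main obstacle is isolating precisely where self-join freeness is needed and the position-pattern argument. One caveat to check: if some type contained a constant occurring in no fact (e.g.\ a variable kept after dropping atoms in Definition~\ref{definition:basic-type}(iii)), the counting would also have to track such isolated constants. I would handle this by counting constants alongside the position classes.
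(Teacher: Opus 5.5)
Your proof is correct and follows the paper's route. Irreflexivity comes from Condition~2, and for transitivity you compose $h_2 \circ h_1$ and show that $t_1 \cong t_3$ would force $h_1$ to be an isomorphism, contradicting $t_1 < t_2$. The paper only asserts this last step, after rephrasing Condition~2 via the uniqueness of homomorphisms under self-join freeness, whereas your signature and position-pattern argument actually justifies it; the unused idempotent-power bullet can be dropped.
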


\begin{proof}
	Condition 2 in the definition of $<$ makes clear that $<$ is irreflexive.
	
	We now prove that $<$ is transitive.
	Assume we have $t_1 < t_2$ and $t_2 < t_3$.
	Condition~1 in the definition of $<$ provides two homomorphisms $h_1 : t_1 \rightarrow t_2$ and $h_2 : t_2 \rightarrow t_3$.
	Let $h : t_1 \rightarrow t_3$ be $h := h_2 \circ h_1$.
	By construction, $h$ is an homomorphism and it remains to prove Condition~2 of $t_1 < t_3$.
	By contradiction, assume that $h$ is an isomorphism.
	Then $h_1$ also is, which contradicts $t_1 < t_2$.
\end{proof}

We now go back to the construction of the sequence of databases and their associated typing functions.
The typing function $\type_0$ on $D_0$ only uses the concrete types, in the trivial way:
	the typing function $\type_0$ in $D_0$ is defined by $\type_0(B) := B$ for every bag $B$ of $D_0$.

Before explaining how to move from a $D_k$ whose bags are typed by a typing function $\type_k$ to the next in the sequence, we need to mention some useful properties of $D_k$ and $\type_k$ that will be maintained throughout the construction of the sequence.

\begin{definition}[$D_0$-compatibility]
\label{def:d0-compatibility}
We say that a database $D$ is $D_0$-compatible if it satisfies the three following conditions: 
\begin{enumerate}
    \item The (domain of the) root bag of $D$ contains $\bar a$;
    \item $\bar a \in \Pi(D)$;
    \item there exists an homomorphism $\delta$ from $D$ to $D_0$ with $\delta(\bar a) = \bar a$.
\end{enumerate}
\end{definition}
Notice that a $D_0$-compatible database $D$ must be non $\alpha$-acyclic as otherwise $\Pi$ and $\Pi^*$ would coincide on $D$ (Points~1 and 2 of Theorem~\ref{theorem:THE-candidate}), thus there would be a derivation of $\bar a$ from $\Pi^*$ in $D$, and, via $\delta$, a derivation of $\bar a$ from $\Pi^*$ in $D_0$ which is a contradiction.
In particular, $D_0$ is $D_0$-compatible and thus is also non $\alpha$-acyclic.


\begin{definition}[Good typing]
	\label{def:good-typing}
	We say that $\type$ is a good typing function on a tree-like database $D$ if every bag $B$ in $D$ is isomorphic to its type $\type(B)$.
\end{definition}

It is immediate that $\type_0$ is a good typing function on $D_0$.
We can always extend a good typing function $\type$ to also type hard patterns in a tree-like database $D$.
Consider a hard pattern $P$ in a bag $B$ of $D$.
Since $\type$ is a good-typing, $B$ and its type $\type(B)$ are isomorphic.
Due to self-join freeness, there exists a unique isomorphism $h_B : B \rightarrow \type(B)$, and we set $\type(P) := (\type(B), h_B(P))$.
Given a type $t$, the notation $\type^{-1}(t)$ refers, as is standard, to the set of bags (or chordless cycles, or tetras) in $D$ that have this type.


\begin{definition}[Halting Condition]
\label{def:haltingcondition}
	We say that a $D_0$-compatible tree-like database $D$ and a good typing function $\type$ on $D$ satisfy the halting condition if there exists a type $t$ for chordless cycles or for tetras such that $\type^{-1}(t)$ is crucial for $\bar a \in \Pi(D)$.
\end{definition}

\subsection*{Obtaining a set of hard patterns}

Having defined $D_0$ and $\type_0$, we now explain how to construct $D_{k+1}$ and $\type_{k+1}$ from $D_k$ and $\type_k$ if the latter do not satisfy the halting condition.
To simplify notation, we'll refer to $D_k$ as $D$, to $\type_k$ as $\type$ and to the constructed $D_{k+1}$ as $D'$ and $\type_{k+1}$ as $\type'$.

Assume that we have a $D_0$-compatible tree-like database $D$ and a good typing function $\type$ on $D$, such that they do not satisfy the halting condition.
Since $D$ is not $\alpha$-acyclic, there exists at least one type of bag $s$ such that $\type^{-1}(s)$ is non-empty and contains a chordless cycle or a tetra (Lemma~\ref{lem:acyclictohardpattern}).
Among these types, we pick one that is maximal for $<$ and denote it $B_0$\footnote{This maximality assumption is not essential but greatly simplifies the further argument showing that the halting condition is reached in finitely-many steps; see Lemma~\ref{lem:halting}}
By definition, this type of bag $B_0$ contains a hard pattern, and we let $t := (B_0, P_0)$ be the type of this hard pattern. 
Due to the halting condition not being satisfied, the set $\type^{-1}(t)$ of hard patterns is non-crucial for $\bar a \in \Pi(D)$.
In particular, there exists a derivation $\Gamma := (V_\Gamma, E_\Gamma, \ell_\Gamma, \rho_\Gamma, h_\Gamma)$ of $\bar a$ from $\Pi$ in $D$ that does not use any hard pattern from $\type^{-1}(t)$.
We now construct $D'$ as a tree-like database, along with good typing function $\type'$ on $D'$ and a homomorphism $h'$ from $D'$ to $D$.

\paragraph*{Root bag of $D'$.}
We denote $B$ the root bag of $D$.
\begin{itemize}
    \item If $\type(B) \neq B_0$, we let $B$ be the root bag of $D'$ in which case we also set $h'|_B := \mathsf{Id}$ and $\type'(B) := \type(B)$.
    \item Otherwise we have $\type(B) = t$, and we take a closer look at the derivation $\Gamma$.
We denote $v_0$ the root of $V_\Gamma$, we let $q_0$ be the body of the goal rule $\rho_\Gamma(v_0)$ and $h_0 := h_\Gamma(v_0)$.
We denote $p$ the subquery of $q_0$ obtained by dropping all IDB relations and now view $h_0$ as an homomorphism from $p$ to $D$.
Consider the induced CQ ${p}_{h_0, B}$ and notice that it defines an abstract type.
Additionally, since $\bar a \in \adom(B)$ by $D_0$-compatibility of $D$, ${p}_{h_0, B}$ contains all the variables $x_1, \dots, x_m$ of $q_0$ that occur in the head of the goal rule $\rho_\Gamma(v_0)$, potentially identified according to $h_0$.
We define the root bag $B'$ of $D'$ as an isomorphic copy of $p_{h_0, B}$ in which the (potentially identified) variables $x_1, \dots, x_m$ have been renamed into $h_0(x_1), \dots, h_0(x_m)$ (note that, by definition, two identified variables are renamed as the same element from $\bar a$). 
In particular, $\bar a \subseteq \adom(B')$.
We let $\type'(B') := p_{h_0, B}$ and $h'|_{B'} := p|_{\mn{var}(p_{h_0, B})}$.
\end{itemize}


\paragraph*{Further bags of $D'$.}
Assume that we constructed a bag $B'$ in $D'$ and that $h'$ and $\type'$ have already been defined on $B'$.
For each element $e \in \adom(B')$ and each bag $B$ of $D$ 
such that
$B'$ does not already have a successor bag $B''$ with $e \in \adom(B') \cap \adom(B'')$ and $h'(B'') \subseteq B$, we distinguish two cases:
\begin{itemize}
	
	\item 
    If $\type(B) \neq B_0$, then we introduce a fresh copy $B''$ of $B$ as a successor bag of $B'$.
	If $h'(e)$ occurs in $B$, then we additionally connect $B'$ and $B''$ by identifying elements $e$ from $B'$ and the fresh copy of $h'(e)$ from $B''$. 
	We set $\type'(B'') := \type(B)$ and $h'|_{B''}$ to map an element $c$ from $B''$ on an element $d$ from $B$ iff $c$ is the fresh copy of $d$.
	
	\item 
	If $\type(B) = B_0$, then we consider each abstract type $t' < B$.
	We denote $h_{t'} : t' \rightarrow B$ the homomorphism provided by the definition of $t' < B$ and $P_B$ the hard pattern of $B$ that corresponds to $P_0$ in $B_0$ (recall that $B_0$ and $B$ are isomorphic and that this isomorphism is unique due to self-join freeness).
	If $h_{t'}$ uses the hard pattern $P_B$, we do nothing.
	Otherwise:
    \begin{itemize}
        \item If $h'(e)$ does not occur in $B$, then we simply introduce an isomorphic copy $B''$ of $t'$ as a successor bag of $B'$.
        \item Otherwise, let $e_1, \dots, e_n$ be the preimages of $h'(e)$ by $h_{t'}$.
	           For each $1 \leq i \leq n$, we introduce a fresh copy $B''$ of $t'$ as a successor bag of $B'$ and connect those two by identifying elements $e$ from $B'$ and the fresh copy of $e_i$ from $B''$. 
    \end{itemize}
    %
	%
	We set $\type'(B'') := t'$ and 
    $h'|_{B''} := h_{t'} \circ h_{B''}$
    where $h_{B''} : B'' \rightarrow t'$ is the isomorphism from $B''$ to $t'$ (recall that $B''$ is defined as an isomorphic copy of $t'$).
	
	
\end{itemize}

At each iteration of the above, $D'$ and $h'$ are being extended so we can safely take the potentially infinite union of their successive values to obtain the final $D'$ and $h'$.
We say that $D'$ with typing function $\type'$ is the unraveling of type $t$ in $D$ with respect to $\type$.

We first observe that the construction of $D'$ may only introduce hard patterns whose type are smaller or equal than those already present in $D$.
\begin{lemma}
    \label{lemma:types-are-simplifying}
    For all type of bag $t'$, if $t'$ is realized in $D'$ and contains a hard pattern, then there exists a type $t$ realized in $D$ such that $t' \leq t$ and $t$ contains a hard pattern.
\end{lemma}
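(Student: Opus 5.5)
Since $\type'$ assigns a type to every bag of $D'$ that is isomorphic to the bag, I would argue by a case distinction on how the bag $B''$ carrying type $t'$ was introduced. There are three cases: (a) $B''$ is a fresh copy of a bag $B$ of $D$ with $\type(B)\neq B_0$ (this includes the root when it is left unchanged); (b) $B''$ is the root bag built from the induced CQ $p_{h_0,B}$; (c) $B''$ is a copy of an abstract type $t'<B$ for some bag $B$ of $D$ with $\type(B)=B_0$. Throughout, I take $t'$ to contain a hard pattern.

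In case (a) we have $t'=\type(B)$, and $t'$ is realized in $D$ by $B$. So we may take $t:=t'$, and the claim holds with equality.

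In case (c), the construction gives $t'<B$. The witnessing homomorphism $h_{t'}$ composed with the unique isomorphism $B\to B_0$ yields a homomorphism $t'\to B_0$. Since $t'$ is not isomorphic to $B$, it is not isomorphic to $B_0$ either, so $t'<B_0$. Moreover $B_0$ is realized in $D$ and was chosen to contain a hard pattern, so $t:=B_0$ works.

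Case (b) is similar. The abstract type $p_{h_0,B}$ maps homomorphically into $B$ via the restriction $h_v$ of $h_0$; Definition~\ref{def:inducedCQ} guarantees this is a homomorphism into $\mn{bag}(v)$. Composing with the isomorphism $B\to B_0$ gives $p_{h_0,B}\le B_0$: either the two are isomorphic, or $p_{h_0,B}<B_0$ strictly. In both situations $t:=B_0$, which contains a hard pattern and is realized by the root bag of $D$.

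The main obstacle I expect is the bookkeeping around types versus bags: checking that the homomorphisms really land in $B_0$ and not merely in some isomorphic copy. Self-join freeness makes this harmless, because the isomorphism from $B$ to $\type(B)=B_0$ is unique and composition preserves the non-isomorphism condition of Definition~\ref{def:order-on-types}. A second, more minor, point is that the argument does not need hard patterns to be transferred along the homomorphism: $t$ only has to contain some hard pattern, and in cases (b) and (c) this holds for $B_0$ by the choice of $B_0$, while in case (a) it is inherited directly from $t'$.
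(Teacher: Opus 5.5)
Your case split matches the paper's proof, and cases (a) and (c) are fine. The gap is in case (b). There you allow $p_{h_0,B}\cong B_0$ and still conclude $p_{h_0,B}\le B_0$, which amounts to reading $\le$ as ``$<$ or isomorphic''. The paper, however, explicitly distinguishes a concrete type from an abstract type even when they are isomorphic; abstract types are identified only with isomorphic abstract types, and $\le$ is the reflexive closure of $<$. So when $B_0$ is concrete, an abstract $p_{h_0,B}\cong B_0$ is neither $<B_0$ (Condition~2 of Definition~\ref{definition:order-on-types} fails) nor equal to $B_0$. This happens already in the first unraveling step, since $\type_0$ uses only concrete types. In that subcase your argument gives no valid witness $t$.

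The missing idea is that this subcase cannot occur. This is where the choice of $\Gamma$ enters, and your proof never uses that $\Gamma$ avoids every pattern in $\type^{-1}(t)$. By self-join freeness the homomorphism $p_{h_0,B}\to B$ is unique. Hence if $p_{h_0,B}\cong B$, the restriction of $h_0$ is itself an isomorphism. Pulling back the pattern $P_B$ of $B$ (the copy of $P_0$) then gives a chordless cycle or tetra in $p_{h_0,B}$ that $h_0$ maps bijectively onto $P_B$. So $h_0$ uses $P_B$, and $\Gamma$ uses a pattern of $\type^{-1}(t)$, a contradiction. Therefore $p_{h_0,B}<B_0$ strictly, as the paper shows.

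This strictness also matters downstream: the proof of Lemma~\ref{lem:halting} relies on every $B_0$-bag being replaced by bags of strictly smaller types.
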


\begin{proof}
We examine each possible step in the construction of $D'$.
\begin{itemize}
    \item Root bag of $D'$.
    \begin{itemize}
        \item If the root bag $B$ of $D'$ is an isomorphic copy of the root bag $B$ of $D$, their types are equal and the claim holds.
        \item Otherwise, the root bag $B'$ of $D'$ is obtained as an isomorphic copy of ${p}_{h_0, B}$ (see the construction for the definition of $B$, $h_0$ and $p$) and we notice that $B' < B_0$ where $B_0$ is the type of the root bag of $D$.
        Indeed, $h_0$ is the desired homomorphism from ${p}_{h_0, B}$ to $B_0$, and cannot be an isomorphism as otherwise the hard pattern $P_B$ in $B$ corresponding to the hard pattern $P_0$ in $B_0$ (recall that $B$ and $B_0$ are isomorphic and that this isomorphism is unique due to self-join freeness) would be used by $h_0$ contradicting that it is not used by $\Gamma$.
    \end{itemize}
    \item Further bags of $D'$.
    \begin{itemize}
        \item If we introduce a fresh copy of an existing bag, then their types are equal and the claim holds.
        \item Otherwise, the construction only introduces fresh copies of strictly smaller types than the type $B_0$ of the bag of interest in $D$, and that by definition $B_0$ contains a hard pattern, thus the claim holds. 
    \end{itemize}
\end{itemize}
\end{proof}

We now need to verify that this unraveling preserves $D_0$-compatibility and good-typing.

\begin{lemma}
	\label{lem:main-technical-lemma-lower-bound}
	$D'$ is $D_0$-compatible and $\type'$ is a good typing function on $D'$.
\end{lemma}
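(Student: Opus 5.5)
We verify the required properties one at a time, mostly by induction along the construction of $D'$. Good typing is the most local of them. Every bag of $D'$ is created either as a fresh copy of a bag $B$ of $D$, with $\type'(B'') := \type(B)$, or as an isomorphic copy of an abstract type $t'$ (including the root case $p_{h_0,B}$), with $\type'$ set to exactly that type. In the first case $B''\cong B\cong\type(B)$ because $\type$ is good. In the second case the isomorphism holds by construction. Renaming the answer variables to constants of $\bar a$ is a bijective renaming of elements and does not affect this. We also check that $D'$ is indeed a tree-like database. The tree structure comes from the successor relation. Adjacent bags share at most the single identified element $e$. Connectivity of each element's occurrence set holds because elements are only ever shared along a parent--successor edge. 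Condition~3, on reflexive atoms $R(a,\dots,a)$, requires a small argument: we copy such facts along whenever the shared element carries them in $D$. This is routine but must be said.

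For $D_0$-compatibility, Points~1 and~3 are easy. The root bag of $D'$ either is the root bag of $D$, which contains $\bar a$, or is the copy of $p_{h_0,B}$ with head variables renamed to $\bar a$. For Point~3 we show that $h'$ is a homomorphism from $D'$ to $D$ fixing $\bar a$. On copied bags $h'$ is the copy map. On bags built from an abstract type $t'$, $h'$ is $h_{t'}\circ h_{B''}$, which is a homomorphism into $B$. Identifications at the shared element $e$ are consistent because the successor is connected precisely via a preimage of $h'(e)$. Composing with the homomorphism $\delta$ from $D$ to $D_0$ then gives the required homomorphism $\delta\circ h'$, which fixes $\bar a$.

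The main obstacle is Point~2, that $\bar a\in\Pi(D')$. The plan is to lift the derivation $\Gamma$, which avoids every pattern of type $t$, to a derivation $\Gamma'$ in $D'$. This is done top-down by induction on $\Gamma$, maintaining the invariant that each inner node $v$ of $\Gamma$ with label $P(c)$ is mapped to an element $c'$ of $D'$ with $h'(c')=c$. Here $c'$ lies in a bag $B'$ whose $h'$-image lies in the bag of $D$ through which the rule body is matched. Consider the rule body $q$ at $v$ with homomorphism $h=h_\Gamma(v)$. For each bag $B$ of $D$ touched by $h(q)$, we form the induced CQ $q_{h,B}$ after dropping IDB atoms.

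If $\type(B)\neq B_0$, the piece of $q$ inside $B$ maps into the verbatim copy attached at the relevant element. If $\type(B)=B_0$, then $q_{h,B}$ is an abstract type $t'$, and $t'<B$ exactly when $h_B$ is not an isomorphism. The case $t'\cong B$ cannot occur: then $h_B$ would use $P_B$, contradicting the fact that $\Gamma$ avoids $\type^{-1}(t)$. By the same token, $h_{t'}$ does not use $P_B$, so a copy of $t'$ was attached, for every preimage of the connecting element. Hence the piece of $q$ maps into that copy.

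The delicate point is gluing these pieces into a single homomorphism of $q$. The induced-CQ identifications of Definition~\ref{def:inducedCQ}~(i) collapse only variables joined by paths leaving $B$. Because $D$ is tree-like, such a path must return to $B$ through a single shared element, so these identifications are harmless. Thus the pieces in different bags meet only at shared elements of the database tree, and the corresponding copies in $D'$ were attached at the matching elements, because the construction adds a successor for every element and every bag. We then recurse on the IDB atoms at their chosen preimages. Finally, the root is handled by the special root bag built from $p_{h_0,B}$, which yields the goal atom over $\bar a$.
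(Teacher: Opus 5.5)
Your proposal follows the paper's proof essentially step for step. Good typing holds by construction. Points~1 and~3 come from the root bag containing $\bar a$ together with the homomorphism $\delta\circ h'$. Point~2 is obtained by lifting the $t$-avoiding derivation $\Gamma$ top-down while maintaining $h'\circ g'=g$. Here non-use of $P_B$ forces each induced CQ at a $B_0$-bag to be a strictly smaller, non-using abstract type whose copies were attached, and tree-likeness makes the identifications of Definition~\ref{def:inducedCQ}(i) harmless; you spell out the step that $p_{g,B}<B$ and does not use $P_B$ more explicitly than the paper does at that point.

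One caveat: your patch for Condition~3 of database graphs, copying reflexive facts $R(e,\dots,e)$ into attached bags, is not part of the paper's construction. It would break the bag-to-type isomorphism you rely on for good typing, so drop it; the paper does not verify tree-likeness of $D'$ at all.
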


\begin{proof}
	$\type'$ is a good typing function by construction.
    For Points~1 and 3 of $D_0$-compatibility, we set $\delta' := \delta \circ h'$, which is clearly a homomorphism since it is the composition of two homomorphisms, and we examine the two cases in the constructions of the root node $B'$ of $D'$:
    \begin{itemize}
        \item If it is obtained as a copy of the root node $B$ of $D$, then Point~1 of the $D_0$-compatibility of $D$ immediately yields that $\bar a \in B'$ and since $h'$ is the identity on $B'$, and via Point~3 in the $D_0$-compatibility of $D$, we obtain $\delta'(\bar a) = \bar a$.
        \item Otherwise $B'$ has been obtained as an isomorphic copy of $p_{h_0, B}$ (see the construction for the definition of $h_0$ and $p$ and $B$). It is already made clear in the construction that $\bar a \subseteq \adom(B')$, and by definition $h'(\bar a) = h_0(\bar a) = \bar a$ thus $\delta'(\bar a) = \bar a$ as desired.
    \end{itemize}
    
	We now turn to Point~2 of $D_0$-compatibility, which is the most challenging part.
	We need to prove that $\bar a \in \Pi(D')$.
    Recall that the type $t := (B_0, P_0)$ that guided the unraveling is non-empty and that $\type^{-1}(t)$ is not crucial for $\bar a \in \Pi(D)$, which is witnessed by the derivation $\Gamma := (V_\Gamma, E_\Gamma, \ell_\Gamma, \rho_\Gamma, h_\Gamma)$ of $\bar a$ from $\Pi$ in $D$ that does not use any hard pattern from $\type^{-1}(t)$.
    We translate the derivation $\Gamma$ of $\bar a$ from $\Pi$ in $D$, used in the construction of $D'$, into a derivation $\Gamma'$ of $\bar a$ from $\Pi$ in $D'$.
    This $\Gamma'$ takes the form $\Gamma' := (V_{\Gamma'}, E_{\Gamma'}, \ell_{\Gamma'}, \rho_{\Gamma'}, h_{\Gamma'})$.
    We define the different nodes of $\Gamma'$ starting from the root of $V_{\Gamma'}$.
    Let $r$ be the root of $V$.
    We let $r$ to also be the root of $V_{\Gamma'}$.
    We also let $\ell_{\Gamma'}(r) := \ell_{\Gamma}(r)$ and $\rho_{\Gamma'}(r) := \rho_\Gamma(r)$.
    We denote $q$ the body of $\rho_{\Gamma'}(r)$ and $g := h_{\Gamma}(r)$ the homomorphism of $q$ to the database induced by successors of $r$ in $V_\Gamma$.
    In particular, $g$ does not use any hard patterns from $\type^{-1}(t)$.
    Based on $g$, we construct the homomorphism $h_{\Gamma'}(r)$, henceforth denoted $g'$, that shall map $q$ to the database induced by successors of $r \in V_{\Gamma'}$.
    As we shall verify along the construction, the built $g'$ will be an homomorphism from the EDB atoms of $q$ to $D'$ and satisfy $h' \circ g' = g$.
    We proceed by induction on each connected component of $q$.
    Consider $p$ one such connected component.
    \begin{itemize}
    	\item Base case.
    	We start by finding an appropriate bag $B'$ in $D'$ from where to start the construction of $g'$ on $p$.
		This is achieved as follows:
    		\begin{itemize}
    			\item If $p$ contains one of the variable occurring in the head of the goal rule $\rho_{\Gamma'}(r)$, we let $B'$ be the root bag of $D'$ and $B$ be the root bag of $D$.
    			
    			\item If $p$ does not contain any of the variables occurring in the head of the goal rule $\rho_{\Gamma'}(r)$, we pick $B$ in $D$ such that $g(v) \in \adom(B)$ for at least one variable $v$ of $p$.
    			We then observe that there exists a bag $B'$ and an element $e$ in $B$ such that $h'(e) = g(v)$:
    			\begin{itemize}
    			\item If $B$ is the root bag of $D$, then the root bag of $D'$ contains either directly $g(v)$ if $\type(B) \neq B_0$, or, if $\type(B) = B_0$, in which case $h'(g(v)) = g(v)$; or a copy of $v$ (potentially renamed in some $a \in \bar a$) stemming from the induced $q_{g, B}$, in which case $h'(v) = g(v)$ (or $h'(a) = g(v)$ if renaming happened).
    			\item Otherwise, the root bag of $D'$ has received as a successor: either an isomorphic copy of $B$ if $\type(B) \neq B_0$, in which we find a fresh copy $e$ of $g(v)$ and $h'(e) = g(v)$; or a fresh copy of $p_{g, B}$ in which the copy $e$ of $v$ satisfies $h'(e) = g(v)$. 
    			\end{itemize}
    		\end{itemize}
    	With these choices of $B'$ and $B$ at hand, we can now define $g'$ on a subpart $p_0$ of $p$ using the identified $B'$ and will see how to extend $g'$ to the complete $p$ in the induction case.
    	
    	We define $p_0$ as the restriction of $p$ to the biggest set of variables $V_0$ such that: $V_0$ is a connected set of variables in $p$ and $g(V_0) \subseteq B$.
    	Now, if $\type(B) \neq B_0$, then $B'$ is isomorphic to $B$ and we can define $g'$ on $V_0$ as the composition $(h'|_{B'})^{-1} \circ g|_{V_0}$, where we recall that $h'|_{B'}$ has been defined as the (unique) isomorphism from $B$ to $B'$.
    	Otherwise, that is $\type(B) = B_0$, we have that $B'$ is an isomorphic copy of $p_{g, B}$.
    	Since $p_0 \subseteq p_{g, B}$, we define $g'$ on $V_0$ as $(h'|_{B'})^{-1} \circ g|_{V_0}$ where we recall that $h'|_{B'}$ has been defined as the unique isomorphism from (the copy of) $p_{g, B}$ to $B'$.
    	
    	Note that in both cases, the constructed $g'$ is an homomorphism, from the EDB atoms of $q$ that only involve variables of $V_0$ to $D'$, and that it satisfies $h' \circ g' = g$.
 
    	\item Induction case.
    	Assume already constructed $g'$ for some connected subset of variables $V_0$ of $p$ and assume that $\mn{var}(p) \setminus V_0 \neq \emptyset$. 
    	As $p$ is connected, there exists a variable $v \notin V_0$ connected to a variable $v_0 \in V_0$ by an atom $\alpha$.
    	Consider a bag $B$ of $D$ containing $g(\alpha)$; in particular we have that $g(v_0), g(v) \in B$.
    	Consider the query $p'$ obtained as the connected component of $p$ in the induced query $p_{g, B}$,
    	 and let $V := \var(p') \setminus V_0$.
    	We will now extend $g'$ to these variables $V$.
    	
    	Let $B''$ be a bag of $D'$ containing $g'(v_0)$.
    	By construction, if $\type(B) \neq B_0$, then $B''$ has a successor $B'$ that is an isomorphic copy of $B$ and we can extend $g'$ on $V$ as the composition $(h'|_{B'})^{-1} \circ g|_{V}$, where we recall that $h'|_{B'}$ has been defined as the (unique) isomorphism from the isomorphic copy of $B$ to $B'$.
    	Otherwise $\type(B) = B_0$, and $B''$ has a successor $B'$ that is an isomorphic copy of $p_{g, B}$.
    	Since $p' \subseteq p_{g, B}$, we define $g'$ on $V$ as $(h'|_{B'})^{-1} \circ g|_{V}$ where we recall that $h'|_{B'}$ has been defined as the unique isomorphism from (the copy of) $p_{g, B}$ to $B'$.
    	
    	Here again, we obtained that $g'$, now extended to $V_0 \cup V$, satisfies: $h' \circ g' = g$.
    	We also argue that $g'$ defines an homomorphism from the EDB atoms of $q$ that only involve variables of $V_0 \cup V$ to $D'$.
    	If an EDB atom of $q$ only involves variables from $V_0$, we are done by induction hypothesis.
    	Otherwise, assume that we have an EDB atom $\beta$ of $q$ that involves at least one variable $v'$ from $V$.
    	It suffices to prove that $\beta \in p'$, which then guarantees that, in both cases in the definition of the extension of $g'$ to $V_0 \cup V$, $g'(\beta) \in D'$.
    	By definition of $p'$, it thus suffices to prove that every variable of $\beta$ belongs to the connected component of variable $v$ in the induced query $p_{g, B}$.
    	Consider a variable $u$ occurring in $\beta$.
    	Since $p$ is connected, there exists a path of variables from $u$ to $v$ in $p$.
    	If all atoms along that path are mapped within $B$, then it is direct that $u$ and $v$ belongs to the same connected component in $p_{g, B}$.
    	Otherwise, some atoms along that path may be mapped outside of $B$.
    	However, due to $D$ being tree-shaped, each time the image of this path exits the domain of $B$, it must re-enter it next time via the same domain element in $B$.
    	The definition of the induced $p_{g, B}$ then identifies the two corresponding variables that are both mapped on that domain element of the bag $B$.
    	The path from $u$ to $v$ thus translates into a (now shortened) path in $p_{g, B}$, as desired.
    	
    \end{itemize}
    
    We obtain the final $g'$ as the union of the constructed $g'$ on each connected component of $q$.
    We now need to guarantee that $g'$ defines an homomorphism from $q$ to the database induced by successors nodes of $r$ in $\Gamma'$.
    Note that we already verified that $g'$ defines an homomorphism of the EDB atoms of $q$ to $D'$.
    Therefore, we can simply introduce leaves as successor of $r$ that refer to these facts from $D'$.
    It remains to treat the case of IDB atoms, which we handle by introducing further inner nodes in $\Gamma'$.
    
    Assume there exists $\gamma(y)$ an IDB atom of $q$.
    By definition of $\Gamma$, there must be a successor $s$ of $r$ in $V_\Gamma$ such that $\ell(s) = \gamma(g(y))$.
    We introduce a successor $s'$ of $r$ in $V_{\Gamma'}$ with $\ell_{\Gamma'}(s') := \gamma(g'(y))$ and $\rho_{\Gamma'}(s') := \rho_\Gamma(s)$.
    To define $h_{\Gamma'}(s)$, we proceed as for the root case and don't reproduce the full construction here.
    We only highlight that, from the construction above we have $h'(g'(y)) = g(y)$, which echoes the property $h'(\bar a) = \bar a$, relevant for the root node of $\Gamma'$, and allows to pick the correct starting bag in the base case of the induction on connected components.
    The rest of the argument is the same.
    
    Notice that this process terminates as each branch in the resulting $V_{\Gamma'}$ will be no longer than the longest branch in $V_\Gamma$.
    The obtained $V_{\Gamma'}$ is therefore finite as desired, and $\Gamma'$ is the intended derivation: we have $\bar a \in \Pi(D')$.

\end{proof}

Starting with $D_0$, we can now produce a sequence $D_0, \dots, D_k, \dots$ of $D_0$-compatible databases with respective good typing functions $\type_0, \dots, \type_k, \dots$
We now prove that this sequence must be finite, that is the halting condition is reached in finitely many steps.

\begin{lemma}
\label{lem:halting}
    There exists $k \geq 0$ such that $D_k$ and $\type_k$ satisfy the halting condition.
\end{lemma}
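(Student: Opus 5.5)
We argue by a well-founded measure on the sequence $(D_k,\type_k)$ and show that it strictly decreases at every step where the halting condition fails. The natural candidate is the multiset, or better the set, of \emph{bag types that are realized in $D_k$ and contain a hard pattern}, ordered by $<$. The set of possible types is finite: concrete types are the finitely many bags of $D_0$, and abstract types are considered up to isomorphism and arise from finitely many rule bodies. By Lemma~\ref{lemma:order-on-types}, $<$ is a strict partial order on this finite set. So if we can show that some finite set-valued measure strictly decreases in a suitable well-founded order (for instance the Dershowitz--Manna multiset extension of $<$), the sequence must stop. By Lemma~\ref{lem:main-technical-lemma-lower-bound}, every $D_k$ stays $D_0$-compatible with a good typing, so the construction is well defined until it halts.

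The key step is to compare $H_k := \{ s \mid s \text{ realized in } D_k \text{ and } s \text{ contains a hard pattern}\}$ with $H_{k+1}$. In step $k$ we choose a $<$-maximal $B_0 \in H_k$ and unravel it. We then show three things.
\begin{itemize}
\item[(a)] $B_0 \notin H_{k+1}$. No bag of $D_{k+1}$ is typed $B_0$: every bag of type $B_0$ is either replaced, at the root, by a copy of $p_{h_0,B}$, or handled in the successor construction, which only introduces copies of abstract types $t' < B_0$. Since types are kept distinct (concrete versus abstract), no new bag receives type $B_0$.
\item[(b)] Every $s \in H_{k+1}$ satisfies $s \le t$ for some $t \in H_k$. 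This is exactly Lemma~\ref{lemma:types-are-simplifying}.
\item[(c)] Every $s \in H_{k+1}\setminus H_k$ is strictly below $B_0$. New types only arise from $B_0$-bags, as $p_{h_0,B}$ or as abstract $t' < B$. All other bags are copied with unchanged type.
\end{itemize}
From (a)--(c), $H_{k+1}$ is obtained from $H_k$ by removing $B_0$ and possibly adding types strictly below $B_0$. This is a strict decrease in the multiset extension of $<$, which is well-founded because $<$ is a well-founded strict order on a finite set.

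A subtlety arises because the order on sets needs care: a type $s<B_0$ added in step $k$ may later need to be unravelled itself, so we really do need the multiset (Dershowitz--Manna) order rather than plain cardinality. Another route is a lexicographic count over the finite poset, indexed by levels from the top. We also must ensure that $H_k$ never becomes empty while the halting condition fails. This holds because a $D_0$-compatible database is not $\alpha$-acyclic, so by Lemma~\ref{lem:acyclictohardpattern} and tree-likeness some bag contains a hard pattern. Since the measure cannot decrease forever, some $D_k$ must satisfy the halting condition.

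The main obstacle is (c), specifically arguing that new bags of type $B_0$ or of incomparable types cannot appear. One has to check that the root-bag case yields a type strictly below $B_0$. That argument is already in Lemma~\ref{lemma:types-are-simplifying}: isomorphism would mean $\Gamma$ uses the pattern. One must also check that the fresh copies in the "$\type(B)\neq B_0$" branch keep old types already present in $H_k$. I would make (c) explicit by strengthening the proof of Lemma~\ref{lemma:types-are-simplifying} to record that any type in $D_{k+1}$ not realized in $D_k$ is $<B_0$.
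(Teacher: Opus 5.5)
Your proof is correct and follows essentially the same route as the paper: the paper also tracks the set $\Type(D_k)$ of realized bag types that contain a hard pattern, uses Lemma~\ref{lemma:types-are-simplifying} and the finiteness of the set of types, and shows that this set strictly decreases in a well-founded order on sets of types. The only difference is the choice of order. The paper uses an ad hoc domination order $\prec$ whose second condition relies on the unraveled type $B_0$ being maximal. Your Dershowitz--Manna ordering treats the replacement of $B_0$ by strictly smaller types directly, so your argument does not actually need that maximality.
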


\begin{proof}
	By construction, each $D_{k+1}$ is obtained by replacing \emph{all} occurrences of a maximal bag type $t$ from $D$ by smaller types for $<$.
    The maximality assumption warrants that $t$ can never be obtained again during the construction.
	As there are only finitely many types of bags, this can only be done finitely many times and therefore the halting condition is reached in finitely many steps.
	
    To be more formal, one can proceed as follows: define the type of a bagged database $D$ as the set $\Type(D) := \{ \type(B) \mid B \in \mn{bag}(D), B \text{ contains a hard pattern} \}$.
	There are still finitely-many different such types since the number of types of bags is finite in the first place.
    We define an order on the types of bagged databases; denote $T \prec T'$ if it satisfies the following conditions:
    \begin{enumerate}
        \item for all $t \in T$, there exists $t' \in T'$ such that $t \leq t'$; and 
        \item there exists $t' \in T'$ such that for all $t \in T$, we have $t_{max} \not\leq t$.
    \end{enumerate}
	We verify that $\prec$ defines an order on types of bagged databases.
    Irreflexivity follows from Condition~2. 
    For transitivity, assume we have $T_1 \prec T_2$ and $T_2 \prec T_3$.
    \begin{enumerate}
        \item 
      Let $t_1 \in T_1$.
    By Condition~1 of $T_1 \prec T_2$, there exists a type $t_2 \in T_2$ with $t_1 \leq t_2$.
    By Condition~1 of $T_2 \prec T_3$, there exists a type $t_3 \in T_3$ with $t_2 \leq t_3$.
    By transitivity of $<$, we obtain $t_1 < t_3$ so that Condition~1 of $T_1 \prec T_3$ holds.
    	\item 
    Let $t_3$ be the type provided by Condition~2 of $T_2 \prec T_3$, that is for all $t \in T_2$, we have $t_3 \not\leq t$.
    Let $t_1$ be a type from $T_1$.
    By Lemma~\ref{lemma:types-are-simplifying}, there exists $t_2 \in T_2$ such that $t_1 \leq t_2$.
    Assume by contradiction that $t_3 \leq t_1$.
    Transitivity of $\leq$ yields $t_3 \leq t_2$, which is a contradiction.
    \end{enumerate}
    
    We now claim that $\Type(D_{k+1}) \prec \Type(D_k)$ holds.
    Condition~1 is guaranteed by Lemma~\ref{lemma:types-are-simplifying}.
    For Condition~2, set $t_{max}$ to be the unraveled type $B_0$ in the construction of $D_{k+1}$.
    Since this type has been chosen as maximal among realized types in $D_k$ that contain some hard pattern, and that the construction of $D_{k+1}$ replaces every occurrence of type $t_{max}$ with bags of strictly smaller types (and leaves other bags unchanged), Condition~2 also holds.
\end{proof}


\subsection*{Increasing distance between hard patterns}
\label{sec:step_2}

We now describe the second step in the proof of
Theorem~\ref{theorem:one-cycle-is-crucial}, which takes the database produced by
the first step -- containing a crucial set of hard patterns -- and produces
a database in which the hard patterns are pairwise far apart. Afterwards the
hard patterns also have the same behaviour at the database scale: there are
automorphisms mapping any one hard pattern to any other. We fix a desired distance~$N$ and run the
following procedure.

\paragraph{Setup.}
Let $D_M$ be the $D_0$-compatible tree-like database produced by
Lemma~\ref{lem:halting}, with good typing function $\type_M$. By the
halting condition there is a hard pattern type $t_0 = (B_0, P_0)$ such that
$\type_M^{-1}(t_0)$ is crucial for $\bar a \in \Pi(D_M)$.

Let $\mathcal B = \{B_1, \dots, B_k\}$ be the bags containing the hard
patterns of $\type_M^{-1}(t_0)$. We assign to each bag~$B_i$ a fresh
concrete type $t_i := B_i$ (updating $\type$ accordingly) and set
\[
  \mathcal T := \{ (t_1, P_1), \dots, (t_k, P_k) \},
\]
where each $P_i$ is the renaming of $P_0$ under the isomorphism
$B_0 \to t_i = B_i$. Thus $(t_i, P_i) = \type(P_i)$ is the hard pattern type of the pattern in~$B_i$.

\paragraph{Invariants.}
Throughout the procedure we maintain a tree-like database~$D$, a typing
function~$\type$ on~$D$, and a set~$\mathcal T$ of hard pattern types with:
\begin{enumerate}
  \item[(I)] $S_{\mathcal T} := \bigcup_{t \in \mathcal T} \type^{-1}(t)$
    is crucial for $\bar a \in \Pi(D)$;
  \item[(II)] $D$ is $D_0$-compatible;
  \item[(III)] $\type$ is a good typing function on~$D$.
\end{enumerate}

\paragraph{The procedure.}
The procedure consists of two nested loops.

\medskip
\noindent\textbf{Inner loop.} While $|\mathcal T| > 1$, pick a hard pattern
type $t = (t', P) \in \mathcal T$ and distinguish:
\begin{itemize}
  \item \textbf{Case~A:} some derivation of $\bar a$ from $\Pi$ in $D$ uses no
    hard pattern in $\type^{-1}(t)$. 
    \begin{itemize}
      \item Unravel $t$ in~$D$ with respect to
    such a derivation~$\Gamma$, obtaining~$D'$ with typing~$\type'$ and
    homomorphism $h' \colon D' \to D$.
    \item Update $D := D'$, $\type := \type'$,
    and $\mathcal T := \mathcal T \setminus \{t\}$.
    \end{itemize} 
  \item \textbf{Case~B:} every derivation of $\bar a$ from $\Pi$ in $D$ uses at
    least one hard pattern in $\type^{-1}(t)$. 
    \begin{itemize}
      \item Update $\mathcal T := \{t\}$.
    \end{itemize}
\end{itemize}

\noindent\textbf{Outer loop.} When the inner loop terminates,
$|\mathcal T| = 1$, say $\mathcal T = \{(t', P)\}$. Let $B_1, \dots, B_m$ be
the bags of type~$t'$ in~$D$.
\begin{itemize}
  \item If $m = 1$: the single hard pattern is crucial on its own.
    \textbf{Terminate.}
  \item If $m > 1$ and $\min_{i \neq j} \dist_D(B_i, B_j) \geq N$:
    \textbf{proceed to next section (merging).}
  \item If $m > 1$ and $\min_{i \neq j} \dist_D(B_i, B_j) < N$: 
  \begin{itemize}
      \item assign fresh
    concrete types $t_i := B_i$ for each~$i$ (updating $\type$ accordingly);
    \item set $\mathcal T := \{(t_1, P_1), \dots, (t_m, P_m)\}$ with each $P_i$
    the renaming of~$P$;
    \item \textbf{restart the inner loop.}
  \end{itemize}
\end{itemize}

Note that the outer loop has three cases. 
In the first case we are immediately done, since it already returns one crucial cycle.
In the second case we get a database and set of crucial cycles that are pairwise far apart, the desired output of this part of the proof. We can take this database and type, and proceed to the next step, described in the next subsection.
The last case is the actual loop, where we have not yet reached the required distance.

We now show the invariants are maintained and the procedure terminates.
Invariant~(I) is Lemma~\ref{lem:cruciality} below. For Invariants~(II)
and~(III): in Case~A they follow from the Step-1 unraveling preserving
$D_0$-compatibility and good typing
(Lemma~\ref{lem:unraveling-props}). 
Case~B does not modify the database and the
restart case of the outer loop only reassigns fresh concrete types to bags,
which keeps the typing good. Termination is Lemma~\ref{lem:termination}.

First we state some further properties about the unraveling of the previous subsection which will be useful for the proofs below.

\begin{lemma}[Properties of the unraveling]\label{lem:unraveling-props}
  Let $D'$ be the unraveling of a type~$t_i$ in a tree-like database~$D$
  with good typing function~$\type$, via the homomorphism
  $h' \colon D' \to D$ and typing function~$\type'$. Then:
  \begin{enumerate}
    \item\label{prop:iso-off}
      For every bag~$B'$ of~$D'$ whose image $h'(B')$ is \emph{not} a
      $t_i$-bag, the restriction $h'|_{B'}$ is an isomorphism onto~$h'(B')$.
      In particular, every bag of~$D'$ carrying a concrete type is such a
      fresh copy.
    \item\label{prop:nonincr}
      $h'$ maps every path of~$D'$ to a walk of~$D$ of no greater length;
      hence $\dist_D\!\big(h'(B'), h'(\hat B')\big) \le \dist_{D'}(B', \hat B')$
      for all bags $B', \hat B'$ of~$D'$.
    \item\label{prop:repl}
      Every bag~$B'$ of~$D'$ whose image $h'(B')$ is a $t_i$-bag, is a
      fresh copy of an abstract type $t' < t_i$ and
      $h_{t'}$ \emph{does not use}~$P_0$.
  \end{enumerate}
\end{lemma}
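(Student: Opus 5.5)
The three properties all follow by inspecting the construction of the unraveling $D'$ of $t_i$ in $D$ (root bag and further bags). The plan is a case analysis over the two ways a bag $B'$ of $D'$ can arise, proving all three statements together.

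\emph{Case 1.} $B'$ is introduced as a fresh copy of a bag $B$ of $D$ with $\type(B)\neq t_i$, either as the root or as a successor.
\emph{Case 2.} $B'$ is introduced as a copy of an abstract type $t'<B$ for a $t_i$-bag $B$, either the root copy of $p_{h_0,B}$ or a successor copy of some $t'<B$.

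\textbf{Points~\ref{prop:iso-off} and~\ref{prop:repl}.}
\begin{itemize}
\item In Case~1, $h'|_{B'}$ is defined as the map sending each fresh copy to its original. So it is an isomorphism onto $B=h'(B')$, and $B$ is not a $t_i$-bag.
\item In Case~2, $h'|_{B'}=h_{t'}\circ h_{B''}$, so $h'(B')\subseteq B$, which is a $t_i$-bag.
\end{itemize}
Here I must argue that the image of a Case~2 bag really counts as ``a $t_i$-bag''. Thus the dichotomy ``image is or is not a $t_i$-bag'' coincides exactly with Case~1 versus Case~2.

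For Point~\ref{prop:iso-off}, concrete types are only ever assigned in Case~1, since Case~2 assigns the abstract type $t'$. This gives the ``in particular'' clause.

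For Point~\ref{prop:repl}, the successor construction explicitly skips any $t'$ with $h_{t'}$ using $P_B$, and $P_B$ corresponds to $P_0$ under the unique isomorphism given by self-join freeness. For the root copy $p_{h_0,B}$, I reuse the argument of Lemma~\ref{lemma:types-are-simplifying}:
\begin{itemize}
\item $h_0$ does not use $P_B$, because $\Gamma$ uses no pattern of type $t$;
\item the inclusion of $p_{h_0,B}$ into $B$ is therefore not an isomorphism, which gives $t'<t_i$.
\end{itemize}

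\textbf{Point~\ref{prop:nonincr}.} By the tree structure, it suffices to show that $h'$ maps adjacent bags of $D'$ to bags of $D$ that are equal or adjacent. A walk of length $n$ in $D'$ then maps to a walk of length at most $n$ in $D$, with repeated bags collapsed. Adjacent bags of $D'$ arise only when a successor $B''$ of $B'$ is attached through a shared element $e$ and a chosen bag $B$ of $D$ with $h'(e)\in B$. Then $h'(B'')\subseteq B$, while $h'(B')$ lies in a bag containing $h'(e)$.

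\textbf{Main obstacle.} The difficulty is that the construction also attaches successors for bags $B$ that do not contain $h'(e)$, without connecting them. Such bags produce new connected components or disjoint copies. For these, I will argue that no path in $D'$ crosses between them, so the distance is infinite and the inequality is vacuous. For connected successors, $B$ and the bag $\hat B$ of $h'(B')$ both contain $h'(e)$. By Condition~2 of database graphs (connectedness of occurrences), $B$ and $\hat B$ are joined through bags all containing $h'(e)$, not necessarily adjacent.

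To handle this, I will define bag distance in $D$ via shared constants, taking one step whenever two bags share an element, consistently with how distance is used later. Under this definition the walk is non-increasing. Alternatively, I will restrict attention to the case where the construction only attaches successors along tree edges of $D$. I expect fixing the precise notion of $\dist$ to be the delicate point of the proof.
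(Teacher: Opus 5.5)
Your case analysis for Points 1 and 3 is the inspection the paper has in mind; its proof is just ``immediate by inspection''. That includes your reuse of the argument of Lemma~\ref{lemma:types-are-simplifying} for the root copy of $p_{h_0,B}$.

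The gap is in Point 2. You reduce it to the claim that adjacent bags of $D'$ are sent to equal or adjacent bags of $D$. That claim fails for this construction, and so does Point 2 itself if $\dist$ is read in the bag tree. Take two children $C_1,C_2$ of a bag of $D$, neither of type $t_i$, that both contain the element $a$ they share with their parent. A copy of $C_1$ then receives a fresh copy of $C_2$ as a direct successor, glued at the copy of $a$. So tree distance $1$ in $D'$ maps to tree distance $2$ in $D$. Your second fallback, attaching successors only along tree edges of $D$, changes the construction. It also breaks the proof of Lemma~\ref{lem:main-technical-lemma-lower-bound}, which needs every bag of $D'$ containing $g'(v_0)$ to have a successor copy of \emph{every} bag of $D$ containing $g(v_0)$.

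The missing observation is that Point 2 is about paths in the paper's sense: sequences of constants in which consecutive ones co-occur in a fact. Accordingly, $\dist$ between bags is the induced Gaifman distance between their sets of constants. This matches the later uses: ``distance $\ge d+1\ge 1$, hence pairwise disjoint'', and the comparison with Gaifman diameters of rule bodies.

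Under this reading the claim is immediate once you note that $h'$ is a well-defined homomorphism. Every gluing identifies $e$ with a copy of an element that has the same $h'$-image: either the copy of $h'(e)$, or the copy of some $e_i$ with $h_{t'}(e_i)=h'(e)$. A homomorphism sends two constants co-occurring in a fact to constants co-occurring in the image fact, hence to equal or neighbouring constants. No bag-level reasoning is needed, and the unglued successors need no separate treatment.

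Your ``one step per shared element'' metric would also make the inequality true. However, it places distinct bags sharing an element at distance $1$ rather than $0$, which is inconsistent with the disjointness argument used later. You should therefore commit to the constant-level metric.
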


\begin{proof}
  Immediate by inspection of the unraveling construction in the previous subsection.
\end{proof}

\begin{lemma}[Localization]\label{lem:localization}
  Let $D'$ be the unraveling of~$t_i$ in~$D$ via $h' \colon D' \to D$,
  both tree-like, and let $\Gamma'$ be a derivation of
  $\bar a$ from $\Pi$ in $D'$ that uses a hard pattern~$P$.
  Then~$P$ is contained in a single bag~$B'$ of~$D'$, and:
  \begin{enumerate}
    \item if $h'(B')$ is \emph{not} a $t_i$-bag, then $h'(P)$ is a hard
      pattern of~$D$ and \mbox{$\Gamma := h' \circ \Gamma'$} is a derivation
      of $\bar a$ from $\Pi$ in $D$ that uses~$h'(P)$;
    \item if $h'(B')$ is a $t_i$-bag, then $B'$ is a replacement bag and
      $h'|_{B'} = h_{t'}$ for some abstract type $t' < t_i$.
  \end{enumerate}
\end{lemma}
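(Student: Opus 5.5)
The argument unfolds the definition of \emph{use} and then follows the case split in the statement.

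\textbf{Localization.} Since $\Gamma'$ uses $P$, there is a node $v$ of $\Gamma'$, with rule body $q$ and homomorphism $g := h_{\Gamma'}(v)$, and a bag $B'$ of $D'$ with the following property. The induced CQ $q_{g,B'}$ contains a pattern $X'$, and $g_{B'}$ restricted to the variables of $X'$ is a bijection onto $P$. Because $g_{B'}$ maps $q_{g,B'}$ into $\mn{bag}(B')$, every constant of $P$ lies in $\adom(B')$, so $P$ is contained in the single bag $B'$. We also need $P$ to be a hard pattern of $B'$ itself, and not merely of $D'$. This holds because $D'$ is tree-like and adjacent bags share at most one constant (Condition~1 of database graphs). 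Hence any chord or covering fact among constants of $B'$ that are not all equal must come from $B'$ itself. The sole exception is a reflexive fact $R(a,\dots,a)$, and Condition~3 places such facts in every bag containing $a$.

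\textbf{Case 2.} Suppose $h'(B')$ is a $t_i$-bag. Lemma~\ref{lem:unraveling-props}(\ref{prop:repl}) then says that $B'$ is a fresh copy of an abstract type $t' < t_i$. By construction, $h'|_{B'} = h_{t'} \circ h_{B'}$. After identifying $B'$ with $t'$ through the isomorphism $h_{B'}$, this is exactly $h_{t'}$, and $B'$ is a replacement bag. This case is pure bookkeeping.

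\textbf{Case 1.} Suppose $h'(B')$ is not a $t_i$-bag. By Lemma~\ref{lem:unraveling-props}(\ref{prop:iso-off}), $h'|_{B'}$ is an isomorphism onto the bag $h'(B')$ of $D$. It therefore maps the pattern $P$ of $B'$ to a pattern $h'(P)$ of $h'(B')$, and, by the bag-locality remark above, to a hard pattern of $D$. Next we form $\Gamma := h'\circ\Gamma'$ by applying $h'$ to every label and composing every node homomorphism with $h'$. Since $h'$ is a homomorphism $D'\to D$ fixing $\bar a$ (by $D_0$-compatibility as in Lemma~\ref{lem:main-technical-lemma-lower-bound}), $\Gamma$ is a derivation of $\bar a$ from $\Pi$ in $D$.

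It remains to show that $\Gamma$ uses $h'(P)$ at the node $v$, and this is the main obstacle. We must compare the induced CQ $q_{h'\circ g,\,h'(B')}$ in $D$ with $q_{g,B'}$ in $D'$.

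First consider identifications. Rule (i) identifies $y,y'$ along a detour that leaves $B'$ and re-enters through a single shared constant. Under $h'$, such a detour becomes a walk that leaves $h'(B')$ and re-enters it, by Lemma~\ref{lem:unraveling-props}(\ref{prop:nonincr}) and tree-likeness of $D$. Its endpoints coincide, so the same identification happens in $D$. Conversely, new identifications in $D$ could only merge variables whose images in $h'(B')$ are equal. Since $h'|_{B'}$ is injective, those variables were already equal in $B'$, and hence were already identified via $g_{B'}$.

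Second consider dropped atoms. An atom kept in $D'$, that is, mapped into $B'$, is also kept in $D$. An atom dropped in $D'$ but kept in $D$ would map into $h'(B')$ without its $g$-image lying in $B'$. We rule this out using the same local injectivity, together with the shared-constant structure. In the worst case such an extra atom is harmless anyway: it only adds atoms, and by the bag-locality remark these cannot create a chord or a covering fact on $X'$, because $P$ is hard in $B'\cong h'(B')$.

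Consequently $X'$ survives as a pattern of $q_{h'\circ g,h'(B')}$, and $(h'\circ g)_{h'(B')}$ restricted to $X'$ equals $h'|_{B'}\circ g_{B'}$, which is a bijection onto $h'(P)$. So $\Gamma$ uses $h'(P)$. I expect the careful matching of rule (i) identifications across $h'$ to be the only delicate point.
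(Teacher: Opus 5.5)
Your route is the paper's: you locate $P$ in the bag $B'$ that witnesses the use, get Case~2 from Lemma~\ref{lem:unraveling-props}(\ref{prop:repl}), and in Case~1 compose $\Gamma'$ with $h'$, using that $h'|_{B'}$ is an isomorphism onto a bag of $D$. Your bag-locality remark is a correct addition that the paper omits.

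The gap is the step you flag yourself, and your argument for it fails. A rule-(i) detour $y,y_1,\dots,y_n,y'$ avoids $\adom(B')$, but nothing forces its $h'$-image to avoid $\adom(h'(B'))$. The map $h'$ is injective only on $B'$, and the unraveling creates other preimages of constants of $h'(B')$. As the construction is written, at each constant of $B'$ it attaches a fresh copy of \emph{every} bag of $D$ containing the image of that constant, including a copy of $h'(B')$ itself. If the detour runs through such a copy, condition~(a) fails in $D$. The identification of $y$ and $y'$ is then lost, and the detour's atoms are no longer dropped.

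Here is a concrete case. Let $h'(B')$ be a non-$t_i$ bag consisting of $A_1(e,p_1)$, $A_2(p_1,p_2)$, $A_3(p_2,p_3)$, $A_4(p_3,e)$, $R_1(e,b_1)$, $R_2(b_1,b_2)$, $R_3(b_2,e)$. Write primed constants for $B'$, and let $B''$ be the copy of $h'(B')$ attached to $B'$ at $e'$. Take a Boolean goal rule with the self-join free body $A_1(y,z_1)\wedge A_2(z_1,z_2)\wedge A_3(z_2,z_3)\wedge A_4(z_3,y')\wedge R_1(y',w_1)\wedge R_2(w_1,w_2)\wedge R_3(w_2,y)$. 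Let $g$ send the $A$-atoms into $B'$ and the $R$-atoms into $B''$, with $g(y)=g(y')=e'$.
\begin{itemize}
\item In $q_{g,B'}$, the detour $y',w_1,w_2,y$ identifies $y$ with $y'$ and the $R$-atoms are dropped. So $\Gamma'$ uses the chordless cycle $P=e',p_1',p_2',p_3'$.
\item Under $h'\circ g$, every variable lands in $h'(B')$, so nothing is identified or dropped. Thus $q_{h'\circ g,h'(B')}$ is a chordless $7$-cycle with $y\neq y'$, and none of its patterns maps bijectively onto $h'(P)$.
\item By bag-locality, $h'(P)$ lies in no other bag of $D$.
\end{itemize}
So $\Gamma$ does not use $h'(P)$.

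Your two other sub-claims are also wrong as stated. Equal images in $B'$ do not make variables identified in $q_{g,B'}$. Atoms dropped in $D'$ but kept in $D$ cannot be ruled out; the $R$-atoms above are an instance. Both directions are harmless, though. Pattern variables have pairwise distinct images, and a chord or covering atom on $X'$ in $q_{h'\circ g,h'(B')}$ would map to one on $h'(P)$ in $D$.

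Only lost identifications do damage, and injectivity of $h'|_{B'}$ cannot exclude them. The paper's proof derives the Case~1 use from exactly that injectivity, so it does not close this point either. What is missing is a property of the unraveling itself: the detours realized by $\Gamma'$ must never pass through other preimages of $\adom(h'(B'))$. The construction as written does not guarantee this.
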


\begin{proof}
  Since $D'$ is tree-like, a hard pattern is always fully contained in a single
  bag~$B'$.
  By definition of ``uses'' there is a derivation node~$v$ with
  $h^{\Gamma'}_v$ mapping~$P$ bijectively onto a chordless cycle/tetra of the induced
  CQ $q_{h,v}$. All variables of that pattern map into~$B'$.

  Let $\Gamma := h' \circ \Gamma'$ for the derivation
  obtained from $\Gamma'$ by keeping the same tree, rules, and head/leaf
  labels, and replacing each body homomorphism $h^{\Gamma'}_v$ by
  $h' \circ h^{\Gamma'}_v$. Since $h'$ is a homomorphism with
  $h'(\bar a) = \bar a$, this is again a derivation of $\bar a$ from $\Pi$ in $D$.

  \emph{(1)} If $h'(B')$ is not a $t_i$-bag then $h'|_{B'}$ is an
  isomorphism by
  Property~\ref{lem:unraveling-props}(\ref{prop:iso-off}), so $h'(P)$ is
  an isomorphic copy of~$P$, hence again a chordless cycle/tetra. At
  node~$v$ the homomorphism $h^{\Gamma}_v = h' \circ h^{\Gamma'}_v$
  maps $h'(P)$ bijectively onto the corresponding pattern of $q_{h,v}$, since
  $h'|_{B'}$ is injective. Thus $\Gamma$ uses~$h'(P)$.

  \emph{(2)} If $h'(B')$ is a $t_i$-bag then, by
  Property~\ref{lem:unraveling-props}(\ref{prop:repl}), $B'$ is a
   bag with $h'|_{B'} = h_{t'}$ for some abstract type
  $t' < t_i$.
\end{proof}

\begin{definition}[Strong automorphism]\label{def:strong}
  Let $D$ be a $D_0$-compatible tree-like database with typing $\type$ and
  composed homomorphism $\delta \colon D \to D_0$. An automorphism
  $\alpha \colon D \to D$ is \emph{strong} if $\type(\alpha(B)) = \type(B)$
  for all bags~$B$ and $\delta \circ \alpha = \delta$.
\end{definition}

The Step-3 merge relies on the following two facts about the database~$D$
produced by Step~2, with surviving type~$t$ and crucial bags
$B_1, \dots, B_m$ of type~$t$.

\begin{lemma}[Concrete survivor]\label{lem:concrete-survivor}
  The surviving type~$t$ is concrete. Consequently every crucial bag~$B_j$
  is a fresh copy of one fixed bag $\hat B \subseteq D_0$: the restriction
  $\delta|_{B_j} \colon B_j \to \hat B$ is an isomorphism, and the
  isomorphisms $h_j \colon B_0 \to B_j$ can be chosen so that
  $\delta|_{B_j} \circ h_j$ is one fixed isomorphism $B_0 \to \hat B$,
  independent of~$j$. In particular $\delta|_{B_j} \circ h_j
  = \delta|_{B_k} \circ h_k$ for all $j, k$.
\end{lemma}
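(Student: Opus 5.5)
The plan is to show that the surviving type $t$ was assigned in a restart of the outer loop or in the Setup, and that it never got replaced by an abstract type afterwards. Every time the inner loop (re)starts, $\mathcal T$ consists only of freshly assigned concrete types $t_i := B_i$. The inner loop shrinks $\mathcal T$ in two ways. In Case~A it removes the unraveled type. In Case~B it reduces $\mathcal T$ to a singleton $\{t\}$ with $t$ taken from $\mathcal T$. So the element that survives in $\mathcal T$ always belongs to the set assigned at the most recent restart, and is therefore concrete. The remaining point is that unraveling some other type $t'' \neq t$ in Case~A never creates new bags of type $t$ with a different shape. By Lemma~\ref{lem:unraveling-props}(\ref{prop:iso-off}), every bag of $D'$ that carries a concrete type is a fresh isomorphic copy, via $h'$, of a bag of that same type in $D$. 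Newly created replacement bags carry abstract types only (Property~\ref{prop:repl}). Hence all bags of type $t$ are, up to the composed homomorphisms, copies of the single bag $t$ was assigned to.

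Next I will make the relation to $D_0$ precise by induction along the whole run of Steps~1 and~2. The invariant is this: for every concrete type $s$ realized in the current database, there is a bag $\hat B_s$ of $D_0$ such that, for every bag $B$ with $\type(B)=s$, the restriction $\delta|_B$ is an isomorphism $B \to \hat B_s$. Moreover, composing $\delta|_B$ with the canonical isomorphism $s \to B$ gives one fixed isomorphism $s \to \hat B_s$.

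For the base case, in $D_0$ we have $\type_0(B)=B$ and $\delta=\mathsf{Id}$. When fresh concrete types $t_i := B_i$ are assigned at a restart, each new type has exactly one bag, so the invariant holds trivially for it with $\hat B_{t_i} := \delta(B_i)$. This is an isomorphic image because $B_i$ had a concrete type $t'$ before, and the old invariant applies to $t'$. For the unraveling step, the new homomorphism is $\delta' = \delta \circ h'$. By Property~\ref{prop:iso-off}, $h'|_{B'}$ is the copy-isomorphism onto a bag of the same type, and it commutes with the canonical isomorphisms from the type. Composing the two isomorphisms preserves the invariant. Applied to the surviving $t$, this yields $\hat B := \hat B_t$ and isomorphisms $h_j \colon B_0 \to B_j$ with $\delta|_{B_j}\circ h_j$ independent of $j$. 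Here $h_j$ is the canonical isomorphism composed with the isomorphism $B_0 \to t$ from the Setup renaming.

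I expect the main obstacle to be the uniqueness and compatibility of these isomorphisms. I will handle it with self-join freeness: between two isomorphic self-join free bags, the isomorphism is unique, as already noted after Definition~\ref{definition:order-on-types}. The isomorphisms $\delta|_{B_j}\circ h_j \colon B_0 \to \hat B$ all go between the same two bags, so they coincide automatically. This means the bookkeeping only has to show that each composite is an isomorphism, not that the composites agree.
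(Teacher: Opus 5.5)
This is essentially the paper's proof. Concreteness follows because every (re)start of the inner loop installs fresh concrete types, while Case~A only deletes and Case~B only restricts. Property~(\ref{prop:iso-off}) of Lemma~\ref{lem:unraveling-props}, applied along the unraveling chain, then does the rest. The one difference is in how $j$-independence is obtained: the paper forces it by defining $h_j := (\delta|_{B_j})^{-1}\circ g$ for a fixed $g\colon B_0\to\hat B$, whereas you keep the canonical isomorphisms and get agreement from uniqueness of isomorphisms between self-join free bags.

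A caveat that applies equally to the paper's proof concerns the Setup. Your restart step uses that $B_i$ previously carried a concrete type, so at the Setup the Step-1 halting type would have to be concrete, and nothing guarantees this: abstract replacement bags are mapped via the possibly non-injective $h_{t'}$. Strictly, the invariant should therefore be anchored at $D_M$, using the composite homomorphism into $D_M$ instead of $\delta$. That still yields the $j$-independence of $\delta|_{B_j}\circ h_j$ used later, though not an isomorphism onto a bag of $D_0$.
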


\begin{proof}
  The outer-loop reset assigns fresh \emph{concrete} types to the surviving
  bags, and the inner loop only removes types (Case~A) or restricts to one
  (Case~B).
  It never turns a concrete type into an abstract one. Hence the
  final survivor type~$t$ is concrete. 
  By Property~\ref{lem:unraveling-props}(\ref{prop:iso-off}) applied along the
  unraveling chain, a concrete-typed bag is a fresh copy of~$\hat B$, so
  $\delta|_{B_j} \colon B_j \to \hat B$ is an isomorphism for every crucial
  bag~$B_j$. Fix the isomorphism $g \colon B_0 \to \hat B$ coming from the
  identification $t = (\hat B, \cdot)$ with $B_0$, and set
  $h_j := (\delta|_{B_j})^{-1} \circ g$. Then $\delta|_{B_j} \circ h_j = g$
  for all~$j$, which is the claimed common isomorphism.
\end{proof}

\begin{lemma}[Cruciality Preservation]\label{lem:cruciality}
  Invariant~(I) holds at every point during the procedure.
\end{lemma}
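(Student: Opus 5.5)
We argue by induction over the steps of the procedure: Invariant~(I) holds after the setup, and each step (Case~A, Case~B, the outer-loop reset) preserves it.

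\emph{Setup.} By the halting condition, $\type_M^{-1}(t_0)$ is crucial for $\bar a \in \Pi(D_M)$. The setup only renames the types of the bags $B_1,\dots,B_k$ containing these patterns into fresh concrete types $t_i$. Hence $S_{\mathcal T}$ is exactly the old set $\type_M^{-1}(t_0)$, and (I) holds.

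\emph{Case~B and the reset.} Both leave $D$ unchanged. Case~B replaces $\mathcal T$ by $\{t\}$, and its defining condition says precisely that $\type^{-1}(t)$ is crucial. So $S_{\{t\}}$ is crucial, and (I) holds. The reset only splits the single type into fresh concrete types $t_1,\dots,t_m$, one per bag. The union of their preimages is again the old set of patterns, so (I) is preserved.

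\emph{Case~A.} This is the substantive case. Let $D'$ be the unraveling of $t$ with respect to $\Gamma$, and let $h'\colon D'\to D$. The new set is $S'=\bigcup_{s\in\mathcal T\setminus\{t\}}\type'^{-1}(s)$, and we show it is crucial. Take any derivation $\Gamma'$ of $\bar a$ in $D'$ and form $\Gamma:=h'\circ\Gamma'$ as in Lemma~\ref{lem:localization}, which is a derivation in $D$. By (I), $\Gamma$ uses some pattern $X\in S_{\mathcal T}$, so we need to pull this use back to $\Gamma'$. For a node $v$ and bag $B$ of $D$ with $X$ used by $(h'\circ h^{\Gamma'}_v)$ via $q_{h'\circ h_v,B}$, we find the bag $B'$ of $D'$ with $h'(B')=B$ through which $h^{\Gamma'}_v$ passes. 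There are two subcases. (a) $B$ is not a $t$-bag. Then Lemma~\ref{lem:unraveling-props}(\ref{prop:iso-off}) makes $h'|_{B'}$ an isomorphism, and $\type'(B')=\type(B)$. The pattern $h'^{-1}(X)\cap B'$ therefore has a type in $\mathcal T\setminus\{t\}$, and $\Gamma'$ uses it. (b) $B$ is a $t$-bag, so $X\in\type^{-1}(t)$. Then $B'$ is a replacement bag whose map $h_{t'}$ does not use $P_0$, by Lemma~\ref{lem:unraveling-props}(\ref{prop:repl}). We need to show that $\Gamma$ could not have used $X$ in this way, or else to pick a different pattern. Our first attempt is to argue directly that $\Gamma$ cannot use $X$ here, since the bijection onto $X$ would factor through $h_{t'}$, which does not use $P_0$. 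If that fails, we choose $\Gamma'$-induced uses more carefully. We must then still show that $\Gamma$ uses some pattern of a type other than $t$, and that this use lifts along case (a).

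\emph{Main obstacle.} The hard step is comparing induced CQs. We must show that $q_{h'\circ h_v,B}$ and $q_{h_v,B'}$ agree on the pattern variables, up to the isomorphism $h'|_{B'}$ in case (a), and through $h_{t'}$ in case (b). The identification rule (i) of Definition~\ref{def:inducedCQ} could, in principle, identify more variables in $D$ than in $D'$, because paths leaving $B$ in $D$ might return through a bag that is distinct in $D'$. We would handle this with tree-likeness of both databases: a path leaving a bag must return through the same shared element. Since $h'$ maps paths to walks (Lemma~\ref{lem:unraveling-props}(\ref{prop:nonincr})) and is injective on $B'$ in case (a), the identifications coincide on the pattern variables. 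In case (b), the fact that the composed identifications still yield a bijection onto $X$ would contradict $h_{t'}$ not using $P_0$, so $\Gamma$ must use some other pattern, and that pattern lifts by case (a).
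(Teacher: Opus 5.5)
Your proposal is correct and follows the paper's proof essentially step for step: setup, Case~B and the reset are immediate, and in Case~A you pull a pattern use by $h'\circ\Gamma'$ back to $\Gamma'$, lifting it through the isomorphism $h'|_{B'}$ when $B$ is not a $t$-bag and contradicting Lemma~\ref{lem:unraveling-props}(\ref{prop:repl}) when it is. The induced-CQ comparison you single out as the main obstacle is the point the paper settles by appeal to Lemma~\ref{lem:localization} and tree-likeness: the pattern atoms must all land in one bag $B'$ of $D'$, within which the $D$-identifications are realized. Your direct contradiction in case~(b) already suffices, so the hedged fallback is unnecessary.
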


\begin{proof}
  We show the invariant holds at setup and is preserved by each inner-loop
  step and by the restart case of the outer loop.

  \emph{Setup.} Initially
  $\mathcal T = \{(t_1, P_1), \dots, (t_k, P_k)\}$, where the bags of
  type~$t_i$ are exactly the bags of $\type_M^{-1}(t_0)$ (one each) and
  $P_i$ is the corresponding pattern. Hence $S_{\mathcal T}$ is exactly the
  set of patterns $\type_M^{-1}(t_0)$, which is crucial by
  Definition~\ref{def:haltingcondition}.

  \emph{Inner loop.} Assume $S_{\mathcal T}$ is crucial for
  $\bar a \in \Pi(D)$ before the iteration, and let $t = (t', P)$ be the
  chosen hard pattern type.

  \emph{Case~B.} Every derivation of $\bar a$ from $\Pi$ in $D$ uses a pattern in
  $\type^{-1}(t)$, so $\type^{-1}(t)$ is crucial. Since $D$ is unchanged and
  $\mathcal T$ becomes $\{t\}$, $S_{\mathcal T}$ is crucial.

  \emph{Case~A.} We unravel~$t$, obtaining $D'$ with $h' \colon D' \to D$
  and $\mathcal T$ becoming $\mathcal{T'} = \mathcal T \setminus \{t\}$. Suppose for
  contradiction that some derivation~$\Gamma'$ of $\bar a$ from $\Pi$ in $D'$ uses
  no hard pattern from $S_{\mathcal{T'}}$. Let
  $\Gamma := h' \circ \Gamma'$ be a derivation of $\bar a$ from $\Pi$ in $D$. By the
  hypothesis $S_{\mathcal T}$ is crucial, so $\Gamma$ uses some hard
  pattern~$X$ of a type $s = (s', P') \in \mathcal T$, lying in a
  bag~$B$ of~$D$ of type~$s'$.

  \emph{If $s' \neq t'$:} the node~$v$ of $\Gamma$ witnessing the usage
  of~$X$ corresponds to node~$v$ of $\Gamma'$, whose homomorphism lands
  in a bag~$B'$ of~$D'$ with $h'(B') = B$. Since $B$ is of type $s'$ and $s' \neq t'$,  $h'|_{B'}$ is
  an isomorphism (Property~\ref{lem:unraveling-props}(\ref{prop:iso-off})).
  Hence $\Gamma'$ uses $(h'|_{B'})^{-1}(X)$, a pattern of type
  $s \in \mathcal T \setminus \{t\}$, contradicting the choice
  of~$\Gamma'$.

  \emph{If $s' = t'$:} then $X$ is an instance of~$P$ in a $t'$-bag~$B$.
  By Lemma~\ref{lem:localization}, the pattern of~$\Gamma'$ that maps onto
  $X$ lies in a single bag~$B'$ with $h'(B')$ a $t'$-bag. Hence $B'$
  is a bag with $h'|_{B'} = h_{t''}$ for some $t'' < t'$. But
  $\Gamma$ using~$X$ through this node forces $h_{t''}$ to biject a
  chordless cycle/tetra of~$B'$ onto~$P$, i.e.\ $h_{t''}$ uses~$P$,
  contradicting Property~\ref{lem:unraveling-props}(\ref{prop:repl}).

  In either case we reach a contradiction, so every derivation of
  $\bar a$ from $\Pi$ in $D'$ uses a pattern from $S_{\mathcal T \setminus \{t\}}$ making $S_{\mathcal T \setminus \{t\}}$  crucial.

  \emph{Outer loop (restart case).} The restart reassigns fresh concrete
  types to the bags of the surviving type without changing~$D$ or the set of
  hard patterns under consideration, so $S_{\mathcal T}$ is unchanged and
  cruciality is preserved.
\end{proof}

\begin{lemma}[Distance Increase]\label{lem:distance}
  Let $D$ and $\mathcal T$ be the database and type set at the \emph{start}
  of an outer-loop iteration, and let $D^*$ and
  $\mathcal T^* = \{(t', P)\}$ be the database and surviving type at its
  end. Let
  $d := \min (\dist_D(s', s''))$ over bag-types $s', s''$ with $(s',P'), (s'', P'') \in \mathcal{T}$. Then
  \[
    \dist_{D^*}(t', t') \;\ge\; 2d + 1.
  \]
\end{lemma}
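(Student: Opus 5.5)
The plan is to track every bag of the surviving type $t'$ in $D^*$ back to the database $D$ at the start of the outer-loop iteration, and to show that the path between two such bags must make a detour through a bag that was unraveled. Let $H\colon D^*\to D$ be the composition of the homomorphisms $h'$ produced by the Case~A unravelings performed during the inner loop. Case~B and the outer-loop restart do not change the database, so they contribute nothing to $H$.

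At the start of the iteration the types in $\mathcal T$ are fresh concrete types, each realized by exactly one bag; write $B_i$ for the bag realizing $t'$. All other types $(s',P')\in\mathcal T$ are removed in Case~A, and their bags are unraveled. By Lemma~\ref{lem:unraveling-props}(\ref{prop:iso-off}) applied along the chain, every $t'$-bag of $D^*$ is a fresh copy of $B_i$, so $H$ maps it isomorphically onto $B_i$. Now fix two distinct $t'$-bags $C_1,C_2$ of $D^*$ and the unique path $\pi$ between them in the tree. By Lemma~\ref{lem:unraveling-props}(\ref{prop:nonincr}), $H(\pi)$ is a closed walk in $D$ from $B_i$ back to $B_i$ whose length is at most $|\pi|$.

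The first key step is to show that $H(\pi)$ visits some bag $B_j$ whose type was unraveled, with $j\neq i$. The argument inspects the unraveling construction. A bag of type different from the unraveled type $B_0$ is copied faithfully, and for each pair (shared element, neighbouring bag) at most one successor is created. Hence, if a walk avoids the unraveled bags, it lifts uniquely, and two distinct copies of $B_i$ cannot both be reached from the same copy without passing through a replacement bag. I would prove this by induction over the successive unravelings. The resulting invariant is: two bags of $D^*$ with the same $H$-image that are joined by a path whose image avoids all unraveled bags must coincide.

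Granting this, $H(\pi)$ goes from $B_i$ to $B_j$ and back. Since both $t'$ and the type of $B_j$ lie in $\mathcal T$, each leg has length at least $d$, so $|H(\pi)|\ge 2d$. This gives the first $2d$ of the bound.

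The main obstacle is the extra $+1$. I would show that at the point where $\pi$ passes over $B_j$ it cannot use a single bag: it must visit two distinct bags of $D^*$ whose images lie in $B_j$. The reason is that distinct copies arise exactly at the branching created by the several preimages $e_1,\dots,e_n$ of $h'(e)$ under $h_{t''}$, or by several replacement types $t''<B_j$. The path therefore enters one replacement bag and leaves through another, and these two are adjacent, which costs an additional edge. Then $|\pi|\ge |H(\pi)|+1\ge 2d+1$. If this local analysis fails, the fallback is to argue directly that $H(\pi)$ has a repeated step at $B_j$, namely a stay of length at least one in the bag-graph walk. Taking the minimum over all pairs $C_1,C_2$ gives $\dist_{D^*}(t',t')\ge 2d+1$.
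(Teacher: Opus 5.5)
\paragraph{Gap in the $+1$ step.} The step that is supposed to produce the extra $+1$ fails, and the reason is that you measure distance along the tree of bags.

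The branching that creates distinct copies of the $t'$-bag need not occur between two different replacement bags. It can occur inside a single replacement bag. A replacement bag $R$ is a copy of an abstract type $t''<B_j$, and $h_{t''}$ may send two distinct elements $g_1\neq g_2$ of $R$ to the same element $f$ of $B_j$; these are exactly the several preimages $e_1,\dots,e_n$ in the construction. Suppose $f$ is the element through which $B_j$ connects towards $B_i$. Then the unraveling attaches a separate copy of the whole branch leading to $B_i$ at $g_1$ and at $g_2$.

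The tree path $\pi$ between the two resulting $t'$-bags has $R$ as its only bag over $B_j$, and $H(\pi)$ reads $\dots,C,B_j,C,\dots$ with no stay. So neither your main claim (two distinct adjacent bags over $B_j$) nor your fallback (a stay in the bag walk) holds. In fact, in the bag-tree metric these two copies are at distance exactly $2\,\dist_D(B_i,B_j)$, which can equal $2d$. The bound you are proving is therefore false for the notion of distance your argument uses. Separately, sibling replacement bags hanging off the same parent copy are not adjacent in the tree; the path between them runs through that parent.

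\paragraph{Missing idea: the $+1$ is an element-level cost.} Distance has to be read on constants, in the Gaifman graph. This matches the paper: paths are sequences of constants, and bag distance $\ge 1$ is later used to mean that bags are disjoint. Under that reading, the route between the two copies has three legs:
\begin{itemize}
\item from the first copy to $g_1$, of length at least $d$;
\item from $g_1$ to $g_2$ inside $R$, of length at least one, precisely because $g_1\neq g_2$;
\item from $g_2$ to the second copy, of length at least $d$.
\end{itemize}
The two outer bounds follow from non-increase of distances under $H$ together with $H(g_k)=f\in B_j$.

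This is how the paper argues. It proceeds by induction over the individual Case~A unravelings. When $h'(B_1)\neq h'(B_2)$ it uses the inductive hypothesis. When $h'(B_1)=h'(B_2)$ it decomposes the path as $d+1+d$ around the distinct preimages. Your global bag-level invariant for locating an unraveled $B_j$ on the image walk is reasonable. To finish, however, you must redo the length accounting on elements, with the extra step coming from distinct attachment points rather than from two distinct bags over $B_j$.
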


\begin{proof}
  The iteration is a sequence of inner steps, each unraveling a type
  (Case~A) or restricting to one (Case~B). Case~B leaves the database
  unchanged, so it suffices to track Case~A steps (the edge case of the inner loop immediately applying Case~B is fine, since then the outer loop can terminate with a singular hard pattern). 
  Note that $t'$ is never unraveled, otherwise it could not be in $\mathcal{T}^*$.

  We argue by induction over the completed Case~A steps that the minimum
  distance between distinct $t'$-bags is at least~$2d+1$, where~$d$ is
  measured \emph{once}, at the start of the iteration. Throughout we use
  that unraveling is distance-non-decreasing
  (Property~\ref{lem:unraveling-props}(\ref{prop:nonincr})): the image of a
  path is a walk of no greater length, so distances between bags of a
  retained type never drop below their start-of-iteration value. In
  particular, $d$ measured at the start remains a valid lower bound on the
  distance between any two bags of distinct types of~$\mathcal T$ at every
  intermediate database.

  \emph{Base case.} At the start, the surviving type's bags are the
  freshly assigned single bags, so any two distinct $t'$-bags are at
  distance~$\infty \ge 2d+1$ (there are none yet, or the bound is vacuous).

  \emph{Inductive step.} Let $D'$ be the unraveling of type $s'$ in a database $D$ obtained in some Case~A step and $h' \colon D' \to D$ the corresponding homomorphism.
  Furthermore let $B_1, B_2$ be two distinct
  $t'$-bags in~$D'$. Since $t'$ was not unravelled, both $B_1$ and $B_2$ are fresh copies and $h'$ is an
  isomorphism on each.

  \emph{If $h'(B_1) \neq h'(B_2)$:} these are distinct $t'$-bags in~$D$, and
  $h'$ maps the $B_1$-$B_2$ path to a walk between them, so by
  Property~\ref{lem:unraveling-props}(\ref{prop:nonincr}) and the
  inductive hypothesis
  $\dist_{D'}(B_1, B_2) \ge \dist_D(h'(B_1), h'(B_2)) \ge 2d+1$.

  \emph{If $h'(B_1) = h'(B_2)$:} both map to a $t'$-bag $\hat B$, so at
  least one is a copy created because $\hat B$'s subtree is attached to an
  $s'$-bag~$B$ via an element~$e$ whose replacement has several preimages,
  each receiving a copy of the subtree. We decompose the $B_1$-$B_2$ path:
  \begin{itemize}
    \item \emph{$B_1$ to the replacement of~$B$.} Before unraveling,
      $h'(B_1)$ was at distance at least~$d$ from~$B$ (both have types
      in~$\mathcal T$, using the start-of-iteration bound above), and
      copying preserves the subtree below the replacement, so this segment
      has length $\ge d$.
    \item \emph{Within the replacement of~$B$.} The copies attach at
      distinct preimages $e_1 \neq e_2$ of~$e$ (else $B_1 = B_2$). Crossing
      between distinct elements costs $\ge 1$ edge.
    \item \emph{Replacement of~$B$ to~$B_2$.} Symmetrically $\ge d$.
  \end{itemize}
  Hence $\dist_{D'}(B_1, B_2) \ge d + 1 + d = 2d+1$. \qedhere
\end{proof}


\begin{lemma}\label{lem:termination}
  The procedure terminates.
\end{lemma}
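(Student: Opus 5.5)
We need termination of the Step-2 procedure, which has an inner loop and an outer loop. I would treat the two loops separately and bound each by a potential.

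\emph{Inner loop.} Every iteration either falls into Case~B, after which $|\mathcal T| = 1$ and the inner loop exits immediately, or into Case~A, which removes the chosen type $t$ from $\mathcal T$. So $|\mathcal T|$ strictly decreases in every Case~A step. Each inner loop therefore runs at most $|\mathcal T|$ iterations, where $|\mathcal T|$ is the number of bags of the surviving type at the previous restart. This number is finite: $D$ may be infinite, but by Invariant~(I) the set $S_{\mathcal T}$ is crucial. I would fix one finite derivation $\Gamma$ of $\bar a$ in $D$. Case~A then only ever needs types whose bags are actually touched, so I would restrict the freshly assigned concrete types to bags that contain a pattern used by $\Gamma$, or by some derivation. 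Alternatively, I would argue that finiteness of $\mathcal T$ is preserved because the reset creates one type per surviving bag, and the surviving bags at a restart are finitely many. The second route is the one I expect to be delicate, since an unraveling may produce infinitely many copies. If so, I fall back on the first route: it is harmless to reassign types only to bags reachable within the image of a fixed derivation, and cruciality is only about derivations.

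\emph{Outer loop.} Every outer iteration that does not terminate or move to merging has $\min_{i\neq j}\dist(B_i,B_j) < N$ at its end. By Lemma~\ref{lem:distance}, the minimum distance $d_k$ between surviving crucial bags at the end of the $k$-th iteration satisfies $d_{k+1} \ge 2d_k+1$. Here I use that, after the reset, $d$ at the start of iteration $k+1$ equals $d_k$, since the reset does not change $D$. Since $d_0 \ge 0$, we get $d_k \ge 2^k-1$. So after $\lceil \log_2(N+1)\rceil$ outer iterations the distance is at least $N$, and the outer loop necessarily exits through one of the first two cases.

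\emph{Main obstacle.} Combining the two bounds gives termination, provided each inner loop starts with finite $\mathcal T$. That finiteness is the step I expect to be the main obstacle. I would try first to show it by tracking that the crucial bags of the surviving type can be restricted to those lying within a bounded distance of the root along some fixed derivation. This works because unravelings copy subtrees only along branches used in constructing derivations, as in Lemma~\ref{lem:main-technical-lemma-lower-bound}.
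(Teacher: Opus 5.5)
Your termination argument is the paper's own. Case~A strictly shrinks $\mathcal T$ and Case~B ends the inner loop. Across restarts, Lemma~\ref{lem:distance} gives $d_{k+1}\ge 2d_k+1$, so $d_k\ge 2^k-1$ and the distance reaches $N$ after logarithmically many restarts. The paper takes the finiteness of $\mathcal T$ for granted: it simply writes $\{(t_1,P_1),\dots,(t_k,P_k)\}$ and $B_1,\dots,B_m$. Your worry is legitimate, since the unravelings produce potentially infinite databases with many copies of the same bag.

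Your proposed repair, however, does not work.

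\emph{Restricting to one derivation.} Suppose you give fresh concrete types only to bags whose patterns are used by one fixed derivation $\Gamma$, or that lie near the root along $\Gamma$. Then $\mathcal T$ becomes finite, but Invariant~(I) fails. Cruciality quantifies over \emph{all} derivations of $\bar a$, and a different derivation may use only patterns in bags you left untyped. The final merge needs $S$ to be crucial (Lemma~\ref{lem:single-crucial}), so it breaks.

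\emph{Restricting to patterns used by some derivation.} This does preserve~(I), because every pattern used by a derivation survives the restriction. But it gives no finiteness.

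Any fix must preserve cruciality for all derivations. One option is to pass to a finite connected subtree of the database tree that contains the root and still supports a derivation. This keeps~(I): a derivation into the subtree computes the same induced CQs, hence the same usage. It also does not decrease distances. You would still have to check that it is compatible with Step~3, whose strong automorphisms (Lemma~\ref{lem:subtree-iso}) are built from the unraveling structure of~$D$. As written, the finiteness step is an unresolved gap in your proposal; the paper's proof shares it implicitly rather than resolving it.
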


\begin{proof}
  \emph{Inner loop.} Each iteration decreases $|\mathcal T|$ by at least
  one: Case~A removes a type, Case~B collapses to a single type. So the
  inner loop terminates with $|\mathcal T| = 1$.

  \emph{Outer loop.} Let $d_j$ denote the minimum pairwise distance between
  distinct bags of the surviving type at the of the $j$-th outer loop iteration. 
  Let $d_0$ be the distance at the beginning of the procedure, so in the worst case $d_0 = 0$.
  By Lemma~\ref{lem:distance} the next surviving type satisfies
  $d_{j+1} \ge 2 d_j + 1$ hence $d_j \geq 2^{j-1}$.
  The outer loop exits as soon as the
  surviving type has a single bag ($m = 1$) or the minimum distance
  reaches~$N$. Since $d_j$ grows without bound, the latter occurs after
  finitely many restarts. Therefore the procedure terminates.
\end{proof}

\begin{lemma}[Bag Automorphism]\label{lem:subtree-iso}
  Let $B, \hat B$ be bags of~$D$ with $\delta(B) = \delta(\hat B)$, and let
  $\varphi_0 \colon B \to \hat B$ be the isomorphism with
  $\delta|_{\hat B} \circ \varphi_0 = \delta|_B$. Then there is a strong
  automorphism $\alpha \colon D \to D$ with $\alpha|_B = \varphi_0$, and  $\alpha(B) = \hat B$, and $\delta \circ \alpha = \delta$.
\end{lemma}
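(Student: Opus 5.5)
We build $\alpha$ by unravelling both copies of $D$ from the two bags $B$ and $\hat B$ in parallel. The underlying observation is that $D$ is tree-like and was produced from $D_0$ by the unravelling steps of the first step and by the Case~A steps. In each such step, the successors of a bag are created uniformly. For every element $e$ of a bag $B'$ and every bag $B''$ of the previous database, the construction adds either a fresh copy of $B''$, or one copy of each admissible abstract type $t'<B''$ for each preimage of $h'(e)$. Which successors get added depends only on the $h'$-image of $e$ and on the type data, and never on where $B'$ sits in the tree. Iterating over the chain of unravellings, the subtree of $D$ hanging off an element $e$ of a bag in a given direction is determined up to a strong isomorphism by $\delta(e)$, the type of the bag, and the composed homomorphisms. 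This is the statement we use, and we prove it first.

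\emph{Step 1 (local determinacy).} By induction along the chain $D_0,D_1,\dots$, we show the following. Let $B_1,B_2$ be bags of $D$ of the same type, and let $\varphi\colon B_1\to B_2$ be an isomorphism with $\delta|_{B_2}\circ\varphi=\delta|_{B_1}$. Then for every $e\in\adom(B_1)$ there is a type-preserving, $\delta$-preserving bijection between the successor bags of $B_1$ attached at $e$ and those of $B_2$ attached at $\varphi(e)$. Moreover, the induced isomorphisms on these bags again satisfy the hypothesis. The reason is that the rule producing successors in each unravelling step reads only $h'(e)$, the bag types, and the homomorphisms $h_{t'}$, and all of these are preserved under composition with $\delta$. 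Self-join freeness gives that the isomorphism between isomorphic bags is unique, so $\varphi$ is forced once the types match. Reassigning fresh concrete types in the outer loop is harmless, because fresh concrete types are only ever given to bags with $\delta$-isomorphic images. The same determinacy has to hold at the predecessor as well, that is, when moving from $B_1$ towards the root. We get this by viewing $D$ as an undirected tree: the neighbours of a bag at $e$ consist of its predecessor together with its successors, and the unravelling treats the edge to the predecessor in the same way as every other neighbour edge with the same image.

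\emph{Step 2 (back-and-forth extension).} Start with $\alpha|_B:=\varphi_0$. Then extend $\alpha$ by breadth-first search over the undirected tree of bags. When $\alpha$ is already defined on a bag $B'$ with image $\alpha(B')$, Step 1 supplies a matching between the neighbours of $B'$ at $e$ and the neighbours of $\alpha(B')$ at $\alpha(e)$. We extend $\alpha$ along each matched pair. Because the tree is acyclic and Condition~2 of database graphs holds, each element belongs to a connected set of bags, so the definitions agree on shared elements and $\alpha$ is well defined. The inverse is built the same way starting from $\varphi_0^{-1}$, which shows that $\alpha$ is a bijection and hence an automorphism. Strongness follows directly: types are preserved bag by bag, and $\delta\circ\alpha=\delta$ holds because every local isomorphism commutes with $\delta$.

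\emph{Main obstacle.} The hard part is the root and the predecessor direction in Step 1. The root bag of $D$ is special, since it contains $\bar a$ and comes from $p_{h_0,B}$ rather than from the neighbour rule. As a result, a neighbourhood reached through the root need not look like the corresponding neighbourhood elsewhere. The first thing we would try is to use $D_0$-compatibility together with the fact that the construction attaches successors for \emph{every} pair $(e,B'')$ with $h'(e)\in B''$, whichever direction that pair came from. The neighbourhood multiset at $e$ should then be determined by $\delta(e)$ alone, and the root merely contributes one more neighbour bag with the matching image. If this breaks down at the root, the fallback is to strengthen the lemma's hypothesis. We would only require $\alpha$ to fix the root, and we would apply the lemma only to crucial bags at distance at least $N$ from it. This suffices for the Step~3 merge.
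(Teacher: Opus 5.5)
You rely on the same uniformity as the paper: each unravelling attaches successors according to data that depends only on images and types. But you organise it as a single back-and-forth in the final $D$, driven by a determinacy claim (Step~1) for \emph{every} pair of same-type bags with a $\delta$-compatible isomorphism, proved by induction along the whole chain. That invariant is not preserved, and your remark that the outer-loop resets are harmless is exactly where it breaks. A reset gives the bags $B_1,\dots,B_m$ of the surviving type pairwise distinct fresh types, each realised by exactly one bag. Suppose a same-type pair $X\neq Y$ with $\delta$-compatible isomorphism $\varphi$ has $B_1$ among the neighbours of $X$ at $e$ but not among those of $Y$ at $\varphi(e)$. This is the typical situation, since unravelling produces many $\delta$-twins of $X$. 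Then no type-preserving matching exists, because $t_1$ is realised by $B_1$ alone. So Step~1 fails at that stage, and since later unravellings index successors by bags of the previous database, the induction cannot be carried past it. (Relatedly, the statement needs $\type(B)=\type(\hat B)$, since strong automorphisms preserve types; you add this silently.) The paper's route sidesteps the problem for the pairs that are actually needed. It passes to the step at which the images of $B$ and $\hat B$ first differ. There $B^1$ and $\hat B^1$ are two copies of one and the same bag, so their attachments are indexed by literally the same data and no homogeneity of the earlier database is required. It builds $\alpha_1$ at that step and transports it through the later unravellings using $h\circ\alpha_1=h$. For the crucial bags this step comes after the last reset, because they all descend from the single bag that received the surviving type, so no later type refinement interferes. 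Your Step~1 has to be restricted in the same way, to pairs whose images agree at some stage after the last reset.

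The second gap is the one you flag yourself, and neither of your suggestions closes it. A bijection between tree neighbours, as used in your Step~2, cannot exist when one of the bags is the root. This already happens in $D_1$. If the root $R$ of $D_0$ is not of the unravelled type, $R$ stays the root, and the construction attaches to it at some $e\in\adom(R)$ a fresh copy $R'$ of $R$ itself. Then $R$ and $R'$ have the same image and type and $\varphi_0$ fixes $e$. But $R'$ has exactly one more tree neighbour at $e$ than $R$ (namely $R$), and both counts are finite. So $\alpha$ cannot be a tree automorphism. What is needed is to match, at each element $e$, the whole set of bags containing $e$. Every bag there spawns at $e$ copies of every bag containing the image of $e$, so each image occurs infinitely often. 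The part hanging off the predecessor can then be exchanged with a successor part of the same image, via a back-and-forth on the countable database. Your sentence ``the root merely contributes one more neighbour bag'' is precisely the step that fails with finite counts. Likewise, ``determined by $\delta(e)$ alone'' fails after resets, as explained above. The fallback changes the statement and does not suffice, for three reasons:
\begin{itemize}
\item an automorphism fixing the root generally cannot send $B$ to $\hat B$ at all, since their paths from the root may pass through bags with different images;
\item the distance-increasing procedure does not keep crucial bags away from the root;
\item the merge deliberately takes $B_1$ to be the root bag when a crucial bag is the root, after which Lemma~\ref{lem:single-crucial} needs $\beta_{i,j}$ with $i=1$ or $j=1$ on neighbourhoods of these bags.
\end{itemize}
The paper's proof does not treat the root separately either; it simply asserts that two bags with the same image spawn identical structures. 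So this case is where a genuine argument is still owed.
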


\begin{proof}
  Recall that $D$ is obtained from~$D_0$ by a finite chain of unravelings,
  and $\delta$ is the composite homomorphism back to~$D_0$. Since
  $\delta(B) = \delta(\hat B)$ but $B \neq \hat B$, there is a step of the
  chain at which $B$ and~$\hat B$ first become distinct. Let
  $\delta = \delta_2 \circ h \circ \delta_1$, where $\delta_1 \colon D \to
  D_1$ maps into the database $D_1$ just after that step and $h$ is the
  unraveling homomorphism of the step itself. We have
  \[
    \delta_1(B) \neq \delta_1(\hat B), \qquad
    h(\delta_1(B)) = h(\delta_1(\hat B)) .
  \]
  Thus in~$D_1$ the bags $B^1 := \delta_1(B)$ and $\hat B^1 := \delta_1(\hat
  B)$ are two \emph{distinct} bags with the same image under~$h$, two
  preimages of one bag of the database below.

  We first build a strong automorphism $\alpha_1$ of~$D_1$ from $B^1$
  to $\hat B^1$. The attachment rule of the unraveling indexes the successor
  bags created at a bag~$B'$ by pairs $(e, B^*)$ with $e \in \adom(B')$
  and $B^*$ an adjacent bag in $h(D_1)$. Also every condition governing whether and how a successor is attached, which case in the unraveling to pick for $\type(B^*)$, which
  abstract types $t' < B^*$ occur, whether the image of~$e$ occurs in
  $B^*$, and the number of preimages there, depend only on the
  \emph{image} $h(e)$ of~$e$, not on~$B'$ itself. Hence two bags
  with the same image spawn identical structures. Let $\alpha_1$ be the isomorphism from $B_1$ to $\hat B_1$ obtained from this.
  By construction $\alpha_1$ preserves $h$-images, $h \circ \alpha_1 = h$.
  It never changes the image of any element, only which preimage is chosen.
  Since every attachment decision depends only on $h$-images, $\alpha_1$
  preserves all attachment data and is therefore a strong automorphism
  of~$D_1$ with $\alpha_1(B^1) = \hat B^1$.

  Now transport $\alpha_1$ up through the remaining steps. The steps
  packaged into $\delta_1 \colon D \to D_1$ are themselves unravelings, and
  $\alpha_1$ preserves $h$-images ($h \circ \alpha_1 = h$), so the same
  attachment-depends-only-on-images argument lifts $\alpha_1$ to a strong
  automorphism $\alpha$ of~$D$ with $\delta_1 \circ \alpha = \alpha_1 \circ
  \delta_1$ and $\delta \circ \alpha = \delta$. On~$B$, both $\alpha$
  and~$\varphi_0$ are the unique $\delta$-preserving isomorphism $B \to \hat
  B$, so $\alpha|_B = \varphi_0$.
\end{proof}

\begin{corollary}[Crucial-bag automorphism]\label{lem:crucial-swap}
  Let $B_j, B_k$ be crucial bags of~$D$ (both of the concrete survivor
  type~$t$), and let $h_j \colon B_0 \to B_j$ and $h_k \colon B_0 \to B_k$ be the isomorphisms from Lemma~\ref{lem:concrete-survivor}. Then there is a strong automorphism
  $\alpha_{j,k} \colon D \to D$ with
  \begin{enumerate}
    \item $\alpha_{j,k}(B_j) = B_k$;
    \item $\alpha_{j,k} \circ h_j = h_k$ on $\adom(B_0)$, hence
      $\alpha_{j,k}(P_j) = P_k$.
  \end{enumerate}
\end{corollary}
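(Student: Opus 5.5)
The corollary follows by applying Lemma~\ref{lem:subtree-iso} to the pair $B := B_j$, $\hat B := B_k$. By Lemma~\ref{lem:concrete-survivor}, both $\delta|_{B_j}\colon B_j \to \hat B_D$ and $\delta|_{B_k}\colon B_k \to \hat B_D$ are isomorphisms onto the same bag of $D_0$, written $\hat B_D$ here to avoid a clash with the notation of Lemma~\ref{lem:subtree-iso}. In particular $\delta(B_j) = \delta(B_k)$, so the hypothesis of Lemma~\ref{lem:subtree-iso} holds. The isomorphism $\varphi_0\colon B_j \to B_k$ with $\delta|_{B_k}\circ\varphi_0 = \delta|_{B_j}$ then exists and is unique, namely
\[
\varphi_0 = (\delta|_{B_k})^{-1}\circ \delta|_{B_j}.
\]
Lemma~\ref{lem:subtree-iso} yields a strong automorphism $\alpha_{j,k}$ of $D$ with $\alpha_{j,k}|_{B_j} = \varphi_0$ and $\alpha_{j,k}(B_j) = B_k$, which is Point~1.

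For Point~2, we compute on $\adom(B_0)$:
\[
\alpha_{j,k}\circ h_j = \varphi_0 \circ h_j = (\delta|_{B_k})^{-1}\circ \delta|_{B_j}\circ h_j .
\]
By Lemma~\ref{lem:concrete-survivor}, $\delta|_{B_j}\circ h_j = g = \delta|_{B_k}\circ h_k$. Substituting gives $(\delta|_{B_k})^{-1}\circ\delta|_{B_k}\circ h_k = h_k$. Since $P_j = h_j(P_0)$ and $P_k = h_k(P_0)$, by the definition of the renamings $P_i$ of $P_0$ in the setup, applying both sides to $P_0$ yields $\alpha_{j,k}(P_j) = P_k$.

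The only subtle point is the case $B_j = B_k$, where Lemma~\ref{lem:subtree-iso} (whose proof assumes the bags are distinct) does not literally apply. Here we take $\alpha_{j,k}$ to be the identity. It is trivially strong, and $h_j = h_k$ because both are defined as $(\delta|_{B_j})^{-1}\circ g$, so Point~2 holds as well. Beyond this, no real obstacle is expected. All the substantive work, namely building the automorphism and matching the isomorphisms through $\delta$, is already contained in Lemmas~\ref{lem:concrete-survivor} and~\ref{lem:subtree-iso}.
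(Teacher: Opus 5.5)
Your proposal is correct and is essentially the paper's proof. You set $\varphi_0 := (\delta|_{B_k})^{-1}\circ\delta|_{B_j}$, extend it to a strong automorphism via Lemma~\ref{lem:subtree-iso}, and use $\delta|_{B_j}\circ h_j = \delta|_{B_k}\circ h_k$ from Lemma~\ref{lem:concrete-survivor} to get $\alpha_{j,k}\circ h_j = h_k$ and hence $\alpha_{j,k}(P_j) = P_k$; your separate handling of $B_j = B_k$ by the identity is a harmless tidying-up that the paper leaves implicit, where it simply takes $\beta_{1,1}$ to be the identity later.
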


\begin{proof}
  By Lemma~\ref{lem:concrete-survivor}, $B_j$ and~$B_k$ are fresh copies of the
  same bag $\hat B \subseteq D_0$, so $\delta(B_j) = \delta(B_k) = \hat B$,
  with $\delta|_{B_j} \circ h_j = \delta|_{B_k} \circ h_k$ (a fixed
  isomorphism $B_0 \to \hat B$). Let $\varphi_0 := (\delta|_{B_k})^{-1} \circ
  \delta|_{B_j} \colon B_j \to B_k$, so that $\delta|_{B_k} \circ \varphi_0
  = \delta|_{B_j}$ and $\varphi_0 \circ h_j = h_k$. By
  Lemma~\ref{lem:subtree-iso} there is a strong automorphism $\alpha_{j,k}$ of~$D$
  with $\alpha_{j,k}|_{B_j} = \varphi_0$. Then $\alpha_{j,k}(B_j) = B_k$ and
  $\alpha_{j,k} \circ h_j = \varphi_0 \circ h_j = h_k$ on $\adom(B_0)$, whence
  $\alpha_{j,k}(P_j) = \alpha_{j,k}(h_j(P_0)) = h_k(P_0) = P_k$.
\end{proof}

%
%
%
%

\subsection*{Merging hard patterns}\label{sec:step_3}

We now describe the third and final step, which merges the distant crucial
hard patterns produced by the procedure of the previous subsection into a single crucial pattern, completing the proof of Theorem~\ref{theorem:one-cycle-is-crucial}.


The \emph{diameter} of a graph is the maximum, over all pairs of vertices,
of their distance (the length of a shortest path between them). Let $d$ be
the maximum diameter of the Gaifman graph of a rule body of~$\Pi$. We run
the distance-increasing procedure of the previous subsection with
$N := d + 1$, so that it returns a database~$D$ whose crucial bags
$B_1, \dots, B_m$ (all of the surviving concrete type~$t'$) are pairwise at
distance at least~$d + 1$.
Let $P_j := h_j(P_0)$ be the hard pattern in~$B_j$ and
$S := \{P_1, \dots, P_m\}$. By Lemma~\ref{lem:cruciality}, $S$ is crucial for
$\bar a \in \Pi(D)$.

If $m = 1$ the single hard pattern is already crucial and we are done, so
assume $m > 1$. By Lemma~\ref{lem:crucial-swap}, for each pair $j,k$ of indices with $1 \leq j, k \leq m$, there is a strong automorphism $\beta_{j,k} \colon D \to D$ with
$\beta_{j,k}(B_j) = B_k$ and $\beta_{j,k}(P_j) = P_k$.

We choose the target bag~$B_1$ as follows: if some crucial bag is the root
bag of~$D$, we let $B_1$ be that bag. Otherwise, if some crucial bag shares an element
with the root bag, we let $B_1$ be that bag (at most one can, as the $B_j$
are pairwise at distance at least $1$). Otherwise we choose $B_1$
arbitrarily.

We define the \emph{merged database} $D'$ by collapsing each crucial bag
onto~$B_1$ using the automorphisms. Let
\[
  \pi \colon \adom(D) \to \adom(D'), \qquad
  \pi(a) :=
  \begin{cases}
    \beta_{j,1}(a) & \text{if } a \in \adom(B_j) \text{ for some }
                      j \in [m], \\
    a               & \text{otherwise.}
  \end{cases}
\]
This is well defined: the crucial bags are pairwise at distance
$\geq d + 1 \geq 1$, hence pairwise disjoint. 
For $j = 1$ we have $\beta_{1,1} = \mn{id}$ so $\pi$ is the identity on $\bar a$.
The merged database is
\[
  D' := \{ R(\pi(a_1), \dots, \pi(a_r)) \mid R(a_1, \dots, a_r) \in D \}.
\]
By construction $\pi \colon D \to D'$ is a (surjective) homomorphism. We
write $X := P_1$ for the hard pattern in~$B_1$. It is the unique pattern
of~$S$ that survives in~$D'$ as all others are identified onto it.
It is easy to see that $D'$ is a bagged database, but it is no longer tree-like.

\begin{lemma}\label{obs:derivable}
  $\bar a \in \Pi(D')$.
\end{lemma}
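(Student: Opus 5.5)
The plan is to push a derivation through the homomorphism $\pi$. By Invariant~(II), $D$ is $D_0$-compatible, so $\bar a \in \Pi(D)$. Fix a derivation $\Gamma = (V,E,\ell,\rho,h)$ of $\bar a$ from $\Pi$ in $D$. Define $\pi\circ\Gamma$ as follows: keep the same tree and the same rule labels, replace every label $P(\bar b)$ by $P(\pi(\bar b))$, and replace every homomorphism $h_v$ by $\pi \circ h_v$. We then verify the three conditions of a derivation in turn.

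First, the root condition. The root label becomes $\mn{goal}(\pi(\bar a))$, so we must show $\pi(\bar a) = \bar a$. By point~1 of $D_0$-compatibility, $\bar a$ lies in the root bag of $D$. We use the choice of $B_1$ for this. If the root bag is a crucial bag, it is $B_1$, and $\pi = \beta_{1,1} = \mn{id}$ there. If some crucial bag meets the root bag, that bag is $B_1$. Elements of $\bar a$ outside every crucial bag are left fixed by the definition of $\pi$. Elements of $\bar a$ inside a crucial bag $B_j$ can only lie in $B_1$, again by this choice of $B_1$ and because the crucial bags are pairwise disjoint. So they are fixed as well.

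Second, the leaf condition. Every leaf $\ell(v) \in D$ is mapped to $\pi(\ell(v)) \in D'$, directly by the definition of $D'$.

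Third, the inner-node condition. For each inner node $v$ with rule body $q$, the map $\pi \circ h_v$ is a homomorphism from $q$ to the database formed by the (relabelled) successor labels, because $\pi$ acts homomorphically on all facts. Its head is sent correctly, since $\pi(h_v(\bar x)) = \pi(\bar b)$. Consistency between the label of a child IDB node and the corresponding atom is preserved, because the same map $\pi$ is applied to both.

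The main obstacle is a sanity check rather than a real difficulty. The argument above only uses that $\pi$ is a function on $\adom(D)$ compatible with facts, and that it fixes $\bar a$. Well-definedness of $\pi$ is already established from the disjointness of the crucial bags. Well-definedness of $D'$ needs nothing further, since $D'$ is simply defined as the image of $D$ under $\pi$. So the only delicate point is the claim $\pi(\bar a)=\bar a$. It relies on the remark that at most one crucial bag can share an element with the root bag: the crucial bags are at pairwise distance at least $d+1\ge 1$, hence disjoint and not both adjacent through a shared element in a way that would put parts of $\bar a$ into two of them. A distance bound of $2$ would make this immediate; if only distance $\geq 1$ is available, I would instead argue directly that $\bar a \subseteq \adom(B_{\mathrm{root}})$, and that any two crucial bags meeting the root bag would be at distance at most $2 \le d+1$ through the root, contradicting the distance bound when $d \ge 1$. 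Neither the automorphisms $\beta_{j,1}$ nor cruciality is needed for this lemma; those are reserved for showing that $\{X\}$ is crucial.
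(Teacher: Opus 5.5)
Your route is the paper's: $\pi\colon D\to D'$ is a homomorphism, derivations transfer along homomorphisms, and $\pi(\bar a)=\bar a$ follows from the choice of $B_1$. The paper justifies the needed fact, that at most one crucial bag meets the root bag, only by noting that the crucial bags are at pairwise distance at least $1$.

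You are right that disjointness alone does not give this, since two crucial bags could meet the root bag at two different elements. However, your fallback does not fix it as written, for two reasons. First, the distance must be the element-level one, because that is how distance $\geq 1$ yields disjointness. Under it, two bags meeting the root bag at elements $e\neq e'$ are only guaranteed to be within $\dist(e,e')$ of each other. That bound depends on distances inside the root bag, not on $2$ ``through the root''. Second, even a bound of $2$ would contradict the lower bound $d+1$ only for $d\geq 2$, not for $d\geq 1$ as you state.

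The repair is to use the actual bound $N=d+1$. The root bag is a copy of the EDB-part of a goal-rule body, or of an induced CQ of one. So any two of its elements that are connected inside it are at distance at most $d<d+1$, and therefore cannot lie in two different crucial bags.

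Also note that for Boolean $\Pi$, which is all Theorem~\ref{theorem:non-testable} needs, $\bar a$ is the empty tuple and the issue does not arise.
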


\begin{proof}
  Since $\pi \colon D \to D'$ is a homomorphism and homomorphisms preserve
  derivations, $\bar a \in \Pi(D)$ gives $\pi(\bar a) \in \Pi(D')$. As argued
  above $\pi(\bar a) = \bar a$, so $\bar a \in \Pi(D')$.
\end{proof}

Next we consider two interesting properties that the merged database has. They help us ensure that derivations are still similar to $D$, even though we are no longer tree-like.

\begin{lemma}[New cycles are long]\label{obs:long-cycles}
  Every cycle of~$D'$ that is not the image under~$\pi$ of a cycle of~$D$
  has length at least $d + 1$. Consequently 
  the patterns a rule body can use in~$D'$ are exactly the $\pi$-images of patterns already present in~$D$.
\end{lemma}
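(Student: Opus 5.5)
We argue about the Gaifman graph, i.e.\ the neighbor relation on constants. Let $C = c_0,\dots,c_n$ be a cycle of $D'$ with $c_0=c_n$, and fix for each consecutive pair $c_i,c_{i+1}$ a fact of $D$ whose $\pi$-image witnesses the neighborship. The goal is to lift $C$ to a cycle of $D$, or to show that $n \ge d+1$.

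The key structural fact is how $\pi$ acts. It is the identity outside $\bigcup_j \adom(B_j)$. On each $B_j$ it is the restriction of the strong automorphism $\beta_{j,1}$, and this restriction is an isomorphism $B_j \to B_1$. Hence every fact of $D'$ is either an untouched fact of $D$ or the $\pi$-image of a fact inside some crucial bag.

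We lift $C$ edge by edge, starting from a preimage of $c_0$, and try to choose preimages $a_0,\dots,a_n$ in $D$ with $a_i,a_{i+1}$ neighbors. Whenever $C$ passes through $\adom(B_1)$, the preimage may sit in any $B_j$. We choose $j$ consistently along each maximal segment of $C$ inside $\adom(B_1)$. This is possible because each $B_j$ is an isomorphic copy of $B_1$ under $\pi$, and neighborships inside $B_1$ pull back to $B_j$. Between segments, the lift leaves the crucial bags and walks through untouched facts. If the lift closes up, with $a_n=a_0$, then $C$ is the $\pi$-image of a closed walk of $D$. Since $\pi$ is injective on untouched constants and on each individual $B_j$, that walk is a cycle of $D$, and $C$ is its image.

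If the lift does not close, the walk in $D$ goes from a constant in some $B_j$ to the corresponding constant in a different $B_k$ with $j\neq k$. These are two crucial bags at distance at least $N=d+1$. Because $\pi$ does not increase path length, the lifted walk has length at most $n$. We use the bag-graph distance fixed in the previous subsection: passing between bags at bag-distance $\ge d+1$ requires at least $d+1$ Gaifman steps in a tree-like database, because consecutive bags share at most one constant. So $n\ge d+1$. The main obstacle is this comparison between bag-distance and Gaifman path length. A path leaving a bag must pass through a shared constant to reach the adjacent bag, and each such step costs at least one edge. We also need to rule out shortcut facts spanning several bags, which Condition~1 of database graphs excludes.

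For the consequence, let $h$ be a homomorphism from a rule body $q$ into $D'$. The image of any chordless cycle or tetra of $q_{h,v}$ lies within a connected piece of diameter at most $d$. Any cycle in that image therefore has length at most $d$, so by the first part it is the $\pi$-image of a cycle of $D$. For tetras, every covering fact is a $\pi$-image, and lifting via a single $\beta_{j,1}$ gives a tetra in $D$. Chordlessness is preserved both ways: $\pi$ is bijective on the relevant constants, and any extra chord in $D'$ would come from a fact of $D$ by construction, so it would already appear in $D$.
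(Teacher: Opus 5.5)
Your overall route matches the paper's: a new cycle must cross, in $D$, between two distinct crucial copies, and these are at distance at least $d+1$. The lifting step, however, has a gap.

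Your description of the facts of $D'$ omits the \emph{mixed} facts $\pi(F)$, where $F$ meets some $B_j$ with $j\neq 1$ in one constant and otherwise lies outside the crucial bags. These are the attachments of everything hanging off $B_j$. They are what create new cycles, and they fix the copy index at every entry into and every exit from a segment of $C$ inside $\adom(B_1)$. So a segment is constrained at both ends, and your lift can get stuck in the middle of $C$: it enters a segment from $B_j$ and must leave it towards $B_k$. Your dichotomy ``closes up / ends in a different copy'' does not cover this case.

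The missing ingredient is the tree-likeness of $D$. If a path of $D$ has both endpoints in $\adom(B_j)$ and its interior avoids $\adom(B_j)$, then the endpoints coincide. Indeed, the interior lives in one component of the tree minus $B_j$. Both endpoints then occur in the bag $s$ of that component adjacent to $B_j$, and $|\adom(B_j)\cap\adom(s)|\le 1$.

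With this fact, split $C$ into maximal excursions outside $\adom(B_1)$. Each lifts to a path of $D$ from some $B_j$ to some $B_k$, since its interior has unique preimages. If $j=k$, its ends coincide; in a cycle this forces $C$ to meet $\adom(B_1)$ in a single constant, and then $C$ really is the image of a cycle of $D$. If $j\neq k$, that excursion alone has length at least $\dist_D(B_j,B_k)\ge d+1$. This is what the paper's ``runs from one crucial copy towards another'' presupposes.

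Note also that $\dist_D$ between bags is measured along paths of constants; this is how the paper gets ``distance $\ge 1$, hence disjoint''. So no conversion from bag-graph distance is needed. The conversion you state is in fact false, since a single constant may occur in an arbitrarily long chain of bags.

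Your derivation of the consequence contains a false step: ``diameter at most $d$, therefore every cycle in the image has length at most $d$''. A chordless cycle of length $L$ already has diameter $\lfloor L/2\rfloor$. Moreover, a self-join free body consisting of a chordless $6$-cycle plus one variable adjacent to all six has Gaifman diameter~$2$. So a lower bound of $d+1$ on new cycles does not prevent a body's chordless cycle from being mapped bijectively onto a new cycle.

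The paper's one-line justification makes the same leap from diameter to length, so it does not settle this either. A sound version compares against the \emph{length} of body patterns. For instance, run the distance-increasing procedure with $N$ larger than the number of variables of any rule body; the procedure terminates for every $N$.

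Your tetra case has the same problem as your cycle lift. ``Lifting via a single $\beta_{j,1}$'' requires the excursion argument above to guarantee that all covering facts come from the same copy.
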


\begin{proof}
  A cycle of~$D'$ not coming from a single cycle of~$D$ must pass $B_1$.
  It leaves~$B_1$ through an element~$e$, traverses a path
  that in~$D$ runs from one crucial copy towards another, and returns
  to~$B_1$. Since distinct crucial bags are at distance at least $d + 1$
  in~$D$ and $\pi$ does not shorten paths outside~$B_1$, the returning
  portion has length at least $d + 1$. A pattern used by a rule body is the
  image of a chordless cycle or tetra in the body, whose Gaifman diameter is
  at most~$d$. Hence the only patterns a rule body uses in~$D'$ are images
  of patterns of~$D$.
\end{proof}

\begin{lemma}[Bounded return]
  \label{obs:bounded-return}
  Let $p$ be a rule body of~$\Pi$ (Gaifman diameter at most~$d$) and $g$ a
  homomorphism from~$p$ to~$D'$. If $g$ maps two variables into the same
  element of~$B_1$, or more generally if $g$ leaves~$B_1$ through an
  element~$e$ and returns to~$B_1$, it can only return through the same
  element~$e$. 
\end{lemma}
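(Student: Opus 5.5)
The plan is a contradiction argument. It compares the $D'$-distance between the exit element $e$ and the re-entry element $e'$ with the $D$-distance of any path between them, and uses that the crucial bags are pairwise at distance at least $d+1$.

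\textbf{Reduction to a short path.} Suppose $g$ leaves $B_1$ through $e$ and returns through $e' \neq e$. Since $p$ is connected with Gaifman diameter at most $d$, we can take variables $y,y'$ with $g(y)=e$ and $g(y')=e'$. The chosen witness is then a path $y=y_0,y_1,\dots,y_n=y'$ in $p$ with $n \le d$, whose inner variables are mapped outside $\adom(B_1)$. Such a path exists by taking a shortest path between the two variables responsible for the exit and re-entry; if it touches $B_1$ in between, we shorten it to a segment between two distinct elements of $B_1$. Its image $g(y_0),\dots,g(y_n)$ is a walk in $D'$ of length at most $d$ that lies outside $B_1$ except at its endpoints $e \neq e'$.

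\textbf{Lifting the walk to $D$.} Each neighbouring pair in the walk is witnessed by a fact of $D'$, which is the $\pi$-image of a fact of $D$. Outside the merged bags, $\pi$ is the identity. Elements of $B_j$ with $j \neq 1$ are sent into $B_1$ by $\pi$. Hence the inner elements of the walk, which avoid $B_1$, have unique $\pi$-preimages outside all crucial bags. The walk therefore lifts to a walk in $D$ from some $\tilde e \in \adom(B_j)$ to some $\tilde e' \in \adom(B_k)$ with $\pi(\tilde e)=e$ and $\pi(\tilde e')=e'$, and all its inner vertices lie outside the crucial bags. There are two cases.
\begin{itemize}
  \item If $j \neq k$, this $D$-walk has length at most $d$ and connects two distinct crucial bags. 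This contradicts the distance bound $\ge d+1$ guaranteed by running the procedure with $N=d+1$.
  \item If $j=k$, the lifted walk leaves the single bag $B_j$ through one element and returns through a different element, since $\beta_{j,1}$ is injective on $B_j$ and so $\tilde e \neq \tilde e'$. In the tree-like database $D$, condition 2 of database graphs together with the bag intersections of size at most one (condition 1) forces every walk leaving a bag to re-enter through the same shared element. This is a contradiction.
\end{itemize}

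\textbf{Main obstacle.} The delicate step is the lifting. We must check that the inner vertices of the walk really avoid all merged bags, because a vertex in some $B_j$ is mapped into $B_1$ and would contradict the walk staying outside $B_1$. We must also check that each edge of $D'$ between inner vertices comes from a fact of $D$ on exactly those preimages. A fact of $D$ whose constants are partly in some $B_j$ would have its image meet $B_1$, and such facts only touch the endpoints. So every fact used on an inner segment has all constants outside the crucial bags, and $\pi$ is the identity on them. Finally, the case where two variables are mapped to the same element of $B_1$ via an excursion is the degenerate instance $e=e'$, which is what the lemma asserts.
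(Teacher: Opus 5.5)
Your decomposition matches the paper's: two distinct elements of $B_1$ come either from two different crucial copies, which are at distance at least $d+1$ in $D$, or from a single copy, where tree-likeness forces re-entry through the shared element. Your lifting of walks that avoid $B_1$ is fine, and so is your $j=k$ case.

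The gap is the reduction to a short excursion (``take a shortest path between the two variables responsible for the exit and re-entry''). The hypothesis gives you an excursion $y_0,\dots,y_n$ in $p$ with $g(y_0)=e\neq e'=g(y_n)$ and all inner images outside $B_1$. The diameter bound, however, only says that \emph{some} path from $y_0$ to $y_n$ has length at most $d$, and that path need not leave $B_1$ at all.

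Concretely, let the Gaifman graph of $p$ be a chordless cycle of length $2d+1$, so its diameter is $d$. Let the atom joining $y_0$ and $y_n$ be mapped to a fact of $B_1$ containing $e$ and $e'$, and let the other arc, of length $2d$, be mapped outside $B_1$. Your shortest path is then that single atom. Your ``segment between two distinct elements of $B_1$'' has no inner vertices and lifts to a fact of $B_1$ itself, so neither of your two cases gives a contradiction. The actual excursion lifts to a walk of length $2d$ from $B_j$ to $B_k$, which a separation of only $d+1$ does not exclude. More generally, a simple path in $p$ can be much longer than $2d$ (think of a wheel with a long rim).

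The paper's two-sentence proof makes the same step silently: it treats the diameter bound as a bound on the length of the connecting path. So with $N=d+1$ the argument does not go through as written, in your version or in the paper's.

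Here is a repair:
\begin{enumerate}
\item Shortcut the excursion to a simple path in $p$. Its inner vertices are among the old ones, so they are still mapped outside $B_1$, and its endpoints stay distinct because $e\neq e'$.
\item Its length is therefore less than the maximal number of variables of a rule body.
\item Run the distance-increasing procedure with $N$ exceeding that number instead of $d+1$; the procedure terminates for every $N$.
\end{enumerate}
Your $j\neq k$ case then produces a walk in $D$ of length less than $N$ between two distinct crucial bags, which is the desired contradiction. Your $j=k$ argument goes through unchanged.
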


\begin{proof}
  Two distinct elements of~$B_1$ are images under~$\pi$ of elements lying in
  crucial copies that are pairwise at distance at least $d + 1$ in~$D$ (or
  of two elements of a single copy, in which case the bound is inherited
  from~the tree-likeness of $D$). A homomorphism from a body of diameter at most~$d$ cannot
  connect two such elements except through the single shared element~$e$.
\end{proof}


What remains to show is that $\{X\}$ is crucial in $D'$.
For that purpose we first need to define how to ``lift'' derivations in $D'$ to derivations in $D$. 
It is not straightforward since derivations in $D'$ might use facts in the mapping of one body that are originally far apart in $D$.
But we identify their origin and use the automorphisms provided by Lemma~\ref{lem:crucial-swap} to lift them into the same neighbourhood in $D$.

Recall that $\pi$ is injective for elements outside of $B_1$ and identifies, for each crucial
bag~$B_j$, its elements with the corresponding elements of~$B_1$ via
$\alpha_{j,1}$. The elements of~$B_1$ in~$D'$ are thus ``merge points'' with
several preimages in~$D$, one per crucial copy.

\begin{definition}[Origin]\label{def:origin}
  Let $g$ be a homomorphism from a rule body to~$D'$, and consider a path in
  the image consisting of one fact intersecting~$B_1$ in a single element~$e$,
  together with facts not intersecting~$B_1$. By Lemma~\ref{obs:bounded-return} such a
  path intersects~$B_1$ only in~$e$. Mapping the path back to~$D$ via~$\pi$
  everywhere except at~$e$, the fact at~$e$ intersects a unique crucial
  copy~$B_k$ in one element. We call~$k$ the \emph{origin} of every element
  and fact on the path other than~$e$ itself. The merge point~$e$ has no
  origin.
\end{definition}

\begin{lemma}\label{lem:single-crucial}
  The set $\{X\}$ is crucial for $\bar a \in \Pi(D')$.
\end{lemma}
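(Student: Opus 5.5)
We argue by contradiction. Suppose some derivation $\Gamma'$ of $\bar a$ from $\Pi$ in $D'$ uses no copy of $X$; we will \emph{lift} $\Gamma'$ to a derivation $\Gamma$ of $\bar a$ from $\Pi$ in $D$ that uses no pattern from $S=\{P_1,\dots,P_m\}$. This contradicts the cruciality of $S$ for $\bar a\in\Pi(D)$ given by Lemma~\ref{lem:cruciality}.

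\emph{Lifting a single rule application.} Let $v$ be a node of $\Gamma'$ with body homomorphism $g=h_v$. Using Definition~\ref{def:origin}, we partition the image of $g$ into three parts:
\begin{itemize}
\item the facts that do not touch $B_1$;
\item the facts inside $B_1$;
\item the excursions leaving $B_1$.
\end{itemize}
By Lemma~\ref{obs:bounded-return}, each excursion leaves and re-enters $B_1$ through a single merge point $e$ and has a well-defined origin $k$. Outside $\adom(B_1)$ the map $\pi$ is injective, so $\pi^{-1}$ is determined there. On an excursion of origin $k$, we move the facts back into the neighbourhood of $B_1$ in $D$ by applying $\beta_{k,1}$ to their $\pi$-preimages. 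This is legitimate because $\beta_{k,1}$ is a strong automorphism of $D$ mapping $B_k$ onto $B_1$ (Corollary~\ref{lem:crucial-swap}), and because $\pi\circ\beta_{k,1}=\pi$ on $\adom(B_k)$. The merge points themselves are sent to their preimages in $B_1$.

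These choices define a map $\tilde g$ from the body into $D$. We then check that $\tilde g$ is a homomorphism and satisfies $\pi\circ\tilde g=g$. Facts inside $B_1$ are fine, since $\pi$ is the identity on $B_1$. Each excursion is attached at one element only, and $\beta_{k,1}$ preserves facts. Different excursions through the same $e$ share only $e$, again by Lemma~\ref{obs:bounded-return} and the tree-likeness of $D$.

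\emph{Assembling the derivation.} We lift the whole derivation top-down. At the root we use $\pi(\bar a)=\bar a$ together with the choice of $B_1$ as the root bag or the bag adjacent to it, so that the goal tuple lifts to $\bar a$ itself. For an IDB atom $P(y)$ at node $v$, the lifted node receives the label $P(\tilde g(y))$. The child node $w$ then has to be lifted with its distinguished variable anchored at $\tilde g(y)$ rather than at an arbitrary preimage of $g(y)$. To achieve this, we compose its lift with a further strong automorphism $\beta_{k,k'}$ that moves the chosen preimage to $\tilde g(y)$. This step is possible because all crucial copies are related by such automorphisms and $\delta\circ\beta=\delta$. Since $\Gamma'$ is finite, the resulting object $\Gamma$ is a derivation of $\bar a$ from $\Pi$ in $D$.

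\emph{Showing that $\Gamma$ avoids $S$.} Suppose $\Gamma$ uses some $P_j$ at node $v$, via the induced CQ at $B_j$. Apply $\pi$. Because $\pi|_{B_j}=\beta_{j,1}$ is an isomorphism onto $B_1$, and the induced CQ at $B_j$ with respect to $\tilde g$ maps to the induced CQ at $B_1$ with respect to $g$, the node $v$ of $\Gamma'$ uses $\pi(P_j)=X$, a contradiction. For this correspondence of induced CQs we must check that the identifications of condition (i) in Definition~\ref{def:inducedCQ} match on both sides. By Lemma~\ref{obs:long-cycles}, no new short cycles arise in $D'$ that would create extra identifications, and by Lemma~\ref{obs:bounded-return}, every excursion returns through the same element on both sides.

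\emph{Main obstacle.} The hardest part is consistency of the lifts across parent and child nodes, where the anchoring automorphisms must be chosen coherently. The second delicate point is verifying that the induced CQs correspond exactly under $\pi$, which is where condition (i.b) and the distance bound $N=d+1$ are essential.
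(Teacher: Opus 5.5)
Your plan is the paper's argument in contrapositive form: lift $\Gamma'$ top-down into $D$ using origins and the automorphisms $\beta_{i,j}$, invoke cruciality of $S$, and transfer a use of $P_j$ back to a use of $X$ via $\pi|_{B_j}=\beta_{j,1}$. Lifting everything next to $B_1$ and re-anchoring is equivalent to the paper's per-node target copy $j$. However, the identity $\pi\circ\tilde g=g$ that you claim to check is false, and your anchoring step rests on it (``an arbitrary preimage of $g(y)$'', ``moves the chosen preimage to $\tilde g(y)$''). Take an excursion of origin $k\neq 1$. Its non-merge elements $a$ lie outside all crucial bags, and so does $\beta_{k,1}(a)$, since strong automorphisms permute the bags of the surviving type. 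Hence $\pi(\tilde g(u))=\beta_{k,1}(a)$, which in general differs from $a=g(u)$. The relation $\pi\circ\beta_{k,1}=\pi$ holds only on $\adom(B_k)$, i.e.\ at the merge point. So $\tilde g(y)$ is in general not a $\pi$-preimage of $g(y)$. The claim that ``two preimages of the same element are related by some $\beta_{k,k'}$'' is therefore not the reason anchoring works.

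What is true, and suffices, is this: $\tilde g=g$ on the variables sent into $B_1$, and on every other connected piece $\tilde g$ is $g$ followed by a strong automorphism. Hence the parent's and the child's values for the shared variable are images of one and the same element of $D$ under strong automorphisms. A composite of strong automorphisms and their inverses, which is again strong, then performs the anchoring; a single $\beta_{k,k'}$ need not suffice, e.g.\ when the shared element lies on an excursion in one body but in a component avoiding $B_1$ in the other.

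For the transfer you only need $\pi\circ h_v=g$ on the variables that $h_v=\sigma_v\circ\tilde g_v$ sends into $B_j$. These are exactly the variables that $\tilde g_v$ sends into $B_1$. Moreover, $\beta_{j,1}\circ\sigma_v$ restricted to $B_1$ is a $\delta$-preserving map $B_1\to B_1$, hence the identity because $\delta|_{B_1}$ is injective.

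Finally, the root also needs this anchoring when $B_1$ was chosen arbitrarily. If no crucial bag meets the root bag, $\bar a$ may still lie on an excursion of origin $k\neq 1$, and your $\tilde g$ moves it away from $\bar a$. The paper handles this case by taking the target copy at the root to be the origin of the head path.
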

\begin{proof}
  Let $\Gamma'$ be an arbitrary derivation of $\bar a$ from $\Pi$ in $D'$. We first
  build from it a derivation~$\Gamma$ of $\bar a$ from $\Pi$ in $D$ on the same
  tree with the same rules, adjusting only the body homomorphisms. 
  Then we show $\Gamma'$ uses~$X$.

  \medskip\noindent\emph{The lifting.}
  Let $\Gamma' = (V, E, \ell', \rho, h')$. 
  We construct
  $\Gamma = (V, E, \ell, \rho, h)$, defining only the body homomorphisms
  $h_v$ (the labels~$\ell$ are read off from~$h$). We process $V$ top-down,
  carrying for each node~$v$ two pieces of data set by its parent:
  \begin{itemize}
    \item the head element $a_v$, so the mapping of the head variable at $v$, as defined by the parent node (for the root we initialize
      $a_v := \bar a$);
    \item a strong automorphism $\beta_v$ of~$D$ (for the root, $\beta_{v_0} := \mn{id}$).
  \end{itemize}
We maintain
  \[
    (\ast)\qquad a_v = \beta_v(a'_v),
  \]
  where $a'_v$ is the corresponding head element in~$\Gamma'$.
  It holds at
  the root since $\beta_{v_0} = \mn{id}$ and $a'_{v_0} = \bar a$.

  Let $p$ be the EDB part of $\rho_v = S(x) \leftarrow q$ (IDB atoms are
  unary and impose no EDB constraint; they are handled as in~$\Gamma'$). We
  define $h_v$ on~$p$ and the data $a_w, \beta_w$ of each child~$w$.

  \emph{Case 1: $h'_v(p)$ does not meet~$B_1$.} Then $\pi$ is the identity
  on $h'_v(p)$, so $h'_v$ maps~$p$ into~$D$, and we just apply the inherited
  automorphism:
  \[
    h_v := \beta_v \circ h'_v .
  \]
  This maps~$p$ into~$D$ since $\beta_v$ is an automorphism. For each
  child~$w$ with head variable~$y$ we pass on the same automorphism,
  $\beta_w := \beta_v$ and $a_w := h_v(y)$. 
  Then $(\ast)$ holds for~$w$ because $a'_w = h'_v(y)$.

  \emph{Case 2: $h'_v(p)$ intersects~$B_1$.} We first fix a \emph{target}
  copy~$j$: if $\beta_v = \mn{id}$, let $j$ be the origin
  (Definition~\ref{def:origin}) of the path from~$a'_v$ into~$B_1$.
  If $\beta_v = \beta_{i,j'}$, let $j := j'$. We then define $h_v$ atom by atom:
  \begin{itemize}
    \item a fact of~$p$ mapped by~$h'_v$ entirely inside~$B_1$ is mapped by $\beta_{1,j}$;
    \item a fact on a path leaving~$B_1$, of origin~$i$, is mapped by
      $\beta_{i,j}$;
    \item a fact whose image avoids~$B_1$ entirely (and is not on any path intersecting $B_1$) is mapped by $\pi^{-1} = \mn{id}$
  \end{itemize}
  It is straightforward to verify that origin is well defined, that is, no element is part of two distinct origins. 
  Hence $h_v$ is well defined, and it maps~$p$ into~$D$ because each $\beta_{*,j}$ is an automorphism on~$D$.

    It remains to check that $h_v$ places the head where the parent expects,
  $h_v(x) = a_v = \beta_v(a'_v)$ (using $(\ast)$ and $a'_v = h'_v(x)$). This
  is what makes $\Gamma$ a valid derivation at~$v$ and what gives $(\ast)$
  for the children. There are three possibilities for the head variable~$x$.
  If $x$ lies in a component not intersecting with~$B_1$, it is handled as in Case~1 by
  the inherited automorphism.
  So $h_v(x) = \beta_v(h'_v(x)) = a_v$, and we are
  done. Otherwise $x$ has an origin~$i_x$ (or maps inside~$B_1$, in which
  case $i_x := 1$), and by construction $h_v(x) = \beta_{i_x, j}(h'_v(x))$,
  so the goal $h_v(x) = a_v$ becomes satisfying the equation
  \[
    \beta_{i_x, j} = \beta_v \qquad \text{on } h'_v(x).
  \]
  \emph{Case $\beta_v = \mn{id}$.} Here $a_v = a'_v$, and the target was chosen
  as $j := i_x$ (the origin of the head path), so $\beta_{i_x, j} =
  \beta_{i_x, i_x} = \mn{id}$ and $h_v(x) = h'_v(x) = a_v$.
 
  \emph{Case $\beta_v = \beta_{i,j}$.} The target is the second index~$j$,
  so the equation to satisfy is $\beta_{i_x, j} = \beta_{i, j}$ on $h'_v(x)$,
  i.e.\ $i_x = i$. Recall that $i$ was recorded by the parent as the first
  index of $\beta_v = \beta_{i,j}$ for the variable shared with~$v$, the
  origin of $a'_v = h'_v(x)$ in the parent's body. If $a'_v \in B_1$ then
  $a'_v$ is a merge point, so $i_x := 1$ in~$v$ and, the shared variable
  mapping into~$B_1$ in the parent as well, $i = 1$. Otherwise
  $a'_v \notin B_1$ and (being in the remaining case) $v$'s body
  reaches~$B_1$ from~$a'_v$, so $i_x$ is found by following that path back to
  a crucial copy. By Lemma~\ref{obs:bounded-return} a path of a rule body
  (Gaifman diameter at most~$d$) reaches at most one element of~$B_1$, and
  the crucial copies are pairwise at distance at least $d+1$, so at most one
  copy is reachable from~$a'_v$ within the body: the copy~$B_i$ in
  which~$a'_v$ sits in~$D$, the same one the parent reached from the same
  element. Hence $i_x = i$ and $h_v(x) = a_v$.
 
  We set $\beta_w := \beta_{i_y, j}$ and $a_w := h_v(y)$
  for each child~$w$ with head variable~$y$ of origin~$i_y$. 
  This defines~$\Gamma$ on all of~$V$ and by construction it is a derivation of $\bar a$ from $\Pi$ in $D$.

  \medskip\noindent\emph{Cruciality transfer.}
  We now use the lifted derivation~$\Gamma$ to show that $\Gamma'$
  uses~$X$. Since $S$ is crucial for $\bar a \in \Pi(D)$
  (Lemma~\ref{lem:cruciality}), $\Gamma$ uses some hard pattern
  $P_k = h_k(P_0) \in S$ at some node~$v$. 
  The body homomorphism $h_v$
  maps a chordless cycle or tetra~$Y$ of the rule body bijectively onto~$P_k$
  in~$B_k$. We show that at this same node $h'_v$ maps~$Y$ bijectively onto $X = P_1$,
  so that $\Gamma'$ uses~$X$; as $\Gamma'$ was arbitrary, this proves
  $\{X\}$ crucial for $\bar a \in \Pi(D')$.
 
  The pattern $P_k$ lies in the single bag~$B_k$, so $h_v$ maps all
  variables of~$Y$ into~$B_k$. In particular $v$ is a Case-2 node (a Case-1
  node has $h_v = \alpha_v \circ h'_v$ with $h'_v$ avoiding~$B_1$, so its
  image avoids the crucial bags). By localization
  (used now in~$D'$) the hard pattern $h'_v(Y)$ also lies in a single bag
  of~$D'$. Among the facts of~$Y$, those mapped by $\alpha_{1,k}$ have
  $h'_v$-image inside~$B_1$, while any facts mapped by some $\alpha_{i,k}$ with
  $i \neq 1$ would have $h'_v$-image on an origin-$i$ path leaving~$B_1$,
  hence in a bag other than~$B_1$. As $h'_v(Y)$ is confined to one bag, the
  second kind cannot occur. Every fact of~$Y$ is of the first kind, $h'_v$
  maps~$Y$ inside~$B_1$, and $h_v = \alpha_{1,k} \circ h'_v$ on~$Y$. Now by
  clause~2 of Lemma~\ref{lem:crucial-swap} for the pair~$(1,k)$,
  \[
    \alpha_{1,k}(P_1) = \alpha_{1,k}(h_1(P_0)) = h_k(P_0) = P_k .
  \]
  Applying $\alpha_{1,k}^{-1}$ gives $\alpha_{1,k}^{-1}(P_k) = P_1$, hence
  \[
    h'_v(Y) = \alpha_{1,k}^{-1}(h_v(Y)) = \alpha_{1,k}^{-1}(P_k) = P_1 = X .
  \]
  Since $\alpha_{1,k}$ is an isomorphism, $h'_v$ maps~$Y$ bijectively onto~$X$, i.e.\
  $\Gamma'$ uses~$X$.
 
  Finally, Lemma~\ref{obs:long-cycles} confirms that no other pattern could have
  been used: a rule body cannot use any pattern of~$D'$ except a $\pi$-image
  of a pattern of~$D$, so the witness is indeed~$X$ and not a cycle created
  by the merge.
\end{proof}
 
Together, Lemma~\ref{obs:derivable} and Lemma~\ref{lem:single-crucial} show that the
database $D_\Pi := D'$ and the hard pattern~$X$ satisfy the requirements of
Theorem~\ref{theorem:one-cycle-is-crucial}: $\bar a \in \Pi(D_\Pi)$ and $\{X\}$ is
crucial for $\bar a \in \Pi(D_\Pi)$.
It now remains to make the resulting database finite.
To this end, we just use compactness which guarantees that there exists a finite $D_\mn{fin}$ subdatabase of $D_\Pi$ that still guarantees $\bar a \in \Pi(D_{\mn{fin}})$.
This ends the proof of Theorem~\ref{theorem:one-cycle-is-crucial}.


\section{Proof of Lemma~\ref{lem:datalog-lower-cyclecombined}.}    
In this section we aim to prove the following lemma.
\lemdataloglowercyclecombined*

We call a CQ $q$ over a binary schema a \emph{chordless cycle} if there is a chordless cycle in $q$ that contains all variables in $q$. Likewise, we call a CQ $q$ a \emph{tetra} if its domain is a tetra in~$q$.
We make use of the lower bound for tetras established by~\cite{chen2019testability}, which can be stated as follows.

\begin{theorem}[Theorem 4.7 in~\cite{chen2019testability}]
\label{theo:chen-2}
Let $q$ be a self-join free Boolean CQ that is a tetra of size~$k$.  Then, testing whether $D \not \models q$ with one-sided error requires $\Omega(n^{\frac{1}{k}})$ queries, where $n$ is the domain size of~$D$.
\end{theorem}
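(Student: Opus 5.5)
Since this is Theorem~4.7 of~\cite{chen2019testability}, in the paper it can simply be invoked. If I were to re-prove it, the plan has three parts: fix a \emph{yes-far} family of databases, argue the soundness structure of one-sided testers, and bound the discovery probability.

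\textbf{Reduction to finding a witness.} First I would use the standard normal form for one-sided testers. A one-sided tester must accept whenever the partial view it has collected is consistent with some database $D' \not\models q$. For a self-join free tetra this has a concrete meaning. The CQ $q$ has one relation $R_S$ for each $(k-1)$-subset $S$ of its $k$ variables. Rejection is therefore only allowed if the observed facts contain a full match: $k$ facts, one per $R_S$, whose arguments agree on a common tuple $(a_1,\dots,a_k)$. So it suffices to exhibit databases $D$ with two properties. They must be $\varepsilon$-far from $D \not\models q$. And any algorithm making $o(n^{1/k})$ completion and size queries must see a full match with probability $o(1)$.

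\textbf{Construction.} I would take $k$ disjoint copies $V_1,\dots,V_k$ of $\mathbb{Z}_m$ and a witness set $W=\{\bar x \in \mathbb{Z}_m^k \mid \sum_i x_i = 0\}$. For each $\bar x \in W$ and each $S$, I add $R_S(\bar x|_S)$. Two facts of this construction follow directly.
\begin{itemize}
\item Each $(k-1)$-projection determines $\bar x$. Hence every match of $q$ is exactly the $k$ facts of a single $\bar x\in W$. Distinct witnesses are therefore fact-disjoint, giving $m^{k-1}$ of them.
\item There are only $k\,m^{k-1}$ facts in total, so $D$ is $\Omega(1)$-far.
\end{itemize}
I would then adjust multiplicities so that all size queries return values depending only on $m$ and on the pattern of fixed positions. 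This makes size queries useless. It is also where bag semantics is used, and I expect to need a padding step to achieve it: for instance, adding dummy facts in a separate region so that degrees are uniform without creating matches.

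\textbf{Discovery bound (main obstacle).} Next I would bound the discovery probability via a random relabelling of each $V_i$ by an unknown permutation. The view after $t$ queries is a set of facts. A match requires the algorithm to "close" a tuple: it must hold facts for $k-1$ of the relations consistent with one $\bar x$, and then hit the last one. A completion query that fixes at most $k-2$ coordinates of some $R_S$ returns a uniformly random consistent fact, out of $\ge m$ options. Each new fact returned is consistent with the partial structure already seen only with probability about $t/m$ per coordinate. Building a $k$-fold consistent structure then costs a birthday-type bound of roughly $t^{k}/m^{k-1}$, or something of that flavour. Choosing parameters so that this is $o(1)$ exactly when $t=o(n^{1/k})$ is the delicate calibration. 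The plain sum-constraint design may only give a weaker bound, so I would try first to replace $W$ by a Ruzsa--Szemer\'edi/Behrend-type design, or by a random sparse subset of $\mathbb{Z}_m^k$ with $n=m^{k}$-ish domain elements per side. Then I would redo the union bound over the adaptive query tree, using the principle of deferred decisions on the random permutations.
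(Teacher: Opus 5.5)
Citing Theorem~4.7 of \cite{chen2019testability} is exactly what the paper does. The paper gives no proof of its own and uses the result as a black box, so your first sentence already matches it.

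\textbf{The construction step fails.} Because $W=\{\bar x\in\mathbb{Z}_m^k \mid \sum_i x_i=0\}$ is cut out by a single linear equation, each projection $\bar x\mapsto\bar x|_S$ onto $k-1$ coordinates is a bijection from $W$ onto $\mathbb{Z}_m^{k-1}$. So every $R_S$ is the complete product, and every tuple in $V_1\times\cdots\times V_k$ is a match. A tester therefore rejects with constant probability after $O(k)$ queries: two completion queries give one element of each $V_i$, and size queries on the fully specified tuples confirm the remaining facts. Your inference ``each $(k-1)$-projection determines $\bar x$, hence every match is the $k$ facts of a single $\bar x\in W$'' does not follow. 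Injectivity of the projections only makes the intended witnesses fact-disjoint. It does not exclude spurious matches assembled from facts of $k$ different witnesses. What you need is an \emph{induced} design: no $\bar a\notin W$ with $\bar a|_S\in W|_S$ for all $S$. By the hypergraph removal lemma, an induced design with pairwise fact-disjoint witnesses must satisfy $|W|=o(m^{k-1})$. So no dense design of this kind exists, and the sum design does not give a `weaker bound' but no bound at all.

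\textbf{The discovery bound is missing.} Suppose you switch to a sparse Behrend/Ruzsa--Szemer\'edi-type design, or to a construction that uses multiplicities in an essential way. The paper notes that Chen and Yoshida's lower-bound constructions depend crucially on multiplicities, not merely as padding against size queries. Even then, the discovery bound, which is the whole content of the theorem, remains unproved in your sketch. The estimate $t^k/m^{k-1}$ is not derived from any construction. It also ignores that a tester can check any candidate tuple with a single size query. The quantity to control is how many candidate tuples, consistent with all but one relation, the tester can assemble. In sparse designs this is governed by degrees and common neighbourhoods, not by $m^{k-1}$. The analysis must also hold against adaptive mixtures of completion and size queries. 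As written, your sketch does not establish $\Omega(n^{1/k})$, and the statement rests solely on the citation, exactly as in the paper.
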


We split the proof of Lemma~\ref{lem:datalog-lower-cyclecombined} in the chordless cycle and tetra case, starting with the former.

\begin{restatable}{lemma}{lemdataloglowercycle}
\label{lem:datalog-lower-cycle}
  Let $\Pi$ be a Boolean  MDLog program such that there exist a database $D$ with $D \models \Pi$
  and  a chordless cycle $C$ in  $D$
  with   $\{ C \}$  crucial for
  $D \models \Pi$. Then
  falsity of\/ $\Pi$ is not constant query testable with one-sided error.
\end{restatable}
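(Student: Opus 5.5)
We reduce from testing falsity of a self-join free Boolean CQ $q_C$ whose canonical database is a chordless cycle of length $n\geq 4$ over a binary schema. By the results of Chen and Yoshida~\cite{chen2019testability}, falsity of such a CQ is not constant query testable with one-sided error, since its core is the cycle itself and is not $\alpha$-acyclic. Let $C=c_0,\dots,c_n$ with $c_n=c_0$. We fix binary relation symbols $E_0,\dots,E_{n-1}$ and set $q_C=\exists z_0\cdots z_{n-1}\bigwedge_i E_i(z_i,z_{i+1 \bmod n})$. With every input $E$ for $q_C$ we associate $E'=f_{\text{DB}}(E)$ as sketched in the main text. Every constant $c_i$ on $C$ is replaced by copies $(c_i,v)$ for $v\in\adom(E)$. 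Every fact of $D$ that avoids $C$ is kept verbatim. A fact $\beta$ of $D$ containing cycle constants is copied for every tuple of copies that is \emph{consistent} with $E$: if $\beta$ contains the cycle-neighbors $c_i,c_{i+1}$, we require $E_i(v_i,v_{i+1})\in E$ for the chosen copies. Since $C$ is chordless, no fact of $D$ contains two non-consecutive cycle constants, so this constraint is well defined edge by edge.

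Multiplicities are chosen so that $|E'|$ is proportional to $|E|$, ideally $|E'|=|D|\cdot|E|$, and so that removing facts of $E'$ is never much cheaper than removing the corresponding facts of $E$. Facts of $D$ not touching $C$ receive multiplicity $|E|$. Facts in which only a single cycle constant occurs receive multiplicity equal to a suitable degree count in $E$, in the style of the $U$-edges of the strongly linear reduction. Copies of a fact covering the edge $c_ic_{i+1}$ receive the multiplicity of the corresponding $E_i$-fact.

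The proof then checks the three conditions of Definition~\ref{def:proptestreduction}.

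For Condition~(1), suppose $E'\models\Pi$. The projection $(c,v)\mapsto c$ is a homomorphism $E'\to D$, so it turns a derivation in $E'$ into a derivation in $D$. Since $\{C\}$ is crucial, some rule application at a node $w$ uses $C$: the induced CQ contains a chordless cycle mapped bijectively onto $C$. In $E'$ the same atoms are mapped to copies $(c_i,v_i)$, and the cycle-edge facts force $E_i(v_i,v_{i+1})\in E$ for all $i$. Hence $E\models q_C$. The delicate point is that the bijection onto $C$ must come from a cycle in the rule body whose consecutive edges are realized by facts in one bag, so that one consistent choice $v_i$ per constant is obtained. This needs care because, in $E'$, different atoms could a priori pick different copies of the same $c_i$. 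Here we exploit that a used cycle in $q_{h,v}$ is traversed once bijectively, each $c_i$ being hit by exactly one variable.

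For Condition~(2), suppose $E$ is $\epsilon$-far from falsifying $q_C$. We take a maximal family of fact-disjoint matches of $q_C$; there are at least roughly $\epsilon|E|/n$ of them, and by bag semantics this count respects multiplicities. Each match $(v_0,\dots,v_{n-1})$ yields a copy of the derivation of $\Pi$ in $D$ with $C$ replaced by its copies. These copies must be made fact-disjoint in $E'$, which is exactly what the multiplicity choice is designed to allow. Then destroying all of them requires removing about $\epsilon|E|/n$ facts, giving farness $\epsilon'=\epsilon/(n|D|)$. Condition~(3) is routine: completion and size queries on copies of a $D$-fact translate into completion and size queries on the corresponding $E_i$ or degree counts, using constantly many queries since $D$ is fixed.

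The main obstacle I expect is Condition~(2) together with the multiplicity design. Facts of $D$ far from $C$, and facts touching only one cycle constant, are shared by all copied derivations. If their multiplicity is too low, removing them cheaply kills $\Pi$; if it is too high, $|E'|$ is not linear in $|E|$. My first attempt is to give every $D$-fact multiplicity $|E|$, and every fact touching a single constant $c_i$ together with copy $v$ the multiplicity $\deg_E(v)$, as in the $U$-edge construction. Then any cut of size below $\epsilon'|E'|$ can be translated into a cut of $E$ of size below $\epsilon|E|$, mirroring the arguments in Lemma~\ref{lem:efartranslates} and the loop-removal lemma.
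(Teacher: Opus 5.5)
The overall plan is the same as the paper's. You blow up the constants of $C$ into copies indexed by elements of the input, keep the rest of $D$, project derivations in $E'$ back to $D$, and use cruciality to read off a match. The differences are harmless. The paper reduces from the triangle, closing the cycle beyond $c_4$ with identity copies, and proves Condition~(2) by transferring deletions back to the input. Your $n$-cycle source and your packing argument for Condition~(2) are both fine.

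\textbf{The gap is in Condition~(1)}, at exactly the step you call delicate, and your proposed resolution does not work. A variable of $q_{h,v}$ is in general a \emph{class} of body variables. Clause~(i) of Definition~\ref{def:inducedCQ} merges them because they are joined by an excursion that leaves the bag of $C$ and comes back (compare $y_{3,6}$ in Figure~\ref{fig:induced-and-using}). In $E'$, however, copies are assigned to the original body variables. Non-cycle constants are kept verbatim, while facts containing a single cycle constant exist at every copy $(c_i,v)$. So an excursion can leave at $(c_i,v)$ and return at $(c_i,v')$ with $v\neq v'$. Hence ``each $c_i$ is hit by exactly one variable of $q_{h,v}$'' does not give one copy per $c_i$.

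Here is a concrete counterexample.
\begin{itemize}
\item Let $\Pi$ be the single rule $\mathsf{goal}() \leftarrow A_0(y_0,y_1)\wedge A_1(y_1,y_2)\wedge A_2(y_2,y_3)\wedge A_3(y_3,y_4)\wedge B(y_4,w)\wedge B'(w,y_0)$.
\item Let $D=\{A_0(c_0,c_1),A_1(c_1,c_2),A_2(c_2,c_3),A_3(c_3,c_0),B(c_0,d),B'(d,c_0)\}$, with one bag for the $A$-facts and one for the $B$-facts, sharing $c_0$.
\item The unique homomorphism maps $y_0,y_4\mapsto c_0$ and $w\mapsto d$. Clause~(i) merges $y_0$ and $y_4$, so $q_{h,v}$ is a $4$-cycle mapped bijectively onto the chordless cycle $C=c_0,c_1,c_2,c_3,c_0$. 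Thus $\{C\}$ is crucial for $D\models\Pi$.
\item In your $E'$, the facts $B((c_0,v),d)$ and $B'(d,(c_0,v))$ exist for all $v$. So $E'\models\Pi$ whenever $E$ contains an $E_0E_1E_2E_3$-labelled path $v_0,\dots,v_4$, closed or not.
\item Taking $E$ to be a single such path with $v_4\neq v_0$ gives $E\not\models q_C$ but $E'\models\Pi$.
\end{itemize}
Keeping the pendant facts at only one copy of $c_0$ is no way out. Matches through other copies would then no longer extend to derivations, and Condition~(2) fails.

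What is missing is an ingredient that forces every excursion to return to the copy it left from. One option is to make the part of $D$ hanging off $c_i$ outside the bag of $C$ private to each copy $(c_i,v)$. This seems to need more information about the shape of $D$ than cruciality alone, for instance tree-likeness, or the bounded-return property (Lemma~\ref{obs:bounded-return}) of the database built in the proof of Theorem~\ref{theorem:one-cycle-is-crucial}.

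Be aware that the paper's written argument does not supply this ingredient either. It takes the used cycle to lie in the rule body $q$ itself, with $h_\Pi$ mapping all of $q$ into the bag of $C$. Its construction places single-constant facts at the copies $(c_i,v)$ for sources $v$ of $R_i$-edges and pairs all other constants with $\bot$. That construction admits the same spurious derivation on this example.
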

\begin{proof}

We prove the lemma by a reduction from testing freeness of a self-join free triangle shaped CQ.
We use $D_\Pi$ to denote $D$ from the lemma to make clear that it is fixed.
Let $G$ be a database over a signature consisting of three binary symbols $R_1, R_2, R_3$ and $q_{\triangle} = R_1(y_1,y_2), R_2({y_2,y_3}), R_3(y_3,y_1)$.

We associate with every input $G$ to $\mathcal{P}_{q_{\triangle}}$
an input $f(G)$ to $\mathcal{P}_\Pi$. Moreover with any $\epsilon_G \in (0,1)$ we associate $\epsilon_{f(G)} = \frac{1}{|D_{\Pi}|}\epsilon_G$. 
We prove the three conditions of Definition~\ref{def:proptestreduction} below, and since 
$q_{\triangle}$ is self-join free and $\{y_1,y_2,y_3\}$ is a $3$-tetra in $q_\triangle$,
Theorem~\ref{theo:chen-2} implies that falsity of $\Pi$ is not constant query testable with one-sided error, yielding the lemma.

Let $V$ be the domain of $G$.
Recall that we need to show the following:

\begin{itemize}\itemsep=0pt
    \item[(i)] if $G \not\models q_{\triangle}$ then $f(G) \not\models \Pi$;
    \item[(ii)] if $G$ is $\epsilon$-far from being $q_{\triangle}$-free, then $f(G)$ is $\varepsilon_{f(G)}$-far from being $\Pi$-free;
    \item[(iii)] the answer to a query to a completion or size oracle for $f(G)$ can be computed from the answers to constantly many queries to completion or size oracles for $G$.
\end{itemize}

Let $S_C = \{c_1, \dots, c_k\}$ be the set of constants in the cycle $C$, $k \geq 4$, and let $c_{k+1} = c_1$.
Intuitively, we define the database $D_G = f(G)$ as follows.
Facts derived from $D_\Pi$ that use the constants in $S_C$ will also depend on the facts in $G$ and facts that do not use constants of $S_C$ will just be ``copied'', indepently of $G$. Concretely we use the first four constants in $S_C$ to detect triangles in $G$.

Without loss of generality, we can assume that for every $i \in \{1,2,3\}$ there is at least one fact $R_i(a^\dagger ,b^\dagger) \in G$, otherwise there is no homomorphism from $q_{\triangle}$ to $G$ and we could construct a trivial $f(G)$. Moreover, we will assume that those facts are isolated, i.e. no other facts in $G$ use $a^\dagger$ nor $b^\dagger$. It is easy to see that since there is a constant, independent of $G$, number of such facts and none of those facts can be used in forming a triangle, their existence does not interfere with property testing.

The domain of $f(G)$ is $S_C {\times} V \cup \big(\dom(D_{\Pi})\setminus S_C\big) {\times} \{\bot\}$.
For every fact  $R(d_1, \dots, d_p) \in D_\Pi$,
add to $f(G)$ the facts
\[R((d_1, g_1), \dots, (d_p, g_p))\]
based on the existence of constants from $S_C$ in $R(d_1, \dots, d_p)$.  Recall that $c_{k+1} = c_1$.
\begin{itemize}
 \item If $R(d_1, \dots, d_p)$ contains two distinct constants $c_i, c_{i+1}$ with $1 \leq i \leq 3$, then, for every $R_i(v,u) \in G$, we add a fact $R((d_1,g_1), \dots, (d_p,g_p))$, where for $1 \leq j \leq p$, we set $g_j = v$ if $d_j = c_i$, and $g_j = u$ if $d_j = c_{i+1}$, and $g_j = \bot$ otherwise.
    \item If $R(d_1, \dots, d_p)$ contains two distinct constants $c_i, c_{i+1}$ with $4 \leq i \leq k$, then for every $R_3(v,u) \in G$, we add a fact $R((d_1, g_1), \dots, (d_p, g_p)$ where for $1 \leq j \leq p$, we set $g_j = u$ if $d_j \in \{c_i, c_{i+1}\}$, and $g_j = \bot$ otherwise.
    \item If there is exactly one distinct constant $c_i$ in $R(d_1, \dots, d_p)$ then we add $R((d_1, g_1), \dots, (d_p, g_p))$ where for $1 \leq j \leq p$ if $d_j \neq c_i$ then $g_j = \bot$ and if $d_j = c_i$ then $g_j = v$ for some $v$ such that there is $R_i(v,u) \in G$ with $R_i = R_3$ for $i>3$.
    \item If there are no constants from $S_C$ in $R(d_1, \dots, d_p)$ then we add $R((d_1, \bot), \dots, (d_p, \bot))$.
    %
    %
\end{itemize}

The first case does the main work of detecting triangles. It uses the first four constants in $S_C$ to check $G$ for any triangles matching $q_{\triangle}$. If and only if $G$ does contain such a triangle, will there be elements $a,b,c \in \dom(G)$ such that there is a path $(c_1, a), (c_2, b), (c_3, c), (c_4, a)$ in $f(G)$ (with relation symbols matching the cycle $C$ in $D_\Pi$).
The second case closes the cycle in $f(G)$ for the constants after $c_4$ in $S_C$, so for example from $(c_4, a)$ to $(c_1, a)$. It does so by always taking the identity on the second element for every element $u$ in $G$ where a triangle could have ``ended'', that is, $R_3(v, u)$ for some $v,u \in \dom(G)$.
So if the first case did create a connection from $(c_1, a)$ to $(c_4, a)$, then the second case would now have created a cycle isomorphic to the crucial cycle $C$ in $D_\Pi$.

The remaining two cases ensure that we can derive the rest of $\Pi$ on $f(G)$. The third case is for edges leaving the cycle on $S_C$, which we only add for elements in $S_C \times V$ that could be part of a triangle. The last case is for edges not intersecting with the cycle on $S_C$ at all.



Now, for every fact $R(d_1, \dots, d_p) \in D_\Pi$ artificially multiply one of the facts
$R((d_1, a_1), \dots, (d_p, a_p))$ added in~the above process so~the multiplicity of the facts of form $R((d_1, a'_1), \dots, (d_p, a'_p))$ in $f(G)$
is exactly~$|G|$.
If possible,  we always choose a fact corresponding to a fact $R_i(a^\dagger, b^\dagger) \in G$.

The multiplicities can always be adjusted in such a way because for every fact $R(d_1, \dots, d_p) \in D_\Pi$, we have previously added 
\begin{itemize}
    \item at least one fact to $f(G)$, every relation $R_i$, $i \in \{1,2,3\}$, contains fact $R_i(a^\dagger, b^\dagger)$,
    \item and no more than $|G|$ facts, at most one fact per fact in $G$.
\end{itemize}
In the case of $R(d_1, \dots, d_p)$s that do not use elements of $S_C$, there is exactly one candidate to be used in the adjustement.
As a~consequence, $|D_G| = |D_\Pi||G|$.

%

\medskip
We now prove the correctness of the reduction.
Note that $f(G)$ can be computed in time linear in $|G|$: for every fact $R_i(a,b) \in G$
we scan $D_\Pi$ once, and we add no~more than $|D_\Pi|$ elements per scan. The final adjustment of the multiplicities can be done by
keeping for every fact $F$ in $D_\Pi$ a counter and a candidate for the copy procedure, e.g. the first fact $F'$ computed using $F$.

To show {\bf (iii)} observe the following.
For a size query $R(b_1, \dots, b_p)$,  where $b_i = (d_i,a_i)$ for some constants $d_i \in \adom(D_\Pi)$ and $a_i \in \adom(G)$ or $b_i = \ast$, identify all occurences $o_j$ of $R(y_1, \dots, y_p)$ in the database $D_{\Pi}$.
Any completion of $R(b_1, \dots, b_p)$ was produced by one of those occurences, according to one of the rules specified in the construction, or by adjusting the multiplicities. 
For the case of the use of the rules, for every such occurence determine the number $m_j$ of completions in the database. For the occurences $R(y_1, \dots, y_p)$ which does not use variables $y_i$ from the set $S_C$, $0 \leq m_j \leq 1$ as we have added only the fact $R((y_1, \bot), \dots, (y_p, \bot))$. For the occurences $R(y_1, \dots, y_p)$ that use variables $y_i$ from the set $S_C$, this can be done by identifying which unique rule in construction was used and then computing the number of generated facts by using oracle for $G$ to identify the number of $R_i$ facts in $G$ that were used to produce adequate completions. Single oracle call per instance of $R(y_1, \dots, y_p)$.

For the case of completions originating from adjusting the multiplicities, the added facts were not counted in the previous steps, as there is no overlap.
Moreover, every clone was added to adjust the count of facts compatible with some occurence $o_j$ of $R(y_1, \dots, y_p)$. Thus, we can simply determine the counts $m'_j$ and add them to counts obtained in the previous step. 
Observe that, by the construction, the cloned facts are either of the form $R((y_1, \bot), \dots, (y_p, \bot))$, if for every $1 \leq i \leq p$ $y_i \notin S_C$,
or were obtained using one of the construction rules together with the isolated edges~$R_i(a^\dagger, b^\dagger)$.

For those occurences $o_j$ which do not use variables $y_i$ from the set $S_C$, $m'_j = |G| -1 $ if $a_i = \bot$ for all $b_i \neq \ast$ in $R(b_1, \dots, b_p)$ or $m'_j = 0$ if one of $a_i$ is not $\bot$. For the other case, $m'_j = 0$ if the query enforces inappropriate value for some $a_i$, i.e.a~value conflicting with the rule or the fact~$R_i(a^\dagger, b^\dagger)$. Otherwise, $m'_j = |G| - r$, where $r$ is the answer to a size query $R_i(b'_1,b'_2)$ on $G$ with $R_i$ chosen to be consitent with the rule, and $b'_1, b'_2$ chosen to be consistent with the original query.
To compute $|G|$, simply query the sizes of relations $R_1, R_2,R_3$ in $G$.

The final value is $m = \sum_{j} m_j + m'_j$.

For a~completion query $R(b_1, \dots, b_p)$, where $b_i = (d_i,a_i)$ for some constants $d_i \in \adom(D_\Pi)$ and $a_i \in \adom(G)$ or $b_i = \ast$, identify all occurences $o_j$ of $R(y_1, \dots, y_p)$ in the database~$D_{\Pi}$.
Any completion of $R(b_1, \dots, b_p)$ was produced by one of those occurences, according to one of the rules specified in the construction, or was produced by adjusting multiplicities.
First, identify the answer $m$ to an analogue size query and if $m = 0$ then fail.

Otherwise, based on the counts (computed in the algorithm for the size query) choose with probability $\frac{m_j}{m}$ whether to return a fact constructed for one of the occurences $o_j$ or with probablility $\frac{m'_j}{m}$ whether to return a fact that was used to adjust multiplicity of the occurence $o_j$.

Use the choice to construct a completion of $R(a_1, \dots, a_p)$ based on the construction of the database or the adjustmento of the multiplicities. 
For occurences using variables $y_i$ from the set $S_C$, this may require a single call to the completion oracle for $G$.


For {\bf (ii)} recall that we set $\varepsilon_{f(G)} =  \frac{1}{|D_{\Pi}|} \varepsilon_G$. 
Let $m \leq  \varepsilon_{f(G)}|f(G)|$ and $H$ be $f(G)$ after removing $m$ facts from $f(G)$.
Via construction of $D_G$, removal of those facts corresponds to removal of no more than $m$ facts from $G$.
Since \[m \leq \varepsilon_{f(G)} |f(G)| \leq \frac{\varepsilon_G}{|D_\Pi|} \cdot |D_\Pi||G| = \varepsilon_G |G|\]
the database $G'$ obtained from $G$ by removing those facts satisfies $G' \models q_{\triangle}$.

This removal also corresponds to a database $D'_\Pi$ obtained by projecting the constants in~the domain to the first coordinate.
Clearly, there is a homomorphism from $D'_\Pi$ to $D_\Pi$.
Conversely, there is a homomorphism from $D_\Pi$ to $D'_\Pi$.
Indeed, by construction of $D_G$, for every fact $R(d_1, \dots, d_p) \in D_\Pi$ the multiplicity of facts of form $R((d_1, a_1), \dots, (d_p, a_p)$
in $D_G$ is exactly $|G|$. Thus, we cannot remove all copies of the fact $R(d_1, \dots, d_p)$ with $m < |G|$ removals. Hence, there is a derivation $\Gamma'$ witnessing $D'_\Pi \models \Pi$.

It is clear that using the homomorphism witnessing $G' \models q_{\triangle}$ and the derivation $\Gamma'$ we can define a derivation witnessing $H \models \Pi$, which proves ${\bf (ii)}$.
Finally, for {\bf (i)} we show the following.
\\[1mm]
\textbf{Claim.} 
   $G \models q_{\triangle}$ if and only if $f(G) \models \Pi$. 
\\[1mm]
Proof of the claim. The only if direction is simple.
Since $D_{\Pi} \models \Pi$, 
there is a derivation $\Gamma_{\Pi}$ witnessing it.
Moreover, there is a triangle $v_1,v_2, \dots, v_k$ in $G$ with $v_4 = v_5 = \dots v_k = v_1$. We modify the derivation in the following way.
If a node $u$ of the derivation uses an atom $R(a_1, \dots, a_p)$ then we replace $a_i$ with $(a_i,\bot)$ for $a_i \notin S_C$.
Otherwise, for $c_i,c_j$ that are used in the atom, we replace them with $(c_i,v_i)$ and $(c_j, v_j)$, respectively. It is easy to check that the obtained tree is a derivation 
in $D$.

For the other direction we use the fact that $\{C\}$ is crucial. Let us assume that there is a~derivation of~$\Pi$ 
in~$D$.
Then, there is a derivation $\Gamma_{\Pi}$ of $\Pi$ 
in $D_{\Pi}$ obtained from $\Gamma$ by projecting out the second coordinate of constants $(d,v)$ in nodes.
Since $\{C\}$ is crucial, there is a node $u_\Pi$ in the derivation $\Gamma_{\Pi}$, a rule $P(x_1) \leftarrow q(x_1, \dots, x_p) $ used in this node, a cycle $ Y = \{y_1, \dots, y_k, y_{k+1}=y_1\}$ in $q$, 
and a homomorphism $h_{\Pi}$ from $q$ to the bag containing cycle $C$ such that $q$ restricted to $Y$ is a bijection between cycles $Y$ and $C$. Without loss of generality, by permuting cycle $Y$ we can assume that $h_{\Pi}(y_i) = c_i$. Since $C$ is chordless and $h_{\Pi}$ is a homomorphism, then $Y$ is chordless in $q$.

Let $A_1, \dots A_k$ be atoms in $q$ such that $A_i$ uses exactly $y_i$ and $y_{i+1}$. Such atoms have to exist otherwise $Y$ would not be a chordless cycle. Moreover, the atoms' images under $h_{\Pi}$ have to be all distinct. In particular, it has to hold that the fact $h_{\Pi}(A_i)$ uses exactly two constants from $S_C$, namely $c_i$ and $c_{i+1}$. Otherwise, $C$ would not be chordless.

Consider the node $u$ in derivation $\Gamma$ that produced node $u_{\Pi}$ in $\Gamma_{\Pi}$ by the projection.
By definition of $\Gamma$, the homomorphism $h$ in this node maps variables $y_i$ to constants $(c_i, v_i)$, for some $v_i \in V$, and the atom $A_i$
is mapped to some fact $h(A_i)$. Hence, for all $1 \leq i \leq k$ there is a fact $h(A_i)$ using both $(c_i, v_i)$ and $(c_{i+1}, v_{i+1})$.
By definition of $D$, this entails that $R_1(v_1,v_2) \in G$, $R_2(v_2,v_3) \in G$, $R_3(v_3,v_4) \in G$, and $v_i = v_{i+1}$ for $4 \leq i \leq k$.
In particular, this implies that $v_4 =v_1$ and, thus, $G \models q_{\triangle}$.
This finishes the proof of the claim and also of the three conditions for Definition~\ref{def:proptestreduction}.
%
\end{proof}



\bigskip
Similarly, we show the following lemma concerning the tetra case.


\begin{restatable}{lemma}{lemdataloglowerclique}
\label{lem:datalog-lower-clique}
Let $\Pi$ be a Boolean  MDLog program such that there exist a database $D$ with  $D \models \Pi$
  and  a tetra $K$ in  $D$ 
  with   $\{ K \}$  crucial for
  $D \models \Pi$.  Then 
  falsity of\/ $\Pi$ is not constant query testable with one-sided error.
\end{restatable}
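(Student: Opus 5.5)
We mirror the proof of Lemma~\ref{lem:datalog-lower-cycle} and give a property testing reduction, now from testing falsity of the self-join free tetra CQ
\[
q_K=\bigwedge_{i=1}^{k} R_i\bigl(y_1,\dots,y_{i-1},y_{i+1},\dots,y_k\bigr),
\]
where $k=|K|\geq 3$ and each $R_i$ is a fresh $(k-1)$-ary symbol. By Theorem~\ref{theo:chen-2}, falsity of $q_K$ is not constant query testable with one-sided error. Proposition~\ref{prop:reduction}(2) will then give the lemma.

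Write $D_\Pi$ for the fixed database of the lemma and $K=\{c_1,\dots,c_k\}$. Given an input $G$ for $q_K$ with domain $V$, the database $f(G)$ has domain $K\times V\cup(\adom(D_\Pi)\setminus K)\times\{\bot\}$. The facts of $f(G)$ are defined as follows.
\begin{itemize}
\item A fact of $D_\Pi$ containing exactly the constants $K\setminus\{c_i\}$ from $K$ is copied once for every $R_i(\bar v)\in G$. In the copy, each $c_j$ is paired with the $v$-component at position $j$, and every other constant is paired with $\bot$. By the tetra property, such facts exist for every $i$.
\item A fact of $D_\Pi$ containing a smaller nonempty subset $S\subsetneq K\setminus\{c_i\}$ is copied once for every $R_i$-fact of $G$, where $i$ is chosen with $c_i\notin S$. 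The constants of $S$ are paired with the corresponding components, and all others with $\bot$.
\item A fact of $D_\Pi$ containing no constant of $K$ is copied with $\bot$ everywhere.
\end{itemize}
No fact of $D_\Pi$ contains all of $K$. As in the cycle case, we first add isolated dummy facts $R_i(\bar a^\dagger)$ to $G$ and then pad multiplicities, so that every fact of $D_\Pi$ has exactly $|G|$ copies in total. This gives $|f(G)|=|D_\Pi|\,|G|$, and we set $\varepsilon'=\varepsilon/|D_\Pi|$.

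We then verify the three conditions of Definition~\ref{def:proptestreduction}.
\begin{itemize}
\item \emph{Condition (3).} It follows as in the cycle case. Each size or completion query to $f(G)$ is answered by enumerating the constantly many occurrences of the relation symbol in $D_\Pi$. For each occurrence, one size or completion query to $G$ on the appropriate $R_i$ is made, with the appropriate arguments fixed, and the padding counts are corrected using $|G|$.
\item \emph{Condition (2).} Removing $m\le\varepsilon'|f(G)|=\varepsilon|G|$ facts from $f(G)$ corresponds to removing at most $m$ facts from $G$. The remaining $G'$ still satisfies $q_K$. Since fewer than $|G|$ facts are removed, every fact of $D_\Pi$ keeps a copy, so projecting to first coordinates still yields $D_\Pi$. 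A derivation in $D_\Pi$ can then be lifted by substituting $(c_j,v_j)$, where $v$ is a match of $q_K$ in $G'$. The lifted facts exist: those touching $K\setminus\{c_i\}$ are generated by $R_i(v_1,\dots,\widehat{v_i},\dots,v_k)$, and the others by the dummy or padding copies. The same lifting argument also proves the ``if $G\models q_K$ then $f(G)\models\Pi$'' half of the claim below.
\end{itemize}

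The main step is Condition~(1), that is, showing that $f(G)\models\Pi$ implies $G\models q_K$. We project a derivation in $f(G)$ to first coordinates, which gives a derivation in $D_\Pi$. Because $\{K\}$ is crucial, some node applies a rule whose induced CQ contains a tetra $Y=\{y_1,\dots,y_k\}$, and $h$ restricted to $Y$ is a bijection onto $K$. We may assume $h(y_j)=c_j$.

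Since $Y$ is a tetra in the induced query, for each $i$ there is an atom containing $Y\setminus\{y_i\}$. Its image therefore contains $K\setminus\{c_i\}$ and not $c_i$. If it contained $c_i$, it would contain all of $K$, which is impossible because no fact of $D_\Pi$ covers $K$. So this image was generated by the first rule above from some $R_i$-fact of $G$.

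In $f(G)$, the node maps each $y_j$ to one element $(c_j,v_j)$. Hence the $R_i$-fact read off this atom is exactly $R_i(v_1,\dots,\widehat{v_i},\dots,v_k)$, and $v_1,\dots,v_k$ is a match of $q_K$ in $G$.

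The delicate point is ensuring that the dummy and padding facts cannot fake such an atom. We handle this by attaching padding only to dummy constants $a^\dagger$ that lie outside every genuine $R_i$-fact. A fact generated from $R_i(\bar a^\dagger)$ places dummy second coordinates on all of $K\setminus\{c_i\}$. If two different atoms at the same node used different $R_i$-sources, they would force inconsistent second coordinates on shared variables, since $k\ge3$ means any two of the sets $K\setminus\{c_i\}$ overlap. So either all $k$ atoms come from dummy facts, which would give a dummy match that we exclude by choosing the dummy tuples pairwise disjoint, or all come from genuine facts, which gives a genuine match.
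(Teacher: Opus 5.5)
Your reduction follows the paper's proof step for step. It uses the same self-join free tetra query (your $R_i$ is the paper's $S_{K\setminus\{c_i\}}$), the same encoding pairing $K$-constants with tuple components and all other constants with $\bot$, padding to exactly $|G|$ copies, $\varepsilon/|D_\Pi|$, and the same read-off via cruciality. Your treatment of the dummy facts even spells out what the paper only asserts.

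However, the read-off in Condition~(1) has a genuine gap, and the paper's proof has the same one. Cruciality gives you a tetra $Y$ in the \emph{induced} CQ $q_{h,v}$, where $h$ is the projected homomorphism into $D_\Pi$. By clause~(i) of Definition~\ref{def:inducedCQ}, a variable of $q_{h,v}$ may be a class of several body variables joined by a path that leaves $\mn{bag}(v)$. After projection all of them go to $c_j$, but in $f(G)$ they may go to $(c_j,w)$ and $(c_j,w')$ with $w\neq w'$. The reason is your second bullet: it turns every non-$K$ constant $d$ of a fact whose only $K$-constant is $c_j$ into a hub $(d,\bot)$, adjacent to $(c_j,w)$ for every $w$ occurring at position $j$ of the chosen $R_i$. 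A detour through $(d,\bot)$ can therefore switch the second coordinate. So ``the node maps each $y_j$ to one element $(c_j,v_j)$'' is unjustified, and so is the consistency argument in your last paragraph.

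Here is a concrete failure. Let $D_\Pi$ have the two bags $\{A_1(c_2,c_3),A_2(c_1,c_3),A_3(c_1,c_2)\}$ and $\{E_1(c_1,d),E_2(c_1,d)\}$, sharing only $c_1$, and let $K=\{c_1,c_2,c_3\}$. Let $\Pi$ be the single self-join free rule $\mn{goal}()\leftarrow A_1(y_2,y_3)\wedge A_2(y_1,y_3)\wedge A_3(y'_1,y_2)\wedge E_1(y_1,z)\wedge E_2(y'_1,z)$.

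Its only match in $D_\Pi$ sends $y_1,y'_1$ to $c_1$ and $z$ to $d$. The induced CQ for the first bag merges $y_1,y'_1$ via $z$ and drops the $E_i$-atoms. What remains is a triangle mapped bijectively onto $K$, so $\{K\}$ is crucial.

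Now take $G=\{R_1(b,c),R_2(a,c),R_3(a',b),R_2(a',e),R_3(a,e')\}$, which has no match of $q_K=R_1(y_2,y_3)\wedge R_2(y_1,y_3)\wedge R_3(y_1,y_2)$. Whichever $i\in\{2,3\}$ you assign to $E_1(c_1,d)$ and $E_2(c_1,d)$, $f(G)$ contains $E_1((c_1,u),(d,\bot))$ and $E_2((c_1,u),(d,\bot))$ for $u\in\{a,a'\}$. Then $y_1\mapsto(c_1,a)$, $y'_1\mapsto(c_1,a')$, $y_2\mapsto(c_2,b)$, $y_3\mapsto(c_3,c)$, $z\mapsto(d,\bot)$ witnesses $f(G)\models\Pi$, so Condition~(1) fails. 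The lemma itself survives here, since the body is a self-join free CQ containing a chordless $5$-cycle, but the reduction does not.

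The paper's proof glosses over the same point by taking $Y$ to be a tetra of the rule body $q_u$ itself. Closing the gap needs an extra ingredient. One option is a notion of use without the merging of clause~(i), which Theorem~\ref{theorem:one-cycle-is-crucial} would then have to deliver. Another is an encoding in which everything attached to the bag of $K$ through $c_j$ inherits $c_j$'s second coordinate, so that detours cannot change it.
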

\begin{proof}
We prove the lemma by a reduction from testing freeness of a self-join free tetra shaped CQ $q_K$ which we define below.
We use $D_\Pi$ to denote $D$ from the lemma to make clear that it is fixed.

We associate with every input $G$ to $\mathcal{P}_{q_{K}}$
an input $f(G)$ to $\mathcal{P}_\Pi$. Moreover, with any $\epsilon_G \in (0,1)$ we associate $\epsilon_{f(G)} = \frac{1}{|D_{\Pi}|}\varepsilon_G$. 
We prove the three conditions of Definition~\ref{def:proptestreduction} below, and since 
$q_{K}$ will be self-join free and a tetra,
Theorem~\ref{theo:chen-2} implies that falsity of $\Pi$ is not constant query testable with one-sided error, yielding the lemma.

Intuitively, we will define $q_K$ by removing from the facts in $D_\Pi$ all constants except those in $K$. A structure comprised of such
facts is a tetra. Unfortunately, this tetra is not self-join free, uses relation symbols of varied arity and, thus, makes the work with the definition of a~crucial tetra challenging. By those reasons, we further simplify this structure.

Let us fix a linear order on constants in domain of $D_{\Pi}$ and let $K = \{c_1, \dots, c_{k+1}\}$, where $c_i$s are listed in a strictly increasing order.
We define a function $\hat{\tau}$ mapping facts $A(d_1, \dots, d_\ell)$ to facts $S_{\{b_1, \dots, b_m\}}(b_1, \dots, b_m)$
where $S_{\{b_1, \dots, b_m\}}$ is a fresh symbol, $\{b_1, \dots, b_m\} = \{ d_1, \dots, d_\ell\} \cap K$, and the sequence $b_1, \dots, b_m$
is strictly increasing. Since $K$ is a tetra, no atom maps to $S_K(c_1, \dots, c_{k+1})$, but the image of the database $D_\Pi$  can still contain facts with arity that is not $k$.

To remedy this, we define a map $\tau$ in the following way.
For $F = A(d_1, \dots, d_\ell)$ such that $\hat{\tau}(F)$ is of arity $k$, set $\tau(F) = \hat{\tau}(F)$.
For $F = A(d_1, \dots, d_\ell)$ such that arity of $\hat{\tau}(F)$ is at most $k{-}1$ set $\tau(F) = S_{\{b_1, \dots, b_{k}\}}(b_1, \dots, b_{k})$
so that the sequence $(b_1, \dots, b_k)$ is the lexicographically smallest, strictly increasing sequence extending constants in $\hat{\tau}(F)$, i.e. 
$\{ d_1, \dots, d_\ell\} \cap K \subseteq \{b_1, \dots, b_k\}$.
We use the lexicographic order arbitrarily, with a single purpose to make the definition of $\tau$ unambiguous.

Now, let $D_K$ be the database consisting of the images of facts in $D_\Pi$ under $\tau$, i.e. $D_K = \{ \tau(A) \mid A \in D_\Pi \}$.
We define $q_K$ as Boolean CQ whose canonical database is $D_K$.

Since $K$ is a tetra, $q_K$ is a Boolean conjunctive query consisting only of atoms of form $S_{\{b_1, \dots, b_k\}}(b_1, \dots, b_k)$
such that $\{b_1, \dots, b_k\} = K \setminus \{c_i\}$ for some $c_i \in K$, and the sequence $b_1, \dots, b_k$ is strictly increasing.
In particular, $K$ is a tetra in $q_K$.


Since $q_{K}$ is self-join free and $K$ is a tetra in $q_{K}$, $q_K$ satisfies the preconditions of Theorem~\ref{theo:chen-2}.
Thus, to prove the lemma, it is sufficient to show a reduction from testing $q_{K}$ to testing program $\Pi$.

To show the reduction, we define a function $f$
mapping a database $G$ over the signature consisting of symbols $S_{U}$ where $U = K \setminus \{c\}$ for $c \in K$  
to a~database representing database $D_G = f(G)$ so that the following holds:

\begin{itemize}\itemsep=0pt
    \item[(i)] if $G \not \models q_{K}$ then $f(G) \not \models \Pi$;
    \item[(ii)] if $G$ is $\epsilon$-far from being $q_{K}$-free, then $f(G)$ is $\varepsilon_{f(G)}$-far from being $\Pi$-free;
    \item[(iii)] the answer to a query to a completion or size oracle for $f(G)$ can be computed from the answers to constantly many queries to completion or size oracles for $G$.
\end{itemize}

Intuitively, we will use facts using exactly $k$ constants from the set $K$ to encode $S_U$-facts from $G$ and the remaining facts to facilitate a derivation of the program.

This is achieved by the following construction. Let $G$ be a database over signature as~stated above.
Without loss of generality, we can assume that every relation $S_{\{b_1, \dots, b_k\}}$ is not empty in $G$, otherwise there is no homomorphism from $q_K$ to $G$ and we can take trivial $f(G)$. Moreover, we will assume that every such relation contains an isolated fact $S_{\{b_1, \dots, b_k\}}(a_1^\dagger, \dots, a_{k-1}^{\dagger})$, i.e. no other facts in $G$ use $a_1^\dagger$, \dots, $a_{k-1}^\dagger$. It is easy to see that since there is a constant, independent of $G$, number of such facts and none of those facts can be used in forming a tetra, their existence does not interfere with property testing.

Now, for every fact $F = A(d_1, \dots, d_p) \in D_\Pi$ with
$\tau(F) = S_{\{b_1, \dots, b_k\}}(b_1, \dots, b_k)$, and for every fact
$S_{\{b_1, \dots, b_k\}}(a_1, \dots, a_k) \in G$, add to $D_G$ the fact
\[
  A((d_1, a_{g(1)}), \dots, (d_p, a_{g(p)}))
\]
where $a_0 = \bot$ and $g \colon [1, \dots, p] \to [0, 1, \dots, k]$ is the
function such that $g(i) = 0$ if $d_i \notin K$ and $g(i) = j$ if $d_i = b_j$
for some $j$;

Then, for every fact $R(d_1, \dots, d_p) \in D_\Pi$ copy one of the added facts
of form $R((d_1, \cdot), \dots, (d_p, \cdot)$ enough times so the multiplicity of facts of form $R((d_1, \cdot), \dots, (d_p, \cdot)$ in $D_G$
is exactly $|G|$.
If possible, we always choose a fact corresponding to constants $a_i^\dagger$. If not, then by construction, there is no choice. 
For that case, the multiplied fact is $R((d_1, \bot), \dots, (d_p, \bot))$.
As~a~consequence, $|D_G| = |D_\Pi||G|$.

Note that $f(G)$ can be computed in time linear in $|G|$: for every $S_{U}$-fact in $G$
we scan $D_\Pi$ once, and we add no more than $|D_\Pi|$ elements per scan. The final adjustment of multiplicities can be done by
keeping for every fact $F$ in $D_\Pi$ a counter and a candidate for the copy procedure, e.g. the first fact $F'$ computed using $F$.

To show {\bf (iii)} observe the following.
For a size query $R(a_1, \dots, a_p)$ identify all occurences $o_j$ of $R(y_1, \dots, y_p)$ in the database $D_{\Pi}$.
Any completion of $R(a_1, \dots, a_p)$ was produced by one of those occurences, according to one of the rules specified in the construction or by adjusting multiplicities.
Then, for every such occurence determine the number $m_j$ of completions in the database. For the occurences $R(y_1, \dots, y_p)$ that does not use variables $y_i$ from the set $K$, $m_j$ can be computed during the construction. For the occurences $R(y_1, \dots, y_p)$ that use variables $y_i$ from the set $K$, this can be done by identifying the unique rule in construction in which this occurence was used and then using oracle for $G$ to compute the number of tuples in $G$ used in this point of the construction.
One oracle call per instance of $R(y_1, \dots, y_p)$.
Then, idenify the number of completions that were added to correct multiplicities. For every occurence $o_j =  R(y_1, \dots, y_p)$, we have added  $0 \leq m'_j < |G|$. To compute $m'_j$ first compute the size of $G$, using size queries $S(\ast, \dots, \ast)$ for every relation $S$ in $G$,
and then using a constant number of size queries, compute the number $r_j$ of facts were added to the database using construction rules for $R(y_1, \dots, y_p)$. The number $m'_j = |G| - r_j$, if the fact added during multiplicity matches the query, or $m'_j= 0$, otherwise. 
Finally, return $m = \sum_{j} m_j + m'_j$.
Since the number of occurences is bounded, and $|G|$ can be computed using boundedly many queries, this whole procedure also only uses a bounded number of queries.

For a~completion query $R(a_1, \dots, a_p)$ identify all occurences $o_j$ of $R(y_1, \dots, y_p)$ in the database~$D_{\Pi}$.
Any completion of $R(a_1, \dots, a_p)$ was produced by one of those occurences, according to one of the rules specified in the construction or by adjusting multiplicities.
Then, identify the values $m_j$ of size queries $R(a_1, \dots, a_p)$ for the appropriate occurence $o_j$ and the numbers $m'_j$ of the facts added during adjusting multiplicities.
Let $m = \sum_{j} m'_j + m_j$. If $m = 0$ then fail.

Otherwise, based on the counts, choose one of the occurences $o_j$ with respect to the probability $\frac{m_j}{m}$ or the fact $f_j$ used to correct multiplicity of this occurence with probability $\frac{m'_j}{m}$.
Now, use $o_j$ (or $f_j$) to construct a completion of $R(a_1, \dots, a_p)$ based on the construction of the database. 
For occurences using variables $y_i$ from the set $K$, this may require a single call to the oracle for $G$.



For {\bf (ii)} observe that it is enough to take $\varepsilon_{f(G)} =  \frac{1}{|D_{\Pi}|} \varepsilon_G$. 
Let $m \leq  \varepsilon_{f(G)}|f(G)|$ and $H$ be $f(G)$ after removing $m$ facts from $f(G)$.
Via construction of $D_G$, removal of those facts corresponds to removal of no~more than $m$ facts from $G$.
Since \[m \leq \varepsilon_{f(G)} |f(G)| \leq \frac{\varepsilon_G}{|D_\Pi|} \cdot |D_\Pi||G| = \varepsilon_G |G|,\]
the structure $G'$ obtained from $G$ by removing those edges satisfies $G' \models q_{K}$.

This removal also corresponds to a database $D'_\Pi$ obtained by projecting the constants in the domain to the first coordinate.
Clearly, there is a homomorphism from $D'_\Pi$ to $D_\Pi$.
Moreover, since, by construction of $D_G$, for every fact $R(d_1, \dots, d_p) \in D_\Pi$ the multiplicity of facts of form $R((d_1, a_1), \dots, (d_p, a_p)$
in $D_G$ is exactly $|G|$, there is a homomorphism from $D_\Pi$ to $D'_\Pi$. Hence, there is a derivation $\Gamma'$ witnessing $D'_\Pi \models \Pi$.

It is clear that using the homomorphism witnessing $G' \models q_{K}$ and the derivation $\Gamma'$ we can define a derivation witnessing $H \models \Pi$, which proves ${\bf (ii)}$.

For point {\bf (i)} we show that 
\\[1mm]
\textbf{Claim.} 
   $G \models q_K$ if and only if $f(G) \models \Pi$.
\\[1mm]
Proof of the claim.
The only if direction is simple.
Since $D_{\Pi} \models \Pi$, 
there is a derivation $\Gamma_{\Pi}$ witnessing it.
Moreover, there is homomorphism $h_G$ from $q_K$ to $G$. We modify the derivation in the following way.
If in node~$u$ of the derivation we use the fact $R(d_1, \dots, d_p)$ with the corresponding fact $S_{\{ b_1, \dots, b_\ell\}}(b_1, \dots, b_\ell)$ then we replace $d_i$ with $(d_i,\bot)$ for $d_i \notin K$ and with $(d_i, h_G(d_i))$ for $d_i \in K$.
It is easy to check that the obtained tree is a derivation 
of $\Pi$ in $f(G)$.

For the other direction we use the fact that $\{K\}$ is crucial to show that if there is a derivation of~$\Pi$ in $f(G)$ then there is a homomorphism $h_K$ from  $q_K$ to $G$. Let us assume that there is a~derivation of $\Pi$ 
in~$f(G)$.
Then, there is a derivation $\Gamma_{\Pi}$ 
of $\Pi$ in $D_{\Pi}$ obtained from $\Gamma$ by projecting out the second coordinate of constants $(d,a)$ in nodes.
Since $\{K\}$ is crucial, there is a node $u_{\Pi}$ in the derivation $\Gamma_{\Pi}$, a rule $P(x_1) \leftarrow q_u(x_1, \dots, x_p) $ used in this node, a tetra $ Y = \{y_1, \dots, y_k, y_{k+1}\}$ in $q_u$, 
and a~homomorphism $h_{\Pi}$ from $q_u$ to 
$D_\Pi$ such that $h_\Pi$ restricted to $Y$ is a bijection between sets $Y$ and $K$. Without loss of generality, we can assume that $h_{\Pi}(y_i) = c_i$, otherwise permute $Y$.

Let $h_u$ be the homomorphism in derivation $\Gamma$ that produced $h_\Pi$ in $\Gamma_\Pi$. We define the homomorphism $h_K \colon q_K \to G$
as follows: $h_K(c_i) = a$ such that $h_{u}(y_i) = (c_i, a)$. By definition of the database $f(G)$, $a \in \dom(G)$. Hence, $h_K$ is a function from variables of $q_K$ to the domain of $G$.
In particular, $a \neq \bot$.
Now, if we show that $h_K$ preserves relations, we will  prove that it is a homomorphism.


Let $S_{\{b_1, \dots, b_k\}}(b_1, \dots, b_k)$ be an atom in $q_K$ and recall that $k = |K|-1$.
We will show that $S_{\{b_1, \dots, b_k\}}(h_K(b_1), \dots, h_K(b_k)) \in G$, which shows that $h_K$ is a homomorphism.

Since $Y$ is a tetra, there is an atom $A(z_1, \dots, z_\ell)$ in $q_u$
such that $\{z_1, \dots, z_\ell\} \cap Y = \{h_{\Pi}^{-1}(b_1), \dots, h_{\Pi}^{-1}(b_k)\}$. That is, there is an atom $A(z_1, \dots, z_\ell) \in q_u$ mapped via $h_\Pi$ to a fact $F =  A(h_\Pi(z_1), \dots, h_\Pi(z_\ell)) \in D_\Pi$ such that  $\hat{\tau}(F) = S_{\{b_1, \dots, b_k\}}(b_1, \dots, b_k)$. This atom is mapped to some fact $A((d_1, a_1), \dots, (d_\ell, a_\ell)) \in f(G)$ via $h_u$.

Since $A((d_1, a_1), \dots, (d_\ell, a_\ell)) \in f(G)$, by definition of $f(G)$ this fact was added to $f(G)$ for some fact $S_{\{b_1, \dots, b_k\}}(a'_1, \dots, a'_k) \in G$ such that for $1 \leq i \leq k$ there is $0 \leq j \leq k+1$ such that
\[(b_i, a'_i) = h_u(y_j) = (c_j, a) = (c_j, h_K(c_j)) = (b_i, h_K(b_i))\]
The first equality stems from the fact that $b_i = c_j \in K$ for some $1 \leq j \leq k+1$, second is the definition of $h_u$, third
is the definition of $h_K$, and the last is once again $c_j=b_i$.

Hence, $a'_i = h_K(b_i)$ for $1\leq i \leq k$ and $S_{\{b_1, \dots, b_k\}}(h_K(b_1), \dots, h_K(b_k)) \in G$.

This proves that $h_K$ is a homomorphism, ends the proof of property {\bf (i)} of the reduction and, thus, ends the proof of the lemma.
\end{proof}

\end{document}